\newtheorem{theorem}{Theorem}[section]
\newtheorem{lemma}[theorem]{Lemma}
\newtheorem{corollary}[theorem]{Corollary}
\theoremstyle{definition}
\newtheorem{definition}[theorem]{Definition}
\newcommand{\ynote}[1]{\textcolor{red}{(Yuichi: #1)}}
\title{Lipschitz Continuous Algorithms for Graph Problems} 
  \author{Soh Kumabe\footnote{Supported by JST, PRESTO Grant Number JPMJPR192B and JSPS KAKENHI Grant Number JP21J20547.}\\
  The University of Tokyo\\
  \texttt{soh\_kumabe@mist.i.u-tokyo.ac.jp}
  \and
  Yuichi Yoshida\footnote{Supported by JST, PRESTO Grant Number JPMJPR192B and JSPS KAKENHI Grant Number JP20H05965.}\\
  National Institute of Informatics\\
  \texttt{yyoshida@nii.ac.jp}}
\newcommand{\E}{\mathop{\mathbb{E}}}
\newcommand{\EM}{\mathrm{EM}}
\newcommand{\EMU}{\mathrm{EM}_\mathrm{u}}
\newcommand{\EMW}{\mathrm{EM}_\mathrm{w}}
\newcommand{\TV}{\mathrm{TV}}
\newcommand{\OPT}{\mathrm{opt}}
\newcommand{\LP}{\mathrm{LP}}
\newcommand{\LPmod}{\mathrm{LP}_{\mathrm{ent}}}
\newcommand{\floor}[1]{\left \lfloor{#1}\right \rfloor}
\newcommand{\ceil}[1]{\left \lceil{#1}\right \rceil}
\begin{document}
\pagestyle{pg}

\maketitle

\begin{abstract}
Graph algorithms are widely used for decision making and knowledge discovery.
To ensure their effectiveness, it is essential that their output remains stable even when subjected to small perturbations to the input because frequent output changes can result in costly decisions, reduced user trust, potential security concerns, and lack of replicability.
In this study, we consider the Lipschitz continuity of algorithms as a stability measure and initiate a systematic study of the Lipschitz continuity of algorithms for (weighted) graph problems.

Depending on how we embed the output solution to a metric space, we can think of several Lipschitzness notions. 
We mainly consider the one that is invariant under scaling of weights, and we provide Lipschitz continuous algorithms and lower bounds for the minimum spanning tree problem, the shortest path problem, and the maximum weight matching problem.
In particular, our shortest path algorithm is obtained by first designing an algorithm for unweighted graphs that are robust against edge contractions and then applying it to the unweighted graph constructed from the original weighted graph.

Then, we consider another Lipschitzness notion induced by a natural mapping from the output solution to its characteristic vector.
It turns out that no Lipschitz continuous algorithm exists for this Lipschitz notion, and we instead design algorithms with bounded pointwise Lipschitz constants for the minimum spanning tree problem and the maximum weight bipartite matching problem.
Our algorithm for the latter problem is based on an LP relaxation with entropy regularization.
\end{abstract}

\thispagestyle{empty}

\newpage

\thispagestyle{empty}
\tableofcontents
\thispagestyle{empty}

\setcounter{page}{0}
\newpage

\section{Introduction}



Graph algorithms are widely used for decision making and knowledge discovery.
To ensure their effectiveness, it is essential that their output remains stable even when subjected to small perturbations to the input.
Failure to do so may result in several problems:
\begin{itemize}
\itemsep=0pt
\item The output is often used for resource allocation, and it is important to avoid frequent changes as it can result in additional costs.
\item Frequent changes to decisions made by the algorithm can erode user trust.
\item Inconsistency of the output in critical applications such as financial or medical can lead to unsafe decisions being made.
\item The lack of consistency in the output during the process of knowledge discovery can significantly contribute to the difficulty of replicating scientific findings.
\end{itemize}


Unfortunately, many fundamental graph algorithms are vulnerable to small perturbations to the input.
For example, consider Dijkstra's algorithm, a classical algorithm for the shortest path problem that can be found in any textbook. 
Let $G$ be an edge-weighted graph consisting of two vertices $s$ and $t$ connected by two disjoint paths $P_1$ and $P_2$ with almost the same but different lengths.
Given $s$ and $t$ as source and target vertices, respectively, Dijkstra's algorithm outputs the shorter path from $s$ to $t$, say, $P_1$.
However, if the length of an edge in $P_1$ is slightly increased such that $P_2$ becomes shorter than $P_1$, then Dijkstra's algorithm will output $P_2$.
This indicates the vulnerability of Dijkstra's algorithm to input perturbations, which is not desirable for some applications because the output path is often used to determine a transportation route and changing it might incur a huge cost.


In this study, we consider mitigating the effects of input perturbations on weighted graph algorithms.
As a stability measure, we consider the \emph{Lipschitz continuity} of graph algorithms, which guarantees that a small change in the input does not significantly change the output.
Then, we design algorithms with small Lipschitz constants for various graph problems. 

Before presenting our algorithmic results, we discuss how the Lipschitz continuity of a graph algorithm must be defined.

\subsection{Lipschitz Continuity}

\paragraph{Metrics.}
Let $\mathcal{A}$ be an algorithm that, given a graph $G=(V,E)$ and a weight vector $w \in \mathbb{R}_{\geq 0}^E$ over edges, outputs an edge set $\mathcal{A}(G,w) \subseteq E$.
Then, how should we define the Lipschitz constant of $\mathcal{A}$, or more specifically, which metric should we impose on the input and output spaces?

In this study, we always adopt the $\ell_1$ metric for the input space, that is, the distance between two weight vectors $w,w'\in \mathbb{R}_{\geq 0}^E$ is defined to be $\|w-w'\|_1$.
We do so because for combinatorial problems, it is natural to assume that the distance between $w$ and $w'$ is calculated as the sum of the distances between $w(e)$ and $w'(e)$ over $e\in E$, and the $\ell_1$ metric satisfies this property.

We also use the $\ell_1$ metric for the output space.
Depending on how we map the output edge set to a vector in the $\ell_1$ space, we can think of the following two variations.
\begin{description}
    \item[(Unweighted mapping)] We map an edge set $F \subseteq E$ to the characteristic vector $\bm{1}_F \in \mathbb{R}^E$ of $F$, where $\bm{1}_F(e)=1$ if $e \in F$ and $0$ otherwise.
    Then for two edge sets $F,F' \subseteq E$, we define
    \[
      d_u(F,F'):= \|\bm{1}_F - \bm{1}_{F'}\|_1 = |F \triangle F'|.
    \]
    \item[(Weighted mapping)] We map an edge set $F \subseteq E$ to a vector $\sum_{e \in F}w(e) \bm{1}_e$ using the weight vector $w \in \mathbb{R}_{\geq 0}^E$, where $\bm{1}_e \in \mathbb{R}^E$ is the characteristic vector of $e \in E$, that is, $\bm{1}_e(f) = 1$ if $f = e$ and $0$ otherwise.
    Then for two edge sets $F,F' \subseteq E$ and weight vectors $w,w'\in \mathbb{R}_{\geq 0}^E$, we define
    \begin{align*}
        & d_w((F,w),(F',w')) := \left\|\sum_{e \in F}w(e) \bm{1}_e - \sum_{e \in F'}w'(e) \bm{1}_e\right\|_1  \\
        & = \sum_{e \in F \cap F'}|w(e) - w'(e)| + \sum_{e \in F \setminus F'}w(e) + \sum_{e \in F' \setminus F}w'(e).          
    \end{align*}
    We note that $d_w((F,w),(F',w)) = \sum_{e \in F \triangle F'}w(e)$ holds.
\end{description}
To understand the difference between the unweighted and weighted mappings, consider the shortest path problem, where the graph and weight vector  model the road network and the time required to pass through roads, respectively, and the output path represents the roads used in the trip.
The distance $d_u$ considers how many roads are changed between two trips, whereas the distance $d_w$ considers the time spent on different roads between the two trips.

The weighted mapping is more natural than the unweighted one for Lipschitzness.
To see this, let us consider a shortest path algorithm $\mathcal{A}$.
It is natural to ask $\mathcal{A}$ to output the same path regardless of whether the distance is measured in kilometers or miles.
This implies that $\mathcal{A}$ is \emph{scale invariant}, that is, it outputs the same path when edge weights are multiplied by a constant.
Let $G=(V,E)$ be a graph and $w\in \mathbb{R}_{\geq 0}^E$ be a weight vector measured in kilometers, and consider another weight vector $w'\in \mathbb{R}_{\geq 0}^E$ obtained from $w$ by setting $w'(e) = w(e) + \delta$, where $e \in E$ and $\delta > 0$ is measured in kilometers.
Then, if we measure the distance between outputs using the weighted mapping, the relative change of the output is $d_w((\mathcal{A}(G,w),w),(\mathcal{A}(G,w'),w'))/\delta$.
Let $c \approx 1.609$ be the ratio of a mile to a kilometer.
Then, if we calculate edge weights in miles, the relative change of the output is 
\begin{align*}
    \frac{d_w((\mathcal{A}(G,w/c),w/c),(\mathcal{A}(G,w'/c),w'/c))}{\delta/c} 
    & = 
    \frac{d_w((\mathcal{A}(G,w),w),(\mathcal{A}(G,w'),w'))/c}{\delta/c} \\
    & = \frac{d_w((\mathcal{A}(G,w),w),(\mathcal{A}(G,w'),w'))}{\delta},   
\end{align*}
and hence the two relative changes coincide.
We do not have this property if we use the unweighted mapping.
Hence, we first focus on the weighted mapping and then discuss the unweighted one later.

\paragraph{First Attempt: Lipschitz continuity of deterministic algorithms.}
Using the metric based on the weighted mapping imposed on the input and output spaces as mentioned previously, we can define the Lipschitz constant of a deterministic algorithm as follows:
\begin{definition}[Lipschitz constant of a deterministic algorithm]\label{def:deterministic}
    Let $\mathcal{A}$ be a deterministic algorithm that, given a graph $G=(V,E)$ and a weight vector $w \in \mathbb{R}_{\geq 0}^E$, outputs an edge set $\mathcal{A}(G,w) \subseteq E$.
    Then, the \emph{Lipschitz constant} of the algorithm $\mathcal{A}$ on a graph $G=(V,E)$ is
    \begin{align*}
        \sup_{\substack{w,w' \in \mathbb{R}_{\geq 0}^E,\\w\neq w'}}\frac{d_w((\mathcal{A}(G,w),w),(\mathcal{A}(G,w'),w'))}{\|w-w'\|_1}.
    \end{align*}

\end{definition}


Note that we only take the supremum over weight vectors and not over underlying graphs.
To explain why we adopt this definition, let us consider the shortest path problem again.
The weight vector can frequently change owing to traffic jams or inclement weather,  whereas the underlying graph may change because of construction or disasters, which occur less frequently.
Hence, it would be more useful to consider the former type of changes than the latter.

Another reason for not taking the supremum over pairs of graphs is that the change in the underlying graph often forces any (reasonable) algorithm to change its output drastically, and hence it is impossible to bound the Lipschitz constant if we allow changes in the underlying graph.
For example, consider an instance of the shortest path problem such that there are two disjoint paths---one short and the other long---between source and target vertices.
Any algorithm with a reasonable approximation guarantee must output the shorter path.
However, if an edge in the shorter path is removed, the algorithm must change its output to the longer path.

Unfortunately, even though we do not take the supremum over underlying graphs in Definition~\ref{def:deterministic}, any (reasonable) deterministic algorithm for the shortest path problem is not Lipschitz continuous:
\begin{theorem}\label{thm:sp-deterministic-lower-bound}
    Any deterministic algorithm for the shortest path problem with a finite approximation ratio is not Lipschitz continuous, that is, its Lipschitz constant is unbounded.
\end{theorem}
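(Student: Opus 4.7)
The plan is to turn the informal two-parallel-paths example sketched after Dijkstra into a rigorous construction that defeats \emph{any} deterministic $\alpha$-approximation $\mathcal{A}$. The key observation is that finiteness of $\alpha$ pins down $\mathcal{A}$'s output at two endpoints of a one-parameter family of weight vectors (only one path is short enough to be $\alpha$-feasible at each endpoint), so $\mathcal{A}$ must switch between two disjoint heavy paths somewhere in the middle; a connectedness argument then shows the switch occurs across an arbitrarily small $\ell_1$ perturbation, which makes the Lipschitz ratio blow up.

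Concretely, I would fix $M>\alpha$ and let $G$ consist of two vertices $s,t$ joined by two parallel edges $e_1,e_2$ (or, if only simple graphs are permitted, their one-subdivisions; the argument is identical). For $\tau\in[0,1]$ define
\[
  w_\tau(e_1)=1+(M-1)\tau,\qquad w_\tau(e_2)=M-(M-1)\tau,
\]
so $w_0=(1,M)$ and $w_1=(M,1)$. At $\tau=0$ the only $s$-$t$ path of length at most $\alpha\cdot\mathrm{OPT}(w_0)=\alpha$ is $\{e_1\}$, forcing $\mathcal{A}(G,w_0)=\{e_1\}$; symmetrically $\mathcal{A}(G,w_1)=\{e_2\}$.

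The next step is to show that for every $\eta>0$ there exist $\tau_1<\tau_2$ in $[0,1]$ with $\tau_2-\tau_1<\eta$, $\mathcal{A}(G,w_{\tau_1})=\{e_1\}$, and $\mathcal{A}(G,w_{\tau_2})=\{e_2\}$. Writing $A=\{\tau\in[0,1]:\mathcal{A}(G,w_\tau)=\{e_1\}\}$ and $B=[0,1]\setminus A$, both are nonempty (they contain $0$ and $1$ respectively). If $\inf_{a\in A,b\in B}|a-b|$ were a positive number $\delta$, examining $s=\sup A$ would yield a contradiction: when $s\in A$ we have $s<1$ (since $1\in B$), so any point of $(s,\min(1,s+\delta))$ lies in $B$ within distance less than $\delta$ of $s$; when $s\in B$, $s$ is a limit of points of $A$, again within distance less than $\delta$. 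Hence $\inf=0$ and the desired $\tau_1,\tau_2$ exist. This connectedness step is the main point to be careful about, because $\mathcal{A}$ need not depend continuously on $w$; the argument itself is elementary.

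Finally, for such $\tau_1<\tau_2$ the disjointness of $\{e_1\}$ and $\{e_2\}$ gives
\[
  \|w_{\tau_1}-w_{\tau_2}\|_1=2(M-1)(\tau_2-\tau_1),\quad d_w\bigl((\{e_1\},w_{\tau_1}),(\{e_2\},w_{\tau_2})\bigr)=1+M-(M-1)(\tau_2-\tau_1),
\]
so the ratio tends to $+\infty$ as $\tau_2-\tau_1\to 0$, proving that the Lipschitz constant of $\mathcal{A}$ on $G$ is unbounded.
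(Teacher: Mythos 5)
Your proof is correct and follows essentially the same strategy as the paper's: interpolate between two weight vectors at which a finite approximation ratio forces opposite outputs, and use the determinism of $\mathcal{A}$ to localize the switch to an arbitrarily small $\ell_1$ perturbation while the output distance stays bounded below. The only cosmetic difference is that the paper localizes the switch by pigeonhole over a discrete mesh of step $1/k$ rather than your $\sup$/$\inf$ argument on the continuous family, and it uses endpoint weights $(0,1)$ and $(1,0)$ instead of your $(1,M)$ and $(M,1)$; both variants are valid.
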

To see the reason, consider a graph having two disjoint paths between the source and target vertices and the transition from a weight vector for which the first path is shorter to one for which the second path is shorter.
Because the algorithm is deterministic, there is some point in the transition where the output path discontinuously changes from the first path to the second one, which implies that the algorithm is not Lipschitz.



\paragraph{Second Attempt: Lipschitz continuity of randomized algorithms.}
To remedy the aforementioned issue, we consider the Lipschitz continuity of randomized algorithms.
First, we extend $d_w$, which is a metric over outputs, to a metric over output distributions.
For two probability distributions $\mathcal{F},\mathcal{F}'$ over subsets of $E$, the \emph{earth mover's distance} between $\mathcal{F}$ and $\mathcal{F}'$ is defined as
\[
    \EMW\left((\mathcal{F},w),(\mathcal{F}',w')\right):=\min_{\mathcal{D}}\E_{(F,F')\sim \mathcal{D}} d_w\left((F,w),(F',w')\right),
\]
where the minimum is taken over \emph{couplings} of $\mathcal{F}$ and $\mathcal{F'}$, that is, distributions over pairs of sets such that its marginal distributions on the first and second coordinates are equal to $\mathcal{F}$ and $\mathcal{F}'$, respectively.
We note that $\EMW$ coincides with $d_w$ if the distributions $\mathcal{F}$ and $\mathcal{F}'$ are supported by single edge sets.

For a randomized algorithm $\mathcal{A}$, a graph $G=(V,E)$, and a weight vector $w \in \mathbb{R}_{\geq 0}^E$, let $\mathcal{A}(G,w)$ denote the (random) output of $\mathcal{A}$ on $G$ and $w$.
Abusing the notation, we often identify it with its distribution.
Then, we define the Lipschitz constant of a randomized algorithm as follows:
\begin{definition}[Lipschitz constant of a randomized algorithm]\label{def:randomized}
    Let $\mathcal{A}$ be a randomized algorithm that, given a graph $G=(V,E)$ and a weight vector $w \in \mathbb{R}_{\geq 0}^E$, outputs a (random) edge set $\mathcal{A}(G,w) \subseteq E$.
    Then, the \emph{Lipschitz constant} of the algorithm $\mathcal{A}$ on a graph $G=(V,E)$ is
    \begin{align*}
        \sup_{\substack{w,w' \in \mathbb{R}_{\geq 0}^E,\\w\neq w'}}\frac{\EMW\left((\mathcal{A}(G,w),w),(\mathcal{A}(G,w'),w')\right)}{\|w-w'\|_1}.
    \end{align*}
    We say that $\mathcal{A}$ is \emph{Lipschitz continuous} if its Lipschitz constant is bounded for any graph $G=(V,E)$ and is \emph{$L$-Lipschitz} if its Lipschitz constant is at most $L$.
\end{definition}
If the algorithm is deterministic, then this definition coincides with Definition~\ref{def:deterministic}.

Consider again the graph having two disjoint paths between the source and target vertices and the transition from a weight vector for which the first path is shorter to one for which the second path is shorter.
Then, the output of a randomized algorithm can also make a transition from a distribution with most of its mass on the first path to one with most of its mass on the second path, and hence we can alleviate the issue of deterministic algorithms.
However, designing Lipschitz continuous algorithms is a nontrivial task because we need to bound the ratio in Definition~\ref{def:randomized} for any pair of weight vectors, which can be very close.

As we will discuss in Section~\ref{subsec:intro-graph-problems}, we can design randomized Lipschitz continuous algorithms for various graph problems.
However, it is not still apparent whether bounding Lipschitz constant of a randomized algorithm is practically relevant because, even if the output distributions for two weighted graphs $(G,w)$ and $(G,w')$ are close, the paths we obtain by running the algorithm on $(G,w)$ and $(G,w')$ independently may be completely different.

\paragraph{Third Attempt: Lipschitz constant of randomized algorithms with shared randomness.}
To overcome the issue mentioned above, we consider randomized algorithms such that the outputs for two similar weighted graphs are close in expectation over the internal randomness.
More specifically, for a randomized algorithm $\mathcal{A}$, let $A_\pi$ denote the deterministic algorithm obtained from $\mathcal{A}$ by fixing its internal randomness to $\pi$.
Then, we define the Lipschitz constant with shared randomness of $\mathcal{A}$ as follows:
\begin{definition}[Lipschitz constant of a randomized algorithm with shared randomness]\label{def:randomized-shared-randomness}
    Let $\mathcal{A}$ be a randomized algorithm that, given a graph $G=(V,E)$ and a weight vector $w \in \mathbb{R}_{\geq 0}^E$, outputs a (random) edge set $\mathcal{A}(G,w) \subseteq E$.
    Then, the \emph{Lipschitz constant under shared randomness} of the algorithm $\mathcal{A}$ on a graph $G=(V,E)$ is
    \begin{align*}
        \sup_{\substack{w,w' \in \mathbb{R}_{\geq 0}^E,\\w\neq w'}}\E_\pi\left[\frac{d_w\left((\mathcal{A}_\pi(G,w),w),(\mathcal{A}_\pi(G,w'),w')\right)}{\|w-w'\|_1}\right],
    \end{align*}
    where $\pi$ represents the internal randomness of $\mathcal{A}$.
    We say that $\mathcal{A}$ is \emph{Lipschitz continuous under shared randomness} if its Lipschitz constant under shared randomness is bounded for any graph $G=(V,E)$ and is \emph{$L$-Lipschitz under shared randomness} if its Lipschitz constant with shared randomness is at most $L$.
\end{definition}
Note that if a randomized algorithm is $L$-Lipschitz under shared randomness, then it is $L$-Lipschitz, and hence the former is a stronger property.

Suppose that we have a randomized algorithm $\mathcal{A}$ for the shortest path problem with a small Lipschitz constant under shared randomness and we want to compute short paths on a graph with varying weights in a stable fashion.
By fixing the randomness $\pi$ first and then using $\mathcal{A}_\pi$ throughout, we can guarantee that the output paths for two similar weights are similar in expectation over $\pi$.

\subsection{Lipschitz Continuous Algorithms for Graph Problems}\label{subsec:intro-graph-problems}

In this section, we discuss the Lipschitz continuity of randomized algorithms with shared randomness for several graph problems.
Our results are summarized in Table~\ref{tab:Lipschitz}.

\begin{table}[t!]
    \centering
    \caption{Results for Lipschitz continuity under shared randomness. $n$ represents the number of vertices in the input graph, and $\epsilon,\alpha \in (0,1)$ are arbitrary constants.}\label{tab:Lipschitz}
    \begin{tabular}{llll}
        \toprule
        \multirow{2}{*}{Problem} & Approximation & Lipschitz & \multirow{2}{*}{Reference}\\
        & Ratio & Constant & \\
        \midrule    
        \multirow{2}{*}{Minimum Spanning Tree} & $1+\epsilon$ & $O(\epsilon^{-1})$ & Sec.~\ref{sec:mst-weighted} \\
        & $1+\epsilon$ & $\Omega(\epsilon^{-1})$ & Sec.~\ref{sec:lower-bounds} \\
        \midrule
        \multirow{2}{*}{Shortest Path} & $1+\epsilon$ & $O(\epsilon^{-1}\log^3 n)$ & Sec.~\ref{sec:sp-Lipschitz} \\
        & $1+\epsilon$ & $\Omega(\epsilon^{-1})$ & Sec.~\ref{sec:lower-bounds} \\
        \midrule
        \multirow{2}{*}{Maximum Weight Matching} & $1/8-\epsilon$ & $O(\epsilon^{-1})$ & Sec.~\ref{sec:maximum-weight-matching} \\
        & $\alpha$ & $\Omega(\alpha)$ & Sec.~\ref{sec:lower-bounds} \\
        \bottomrule
    \end{tabular}
\end{table}        

\paragraph{Minimum spanning tree.}
In the \emph{(weighted) minimum spanning tree problem}, we are given a (connected) undirected graph $G=(V,E)$, and a weight vector $w \in \mathbb{R}_{\geq 0}^E$, and the goal is to output a spanning tree $T \subseteq E$ that minimizes the total weight $\sum_{e \in T}w(e)$.
We show that for any $\epsilon>0$, there exists a polynomial-time $(1+\epsilon)$-approximation algorithm for the minimum spanning tree problem with Lipschitz constant $O(\epsilon^{-1})$ under shared randomness.
To understand this upper bound, suppose that $w$ is $\{0,1\}$-valued and that we change the value of $w(f)$ from $1$ to $0$ for some edge $f \in E$.
This is essentially equivalent to contracting the edge $f$, and the upper bound indicates that we only need to change $O(\epsilon^{-1})$ edges in the spanning tree, which is far smaller than the spanning tree size, i.e., $\Theta(n)$.
We complement the upper bound by showing that any (randomized) $(1+\epsilon)$-approximation algorithm for the minimum spanning tree problem must have Lipschitz constant $\Omega(\epsilon^{-1})$ (even without shared randomness).

\paragraph{Shortest path.}
In the \emph{(weighted) shortest path problem}, we are given an undirected graph $G=(V,E)$, two vertices $s,t \in V$, and a weight vector $w \in \mathbb{R}_{\geq 0}^E$, and the goal is to output the shortest path between $s$ and $t$, where the length of a path $P \subseteq E$ is
$\sum_{e \in P}w(e)$.
We show that for any $\epsilon \in (0,1)$, there exists a polynomial-time $(1+\epsilon)$-approximation algorithm for the shortest path problem with Lipschitz constant $O\left(\epsilon^{-1}\log^3 n\right)$ under shared randomness, where $n$ represents the number of vertices in the input graph. 
Our algorithm may output a walk, i.e., the same edge may be used a multiple times in the output.
We regard a walk $P$ as a multiset of edges, and we map it to a vector $\sum_{e \in P}w(e)\bm{1}_e$, where an edge $e \in E$ appears in the sum the same number of times that it appears in the walk $P$.
Then, the distance $d_w(\cdot,\cdot)$ for walks and the Lipschitz constant of an algorithm that outputs a walk can be naturally defined. 
To understand the upper bound, suppose that $w$ is $\{0,1\}$-valued and that we change the value of $w(f)$ from $1$ to $0$ for some edge $f \in E$.
This is essentially equivalent to contracting the edge $f$, and the upper bound indicates that we only need to change $O(\epsilon^{-1} \log^3 n)$ edges in the output path, which is nontrivially small when the shortest path from $s$ to $t$ is $\omega(\log^3 n)$.
We also show that any (randomized) $(1+\epsilon)$-approximation algorithm for the shortest path problem must have Lipschitz constant $\Omega(\epsilon^{-1})$ (even without shared randomness), which implies that our upper bound is tight up to a polylogarithmic factor in $n$.

\paragraph{Maximum weight matching.}


In the \emph{maximum weight matching problem}, given a graph $G=(V,E)$ and a weight vector $w \in \mathbb{R}_{\geq 0}^E$, we want to find a matching $M \subseteq E$ with the maximum weight, i.e., $\sum_{e \in M}w(e)$.
We show that for any $\epsilon > 0$, there exists a polynomial-time $(1/8-\epsilon)$-approximation algorithm with Lipschitz constant $O(\epsilon^{-1})$ with shared randomness.
To understand this upper bound, suppose again that $w$ is $\{0,1\}$-valued and that we change the value of $w(f)$ from $1$ to $0$ for some $f \in E$.
This is essentially equivalent to deleting the edge $f$, and the upper bound indicates that we only need to change $O(\epsilon^{-1} )$ edges in the matching, which is nontrivially small when the matching size is $\omega(1)$.
We also show that any (randomized) $\alpha$-approximation algorithm for the maximum weight matching problem must have Lipschitz constant $\Omega(\alpha)$ (even without shared randomness).

~\\
We note that the proof of Theorem~\ref{thm:sp-deterministic-lower-bound} can be easily extended to the minimum spanning tree problem and the maximum weight matching problem, and hence randomness is necessary to obtain Lipschitz continuous algorithms for them.

\subsection{Pointwise Lipschitz Continuity for Unweighted Mapping}

In this section, we discuss Lipschitz continuity in the case where the distances between outputs are measured using the unweighted mapping.
First, we define the earth mover's distance between output distributions $\mathcal{F}$ and $\mathcal{F}'$ with respect to the unweighted mapping as follows:
\[
    \EMU\left(\mathcal{F},\mathcal{F}'\right):=\min_{\mathcal{D}}\E_{(F,F')\sim \mathcal{D}} d_u(F,F'),
\]
where the minimum is taken over couplings of $\mathcal{F}$ and $\mathcal{F}'$.

We cannot hope that a scale-invariant algorithm has a bounded Lipschitz constant with respect to the unweighted mapping.
To see this, let $w,w'\in \mathbb{R}_{\geq 0}^E$ be arbitrary weight vectors.
Then for any constant $c>0$, we have
\[
    \frac{d_u(\mathcal{A}(G,w/c),\mathcal{A}(G,w'/c))}{
    \|w/c-w'/c\|_1} = \frac{c \cdot  d_u(\mathcal{A}(G,w),\mathcal{A}(G,w))}{
    \|w-w'\|_1},
\]
which implies that the Lipschitz constant is unbounded.
Hence, we consider the following variant that look at the relative change in a local neighborhood:
\begin{definition}[Pointwise Lipschitz constant of a randomized algorithm with shared randomness with respect to the unweighted mapping]\label{def:pointwise-randomized}
    Let $\mathcal{A}$ be a randomized algorithm that, given a graph $G=(V,E)$ and a weight vector $w \in \mathbb{R}_{\geq 0}^E$, outputs a (random) edge set $\mathcal{A}(G,w) \subseteq E$.
    Then, the \emph{pointwise Lipschitz constant under shared randomness} of the algorithm $\mathcal{A}$ on a graph $G=(V,E)$ at a weight vector $w \in \mathbb{R}_{\geq 0}^E$ with respect to the unweighted mapping is
    \begin{align*}
        & \mathop{\lim\sup}_{\substack{w' \in \mathbb{R}_{\geq 0}^E, w' \to w}} \E_\pi\left[\frac{d_u\left(\mathcal{A}_\pi(G,w),\mathcal{A}_\pi(G,w')\right)}{\|w-w'\|_1}\right],
    \end{align*}
    where $\pi$ represents the internal randomness of $\mathcal{A}$.
\end{definition}
In contrast to Lipschitz constant, the pointwise Lipschitz constant can depend on the weight vector $w$, and hence a non-scale-invariant algorithm can have a bounded pointwise Lipschitz constant.

\begin{table}[t!]
    \centering
    \caption{Results for pointwise Lipschitz continuity with shared randomness with respect to the unweighted mapping. For the minimum spanning tree problem, $n$ represents the number of vertices in the input graph, and for the maximum weight bipartite matching problem, $n$ and $m$ represent the number of vertices in the left and right parts of the input bipartite graph, respectively. $\epsilon \in (0,1)$ is an arbitrary constant, and $\OPT$ represents the optimal value.}\label{tab:pointwise-Lipschitz}
    \begin{tabular}{llll}
        \toprule
        \multirow{2}{*}{Problem} & Approximation & Lipschitz & \multirow{2}{*}{Reference}\\
        & Ratio & Constant & \\
        \midrule    
        Minimum Spanning Tree & $1+\epsilon$ & $O(\epsilon^{-1}n/\OPT)$ & Sec.~\ref{subsec:unweighted-mst} \\
        \midrule
        Maximum Weight Bipartite Matching & $1/2-\epsilon$ & $O(\epsilon^{-1} n^{3/2}\log m /\OPT)$ & Sec.~\ref{subsec:unweighted-bipartite-matching} \\
        \bottomrule
    \end{tabular}
\end{table}     

We consider the pointwise Lipschitz continuity of algorithms with respect to the unweighted mapping for the minimum spanning tree problem and the maximum weight bipartite matching problem. 
Our results are summarized in Table~\ref{tab:pointwise-Lipschitz}.
Below, we discuss them in detail.

For any $\epsilon>0$, we show that there exists a polynomial-time $(1+\epsilon)$-approximation algorithm for the minimum spanning tree problem with pointwise Lipschitz constant $O(\epsilon^{-1} n / \OPT)$ under shared randomness, where $n$ is the number of vertices in the input graph and $\OPT$ is the minimum weight of a spanning tree.
As discussed previously, the dependency on $\mathrm{opt}$ (or some other function depending on edge weights) is unavoidable.
Suppose $w \in \mathbb{R}_{\geq 0}^E$ is $\{0,1\}$-valued.
Then, the bound shows that the change in the output tree is smaller than the tree size, $n-1$, when $\OPT=\omega_n(\epsilon^{-1})$.


In the \emph{maximum weight bipartite matching problem}, given a complete bipartite graph $G=(U \cup V, E = U \times V)$ and a weight vector $w \in \mathbb{R}_{\geq 0}^E$, the goal is to output a matching $M \subseteq E$ that maximizes its weight, i.e., $\sum_{e \in M}w(e)$.
For this problem, we show that there exists a polynomial-time $(1/2-\epsilon)$-approximation algorithm with pointwise Lipschitz constant $O(\epsilon^{-1} n^{3/2}\log m /\OPT)$ under shared randomness, where $n$ and $m$ are the numbers of vertices in the left and right parts of the input bipartite graph, respectively, and $\OPT$ is the maximum weight of a matching.
Suppose $n < m$ and the weight vector $w \in \mathbb{R}_{\geq 0}^E$ is $\{0,1\}$-valued.
Then, the bound shows that the change in the output matching is smaller than the maximum matching size, $n$, when $\OPT=\omega(\epsilon^{-1}\sqrt{n}\log m)$.

We note that, in general, a Lipschitz continuous algorithm does not imply an algorithm with a bounded pointwise Lipschitz constant, and vice versa.

\subsection{Application: Online Algorithm with Small Recourse}
Let $\mathcal{P}$ be a weighted graph problem for which, given a weighted graph $(G=(V,E),w)$, the goal is to output a (feasible) edge set $F \subseteq E$ that minimizes or maximizes the objective function $f_{\mathcal{P}}:2^E \to \mathbb{R}_{\geq 0}$.
Let us consider the online version of $\mathcal{P}$ in which a graph $G=(V,E)$ is given at the beginning, and then weight vectors $w_1,\ldots,w_T \in \mathbb{R}_{\geq 0}^E$ are given online.
Then, our goal is to design a (possibly randomized) online algorithm that outputs solutions $F_1,\ldots,F_T$, where $F_t$ solely depends on $G$ and $w_1,\ldots,w_t$, such that $F_t$ is a good solution to the weighted graph $(G,w_t)$ and the \emph{recourse} $\sum_{t = 2,\ldots,T} d_w((F_t,w_t),(F_{t-1},w_{t-1}))/\|w_t - w_{t-1}\|_1$ is small. 

Online algorithms with small recourse have been studied for various problems such as the set cover problem~\cite{gupta2017online}, the facility location problem~\cite{bhattacharya2022efficient},
the edge coloring problem~\cite{bhattacharya2021online}, and the $k$-clustering problem~\cite{lattanzi2017consistent}, though they studied the setting that, at each time step, a small part of the input is given instead of a weight vector.

We can use a Lipschitz continuous algorithm $\mathcal{A}$ under shared randomness to design an online algorithm with a small recourse:
Fix the randomness $\pi$ first, and then run the deterministic algorithm $\mathcal{A}_\pi$ on the weighted graph $(G,w_t)$ to obtain a solution $F_t$ for every $t = 1,\ldots,T$.
Then, from the Lipschitzness guarantee of $\mathcal{A}$, we can bound the expected recourse as in the following theorem:
\begin{theorem}\label{thm:online-alg-with-small-recourse}
    Suppose that there exists a randomized $\alpha$-approximation algorithm for a weighted graph problem $\mathcal{P}$ that is $L$-Lipschitz under shared randomness.
    Then, there exists an $\alpha$-approximation algorithm for the online version of $\mathcal{P}$ whose expected recourse is at most $LT$.
\end{theorem}
Combining Theorem~\ref{thm:online-alg-with-small-recourse} with our Lipschitz continuous algorithms, we immediately obtain online algorithms with small recourse for the minimum spanning tree problem, the shortest path problem, and the maximum weight matching problem.

\subsection{Related Work}

\paragraph{Worst-case and average sensitivity.}

Lipschitz continuity is closely related to the \emph{sensitivity} of algorithms introduced in~\cite{Murai:2019hG,Varma2021}.
The \emph{worst-case and average sensitivities} of a randomized algorithm $\mathcal{A}$ on an (unweighted) graph $G=(V,E)$ are defined as
\begin{align}
    \max_{e \in E}\EM_u(\mathcal{A}(G),\mathcal{A}(G\setminus e)) \quad \text{and} \quad
    \frac{1}{|E|}\sum_{e \in E}\EM_u(\mathcal{A}(G),\mathcal{A}(G\setminus e)), \label{eq:sensitivity}
\end{align}
respectively, where $G \setminus e$ is the graph obtained from $G$ by deleting the edge $e \in E$.
Clearly, the average sensitivity is bounded from above by the worst-case sensitivity.
The sensitivity of algorithms has been investigated for various graph problems including the minimum cut problem~\cite{Varma2021}, the maximum matching problem~\cite{Varma2021,Yoshida2021}, and spectral clustering~\cite{Peng2020}.
It is known that there is no algorithm with $o(n)$ worst-case/average sensitivity for the shortest path problem~\cite{Varma2021}.
As the definition of sensitivity~\eqref{eq:sensitivity} can be easily generalized, other non-graph problems such as dynamic programming problems~\cite{Kumabe22,kumabe_et_al:LIPIcs.ESA.2022.75} and Euclidean $k$-means~\cite{YI22} have also been studied from the viewpoint of sensitivity.

To see the connection to Lipschitz continuity, suppose that in the supremum of Definition~\ref{def:randomized}, we fix $w$ to be the all-one vector $\bm{1}_E$ and we restrict the domain of $w'$ to $\{0,1\}$-valued vectors.
Then by the triangle inequality, the (modified) Lipschitz constant on a graph $G=(V,E)$ can be bounded from above as
\begin{align}
    & \max_{w' \in \{0,1\}^E}\frac{\EMW((\mathcal{A}(G,\bm{1}_E),\bm{1}_E),(\mathcal{A}(G,w'),w'))}{\|\bm{1}_E - w'\|} 
    = \max_{F \subseteq E}\frac{\EMW((\mathcal{A}(G,\bm{1}_E),\bm{1}_E),(\mathcal{A}(G,\bm{1}_{E \setminus F}),\bm{1}_{E \setminus F}))}{|F|} \nonumber  \\
    & \leq 
    \max_{e \in E}\EMW((\mathcal{A}(G,\bm{1}_E),\bm{1}_E),(\mathcal{A}(G,\bm{1}_{E\setminus \{e\}}),\bm{1}_{E \setminus \{e\}})). \label{eq:sensitivity-against-zeroing-out}
\end{align}

For the shortest path problem, the weighted graph $(G,w)$ for a $\{0,1\}$-valued weight vector $w$ is equivalent to the graph obtained from $G$ by contracting edges $e \in E$ with $w(e) = 0$.
In particular, the weighted graph $(G,\bm{1}_{E \setminus \{e\}})$ is equivalent to $G/e$, which is the graph obtained from $G$ by contracting the edge $e$.
Hence,~\eqref{eq:sensitivity-against-zeroing-out} can be seen as a variant of the worst-case sensitivity, where the operation applied to the graph is edge contraction instead of edge deletion, that is,
\begin{align}
    \max_{e \in E}\EM_u(\mathcal{A}(G),\mathcal{A}(G / e)). \label{eq:sensitivity-contraction}
\end{align}
Indeed, our Lipschitz continuous algorithm for the shortest path problem is based on an algorithm with a bounded sensitivity with respect to edge contraction.

A notable difference between the sensitivity and the Lipschitz constant is that the former is trivially bounded by the maximum solution size whereas it is not clear a priori whether there is an algorithm for which the latter is bounded.

\paragraph{Replicability}

Note that, when an algorithm has a small Lipschitz constant under shared randomness, it has the ability to replicate the output with respect to weight perturbations. Another form of replicability has been introduced for the stochastic setting by Impagliazzo~et al.~\cite{impagliazzo2022reproducibility}. Consider two sets of $n$ i.i.d.\ elements, $X = (x_1,...,x_n)$ and $Y = (y_1,...,y_n)$. Then, a randomized algorithm $A$ is called \emph{$\rho$-replicable} for $\rho>0$ if $\Pr_{X,Y,\pi}[A_\pi(X) = A_\pi(Y)] \geq 1-\rho$, where $\pi$ denotes the internal randomness of $A$. Impagliazzo et al.~\cite{impagliazzo2022reproducibility} showed a replicable algorithm for learning halfspaces, and since then, replicable algorithms have been developed for bandit problems~\cite{esfandiari2023replicable} and clustering problems~\cite{esfandiari2023replicable-clustering}.


\paragraph{Adversarial attacks and Lipschitz continuity of neural networks.}
In the field of machine learning, it is known that the predictions of the trained model can be significantly changed by slight perturbations of the input, and such perturbations are called \emph{adversarial attacks}~\cite{goodfellow2015explaining,szegedy2013intriguing}.
Because adversarial attacks threaten the security of machine learning-based systems, there is active research on training models that are robust against them~\cite{carlini2017towards,madry2018towards}.
See surveys~\cite{akhtar2018threat,xu2020adversarial,zhang2020adversarial} for more details.

To mitigate the effects of adversarial attacks, neural networks with small Lipschitz constants have been proposed and their properties have been investigated~\cite{cisse2017parseval,fazlyab2019efficient,tsuzuku2018lipschitz,virmaux2018lipschitz,weng2018evaluating}.
It is also reported that bounding Lipschitz constants of neural networks stabilizes the training process and often produces models with better output quality~\cite{gouk2021regularisation,miyato2018spectral,yoshida2017spectral}.

We note that bounding the Lipschitz constant of a neural network is often easy because it is bounded by the product of the Lipschitz constants of the activation functions and linear transformations used in the neural network, which are easy to calculate.
However, to design Lipschitz continuous algorithms for graph problems, we need to bound approximation ratio and Lipschitz constant simultaneously, and we often need nontrivial techniques as we will see in this paper.

\subsection{Technical Overview}

We first discuss designing Lipschitz continuous algorithms (without shared randomness) and then briefly discuss the necessary modifications that ensure Lipschitz continuity under shared randomness.

\paragraph{Minimum spanning tree.}
It is known that Kruskal's algorithm has (worst-case) sensitivity $O(1)$ against edge deletions~\cite{Varma2021}. 
However, it is not Lipschitz continuous because it is deterministic (see  Theorem~\ref{thm:sp-deterministic-lower-bound}), and hence some modification is required.

Our Lipschitz continuous algorithm for the minimum spanning tree problem works as follows:
Given a graph $G=(V,E)$ and a weight vector $w \in \mathbb{R}_{\geq 0}^E$, we sample $\widehat{\bm{w}}(e)$ uniformly from $[\bm{w}(e),(1+\epsilon)\bm{w}(e)]$ for each edge $e \in E$, and then apply Kruskal's algorithm to the new weight vector $\widehat{\bm{w}}$.
This algorithm clearly achieves $(1+\epsilon)$-approximation.

We can show that, to bound the Lipschitz constant, it suffices to consider a pair of weight vectors $(w,w')$ such that $w'$ is obtained from $w$ by setting $w'(e) = w(e) + \delta$ for some $e \in E$ and $\delta > 0$.
Then, the total variation distance between $\widehat{\bm{w}}$ and $\widehat{\bm{w}}'$ is  $O(\epsilon^{-1}\delta/w(e))$, where $\widehat{\bm{w}}'$ is constructed from $w'$ in the same way as $\widehat{\bm{w}}$ is constructed from $w$.
Then, we can define a coupling, i.e., a joint distribution, between $\widehat{\bm{w}}$ and $\widehat{\bm{w}}'$ such that $\widehat{\bm{w}} \neq \widehat{\bm{w}}'$ with probability $O(\epsilon^{-1}\delta/w(e))$. 
Also, we can show that when $\bm{w} \neq \bm{w}'$ occurs in the coupling, the distance between the output spanning trees is $O(w(e))$.
This implies that the earth mover's distance is $O(\epsilon^{-1}\delta)$ and hence the Lipschitz constant is $O(\epsilon^{-1})$.

Our algorithm and analysis for pointwise Lipschitzness with respect to the unweighted mapping is similar though we need some care because we use the optimal value to determine the range from which we sample $\widehat{\bm{w}}(e)$ and it varies depending on the weight vector.

\paragraph{Shortest path.}

To obtain Lipschitz continuous algorithm for the shortest path problem, we first design an algorithm for the (unweighted) shortest path problem with a low sensitivity with respect to edge contraction (see~\eqref{eq:sensitivity-contraction}). 
We will use this algorithm as a subroutine in our Lipschitz continuous algorithm.

The subroutine takes an unweighted graph $\widehat{G}$ and two vertices $s,t\in V(G)$ as the input and returns an approximate $s$-$t$ shortest path in $\widehat{G}$.
This subroutine is recursive: It samples a vertex $\bm{v}$ called a \emph{pivot}, recursively computes $s$-$\bm{v}$ and $\bm{v}$-$t$ walks that are nearly optimal, and then returns the walk obtained by concatenating the two walks.
We choose the pivot $\bm{v}$ so that it is roughly in the middle of a nearly optimal $s$-$t$ path.
By doing so, we can bound the depth of the recursion by $O(\log n)$ and guarantee that the output walk is nearly optimal.

Now, we turn to analyzing the sensitivity of the subroutine with respect to edge contraction.
Let $e\in E(\widehat{G})$ and we want to bound the earth mover's distance between the output distributions for $\widehat{G}$ and $\widehat{G}/e$.
Informally, we say that a recursion call for computing a $u$-$v$ shortest path is \emph{active} if there is a nearly optimal $u$-$v$ path passing through $e$ and is \emph{inactive} otherwise. 
Then, we can show the following:
\begin{itemize}
    \item recursion calls invoked in an inactive recursion call are also inactive, and
    \item with high probability, at most one of the two recursion calls invoked in an active recursion call is active.
\end{itemize}
These properties imply that the expected number of active recursion calls in each recursion depth is $O(1)$.
Also, we can prove that in an inactive call, the pivot is sampled from exactly the same distributions for $\widehat{G}$ and $\widehat{G}/e$, and thus it does not contribute to the sensitivity at all. 
Additionally, we can show that each active call contributes to the sensitivity by $O(\log^2 n)$. 
Regarding that the number of active calls is $O(\log n)$, the sensitivity of the algorithm can be bounded by $O(\log^3 n)$.

To obtain a Lipschitz continuous algorithm for the shortest path problem, we construct an unweighted graph $\widehat{\bm{G}}(w)$ from the input weighted graph $(G,w)$, and apply the subroutine to it to obtain a walk in $\widehat{\bm{G}}(w)$, and then output the corresponding walk in $G$.
Here, the graph $\widehat{\bm{G}}(w)$ is obtained by replacing each edge of the input graph $G$ with a path of suitable length so that each path in $G$ naturally corresponds to a path in $\widehat{\bm{G}}(w)$. 
The length of each path is sampled from a certain distribution so that the total variation distance between $\widehat{\bm{G}}(w)$ and $\widehat{\bm{G}}(w')$ is proportional to $\|w-w'\|_1$. 
Then, we can bound the Lipschitz continuity of the algorithm by using the sensitivity of the subroutine.


\paragraph{Maximum weight matching.}

Our algorithm is based on the algorithm for the maximum weight matching proposed by Yoshida and Zhou~\cite{Yoshida2021}. 
For a parameter $\alpha>2$, their algorithm first classifies edges $e$ according to the value $\floor{\log_{\alpha} w(e)}$. 
Then, it runs a randomized greedy to compute a matching for each edge class, and then returns a matching obtained by combining them.

Although their algorithm has a bounded \emph{weighted sensitivity}, a discrete analogue of Lipschitz constant, its Lipschitz constant is not bounded.
This is because an arbitrarily small change in the weight of an edge may cause the edge to be classified into a different class. 
To resolve this issue, we sample a parameter $\bm{b}\in [1,\alpha]$ and classify edges $e$ according to the value $\floor{\log_{\alpha} \frac{w(e)}{\bm{b}}}$.
Then, the total variation distance between the classifications obtained from weight vectors $w$ and $w'$ is proportional to $\|w-w'\|_1$, and we can bound the Lipschitz constant.

\paragraph{Maximum weight bipartite matching.}

In contrast to the previous Lipschitz continuous algorithm for the maximum weight matching problem, our pointwise Lipschitz continuous algorithm for the maximum weight bipartite matching problem is based on linear programming (LP). 
The standard LP relaxation for the maximum weight bipartite matching problem is not stable against perturbations to the edge weight.
Hence, we consider LP with \emph{entropy regularization}~\cite{cuturi2013sinkhorn}.
Although entropy regularization was originally introduced to speed up the computation of the earth mover's distance, we use it here to stabilize the computation of LP.
For a weight vector $w \in \mathbb{R}_{\geq 0}^E$, let $\LPmod(w)$ denote the LP for the weighted graph $(G,w)$ with entropy regularization.
Then, we can show that (i) the solution to $\LPmod$ is nearly optimal to the original LP, and (ii) for any weight vector $w'$, the $\ell_1$ distance between the solutions of $\LPmod(w)$ and $\LPmod(w')$ is proportional to $\|w-w'\|_1$.
Then, we carefully round the obtained fractional solution to an integral one in such a way that for any two fractional solutions $x$ and $x'$, the earth mover's distance between the (random) integer solutions obtained from $x$ and $x'$ is proportional to $\|x-x'\|_1$ with respect to the unweighted mapping.

\paragraph{Lipschitz continuity under shared randomness}
In our algorithm, all randomness is caused by sampling processes from continuous or discrete distributions. For example, in our Lipschitz continuous algorithm for the minimum spanning tree, randomness is caused by sampling each edge's modified weight $\widehat{\bm{w}}(e)$ from a uniform distribution over $[w(e), (1+\epsilon)w(e)]$. Let us consider fixing this randomness using shared randomness $\pi$.

Let $\mathcal{F}$ be a function that maps weight vectors $w\in \mathbb{R}^E_{\geq 0}$ to probability distributions, and suppose that in our algorithm, we sample values from the distribution $\mathcal{F}(w)$ at some point. We define a sampling process \Call{Sample}{$\cdot,\pi$} to be \emph{stable} for $\mathcal{F}$ if it satisfies the following two conditions:
\begin{itemize}
\item For any $w\in \mathbb{R}^E_{\geq 0}$, when $\pi$ is sampled from a uniform distribution, the output distribution of \Call{Sample}{$\mathcal{F}(w),\pi$} is equal to $\mathcal{F}(w)$.
\item For any $w,w' \in \mathbb{R}^E_{\geq 0}$, when $\pi$ is sampled from a uniform distribution, the probability that \Call{Sample}{$\mathcal{F}(w),\pi$} and \Call{Sample}{$\mathcal{F}(w'),\pi$} differ is proportional to the total variation distance between $\mathcal{F}(w)$ and $\mathcal{F}(w')$.
\end{itemize}
By replacing all sampling processes in the algorithm with stable processes, we can make our algorithm Lipschitz continuous under shared randomness.

There are four types of sampling processes in the algorithm. Three of them sample from a continuous uniform distribution $\mathcal{F}(w) = \mathcal{U}[l(w),r(w)]$, where $\mathcal{U}[a,b]$ denotes the uniform distribution over $[a,b]$, with $l(w)$ and $r(w)$ being constant functions, with $l(w)$ and $r(w)$ having a constant ratio, and with $l(w)$ and $r(w)$ having a constant difference. The last one samples from a discrete distribution with fixed support.

For each of these processes, we construct a stable algorithm. Furthermore, by discussing how to modify each part of the five algorithms mentioned above, we obtain Lipschitz continuous algorithms under shared randomness with the same approximation ratio and Lipschitz constant, as well as polynomial-time complexity.





\subsection{Preliminaries}
For a function $f\colon X \to Y$ and $S \subseteq X$, let $f|_S\colon S \to Y$  denote the function obtained from $f$ by restricting its domain to $S$.

We use bold symbols to denote random variables.
For a random variable $\bm{X}$ and an event $E$, we use $(\bm{X} \mid E)$ to denote $\bm{X}$ conditioned on $E$.
We often identify a random variable with its distribution.

For two random variables $\bm{X}$ and $\bm{X}'$, the \emph{total variation distance} between them is given as follows:
\begin{align*}
    \TV\left(\bm{X},\bm{X}'\right):=\min_{\mathcal{D}}\Pr_{(x,x')\sim \mathcal{D}} \left[x=x'\right],
\end{align*}
where the minimum is taken over couplings between $\bm{X}$ and $\bm{X}'$, that is, distributions over pairs such that its marginal distributions on the first and the second coordinates are equal to $\bm{X}$ and $\bm{X}'$, respectively.

For an element $f \in E$, we use $\bm{1}_f\in \mathbb{R}_{\geq 0}^E$ to denote the \emph{characteristic vector} of $f$, that is, $\bm{1}_f(f) = 1$ and $\bm{1}_f(e) = 0$ for $e \in E \setminus \{f\}$.
The following lemma indicates that, to bound the Lipschitz constant, it suffices to consider pairs of weight vectors that differ by one coordinate.
The proof is deferred to Appendix~\ref{sec:seeoneelement}.
\begin{lemma}\label{lem:seeoneelement}
Suppose that there exist some $c>0$ and $L>0$ such that
\[
    \EMW\left((\mathcal{A}(G,w),w), (\mathcal{A}(G,w+\delta \mathbf{1}_f),w+\delta \mathbf{1}_f)\right)\leq \delta L
\]
holds for all $w\in \mathbb{R}_{\geq 0}^E$, $f\in E$ and $0<\delta\leq c$. Then, $\mathcal{A}$ is $L$-Lipschitz.
\end{lemma}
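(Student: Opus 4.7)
The plan is to lift the hypothesized one-coordinate bound to arbitrary pairs of weight vectors by telescoping along an axis-aligned path from $w$ to $w'$ and invoking the triangle inequality for $\EMW$. Fix a graph $G=(V,E)$ and any $w,w'\in \mathbb{R}_{\geq 0}^E$; the goal is to exhibit a chain $w=w^{(0)},w^{(1)},\ldots,w^{(N)}=w'$ inside $\mathbb{R}_{\geq 0}^E$ such that consecutive vectors differ in a single coordinate by at most $c$, and then to bound each elementary $\EMW$ step by the hypothesis.

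Concretely, I would enumerate $E$ as $e_1,\ldots,e_m$ and move from $w$ to $w'$ one coordinate at a time: at stage $i$, the $e_i$-coordinate is shifted linearly from $w(e_i)$ to $w'(e_i)$ in $\lceil |w'(e_i)-w(e_i)|/c\rceil$ equal sub-steps, each of size at most $c$. Every sub-step then has the form $u\mapsto u\pm \delta\, \bm{1}_{e_i}$ for some $0<\delta\le c$ and some intermediate $u\in \mathbb{R}_{\geq 0}^E$ (non-negativity is automatic because we are interpolating between two non-negative values). For an increment $u\mapsto u+\delta\,\bm{1}_{e_i}$, the hypothesis applied at $u$ gives
\[
  \EMW\bigl((\mathcal{A}(G,u),u),(\mathcal{A}(G,u+\delta\,\bm{1}_{e_i}),u+\delta\,\bm{1}_{e_i})\bigr)\;\le\; L\delta.
\]
For a decrement $u\mapsto u-\delta\,\bm{1}_{e_i}$, I would apply the hypothesis at $u-\delta\,\bm{1}_{e_i}$ and invoke the symmetry of $\EMW$, which follows from swapping couplings together with the symmetry of $d_w$, to obtain the same bound in the reverse direction.

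To stitch the elementary estimates together I need the triangle inequality for $\EMW$, which is not stated explicitly in the excerpt. I would derive it in the standard way: given couplings $\mathcal{D}_{12}$ and $\mathcal{D}_{23}$ of two consecutive distribution pairs with a common middle factor, glue them by sampling the middle set first and then the two conditionals; since the pointwise cost $d_w$ is an $\ell_1$ distance and hence satisfies the triangle inequality, taking expectations transfers this to $\EMW$. Telescoping the single-coordinate bounds along the chain then yields
\[
  \EMW\bigl((\mathcal{A}(G,w),w),(\mathcal{A}(G,w'),w')\bigr)\;\le\; \sum_{i=1}^{m} L\,|w'(e_i)-w(e_i)|\;=\;L\,\|w-w'\|_1,
\]
so the Lipschitz constant on $G$ is at most $L$, as required.

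The argument is essentially bookkeeping once the triangle inequality for $\EMW$ is in hand, so I expect that verification to be the only non-routine step; the subdivision into pieces of size at most $c$ and the handling of coordinate decreases via symmetry are straightforward.
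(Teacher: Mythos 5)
Your proposal is correct and follows essentially the same route as the paper's proof: telescoping coordinate by coordinate via the triangle inequality for $\EMW$, extending the hypothesis beyond step size $c$ by subdivision, and handling decrements via the symmetry of $\EMW$. The only cosmetic difference is that the paper first extends the single-coordinate bound to all $\delta>0$ by induction and finishes with a mediant-type inequality, whereas you subdivide within each coordinate and sum the bounds directly; these are equivalent.
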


\section{Minimum Spanning Tree}\label{sec:mst-weighted}
In this section, we consider the Lipschitz continuity of the minimum spanning tree problem and prove the following:
\begin{theorem}\label{thm:mst-weighted}
For any $\epsilon>0$, there exists a polynomial-time $(1+\epsilon)$-approximation algorithm for the minimum spanning tree problem with Lipschitz constant $O(\epsilon^{-1})$.
\end{theorem}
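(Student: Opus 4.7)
The plan is to follow the sketch given in the technical overview: given input $(G,w)$, for each edge $e \in E$ independently sample a perturbed weight $\widehat{\bm{w}}(e)$ uniformly from the interval $[w(e),(1+\epsilon)w(e)]$, and then return the output of Kruskal's algorithm on $(G,\widehat{\bm{w}})$. Approximation is immediate: since $\widehat{\bm{w}}(e)\le (1+\epsilon)w(e)$ for every edge, any spanning tree $T$ satisfies $\sum_{e\in T} w(e) \le \sum_{e\in T}\widehat{\bm{w}}(e)$, and the minimum of the latter is at most $(1+\epsilon)$ times the true optimum.

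For the Lipschitz bound, I would invoke Lemma~\ref{lem:seeoneelement} to reduce to the one-coordinate case, so let $w' = w + \delta \bm{1}_f$ for some edge $f \in E$ and a small $\delta > 0$. I would build the natural coupling between the runs on $w$ and $w'$: use the same perturbed values on every edge other than $f$, and on $f$ use an optimal coupling between the uniform distributions on $[w(f),(1+\epsilon)w(f)]$ and $[w(f)+\delta,(1+\epsilon)(w(f)+\delta)]$. A direct computation shows that these two uniform distributions on intervals of length $\epsilon w(f)$ and $\epsilon(w(f)+\delta)$ whose endpoints have shifted by $\delta$ and $(1+\epsilon)\delta$ have total variation distance $O(\delta/(\epsilon w(f)))$, so the coupling satisfies $\widehat{\bm{w}}(f) = \widehat{\bm{w}}'(f)$ except with probability $O(\delta/(\epsilon w(f)))$.

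Next I would analyze the two cases of the coupling. When $\widehat{\bm{w}}=\widehat{\bm{w}}'$ on all coordinates, Kruskal's algorithm produces the same tree $\bm{T}=\bm{T}'$ (ties in $\widehat{\bm{w}}$ occur with probability zero, so a canonical tiebreaking rule is well defined almost surely), and $d_w((\bm{T},w),(\bm{T}',w'))$ is either $\delta$ (if $f\in\bm{T}$) or $0$, so this contributes $O(\delta)$ in expectation. When $\widehat{\bm{w}}\ne\widehat{\bm{w}}'$, I would use the matroid exchange structure of Kruskal's algorithm: changing the weight of the single edge $f$ in the sorted order can swap $f$ into or out of the MST and displace at most one other edge $e^\star$, and any such $e^\star$ lies on a fundamental cycle through $f$, so its perturbed weight is sandwiched between $\widehat{\bm{w}}(f)$ and $\widehat{\bm{w}}'(f)$. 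Both candidates for the exchanged edges then have original weight $O(w(f))$ (since all perturbed weights lie within a $(1+\epsilon)$ factor of the originals), giving $d_w((\bm{T},w),(\bm{T}',w')) = O(w(f))$.

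Combining the two cases, the expected weighted earth mover distance is $O(\delta) + O(\delta/(\epsilon w(f))) \cdot O(w(f)) = O(\delta/\epsilon)$, and dividing by $\|w-w'\|_1=\delta$ yields Lipschitz constant $O(\epsilon^{-1})$. The main obstacle I expect is making the matroid exchange argument fully rigorous in the presence of the random perturbation: one has to argue that with probability one the sorted order of $\widehat{\bm{w}}$ has no ties so Kruskal's output is a deterministic function of the order, and then carefully verify that only edges whose perturbed weights lie in the interval between $\widehat{\bm{w}}(f)$ and $\widehat{\bm{w}}'(f)$ can change membership in the tree when the position of $f$ in the sorted order shifts, which is precisely what ensures the $O(w(f))$ bound in the second case.
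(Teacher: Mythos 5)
Your proposal is correct and follows essentially the same route as the paper: the same perturbation algorithm, the same reduction to a single coordinate via Lemma~\ref{lem:seeoneelement}, the same coupling with $O(\delta/(\epsilon w(f)))$ mismatch probability, and the same single-edge exchange argument (the paper's Lemma~\ref{lem:mst-local-update}) bounding the tree change by $O(w(f))$ when the perturbed weights differ. The only cosmetic differences are that you phrase the exchange step in matroid language and explicitly track the harmless $O(\delta)$ term in the matched case.
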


Our algorithm is simple.
Given a graph $G=(V,E)$, a weight vector $w\in \mathbb{R}_{\geq 0}^E$, and $\epsilon > 0$, we sample $\widehat{\bm{w}}(e)$ uniformly from $[w(e),(1+\epsilon)w(e)]$ for each $e \in E$, and then return an arbitrary minimum spanning tree with respect to the weight vector $\widehat{\bm{w}}$ (see \Call{LipMST}{} in Algorithm~\ref{alg:mst_main_w}).

\begin{algorithm}[t!]
\caption{Lipschitz-continuous algorithm for the minimum spanning tree problem}\label{alg:mst_main_w}
\Procedure{\emph{\Call{LipMST}{$G=(V,E), w, \epsilon$}}}{
    \For{$e\in E$}{
        Sample $\widehat{\bm{w}}(e)$ uniformly from $[w(e),(1+\epsilon)w(e)]$\;\label{line:mst_main_w_sample}
    }
    \Return arbitrary minimum spanning tree of $G$ with respect to $\widehat{\bm{w}}$.
}
\end{algorithm}

First, we analyze the approximation ratio.
\begin{lemma}\label{lem:mst-weighted-approximation}
The approximation ratio of \Call{LipMST}{} is (at most) $1+\epsilon$.
\end{lemma}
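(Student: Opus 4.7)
The plan is to prove the bound deterministically (not just in expectation), which is stronger than needed. The key observation is that for every realization of the random weights $\widehat{\bm{w}}$, we have the pointwise sandwich $w(e) \leq \widehat{\bm{w}}(e) \leq (1+\epsilon) w(e)$ for every edge $e \in E$, so the perturbed weights are everywhere close to the original weights by a multiplicative factor of at most $1+\epsilon$.

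First I would fix an arbitrary realization of $\widehat{\bm{w}}$ and let $\bm{T}$ be the spanning tree returned by the algorithm on this realization, and let $T^*$ be an optimal minimum spanning tree with respect to the original weight vector $w$. By the optimality of $\bm{T}$ with respect to $\widehat{\bm{w}}$, we have
\[
\sum_{e \in \bm{T}} \widehat{\bm{w}}(e) \;\leq\; \sum_{e \in T^*} \widehat{\bm{w}}(e).
\]
Then, applying the lower bound $\widehat{\bm{w}}(e) \geq w(e)$ on the left side and the upper bound $\widehat{\bm{w}}(e) \leq (1+\epsilon) w(e)$ on the right side, I would chain the inequalities
\[
\sum_{e \in \bm{T}} w(e) \;\leq\; \sum_{e \in \bm{T}} \widehat{\bm{w}}(e) \;\leq\; \sum_{e \in T^*} \widehat{\bm{w}}(e) \;\leq\; (1+\epsilon) \sum_{e \in T^*} w(e).
\]
This shows $\sum_{e \in \bm{T}} w(e) \leq (1+\epsilon) \cdot \mathrm{opt}$ for every realization, so certainly in expectation as well, yielding the $(1+\epsilon)$-approximation guarantee.

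There is essentially no obstacle here: the proof is two lines of inequalities. The only subtle point worth noting is that scaling each edge's weight by its own factor in $[1, 1+\epsilon]$ does not preserve the identity of the optimal tree (the new MST $\bm{T}$ can genuinely differ from $T^*$), but since the perturbation is bounded multiplicatively on every single edge, the ratio between weights of any two spanning trees under $w$ and under $\widehat{\bm{w}}$ can differ by at most $1+\epsilon$, which is all we need.
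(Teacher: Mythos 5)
Your proof is correct and follows essentially the same chain of inequalities as the paper's: compare $\bm{T}$ to $T^*$ under $\widehat{\bm{w}}$ using the optimality of $\bm{T}$, then use the pointwise bounds $w(e)\leq \widehat{\bm{w}}(e)\leq (1+\epsilon)w(e)$. Your version is slightly more explicit in that it adds the first inequality $\sum_{e\in\bm{T}}w(e)\leq\sum_{e\in\bm{T}}\widehat{\bm{w}}(e)$ to relate the output's cost under the \emph{original} weights to $\OPT$, which the paper leaves implicit.
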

\begin{proof}
Let $T^*$ be the optimal solution with respect to the weight vector $w$, and let $\bm{T}$ be the output of \Call{LipMST}{}.
Let $\OPT$ (resp., $\widehat{\bm{\OPT}}$) be the minimum weight of a spanning tree with respect to $w$ (resp., $\widehat{\bm{w}}$).
Then, we have
\begin{align*}
    \widehat{\bm{\OPT}} = \sum_{e\in \bm{T}}\widehat{\bm{w}}(e)\leq \sum_{e\in T^*}\widehat{\bm{w}}(e)\leq \sum_{e\in T^*}(1+\epsilon)w(e)=(1+\epsilon)\OPT,
\end{align*}
where the first inequality is from the fact that $\bm{T}$ is a minimum spanning tree with respect to $\widehat{\bm{w}}$, and the claim holds.
\end{proof}

Now we analyze the Lipschitz continuity of \Call{LipMST}{}.
For a weight vector $w\in \mathbb{R}_{\geq 0}^E$ with distinct elements, the minimum spanning tree with respect to $w$ uniquely exists, and we denote it by $T(w)$.
The following lemma is simple but useful.
\begin{lemma}\label{lem:mst-local-update}
Let $f \in E$ be an edge, $\delta > 0$, and $w\in \mathbb{R}_{\geq 0}^E$ be such that both $w$ and $w+\delta \bm{1}_f$ have distinct elements.
Then, $T^{\delta f}:=T(w+\delta \mathbf{1}_f)$ is either identical to $T:= T(w)$ or is obtained from $T$ by deleting $f$ and adding an edge with weight at most $w(f)+\delta$.
\end{lemma}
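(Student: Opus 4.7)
The proof plan is to split into two cases based on whether $f \in T$. If $f \notin T$, then the weight of $T$ is unchanged under $w + \delta \mathbf{1}_f$, whereas every other spanning tree has weight at least its weight under $w$, which by distinctness is strictly greater than $w(T)$. Hence $T$ remains the unique MST, so $T^{\delta f} = T$.

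Now suppose $f \in T$. Removing $f$ partitions $V$ into two components $V_1, V_2$; let $C \subseteq E$ be the set of edges with one endpoint in each, and set $f^* := \arg\min\{w(e) : e \in C \setminus \{f\}\}$. By the distinctness of $w + \delta \mathbf{1}_f$, either $w(f^*) > w(f) + \delta$ or $w(f^*) < w(f) + \delta$. I would show $T^{\delta f} = T$ in the first subcase and $T^{\delta f} = T - f + f^*$ in the second; either way the conclusion of the lemma holds, and in the second subcase the bound $w(f^*) \leq w(f) + \delta$ is immediate.

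To verify these identities, I would use the standard optimality characterization: a spanning tree $T'$ is the unique MST under a weight vector $u$ with distinct entries iff every non-tree edge $e$ satisfies $u(e) > u(e')$ for every tree edge $e'$ on the unique cycle in $T' + e$. In the first subcase, checking this for $T$ under $w' := w + \delta \mathbf{1}_f$ is trivial for non-tree edges $e \notin C$ (the cycle does not include $f$, so the inequality is inherited from $T$'s optimality under $w$), and for $e \in C$ with $e \neq f$ we have $w'(e) = w(e) \geq w(f^*) > w(f) + \delta = w'(f)$. In the second subcase, checking this for $T - f + f^*$ under $w'$ splits by whether the non-tree edge equals $f$, lies in $C \setminus \{f, f^*\}$, or lies entirely within $V_1$ or $V_2$.

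The main obstacle will be the bookkeeping in the second subcase. For $e = f$, the cycle in $(T - f + f^*) + f$ consists of $f$, $f^*$, and tree paths inside $V_1$ and $V_2$; the trick is to observe that every such path edge also lies on the cycle formed by adding $f^*$ to $T$, so the MST property of $T$ bounds its weight by $w(f^*) < w(f) + \delta = w'(f)$. For $e \in C \setminus \{f, f^*\}$, a similar cycle-decomposition argument compares edges in the new cycle against the cycles in $T + e$ and $T + f^*$ simultaneously, again using the MST property of $T$ to transfer the required inequalities. The case where $e$ lies entirely within $V_1$ or $V_2$ is immediate since the relevant cycle is identical in $T - f + f^*$ and in $T$.
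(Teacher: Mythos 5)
Your proposal is correct, but it takes a genuinely different route from the paper. The paper's proof works through $T^{-f}$, the minimum spanning tree of $G \setminus \{f\}$ (which is the same for $w$ and $w+\delta\mathbf{1}_f$), and uses the standard re-insertion fact that both $T$ and $T^{\delta f}$ arise from $T^{-f}\cup\{f\}$ by deleting the heaviest edge on the unique cycle there; the whole case analysis then reduces to asking whether $f$ is that heaviest edge before and after the perturbation. You instead work with the fundamental cut of $f$ with respect to $T$, guess the two candidate outcomes ($T$ itself, or $T-f+f^*$ with $f^*$ the lightest other cut edge) according to whether $w(f^*)$ exceeds $w(f)+\delta$, and certify each via the path-optimality characterization. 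Your route is more self-contained --- it needs only the standard optimality criterion rather than the incremental-MST update lemma --- at the cost of the cycle-decomposition bookkeeping in the second subcase (which does go through: the cycle of $(T-f+f^*)+f$ coincides with the cycle of $T+f^*$, and for $e\in C\setminus\{f,f^*\}$ the tree path from an endpoint of $e$ to an endpoint of $f^*$ within one side of the cut is covered by the union of the paths to the endpoint of $f$, so the needed inequalities transfer from $T$'s optimality under $w$). One small omission: if $f$ is a bridge of $G$, then $C\setminus\{f\}=\emptyset$ and $f^*$ is undefined; you should dispose of this case separately (every spanning tree contains $f$, so all weights shift by $\delta$ and $T^{\delta f}=T$), as the paper does in its first line.
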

\begin{proof}
If deleting $f$ disconnects $G$, we have $T^{\delta f}=T$. 
Suppose otherwise.
Let $T^{-f}$ be the minimum spanning tree of $G^{-f}:=(V, E\setminus \{f\})$ with respect to the weight vector $w|_{E\setminus \{f\}}$. Then, $T$ (resp., $T^{\delta f}$) is obtained from $T^{-f}\cup \{f\}$ by deleting the (unique) edge $g$ (resp., $g^{\delta f}$) of the largest weight  with respect to the weight vector $w$ (resp., $w+\delta \mathbf{1}_f$) in the unique cycle $C$ in $T^{-f} \cup \{f\}$.
If $g=f$, then the weight of $f$ is the largest in $C$ with respect to $w$, and hence it is the largest with respect to $w+\delta \mathbf{1}_f$ too.
Therefore, we have $T=T^{\delta f}=T^{-f}$. 
If $g^{\delta f}\neq f$, then the weight of $f$ is not the largest in $C$ with respect to $w+\delta \mathbf{1}_f$, and hence it is not the largest with respect to $w$ either. 
Therefore, the edges with the largest weight in $C$ with respect to $w$ and $w+\delta \mathbf{1}_f$ are the same, and hence we have $T=T^{\delta f}$.
If $g\neq f$ and $g^{\delta f}=f$, then $T$ uses $f$, and $T^{\delta f}$ is obtained from $T$ by deleting $f$ and adding an edge with the largest weight in $C$ with respect to $w$, which is at most the weight of $f$ on $w+\delta \mathbf{1}_f$, that is, $w(f)+\delta$.
\end{proof}


Let $f \in E$ be an edge and $\delta>0$. 
To bound the Lipschitz constant, by Lemma~\ref{lem:seeoneelement}, it suffices to bound
\begin{align}
    \EMW\left((\Call{LipMST}{G,w,\epsilon},w), (\Call{LipMST}{G,w+\delta \mathbf{1}_f,\epsilon},w+\delta \mathbf{1}_f)\right).\label{eq:mstw}
\end{align}
To bound~\eqref{eq:mstw}, we consider a coupling (i.e., a joint distribution) $\mathcal{W}$ between $\widehat{\bm{w}}$ and $\widehat{\bm{w}}^{\delta f}$ defined as follows:
For each weight vector $\widehat{w} \in \mathbb{R}_{\geq 0}^E$, we transport the probability mass for $\widehat{\bm{w}} = \widehat{w}$ to that for $\widehat{\bm{w}}^{\delta f} = \widehat{w}$ as far as possible in such a way that the probability mass for $\widehat{\bm{w}}|_{E \setminus \{f\}} = \widehat{w}|_{E \setminus \{f\}}$ is always transported to that for $\widehat{\bm{w}}^{\delta f}|_{E \setminus \{f\}} = \widehat{w}|_{E \setminus \{f\}}$, which is possible because the distributions of $\widehat{\bm{w}}(e)$ and $\widehat{\bm{w}}^{\delta f}(e)$ are identical for $e \in E \setminus \{f\}$.
The remaining probability mass is transported arbitrarily.
%
\begin{lemma}\label{lem:mst-remaining}
    We have
    \[
        \Pr_{(\widehat{\bm{w}},\widehat{\bm{w}}^{\delta f}) \sim \mathcal{W}}\left[\widehat{\bm{w}} \neq \widehat{\bm{w}}^{\delta f}\right] = \Pr_{(\widehat{\bm{w}},\widehat{\bm{w}}^{\delta f}) \sim \mathcal{W}}\left[\widehat{\bm{w}}(f) \neq \widehat{\bm{w}}^{\delta f}(f) \right]  \leq \frac{(1+\epsilon)\delta}{\epsilon(w(f)+\delta)}.
    \]
\end{lemma}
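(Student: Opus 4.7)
The plan is to reduce the joint statement to a one-dimensional total variation calculation and then compute it explicitly.

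First, I would justify the equality $\Pr[\widehat{\bm{w}} \neq \widehat{\bm{w}}^{\delta f}] = \Pr[\widehat{\bm{w}}(f) \neq \widehat{\bm{w}}^{\delta f}(f)]$. By construction of the coupling $\mathcal{W}$, for every edge $e \in E \setminus \{f\}$, the distributions of $\widehat{\bm{w}}(e)$ and $\widehat{\bm{w}}^{\delta f}(e)$ are identical (both are uniform on $[w(e),(1+\epsilon)w(e)]$, since $w(e) = (w+\delta\mathbf{1}_f)(e)$), and the coupling is built so that the probability mass on $\widehat{\bm{w}}|_{E\setminus\{f\}} = \widehat{w}|_{E\setminus\{f\}}$ is always transported to the identical restriction on the other side. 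Hence $\widehat{\bm{w}}|_{E\setminus\{f\}} = \widehat{\bm{w}}^{\delta f}|_{E\setminus\{f\}}$ holds with probability $1$ under $\mathcal{W}$, so $\widehat{\bm{w}} \neq \widehat{\bm{w}}^{\delta f}$ can happen only through the $f$-coordinate.

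Second, I would bound $\Pr[\widehat{\bm{w}}(f) \neq \widehat{\bm{w}}^{\delta f}(f)]$. Because the coupling on coordinate $f$ is defined to transport as much probability mass as possible from the distribution of $\widehat{\bm{w}}(f)$ onto that of $\widehat{\bm{w}}^{\delta f}(f)$ (i.e., it is the optimal coupling for total variation), this probability equals $\TV\bigl(\widehat{\bm{w}}(f),\,\widehat{\bm{w}}^{\delta f}(f)\bigr)$. Writing $a := w(f)$, the first marginal is uniform on $[a,(1+\epsilon)a]$ with density $1/(\epsilon a)$, and the second is uniform on $[a+\delta,(1+\epsilon)(a+\delta)]$ with density $1/(\epsilon(a+\delta))$. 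Assuming $\delta \le c$ for a sufficiently small $c$ (so that the supports overlap), the overlap interval is $[a+\delta,(1+\epsilon)a]$ of length $\epsilon a - \delta$, on which the pointwise minimum of the two densities equals $1/(\epsilon(a+\delta))$. Therefore the mass both couplings agree on is at least $(\epsilon a - \delta)/(\epsilon(a+\delta))$, giving
\[
\TV\bigl(\widehat{\bm{w}}(f),\widehat{\bm{w}}^{\delta f}(f)\bigr) \le 1 - \frac{\epsilon a - \delta}{\epsilon(a+\delta)} = \frac{\epsilon \delta + \delta}{\epsilon(a+\delta)} = \frac{(1+\epsilon)\delta}{\epsilon(w(f)+\delta)},
\]
which is the claimed bound. (In the degenerate case where the supports are disjoint, i.e., $\delta \ge \epsilon a$, the right-hand side exceeds $1$ so the inequality is trivial.)

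The routine density computation is the only nontrivial step, and no real obstacle remains: the main subtlety is just making sure the coupling on the $f$-coordinate is the TV-optimal one, which is guaranteed by the "transport as far as possible" prescription in the definition of $\mathcal{W}$. The rest reduces to algebra.
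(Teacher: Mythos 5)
Your proposal is correct and follows essentially the same route as the paper: the equality is attributed to the construction of the coupling $\mathcal{W}$, and the inequality is obtained by computing the overlap of the two uniform distributions on $[w(f),(1+\epsilon)w(f)]$ and $[w(f)+\delta,(1+\epsilon)(w(f)+\delta)]$, which is exactly the calculation in the paper's proof. Your explicit treatment of the degenerate case $\delta\geq\epsilon\,w(f)$ is a small added care, not a difference in method.
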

\begin{proof}
    The equality holds by the design of the coupling $\mathcal{W}$.

    Now, we show the inequality.
    Recall that $\widehat{\bm{w}}(f)$ is sampled uniformly from $\left[w(f),(1+\epsilon)w(f)\right]$ and $\widehat{\bm{w}}^{\delta f}(f)$ is sampled uniformly from $\left[w(f)+\delta,(1+\epsilon)(w(f)+\delta)\right]$. 
    Hence, we have
    \begin{align*}
        & \Pr_{(\widehat{\bm{w}},\widehat{\bm{w}}^{\delta f}) \sim \mathcal{W}}\left[\widehat{\bm{w}}(f) \neq \widehat{\bm{w}}^{\delta f}(f)\right]  
        \leq
        1-\Pr\left[\widehat{\bm{w}}^{\delta f}(f)\in \left[w(f),(1+\epsilon)w(f)\right]\right] \\
        & \leq \frac{(1+\epsilon)(w(f)+\delta)-(1+\epsilon)w(f)}{(1+\epsilon)(w(f)+\delta)-(w(f)+\delta)}
        = \frac{(1+\epsilon)\delta}{\epsilon(w(f)+\delta)},        
    \end{align*}
    as desired.
\end{proof}

\begin{lemma}\label{lem:mst-weighted-Lipschitz}
    \Call{LipMST}{} is $O(\epsilon^{-1})$-Lipschitz.
\end{lemma}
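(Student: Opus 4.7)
The plan is to invoke Lemma~\ref{lem:seeoneelement}, which reduces the task to showing that, for any weight vector $w$, edge $f$, and $\delta > 0$,
\[
    \EMW\bigl((\Call{LipMST}{G,w,\epsilon},w),(\Call{LipMST}{G,w+\delta\mathbf{1}_f,\epsilon},w+\delta\mathbf{1}_f)\bigr) = O(\epsilon^{-1}\delta).
\]
To do this I would use the coupling $\mathcal{W}$ defined just before Lemma~\ref{lem:mst-remaining}: it jointly samples $\widehat{\bm{w}}$ and $\widehat{\bm{w}}^{\delta f}$ so that the two agree on $E\setminus\{f\}$ with probability one, and Lemma~\ref{lem:mst-remaining} bounds the probability that they disagree on $f$ by $\frac{(1+\epsilon)\delta}{\epsilon(w(f)+\delta)}$. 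Since the sampled weights are almost surely distinct, Kruskal's algorithm deterministically converts $\mathcal{W}$ into a coupling of the output trees $\bm{T}=T(\widehat{\bm{w}})$ and $\bm{T}^{\delta f}=T(\widehat{\bm{w}}^{\delta f})$.

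I would then split the expectation into two cases. On the event $\{\widehat{\bm{w}}=\widehat{\bm{w}}^{\delta f}\}$ the two trees are identical, so $d_w((\bm{T},w),(\bm{T},w+\delta\mathbf{1}_f))\leq \delta$ (only the $f$-coordinate of the weight vectors differs, and by at most $\delta$), contributing at most $\delta$ in total. On the complementary event, Lemma~\ref{lem:mst-local-update} applied to the two perturbed weight vectors (which differ only at $f$) says that either $\bm{T}^{\delta f}=\bm{T}$, or $\bm{T}^{\delta f}$ is obtained from $\bm{T}$ by swapping $f$ for some other edge $g$ whose $\widehat{\bm{w}}$-weight is at most $\max(\widehat{\bm{w}}(f),\widehat{\bm{w}}^{\delta f}(f))\leq (1+\epsilon)(w(f)+\delta)$, hence $w(g)\leq\widehat{\bm{w}}(g)\leq (1+\epsilon)(w(f)+\delta)$. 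In the swap case, the edges in $\bm{T}\cap\bm{T}^{\delta f}$ contribute $0$ to $d_w$ (since $w$ and $w+\delta\mathbf{1}_f$ agree off $f$, and $f$ is not in the intersection), so $d_w((\bm{T},w),(\bm{T}^{\delta f},w+\delta\mathbf{1}_f)) = w(f)+w(g) = O(w(f)+\delta)$.

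Combining the two cases,
\[
    \EMW \leq \delta + O(w(f)+\delta)\cdot \frac{(1+\epsilon)\delta}{\epsilon(w(f)+\delta)} = O(\epsilon^{-1}\delta),
\]
which by Lemma~\ref{lem:seeoneelement} yields the desired $O(\epsilon^{-1})$ Lipschitz constant. The only minor subtlety is the case $\widehat{\bm{w}}^{\delta f}(f)<\widehat{\bm{w}}(f)$: Lemma~\ref{lem:mst-local-update} is stated with an additive perturbation, so I would simply invoke it with the roles of the two weight vectors swapped, after which the $w$-weight bound on the swapped-in edge carries through symmetrically.
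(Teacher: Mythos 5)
Your proof is correct and follows essentially the same route as the paper's: the same coupling $\mathcal{W}$, the same invocation of Lemmas~\ref{lem:mst-local-update} and~\ref{lem:mst-remaining}, and the same final computation via Lemma~\ref{lem:seeoneelement}. Your explicit handling of the $\delta$ contribution on the agreement event and of the orientation of Lemma~\ref{lem:mst-local-update} is slightly more careful than the paper's write-up, but it does not change the argument.
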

\begin{proof}
    Note that $\widehat{\bm{w}}$ has distinct elements with probability one and hence we only need to consider the case that it happens when calculating the Lipschitz constant.
    By Lemmas~\ref{lem:mst-local-update} and~\ref{lem:mst-remaining}, we have
    \begin{align*}
        & \EMW\left(\left(\Call{LipMST}{G,w,\epsilon},w\right),\left(\Call{LipMST}{G,w+\delta \mathbf{1}_f, \epsilon},w+\delta \mathbf{1}_f\right)\right) \\
        & \leq \E_{(\widehat{\bm{w}},\widehat{\bm{w}}^{\delta f}) \sim \mathcal{W}} d_w\left((T(\widehat{\bm{w}}),w),(T(\widehat{\bm{w}}^{\delta f}),w+\delta \bm{1}_f)\right) \\
        & \leq \Pr_{(\widehat{\bm{w}},\widehat{\bm{w}}^{\delta f}) \sim \mathcal{W}}\left[\widehat{\bm{w}}(f) \neq \widehat{\bm{w}}^{\delta f}(f)\right] \cdot ((1+\epsilon)w(f)+(1+\epsilon)(w(f)+\delta)) \\
        &\leq \frac{(1+\epsilon)\delta}{\epsilon\cdot (w(f)+\delta)}\cdot 2(1+\epsilon)(w(f)+\delta)
        = \frac{2(1+\epsilon)^2 \delta}{\epsilon}
        \leq O\left(\frac{\delta}{\epsilon}\right).
    \end{align*}
    Therefore, the Lipschitz constant is bounded as
    \[
        \sup_{\delta > 0, f\in E}\frac{\EMW\left(\left(\Call{LipMST}{G,w,\epsilon},w\right),\left(\Call{LipMST}{G,w+\delta \mathbf{1}_f, \epsilon},w+\mathbf{1}_f\right)\right)}{\delta}\leq O\left(\frac{1}{\epsilon}\right).
        \qedhere
    \]
\end{proof}
Theorem~\ref{thm:mst-weighted} follows by combining Lemmas~\ref{lem:mst-weighted-approximation} and~\ref{lem:mst-weighted-Lipschitz}.


\section{Contraction Sensitivity of Shortest Path}\label{sec:sp}

In this section, we present an algorithm for the shortest path problem that is stable against edge contractions.
We will use this algorithm as a subroutine in our Lipschitz continuous algorithm for the shortest path problem in Section~\ref{sec:sp-Lipschitz}.
In this section, we only consider unweighted graphs.


To define the sensitivity to edge contractions, we first formally define edge contraction.
Let $G=(V,E)$ be a graph and $e = \{u,v\}\in E$ be an edge.
Then, the \emph{contraction} of an edge $e$ in $G$ is an operation that yields a graph $G / e = (V',E')$ with
\begin{align*}
    V' & = (V \setminus e) \cup \{v_e\}, \\
    E' & = E \setminus (\{\{u,x\} : x \in N(u)\} \cup \{\{v,x\} : x \in N(v)\}) \cup \{\{v_e, x\} : x \in N(u) \cup N(v) \setminus \{u,v\}\},
\end{align*}
where $v_e$ is a newly introduced vertex and $N(a)$ denotes the set of neighbors of a vertex $a \in V$.
Then, we define the sensitivity to edge contraction for the shortest path problem as follows:
\begin{definition}
  Let $\mathcal{A}$ be a (randomized) algorithm for the shortest path problem that, given a graph $G=(V,E)$ and two vertices $s,t \in V$, outputs an $s$-$t$ walk.
  Then, the \emph{contraction sensitivity} of $\mathcal{A}$ on a graph $G=(V,E)$ is 
  \[
    \max_{e \in E, e \cap \{s,t\} = \emptyset}\EMU(\mathcal{A}(G),\mathcal{A}(G/e)).
  \]
\end{definition}
Recall that when calculating the distance $\EMU$, a walk $P = (e_1,\ldots,e_\ell)$ is mapped to a vector $\sum_{i=1}^\ell\bm{1}_{e_i}$, and hence an edge that appears multiple times in $P$ contributes by the same number of times in the vector.
We only consider contracting edges not incident to $s$ and $t$, because we do not need to contract such edges in our Lipschitz continuous algorithm in Section~\ref{sec:sp-Lipschitz}, and this simplifies the analysis.

The goal of this section is to show the following:
\begin{theorem}\label{thm:spsensitivity}
For any $\epsilon>0$, there is a $(1+\epsilon)$-approximation algorithm for the  shortest path problem with contraction sensitivity $O\left(\epsilon^{-1}\log^3 |V|\right)$. 
\end{theorem}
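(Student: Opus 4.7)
The plan is to build a recursive divide-and-conquer algorithm. On input $(\widehat{G}, s, t)$, the algorithm computes the exact shortest-path distance $D := d_{\widehat{G}}(s,t)$; if $D = O(1)$ it returns any shortest path, and otherwise it samples a \emph{pivot} vertex $\bm{v}$ uniformly from
\[
  V^* := \bigl\{ v \in V(\widehat{G}) : d(s,v) + d(v,t) \leq (1+\epsilon_0) D,\; d(s,v) \in [D/3,\, 2D/3]\bigr\},
\]
for a parameter $\epsilon_0 = \Theta(\epsilon / \log |V|)$, recursively computes an $s$-$\bm{v}$ walk $P_1$ and a $\bm{v}$-$t$ walk $P_2$, and returns $P_1 \circ P_2$. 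The pivot window is deliberately made wide (of slack $\epsilon_0$) so that the distribution of $\bm{v}$ is smooth under small graph perturbations, while the ``middle third'' restriction $d(s,v) \in [D/3, 2D/3]$ forces the recursion depth to be $O(\log |V|)$.

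For the approximation ratio I would argue by induction on recursion depth $k$: the walk returned has length at most $(1+\epsilon_0)^k D$, and since $k = O(\log |V|)$, the choice $\epsilon_0 = \Theta(\epsilon / \log |V|)$ turns this into $(1+\epsilon) D$ as desired. For the contraction sensitivity, fix an edge $e$ disjoint from $\{s,t\}$ and couple the executions of the algorithm on $\widehat{G}$ and $\widehat{G}/e$ in lockstep. Call a subcall $(G', s', t')$ in the recursion tree \emph{active} if $e$ lies on some $(1+\epsilon_0)$-approximate $s'$-$t'$ path, and \emph{inactive} otherwise. On an inactive subcall the relevant distances $d(s',v), d(v,t')$ are identical in $\widehat{G}$ and $\widehat{G}/e$, so the set $V^*$ and thus the pivot distribution coincide; the two executions can be coupled perfectly on that whole subtree (up to the canonical identification of $e$ with its contracted copy $v_e$), and contribute $0$ to the sensitivity.

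On an active subcall I would couple the two uniform distributions on $V^*_{\widehat{G}}$ and $V^*_{\widehat{G}/e}$ optimally. A short analysis of how a single edge contraction shifts shortest-path distances shows that the symmetric difference $|V^*_{\widehat{G}} \triangle V^*_{\widehat{G}/e}|$ is $O(1)$, while $|V^*| = \Omega(\epsilon_0 D)$, so the total-variation distance between the two pivot distributions is $O(1/(\epsilon_0 D))$. When the pivot couples, a case analysis on whether $\bm{v}$ lies on the $s'$-side or $t'$-side of $e$ shows that at most one of the two child subcalls can be active; hence the active subcalls form a single root-to-leaf chain of length $O(\log |V|)$. When the pivot fails to couple I charge the whole walk-length at that level, which is $O(D)$, giving expected cost $O(D) \cdot O(1/(\epsilon_0 D)) = O(\epsilon_0^{-1})$ per active call. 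Summing $O(\log |V|)$ active calls with an additional $O(\log |V|)$ factor from the fact that edges may appear multiple times across levels of the walk yields the claimed $O(\epsilon^{-1} \log^3 |V|)$ bound.

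The main obstacle will be the structural lemma that underpins the at-most-one-active-child property: I need to show that whenever a subcall $(G', s', t')$ is active and $\bm{v}$ is sampled successfully, any near-optimal $s'$-$t'$ path through $e$ forces $\bm{v}$ onto one definite side of $e$, and the complementary subproblem then admits no $(1+\epsilon_0)$-approximate path through $e$. A secondary obstacle is carefully bounding $|V^*_{\widehat{G}} \triangle V^*_{\widehat{G}/e}|$: contracting $e$ can decrease many distances by $1$, and I need to argue that this only adds or removes $O(1)$ vertices from the ``middle third'' $V^*$ (which intuitively holds because the active constraint is essentially a single inequality with integer slack). The rest of the analysis — approximation by induction, efficient enumeration of $V^*$ via two BFS calls from $s'$ and $t'$, and verifying that inactive subcalls propagate inactiveness — is routine.
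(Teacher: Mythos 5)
Your high-level architecture --- recursive pivoting from a near-middle window, an active/inactive classification of subcalls, and a level-by-level coupling --- is the same as the paper's, but two steps fail, and both failures trace back to the fact that you fixed the thresholds that the paper deliberately randomizes. First, the claim that at most one child of an active call can be active (your ``main obstacle'') is false as a deterministic statement: whenever the pivot $\bm{v}$ lies within graph distance $O(\epsilon_0 D)$ of $e$, detouring through $e$ costs only an additive $O(\epsilon_0 D)$ on \emph{both} subproblems, so both children are active. With your fixed window $d(s,v)\in[D/3,2D/3]$ this is not a low-probability event: take an $s$-$t$ path of length $D$ with $e$ at its midpoint and $n-D$ extra vertices each adjacent to both endpoints of $e$; every such vertex lies in $V^*$ and at distance $1$ from $e$, so with probability close to $1$ both children are active and the number of active calls grows like $2^{\text{depth}}$. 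The paper escapes this by sampling the split point $\bm{d}$ uniformly from a constant-width interval: the both-active event forces $\bm{d}$ into a window of width $O(\gamma)$ determined by $\OPT(s,e)$ and $\OPT(e,t)$ alone (Lemma~\ref{lem:bothactive}), so its probability is $O(\gamma)$ \emph{independently of where the candidate pivots sit}, and the expected number of active calls is $(1+O(\gamma))^{O(\log|V|)}=O(\log|V|)$.

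Second, the bound $|V^*_{\widehat G}\triangle V^*_{\widehat G/e}|=O(1)$ is wrong: each defining inequality (e.g.\ $d(s,v)+d(v,t)\le(1+\epsilon_0)D$) can be tight for $\Theta(n)$ vertices simultaneously, and contraction can decrease $D$ and hence the threshold while leaving those vertices' distances unchanged, moving $\Theta(|V^*|)$ vertices across a fixed cutoff; ``integer slack'' does not save you because arbitrarily many vertices can share the same slack. The paper instead randomizes the slack ($\bm{l}$ uniform in $[\gamma,2\gamma]$, plus a randomized $\bm{\gamma}$ to absorb the change from $|V|$ to $|V|-1$) and shows via the telescoping estimate of Lemma~\ref{lem:arraylog} that the total variation distance between the pivot distributions is $O(\log|V|/(\gamma\cdot\OPT(s,t)))$. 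That extra $\log|V|$ is where the third logarithm in the final bound actually comes from --- not from ``edges appearing multiple times across levels,'' which is not a valid source of a $\log$ factor. Without randomizing both the split point and the slack, neither the count of active calls nor the per-call coupling cost can be controlled, so the proposal as written does not yield the theorem.
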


Varma and Yoshida~\cite{Varma2021} showed that any algorithm with a finite approximation ratio must have $\Omega(n)$ sensitivity with respect to edge deletion.
This is because in the edge-deletion case, an edge in the output walk may disappear and we may need to completely change the output walk.
However, in the edge-contraction case, we can use (essentially) the same path even if an edge in the path is contracted.

In Section~\ref{subsec:sp-overview}, we present our algorithm and an overview of the analysis.
We analyze the approximation ratio in Section~\ref{subsec:sp-approx} and the contraction sensitivity in Sections~\ref{subsec:sp-pivot} and~\ref{subsec:sp-sensitivity}.

\begin{algorithm}[t!]
\caption{Shortest path algorithm stable against edge contractions}\label{alg:main}
\Procedure{\emph{\Call{Rec}{$G,s,t,\gamma$}}}{
    Sample $\bm{d}$ uniformly from $\left[\frac{1}{4}+2\gamma, \frac{3}{4}-2\gamma\right]$\;\label{line:choosed}
    Sample $\bm{l}$ uniformly from $\left[\gamma, 2\gamma\right]$\;\label{line:choosel}
    \If{$\OPT(s,t)\leq \gamma^{-1}$}{
        \Return the $s$-$t$ shortest path on $G$ computed by breadth-first search.
    }
    Let $V_{\bm{d},\bm{l}}$ be the set of vertices $v \in V$ with $\OPT(s,v)\leq (\bm{d}+\bm{l}) \OPT(s,t)$ and $\OPT(v,t)\leq \left(1-\bm{d}+\bm{l}\right) \OPT(s,t)$\label{line:Ud}\;
    Sample $\bm{v}\in V_{\bm{d},\bm{l}}$ uniformly at random\label{line:main-sample}\;
    \Return the walk obtained by concatenating \Call{Rec}{$G,s,\bm{v},\gamma$} and \Call{Rec}{$G,\bm{v},t,\gamma$}.
}
\Procedure{\emph{\Call{SP}{$G=(V,E), s, t, \epsilon$}}}{
    Sample $\bm{\gamma}^{-1}$ uniformly from $\left[720\epsilon^{-1}\log|V|,1440\epsilon^{-1}\log|V|\right]$\;\label{line:main_samplegamma}
    \Return \Call{Rec}{$G,s,t,\bm{\gamma}$}.
}
\end{algorithm}

\subsection{Proof Overview}\label{subsec:sp-overview}
In this section, we present our algorithm for the unweighted shortest path problem (pseudocode is presented in Algorithm~\ref{alg:main}).

To construct a walk between two vertices $s,t\in V$ in the input graph $G=(V,E)$, we first sample a vertex $\bm{v} \in V$, which we call a \emph{pivot}, such that there is an $s$-$t$ path passing through $\bm{v}$ whose length is close to that of the shortest one and the sampling process of $\bm{v}$ is stable against edge contractions. 
Then, we recursively construct $s$-$\bm{v}$ and $\bm{v}$-$t$ walks and output an $s$-$t$ walk obtained by combining them.

Before discussing the approximation and contraction sensitivity of the algorithm, we  introduce useful notation.
Let $G=(V,E)$ be an unweighted graph.
For two vertices $s,t \in V$, let $\OPT_G(s,t)$ denote the length of the $s$-$t$ shortest path of $G$.
For a vertex $v \in V$, let $\OPT_G(s,t,v)$ denote the length of the shortest walk from $s$ to $t$ passing through $v$, and for an edge $e \in E$, let $\OPT_G(s,t,e)$ denote the length of the shortest walk from $s$ to $t$ passing through $e \in E$. 
We call a path (resp., walk) from a vertex $s \in V$ to one of the endpoints of an edge $e \in E$ an \emph{$s$-$e$ path} (resp., \emph{$s$-$e$ walk}). 
Let $\OPT_G(s,e)$ denote the length of the shortest $s$-$e$ path.
If the underlying graph $G$ is clear from the context, we omit the subscripts from these symbols.

\paragraph{Approximation Ratio}
Let us discuss the approximation ratio of the algorithm above.
First, the choice of the range from which we sample $\bm{l}$ in Line~\ref{line:choosel} ensures that restricting the feasible domain to the family of $s$-$t$ walks passing through the pivot $\bm{v}$ increases the optimal value only by a $(1+O(\gamma))$ factor.
Specifically, we have
\begin{align*}
    \OPT(s,t,\bm{v})=\OPT(s,\bm{v})+\OPT(\bm{v},t)\leq (1+2l)\OPT(s,t)\leq (1+4\gamma)\OPT(s,t).
\end{align*}
According to the choice of the range from which we sample $\bm{d}$ in Line~\ref{line:choosed}, the optimal values for the two subproblems induced by the pairs $(s,\bm{v})$ and $(\bm{v},t)$ are smaller than the original one by a constant factor.
Specifically, we have
\begin{align*}
    \max\{\OPT(s,\bm{v}), \OPT(\bm{v},t)\}\leq \frac{3}{4}\OPT(s,t).
\end{align*}
This implies that the recursion depth is $O(\log |V|)$.
It follows that the approximation ratio is bounded by
\begin{align*}
    (1+O(\gamma))^{O(\log |V|)}\leq 1+O(\gamma \log |V|)=1+O(\epsilon).
\end{align*}



\paragraph{Contraction Sensitivity}
Now, we consider the contraction sensitivity of the algorithm.
First, we focus on a particular call of \Call{Rec}{} and show that the pivot distributions for \Call{Rec}{$G,s,t,\gamma$} and \Call{Rec}{$G/e,s,t,\gamma$} are close, or more specifically, their total variation distance is $O\left(\frac{\log |V|}{\gamma \cdot \OPT(s,t)}\right)$. 
To bound the earth mover's distance in the definition of contraction sensitivity, we consider transporting the probability mass of \Call{Rec}{$G,s,t,\gamma$} corresponding to a particular choice of the pivot $\bm{v}$ to that of \Call{Rec}{$G/e,s,t,\gamma$} corresponding to the same pivot as far as possible. 
We transport the remaining probability mass arbitrarily, which increases the contraction sensitivity of the algorithm by
\begin{align*}
    O\left(\frac{\log |V|}{\gamma \cdot \OPT(s,t)}\right)\cdot (1+O(\gamma\log |V|))\OPT(s,t)\leq O\left(\frac{\log |V|}{\gamma}\right),
\end{align*}
where $(1+O(\gamma\log |V|))\OPT(s,t)$ is the upper bound on the length of the output walk.

Next, we consider the contraction sensitivity of the whole algorithm.
A recursion call \Call{Rec}{$G,s,t,\gamma$} is \emph{active} if $\OPT_G(s,t,e)\leq (1+k\gamma)\OPT_G(s,t)$ and \emph{inactive} otherwise, where $k=16$. We can show that, if \Call{Rec}{$G,s,t,\gamma$} is inactive, then
\begin{itemize}
    \item the distribution of the pivots chosen in \Call{Rec}{$G,s,t,\gamma$} and \Call{Rec}{$G/e,s,t,\gamma$} are exactly the same, and
    \item the recursion calls \Call{Rec}{$G,s,\bar{v},\gamma$} and \Call{Rec}{$G,\bar{v},t,\gamma$} are also inactive.
\end{itemize}
Thus, inactive recursion calls are irrelevant to the contraction sensitivity.

Assume that \Call{Rec}{$G,s,t,\gamma$} is active and the same pivot $\bm{v}$ is chosen in \Call{Rec}{$G,s,t,\gamma$} and \Call{Rec}{$G/e,s,t,\gamma$}. 
Then, we can show that, with probability $1-O(\gamma)$, at least one of the two recursion calls \Call{Rec}{$G/e,s,\bm{v},\gamma$} and \Call{Rec}{$G/e,\bm{v},t,\gamma$} is inactive. 
Intuitively, this holds because both recursion calls are active only when $\left|\OPT_G(s,e)-\bm{d}\right|\leq O(\gamma)$, which occurs with probability $O(\gamma)$ because we sampled $\bm{d}$ uniformly from $\left[\frac{1}{4}+2\gamma, \frac{3}{4}-2\gamma\right]$. 
Then, the number of recursion calls relevant to contraction sensitivity is bounded by $(1+O(\gamma))^{O(\log |V|)}=O(\log |V|)$. 
The contraction sensitivity of the algorithm is bounded as
\[
    O\left(\frac{\log |V|}{\gamma}\right) \cdot O(\log |V|)=O\left(\frac{\log^2 |V|}{\gamma}\right)=O\left(\frac{\log^3 |V|}{\epsilon}\right).
\]

\subsection{Approximation Ratio}\label{subsec:sp-approx}

In this section, we analyze the approximation ratio of Algorithm~\ref{alg:main}.
First, we analyze the approximation ratio of \Call{Rec}{}.
\begin{lemma}\label{lem:expapprox}
    For the pivot $\bm{v}$ sampled in  $\Call{Rec}{G,s,t,\gamma}$, we have
    \begin{align*}
        \OPT(s,\bm{v})+\OPT(\bm{v},t)\leq (1+4\gamma)\OPT(s,t).
    \end{align*}
\end{lemma}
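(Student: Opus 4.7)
The plan is to read off the conclusion directly from the definition of $V_{\bm{d},\bm{l}}$ used in Line~\ref{line:Ud}. By construction, the pivot $\bm{v}$ is sampled from $V_{\bm{d},\bm{l}}$, so it satisfies both $\OPT(s,\bm{v}) \leq (\bm{d}+\bm{l})\OPT(s,t)$ and $\OPT(\bm{v},t) \leq (1-\bm{d}+\bm{l})\OPT(s,t)$. Adding these two inequalities, the $\bm{d}$ cancels and we obtain $\OPT(s,\bm{v})+\OPT(\bm{v},t) \leq (1+2\bm{l})\OPT(s,t)$; then using $\bm{l}\leq 2\gamma$ (from Line~\ref{line:choosel}) we conclude $(1+4\gamma)\OPT(s,t)$.

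The only nontrivial point is to verify that the sampling step in Line~\ref{line:main-sample} is well-defined, i.e., that $V_{\bm{d},\bm{l}}$ is nonempty whenever we reach it. The base case $\OPT(s,t)\leq \gamma^{-1}$ is handled separately by BFS, so we may assume $k := \OPT(s,t) > \gamma^{-1}$. Fix any $s$-$t$ shortest path $s=v_0,v_1,\dots,v_k=t$. For each $i$ we have $\OPT(s,v_i)=i$ and $\OPT(v_i,t)=k-i$, so $v_i\in V_{\bm{d},\bm{l}}$ iff $i\in [(\bm{d}-\bm{l})k,\,(\bm{d}+\bm{l})k]$. This interval has length $2\bm{l}k \geq 2\gamma\cdot\gamma^{-1}=2$, so it contains an integer, witnessing $V_{\bm{d},\bm{l}}\neq\emptyset$.

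There is essentially no obstacle: the lemma is an immediate consequence of how $V_{\bm{d},\bm{l}}$ is defined combined with the sampling range of $\bm{l}$. The real work of the section lies in propagating this single-step $(1+4\gamma)$ loss through the $O(\log|V|)$ levels of recursion to get the overall $(1+\epsilon)$ approximation, which will be handled in a subsequent lemma; the present lemma just isolates the per-level bound so it can be iterated cleanly.
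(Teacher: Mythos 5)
Your proof is correct and matches the paper's: both read the two inequalities off the definition of $V_{\bm{d},\bm{l}}$, add them so that $\bm{d}$ cancels, and use $\bm{l}\leq 2\gamma$. Your additional check that $V_{\bm{d},\bm{l}}$ is nonempty is a sound bonus; the paper establishes this (in stronger form) separately in Lemma~\ref{lem:sizeofV} by the same shortest-path argument.
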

\begin{proof}
We have
\[
    \OPT(s,\bm{v})+\OPT(\bm{v},t)\leq (\bm{d}+\bm{l})\OPT(s,t)+(1-\bm{d}+\bm{l})\OPT(s,t) = (1+2\bm{l})\OPT(s,t)\leq (1+4\gamma)\OPT(s,t).
    \qedhere
\]
\end{proof}
\begin{lemma}\label{lem:approx}
For $\gamma > 0$, we have
\[
    \left|\Call{Rec}{G,s,t,\gamma}\right|\leq \OPT(s,t)^{1+14\gamma}.
\]
\end{lemma}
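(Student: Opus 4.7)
The plan is to prove this by induction on $\OPT(s,t)$. The base case $\OPT(s,t)\leq \gamma^{-1}$ is immediate: the algorithm returns an exact $s$-$t$ shortest path, so the output has length $\OPT(s,t)\leq \OPT(s,t)^{1+14\gamma}$ (using $\OPT(s,t)\geq 1$ when $s\neq t$; the case $s=t$ is trivial).

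For the inductive step, set $x := \OPT(s,t)$, $a := \OPT(s,\bm{v})$, and $b := \OPT(\bm{v},t)$. The sampling ranges $\bm{d}\in[\tfrac{1}{4}+2\gamma,\tfrac{3}{4}-2\gamma]$ and $\bm{l}\in[\gamma,2\gamma]$ force $\bm{d}+\bm{l}\leq \tfrac{3}{4}$ and $1-\bm{d}+\bm{l}\leq \tfrac{3}{4}$, so the definition of $V_{\bm{d},\bm{l}}$ gives $\max\{a,b\}\leq \tfrac{3}{4}x < x$, making the induction hypothesis applicable to both recursive calls. Combined with Lemma~\ref{lem:expapprox}, we also have $a+b \leq (1+4\gamma)x$. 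The induction hypothesis therefore yields
\[
\left|\Call{Rec}{G,s,t,\gamma}\right| \leq a^{1+14\gamma} + b^{1+14\gamma}.
\]

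It remains to show $a^{1+14\gamma} + b^{1+14\gamma} \leq x^{1+14\gamma}$ over the polytope $P = \{(a,b) : 0\leq a,b \leq \tfrac{3}{4}x,\ a+b \leq (1+4\gamma)x\}$. Since the exponent $1+14\gamma$ exceeds $1$, the objective is convex in $(a,b)$, so its maximum over $P$ is attained at a vertex; for $\gamma < \tfrac{1}{8}$ (which is forced by the algorithm's sampling intervals being nonempty), the maximizing vertex is $\bigl(\tfrac{3}{4}x,(\tfrac{1}{4}+4\gamma)x\bigr)$, and the whole inductive step reduces to the scalar inequality
\[
\bigl(\tfrac{3}{4}\bigr)^{1+14\gamma} + \bigl(\tfrac{1}{4}+4\gamma\bigr)^{1+14\gamma} \leq 1.
\]

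The main obstacle is verifying this scalar inequality on $\gamma \in (0,\tfrac{1}{8})$. Both sides equal $1$ at $\gamma=0$, and direct differentiation gives a derivative of $14\bigl[\tfrac{3}{4}\ln\tfrac{3}{4} + \tfrac{1}{4}\ln\tfrac{1}{4}\bigr] + 4 \approx -3.87 < 0$ there, which already establishes the bound for small $\gamma$. Intuitively, the constant $14$ carries ample slack over the tight threshold $\log_{4/3}(1+4\gamma)\approx 13.86\gamma$ arising from compounding the per-level loss $1+4\gamma$ across the $\log_{4/3} x$ recursion depth. This slack extends the scalar inequality to the entire range by a routine monotonicity check on its derivative (or by direct evaluation at the endpoint $\gamma = \tfrac{1}{8}$, where the left-hand side becomes $2(\tfrac{3}{4})^{11/4} \approx 0.91 < 1$, together with the observation that the left-hand side is unimodal on $[0,\tfrac{1}{8}]$).
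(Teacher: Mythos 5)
Your induction skeleton matches the paper's: the same base case, the same facts $\max\{\OPT(s,\bm{v}),\OPT(\bm{v},t)\}\le\frac34\OPT(s,t)$ and (via Lemma~\ref{lem:expapprox}) $\OPT(s,\bm{v})+\OPT(\bm{v},t)\le(1+4\gamma)\OPT(s,t)$, so everything reduces to bounding $a^{1+14\gamma}+b^{1+14\gamma}$ subject to $a,b\le\frac34 x$ and $a+b\le(1+4\gamma)x$. Where you diverge is in how you close this step. The paper avoids any extremal analysis: it writes $a^{1+14\gamma}=a\cdot a^{14\gamma}\le a\cdot(\frac34 x)^{14\gamma}$ (likewise for $b$), pulls out the common factor $(\frac34 x)^{14\gamma}$, and is left with the one-line inequality $1+4\gamma\le(\frac43)^{14\gamma}$, which follows from $e^y\ge 1+y$ and holds for every $\gamma>0$. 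You instead maximize the convex objective over the polytope, land on the vertex $(\frac34 x,(\frac14+4\gamma)x)$, and must verify $\left(\frac34\right)^{1+14\gamma}+\left(\frac14+4\gamma\right)^{1+14\gamma}\le 1$ on $(0,\frac18)$.

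That scalar inequality is true, but its justification is the one loose end in your write-up: checking the value and derivative at $\gamma=0$ and the value at $\gamma=\frac18$ does not by itself rule out the function climbing back above $1$ in the interior, and the ``unimodality'' you invoke is asserted rather than proved. It can be patched cleanly in either of two ways. First, both summands are log-convex in $\gamma$ (for the second, $(\log v)''=(12+224\gamma)/(\frac14+4\gamma)^2>0$), so their sum is convex and hence bounded by its endpoint values $1$ and $2(\frac34)^{11/4}<1$. Second, and more simply, $(\frac14+4\gamma)^{14\gamma}\le(\frac34)^{14\gamma}$ for $\gamma\le\frac18$, so your scalar inequality is implied by the paper's bound $(\frac34)^{14\gamma}(1+4\gamma)\le1$. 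With either patch the argument is complete; as written, you trade the paper's elementary estimate for a tighter but harder-to-certify extremal computation.
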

\begin{proof}
We prove by induction on $\OPT(s,t)$. 
The statement clearly holds when $\OPT(s,t) \leq \gamma^{-1}$.

Suppose that $\OPT(s,t)>\gamma^{-1}$.
Then, we have
\begin{align*}
    \left|\Call{Rec}{G,s,t,\gamma}\right|
    &\leq \max_{d,l,v\in V_{d,l}}\left(\left|\Call{Rec}{G,s,v,\gamma}\right|+\left|\Call{Rec}{G,v,t,\gamma}\right|\right)\\
    &\leq \max_{d,l,v\in V_{d,l}}\left(\OPT(s,v)^{1+14\gamma}+\OPT(v,t)^{1+14\gamma}\right)\\
    &= \max_{d,l,v\in V_{d,l}}\left(\OPT(s,v)\cdot \OPT(s,v)^{14\gamma}+\OPT(v,t)\cdot \OPT(v,t)^{14\gamma}\right)\\
    &\leq \left(\frac{3}{4}\OPT(s,t)\right)^{14\gamma} \max_{d,l,v\in V_d}\left(\OPT(s,v)+\OPT(v,t)\right) \\
    &\leq \left(\frac{3}{4}\OPT(s,t)\right)^{14\gamma}(1+4\gamma)\OPT(s,t)\\
    &= \left(\left(\frac{3}{4}\right)^{14\gamma}(1+4\gamma)\right)\OPT(s,t)^{1+14\gamma}\\
    &\leq \OPT(s,t)^{1+14\gamma},
\end{align*}
where the second inequality is from the induction hypothesis, the third inequality is from how we select the pivot, the fourth inequality is from Lemma~\ref{lem:expapprox}, and the last inequality is from $1+4\gamma \leq \left(\frac{4}{3}\right)^{14\gamma}$.
\end{proof}

Now, we bound the approximation ratio of \Call{SP}{}.
\begin{lemma}\label{lem:spapp}
The approximation ratio of $\Call{SP}{G,s,t,\epsilon}$ is (at least) $1+\epsilon$.
\end{lemma}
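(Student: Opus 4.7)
The plan is to derive the claim as a direct consequence of Lemma~\ref{lem:approx} combined with the sampling range of $\bm{\gamma}$ in $\Call{SP}{}$. By Lemma~\ref{lem:approx} applied pointwise for every realization of $\bm{\gamma}$, the walk returned by $\Call{Rec}{G,s,t,\bm{\gamma}}$ has length at most $\OPT(s,t)^{1+14\bm{\gamma}} = \OPT(s,t) \cdot \OPT(s,t)^{14\bm{\gamma}}$, so it suffices to show deterministically that $\OPT(s,t)^{14\bm{\gamma}} \leq 1+\epsilon$.

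Since $\bm{\gamma}^{-1}$ is drawn from $[720\epsilon^{-1}\log|V|, 1440\epsilon^{-1}\log|V|]$, every realization of $\bm{\gamma}$ satisfies $\bm{\gamma} \leq \epsilon/(720\log|V|)$. Using the trivial bound $\OPT(s,t) \leq |V|$ (any shortest path in an unweighted graph on $|V|$ vertices has at most $|V|-1$ edges), the plan is to write
\[
    14\bm{\gamma} \log \OPT(s,t) \leq 14\bm{\gamma}\log|V| \leq \frac{14\epsilon}{720} < \frac{\epsilon}{50},
\]
from which $\OPT(s,t)^{14\bm{\gamma}} = \exp(14\bm{\gamma}\log\OPT(s,t)) \leq e^{\epsilon/50} \leq 1+\epsilon$ for $\epsilon \leq 1$ (and the statement is trivial otherwise). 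Multiplying by $\OPT(s,t)$ and invoking Lemma~\ref{lem:approx} yields that the length of $\Call{SP}{G,s,t,\epsilon}$ is at most $(1+\epsilon)\OPT(s,t)$, as desired. The boundary case $\OPT(s,t) = 0$ is trivial, and the recursion base case $\OPT(s,t)\leq \bm{\gamma}^{-1}$ returns an exact shortest path via BFS, so no additional work is needed.

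The argument does not actually exploit the randomization of $\bm{\gamma}$; the uniform sampling over an interval of length $\Theta(\epsilon^{-1}\log|V|)$ is there to control the contraction sensitivity in Sections~\ref{subsec:sp-pivot} and~\ref{subsec:sp-sensitivity}, not the approximation ratio. Consequently, the only nontrivial ingredient has already been established in Lemma~\ref{lem:approx}, and I do not anticipate any obstacle beyond the routine calculation above.
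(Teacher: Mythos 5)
Your proposal is correct and follows essentially the same route as the paper: invoke Lemma~\ref{lem:approx} pointwise in $\bm{\gamma}$, use $\bm{\gamma}\leq \epsilon/(720\log|V|)$ and $\OPT(s,t)\leq |V|$ to bound $\OPT(s,t)^{14\bm{\gamma}}\leq \exp(14\epsilon/720)\leq 1+\epsilon$. The paper's proof is exactly this computation (closing with $\exp(x)\leq 1+2x$ for $x\leq 1/2$), so there is nothing further to add.
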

\begin{proof}
By Lemma~\ref{lem:approx}, \Call{SP}{} outputs a walk of length $\OPT(s,t)^{1+14\gamma}$, which is bounded as
\begin{align*}
    & \OPT(s,t)^{1+14\gamma} \leq 
    \OPT(s,t)^{\frac{14\epsilon}{720\log |V|}}\cdot \OPT(s,t)
    \leq |V|^{\frac{14\epsilon}{720\log |V|}}\cdot \OPT(s,t)\\
    &= \exp\left(\frac{14\epsilon}{720}\right)\cdot \OPT(s,t) 
    \leq (1+\epsilon)\OPT(s,t),
\end{align*}
where the first inequality is from $\gamma\leq \frac{\epsilon}{720\log |V|}$, the second inequality is from $\OPT(s,t)\leq |V|$, and the last inequality is from $\exp(x)\leq 1+2x$, which holds for $x\leq \frac{1}{2}$.
\end{proof}


\subsection{Total Variation Distance Between Pivots}\label{subsec:sp-pivot}
In this section, we bound the total variation distance between pivots selected in a particular call of \Call{Rec}{}.
Let $\bm{v}$ and $\bm{v}^e$ be the pivots sampled in \Call{Rec}{$G,s,t,\gamma$} and \Call{Rec}{$G/e,s,t,\gamma$}, respectively. 
We denote the variables $V,\bm{d},\bm{l}$ used in \Call{Rec}{$G/e,s,t,\gamma$} as $V^e,\bm{d}^e,\bm{l}^e$ to distinguish them from those used in \Call{Rec}{$G,s,t,\gamma$}.
Our goal is to prove the following:
\begin{lemma}\label{lem:TVD}
We have
\begin{align*}
    \TV\left(\bm{v},\bm{v}^e\right)
    \leq \frac{4\log n}{\gamma \cdot \OPT_G(s,t)}.
\end{align*}
\end{lemma}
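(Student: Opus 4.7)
The plan is to bound the total variation distance by exhibiting an explicit coupling of $\bm{v}$ and $\bm{v}^e$ and bounding the probability of disagreement. First, use the trivial coupling $(\bm{d},\bm{l}) = (\bm{d}^e,\bm{l}^e)$, valid since the two pairs share the same distribution. Conditional on $(\bm{d},\bm{l}) = (d,l)$, couple the pivots via independent uniform $[0,1]$ ``seeds'' on the vertices (identifying both endpoints of $e$ in $G$ with the contracted vertex $v_e$ in $G/e$), letting $\bm{v}$ and $\bm{v}^e$ be the seed-minimizing vertices of $V_{d,l}$ and $V^e_{d,l}$, respectively. Then $\bm{v}\neq\bm{v}^e$ occurs only when the minimum seed over $V_{d,l}\cup V^e_{d,l}$ is attained at a vertex in the symmetric difference, so
\[
    \Pr[\bm{v}\neq\bm{v}^e\mid d,l] \leq \frac{|V_{d,l}\triangle V^e_{d,l}|}{|V_{d,l}\cup V^e_{d,l}|}.
\]

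The next step is to analyze the symmetric difference using that edge contraction changes any pairwise distance by $0$ or $-1$. Assuming first $\OPT_{G/e}(s,t) = L := \OPT_G(s,t)$, this monotonicity yields $V_{d,l}\subseteq V^e_{d,l}$, and each extra vertex $w\in V^e_{d,l}\setminus V_{d,l}$ must either satisfy $\OPT_G(s,w) = \lfloor(d+l)L\rfloor+1$ and $\OPT_{G/e}(s,w) = \lfloor(d+l)L\rfloor$ (an ``$s$-extra''), or the analogous condition in $t$ (a ``$t$-extra''). For the denominator, $V_{d,l}$ always contains $\Omega(\gamma L)$ internal vertices of any fixed shortest $s$-$t$ path, giving $|V_{d,l}|=\Omega(\gamma L)$ whenever $\gamma L \geq 1$, which holds on the non-base-case branch of the recursion.

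The heart of the argument is a telescoping estimate, carried out on the $s$-side (the $t$-side being symmetric). Change variables to $\alpha := (d+l)L$ and $\beta := (1-d+l)L$, making $(\alpha,\beta)$ uniform on a parallelogram with constant density $O(1/(L^2\gamma))$ and $\beta$-range of length at most $L/2$. For fixed $\beta$, setting $m_j := |\{w : \OPT_{G/e}(s,w)\leq j,\ \OPT_{G/e}(w,t)\leq\beta\}|$, we have $|V^e_{d,l}| = m_{i-1}$ on the slab $\alpha\in[i-1,i)$, and the number of $s$-extras there is at most $m_{i-1}-m_{i-2}$. Telescoping yields
\[
    \sum_i \frac{m_{i-1}-m_{i-2}}{m_{i-1}} \leq \sum_j \frac{m_j-m_{j-1}}{m_j} \leq \ln(m_{\max}/m_{\min}) \leq \ln n.
\]
Multiplying by the density bound and integrating over $\beta$ gives the $s$-side contribution at most $O(\log n/(\gamma L))$, completing the bound up to the constant $4$ claimed in the lemma.

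The main technical obstacle will be the case $\OPT_{G/e}(s,t) = L-1$, in which the thresholds defining $V^e_{d,l}$ are slightly smaller than those defining $V_{d,l}$, so the clean containment $V_{d,l}\subseteq V^e_{d,l}$ need not hold and the sets may differ in both directions. I plan to handle this either by coupling $(\bm{d}^e,\bm{l}^e)$ to $(\bm{d},\bm{l})$ with a rescaling that realigns thresholds---incurring only $O(1/L)$ additional total variation between the two joint laws---or by absorbing the threshold mismatch into one additional boundary slab in the telescoping argument, whose contribution is likewise $O(\log n/(\gamma L))$ and can be folded into the final constant.
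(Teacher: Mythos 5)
Your proposal is correct in outline and rests on the same three pillars as the paper's proof: the trivial coupling of $(\bm{d},\bm{l})$ with $(\bm{d}^e,\bm{l}^e)$, the lower bound $|V_{d,l}|\geq\gamma\cdot\OPT_G(s,t)$ (Lemma~\ref{lem:sizeofV}), and a telescoping estimate of the form $\sum_i(1-a_i/a_{i+1})\leq\log(a_k/a_0)$ (Lemma~\ref{lem:arraylog}). The difference is where the telescoping lives. The paper relates $V_{d,l}$ to $V^e_{d,l}$ via the inclusion $V^e_{d,l}\subseteq V_{d,l+1/\OPT_G(s,t)}\cup\{v_e\}$ (Lemma~\ref{lem:Vinclusion}) and telescopes over shifts of $\bm{l}$ in steps of $1/\OPT_G(s,t)$, entirely in terms of sets defined in $G$; you telescope over integer levels of $\OPT_{G/e}(s,\cdot)$ (and symmetrically of $\OPT_{G/e}(\cdot,t)$) for a fixed second threshold, in terms of sets defined in $G/e$. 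Both yield the same $\log n$ factor and the same density computation. What the paper's route buys is that Lemma~\ref{lem:Vinclusion} only needs $\OPT_{G/e}(s,t)\leq\OPT_G(s,t)$, so the case $\OPT_{G/e}(s,t)=\OPT_G(s,t)-1$, which you flag as the main obstacle, never requires separate treatment; either of your two proposed patches (rescaling the coupling at an extra $O(1/\OPT_G(s,t))$ cost in total variation, or one extra boundary slab in the telescoping) would work, but neither is carried out. Two further loose ends in your sketch are of the same benign boundary-term flavor: the first term of each telescoping sum must be bounded by $1$ separately when the initial level set is empty, and the identification of the endpoints of $e$ with $v_e$ can make $\bm{v}\in e$ while $\bm{v}^e=v_e$ even under your seed coupling --- the paper accounts for this explicitly via the $\Pr[\bm{v}\in e\mid\bm{d}=d,\bm{l}=l]\leq 2/(\gamma\cdot\OPT_G(s,t))$ term in the proof of Lemma~\ref{lem:dcond}.
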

We start with several lemmas.
\begin{lemma}\label{lem:Vinclusion}
    For any $d$ and $l$ (in the ranges specified in \Call{Rec}{}), we have $V^e_{d,l}  \subseteq V_{d,l+\frac{1}{\OPT_G(s,t)}}\cup \{v_e\}$
    (recall that $v_e$ is the vertex introduced when constructing $G/e$ from $G$).
\end{lemma}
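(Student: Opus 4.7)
The plan is to reduce the claim to two elementary metric comparisons between $G$ and $G/e$, and then to apply these comparisons to each of the two defining inequalities of $V^e_{d,l}$.

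First, I would establish the following comparison: for any two vertices $u,v \in V \setminus e$ (identified with their images in $V^e$),
\[
    \OPT_{G/e}(u,v) \leq \OPT_G(u,v) \leq \OPT_{G/e}(u,v) + 1.
\]
The left inequality is immediate, because any $u$-$v$ path in $G$ projects to a walk of the same length in $G/e$ (contracting $e$ cannot increase distances). For the right inequality, I would take a shortest $u$-$v$ path in $G/e$ and lift it to a walk in $G$: each traversal through the contracted vertex $v_e$ can be realized in $G$ by inserting at most the single edge $e$, and since a shortest path visits $v_e$ at most once, the lifted walk is longer by at most $1$. In particular, the same reasoning applied to $(s,t)$ gives $\OPT_{G/e}(s,t)\leq \OPT_G(s,t)$, which will be used on the right-hand sides below.

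Now take any $v \in V^e_{d,l}$ with $v \neq v_e$. By definition, $\OPT_{G/e}(s,v) \leq (d+l)\,\OPT_{G/e}(s,t)$ and $\OPT_{G/e}(v,t) \leq (1-d+l)\,\OPT_{G/e}(s,t)$. Applying the comparison to the left-hand sides (adding at most $1$) and the monotonicity $\OPT_{G/e}(s,t)\leq \OPT_G(s,t)$ to the right-hand sides yields
\[
    \OPT_G(s,v) \leq (d+l)\OPT_G(s,t) + 1
    \quad\text{and}\quad
    \OPT_G(v,t) \leq (1-d+l)\OPT_G(s,t) + 1.
\]
Rewriting the additive $1$ as $\tfrac{1}{\OPT_G(s,t)}\cdot \OPT_G(s,t)$ absorbs it into the slack parameter and produces exactly the two conditions defining membership in $V_{d,\, l+1/\OPT_G(s,t)}$, as required.

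I do not expect a genuine obstacle here; the only subtlety is the lift from $G/e$ to $G$, where one must verify that $v_e$ is traversed at most once on a shortest path so that the overhead is $1$ rather than larger, and one must be careful to apply the two opposite inequalities between $\OPT_G$ and $\OPT_{G/e}$ on the correct sides (additive increase on the left, monotone decrease on the right). The exclusion of $v_e$ on the right-hand side of the inclusion is exactly what allows the comparison to be applied, since $v_e$ has no direct counterpart in $V$.
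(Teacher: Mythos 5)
Your proposal is correct and follows essentially the same route as the paper: both use $\OPT_G(s,v)\leq \OPT_{G/e}(s,v)+1$ and $\OPT_G(v,t)\leq \OPT_{G/e}(v,t)+1$ on the left-hand sides, the monotonicity $\OPT_{G/e}(s,t)\leq \OPT_G(s,t)$ on the right-hand sides, and the rewrite of the additive $1$ as $\frac{1}{\OPT_G(s,t)}\cdot\OPT_G(s,t)$. Your extra care about lifting a shortest path through $v_e$ with overhead exactly $1$ is a fine elaboration of the same argument.
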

\begin{proof}
    Let $v\in V^e_{d,l} \setminus \{v_e\}$. Then, we have
    \begin{align*}
        \OPT_{G}(s,v)&\leq \OPT_{G/e}(s,v) + 1 \leq (d+l)\OPT_{G/e}(s,t) + 1\\
        &\leq (d+l)\OPT_{G}(s,t) + 1 = \left(d+l+\frac{1}{\OPT_{G}(s,t)}\right)\OPT_{G}(s,t),\\
        \OPT_{G}(v,t)&\leq \OPT_{G/e}(v,t) + 1 \leq (1-d+l)\OPT_{G/e}(s,t) + 1\\
        &\leq (1-d+l)\OPT_{G}(s,t) + 1 = \left(1-d+l+\frac{1}{\OPT_{G}(s,t)}\right)\OPT_{G}(s,t),
    \end{align*}
    and the claim holds.
\end{proof}

\begin{lemma}\label{lem:sizeofV}
For any $d$ and $l$ (in the ranges specified in \Call{Rec}{}), we have $|V_{d,l}|\geq \gamma\cdot \OPT_G(s,t)$.
\end{lemma}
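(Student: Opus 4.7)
The plan is to count, for any fixed $d$ and $l$ in the prescribed ranges, the vertices lying on a fixed shortest $s$-$t$ path in $G$ that fall inside $V_{d,l}$. This gives a lower bound on $|V_{d,l}|$ because distinct positions along a shortest path correspond to distinct vertices.

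First I would fix a shortest $s$-$t$ path $P = (s = u_0, u_1, \dots, u_{\OPT_G(s,t)} = t)$ in $G$, so that $\OPT_G(s, u_i) = i$ and $\OPT_G(u_i, t) = \OPT_G(s,t) - i$ for every $i$. The membership conditions $\OPT_G(s,v) \leq (d+l)\OPT_G(s,t)$ and $\OPT_G(v,t) \leq (1-d+l)\OPT_G(s,t)$ for $v = u_i$ translate into the single range condition
\[
  (d-l)\OPT_G(s,t) \;\leq\; i \;\leq\; (d+l)\OPT_G(s,t).
\]
Note that the ranges for $\bm{d}$ and $\bm{l}$ ensure $d - l \geq \tfrac{1}{4} + 2\gamma - 2\gamma > 0$ and $d + l \leq \tfrac{3}{4} - 2\gamma + 2\gamma < 1$, so the interval lies inside $[0, \OPT_G(s,t)]$ and every integer in it corresponds to a valid index along $P$.

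Next I would count integers in the interval. Its length is $2l \cdot \OPT_G(s,t) \geq 2\gamma \cdot \OPT_G(s,t)$, and the number of integers it contains is at least $2l \cdot \OPT_G(s,t) - 1 \geq 2\gamma \cdot \OPT_G(s,t) - 1$. Since this lemma is only relevant when the algorithm does not hit the base case, we may assume $\OPT_G(s,t) > \gamma^{-1}$, which gives $\gamma \cdot \OPT_G(s,t) > 1$ and hence
\[
  2\gamma \cdot \OPT_G(s,t) - 1 \;\geq\; \gamma \cdot \OPT_G(s,t).
\]
Since distinct values of $i$ along the shortest path $P$ yield distinct vertices (their distances from $s$ differ), each such $u_i$ is a distinct element of $V_{d,l}$, giving $|V_{d,l}| \geq \gamma \cdot \OPT_G(s,t)$ as required.

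The argument is essentially a counting argument, so there is no serious obstacle; the only subtlety is being careful that the chosen ranges for $d$ and $l$ keep the interval inside $[0, \OPT_G(s,t)]$ and that the assumption $\OPT_G(s,t) > \gamma^{-1}$ is enough to absorb the $-1$ rounding loss into a clean bound of $\gamma \cdot \OPT_G(s,t)$.
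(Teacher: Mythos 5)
Your proof is correct and follows essentially the same approach as the paper: fix a shortest $s$-$t$ path, observe that every vertex on it whose distance from $s$ lies in $[(d-l)\OPT(s,t),(d+l)\OPT(s,t)]$ belongs to $V_{d,l}$, and count the integers in that interval, absorbing the $-1$ rounding loss via $\gamma\cdot\OPT(s,t)\geq 1$ from the non-base-case assumption. Your explicit check that the interval stays within $[0,\OPT_G(s,t)]$ is a small detail the paper leaves implicit, but the argument is the same.
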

\begin{proof}
Let $P$ be an arbitrary $s$-$t$ shortest path. 
From the definition of $V_{d,l}$, any vertex $v'$ on $P$ with $(d-l)\OPT(s,t)\leq \OPT_G(s,v')\leq (d+l)\OPT(s,t)$ is contained in $V_{d,l}$. 
Therefore, we have 
\[
    |V_{d,l}|\geq \floor{(d+l)\OPT(s,t)}-\ceil{(d-l)\OPT(s,t)}\geq 2l\OPT(s,t)-1\geq \gamma\cdot \OPT(s,t).
    \qedhere
\]
\end{proof}
\begin{corollary}\label{cor:specvert}
    For any $v\in V$, $d$, and $l$ (in the ranges specified in \Call{Rec}{}), we have $\Pr[\bm{v}=v\mid \bm{d}=d,\bm{l}=l]\leq \frac{1}{\gamma\cdot \OPT_G(s,t)}$.
\end{corollary}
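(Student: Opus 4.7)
The plan is to deduce this corollary directly from Lemma~\ref{lem:sizeofV} together with the uniform sampling rule used in Line~\ref{line:main-sample} of \Call{Rec}{}. Conditioning on $\bm{d}=d$ and $\bm{l}=l$ fixes the set $V_{d,l}$ deterministically (it depends only on the graph $G$, the source $s$, the target $t$, and the values $d,l$, none of which involve further randomness), so the conditional distribution of $\bm{v}$ is simply the uniform distribution on $V_{d,l}$.

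First I would observe that for any specific $v \in V$,
\[
  \Pr[\bm{v}=v \mid \bm{d}=d,\bm{l}=l] \;=\; \frac{\mathbf{1}[v \in V_{d,l}]}{|V_{d,l}|} \;\leq\; \frac{1}{|V_{d,l}|}.
\]
Then I would invoke Lemma~\ref{lem:sizeofV}, which gives $|V_{d,l}| \geq \gamma \cdot \OPT_G(s,t)$ for every $d,l$ in the prescribed ranges, and plug this lower bound into the denominator to conclude the claim.

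There is essentially no obstacle here: the corollary is a one-line consequence of the uniform sampling step combined with the lower bound on $|V_{d,l}|$ already established. The only thing worth double-checking is that the lower bound in Lemma~\ref{lem:sizeofV} holds uniformly over the entire support of $(\bm{d},\bm{l})$ used in \Call{Rec}{}, so that the inequality is valid for all conditionings under consideration; this is immediate from the way Lemma~\ref{lem:sizeofV} is stated.
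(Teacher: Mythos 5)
Your proposal is correct and matches the paper's (implicit) argument exactly: the corollary is stated without proof precisely because it follows in one line from the uniform sampling of $\bm{v}$ over $V_{d,l}$ in Line~\ref{line:main-sample} combined with the lower bound $|V_{d,l}| \geq \gamma\cdot \OPT_G(s,t)$ of Lemma~\ref{lem:sizeofV}. Nothing is missing.
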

The next technical lemma is useful in our analysis.
\begin{lemma}\label{lem:arraylog}
Let $a_0,\dots, a_k$ be a non-decreasing array with $a_0>0$. Then,
\begin{align*}
    \sum_{d=0}^{k-1}\left(1-\frac{a_d}{a_{d+1}}\right)\leq \log \frac{a_k}{a_0}.
\end{align*}
\end{lemma}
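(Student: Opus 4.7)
The plan is to prove this by a termwise bound followed by telescoping. Specifically, I will invoke the standard inequality $1 - x \leq -\log x = \log(1/x)$, valid for all $x > 0$ (it follows from concavity of $\log$, or equivalently from $e^{y} \geq 1 + y$ applied at $y = -\log x$). Since $a_0 > 0$ and the sequence is non-decreasing, each ratio $a_d / a_{d+1}$ lies in $(0, 1]$, so the inequality applies.

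Apply it to $x = a_d/a_{d+1}$ for each $d$, giving
\[
1 - \frac{a_d}{a_{d+1}} \leq \log \frac{a_{d+1}}{a_d}.
\]
Summing over $d = 0, 1, \dots, k-1$, the right-hand side telescopes to $\log(a_k/a_0)$, yielding the claim. No technical subtlety arises beyond verifying that the sequence stays positive (ensured by $a_0 > 0$ together with monotonicity), so there is essentially no obstacle — this lemma is a purely analytic helper that will later be chained with the pivot-distribution bound (Corollary~\ref{cor:specvert}) to control the total variation distance in Lemma~\ref{lem:TVD}.
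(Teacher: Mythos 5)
Your proof is correct and is essentially the same as the paper's: both rest on the termwise bound $1-\tfrac{a_d}{a_{d+1}}\leq \log\tfrac{a_{d+1}}{a_d}$ (the paper states it as $1-\tfrac{1}{x}\leq\log x$ for $x\geq 1$), the only cosmetic difference being that the paper packages the telescoping sum as an induction on $k$.
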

\begin{proof}
We prove by induction on $k$. 
When $k=0$, the statement clearly holds. 

Suppose $k\geq 1$ and the statement holds for $k-1$.
Then, we have
\begin{align*}
    \sum_{d=0}^{k-1}\left(1-\frac{a_d}{a_{d+1}}\right)
    \leq \log \frac{a_{k-1}}{a_0} + \left(1 - \frac{a_{k-1}}{a_{k}}\right)
    \leq \log \frac{a_{k-1}}{a_0} + \log\frac{a_k}{a_{k-1}} = \log \frac{a_k}{a_0},
\end{align*}
where the last inequality is from $1-\frac{1}{x}\leq \log x$ for $x\geq 1$.
\end{proof}

The following lemma proves that $\TV(\bm{v},\bm{v}^e)$ is small when $\bm{d}=\bm{d}^e$.
\begin{lemma}\label{lem:dcond}
For any $d$ (in the range specified in \Call{Rec}{}), we have
\begin{align*}
    \TV\left(\left(\bm{v}\mid \bm{d}=d\right),\left(\bm{v}^e\mid \bm{d}^e=d\right)\right)
    \leq \frac{2}{\gamma \cdot \OPT_G(s,t)}. 
\end{align*}

\end{lemma}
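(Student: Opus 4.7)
The plan is to construct an explicit coupling between $(\bm{l},\bm{v})$ and $(\bm{l}^e,\bm{v}^e)$ (both conditioned on $\bm{d}=d$) in which $\bm{v}$ and $\bm{v}^e$ agree with probability at least $1-2/(\gamma n)$, where $n:=\OPT_G(s,t)$, under the natural identification $\pi\colon V\to V^e$ that maps the two endpoints $a,b$ of $e$ to $v_e$ and fixes every other vertex.

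I would first couple the auxiliary variables trivially: sample $L\sim\mathrm{Unif}[\gamma,2\gamma]$ and set $\bm{l}=\bm{l}^e=L$, so the marginals on $[\gamma,2\gamma]$ are preserved. Given $L$, I would then use the optimal coupling between $\mathrm{Unif}(V_{d,L})$ and $\mathrm{Unif}(V^e_{d,L})$ viewed on the common space $V^e$ via $\pi$. The analysis rests on three ingredients already in hand: (i) Lemma~\ref{lem:Vinclusion}, giving $V^e_{d,L}\setminus\{v_e\}\subseteq V_{d,L+1/n}$, together with a symmetric inclusion $V_{d,L}\setminus\{a,b\}\subseteq V^e_{d,L+1/n}$ derived by the same argument using $\OPT_{G/e}(s,v)\le\OPT_G(s,v)$ and $\OPT_G(s,t)\le\OPT_{G/e}(s,t)+1$; (ii) the size lower bound $|V_{d,L}|,|V^e_{d,L}|\ge\gamma n$ from Lemma~\ref{lem:sizeofV}; and (iii) the per-vertex probability bound $\Pr[\bm{v}=v\mid\bm{d}=d]\le 1/(\gamma n)$ from Corollary~\ref{cor:specvert} (and its analogue for $\bm{v}^e$).

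I would split the total discrepancy into two contributions, each of size $1/(\gamma n)$. The \emph{special-vertex} contribution comes from the events $\{\bm{v}\in\{a,b\}\}$ and $\{\bm{v}^e=v_e\}$: after identifying via $\pi$, the mass these events cannot jointly cover is bounded by $1/(\gamma n)$ using Corollary~\ref{cor:specvert} applied to $G$ and $G/e$. The \emph{boundary} contribution comes from vertices $v\in V\setminus\{a,b\}$ whose membership in $V_{d,L}$ and $V^e_{d,L}$ differs because the contraction shifts $\OPT_G(s,v)$, $\OPT_G(v,t)$, or $\OPT_G(s,t)$ by at most $1$. The two inclusions imply that the entry times $l_v$ and $l_v^e$ of such a vertex into the respective set families satisfy $|l_v-l_v^e|=O(1/n)$, so any single $v$ lies in the symmetric difference only for $L$ in an interval of length $O(1/n)$, contributing at most $1/|V_{d,L}|\le 1/(\gamma n)$ there.

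The hard part is the boundary analysis, because a priori many vertices could sit near the boundary simultaneously, and a naive union bound over all $|V|$ candidates is too weak. My plan for the decisive step is to use the inclusions to observe that the vertices relevant to the boundary mismatch are contained in $V_{d,2\gamma+1/n}$, whose size is controlled by the same mechanism used in Lemma~\ref{lem:sizeofV} together with the randomness of $L$; after averaging the per-vertex contribution $O(1/(\gamma n))\cdot O(1/n)\cdot(1/\gamma)$ over the window of length $O(1/n)$ and dividing by $|V_{d,L}|\ge\gamma n$, these terms telescope to $O(1/(\gamma n))$. Combining the two $1/(\gamma n)$ contributions gives the claimed bound $\TV((\bm{v}\mid\bm{d}=d),(\bm{v}^e\mid\bm{d}^e=d))\le 2/(\gamma n)$.
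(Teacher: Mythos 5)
Your overall decomposition mirrors the paper's: couple $\bm{l}$ identically, identify $a,b$ with $v_e$, and split the discrepancy into a special-vertex part (handled by Corollary~\ref{cor:specvert}) and a boundary part coming from vertices whose membership in $V_{d,l}$ versus $V^e_{d,l}$ differs. The special-vertex part is fine. The gap is in the boundary part, precisely at the step you yourself flag as ``the decisive step.'' The claim that the boundary terms ``telescope to $O(1/(\gamma n))$'' is unsubstantiated and false in general. After integrating over $l$, the boundary contribution is controlled by a sum of the form $\sum_k\bigl(1-|V_{d,a+kc}|/|V_{d,a+(k+1)c}|\bigr)$ with $c=1/\OPT_G(s,t)$, and the correct telescoping bound (Lemma~\ref{lem:arraylog}) is $\log\bigl(|V_{d,2\gamma+c}|/|V_{d,\gamma}|\bigr)\le\log|V|$, not $O(1)$. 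That bound is tight: if $|V_{d,l}|$ grows geometrically in $l$ (which your inclusion arguments do not rule out, since arbitrarily many vertices can sit in each distance shell of width $1$), every term of the sum is a constant and there are $\Theta(\log|V|)$ nontrivial terms. Your per-vertex accounting (a window of length $O(1/n)$ in $l$, times a contribution of at most $1/|V_{d,L}|$, times the density $1/\gamma$) yields $O(N/(\gamma^2 n^2))$ for $N$ boundary vertices, and $N$ can be as large as $|V|\gg\gamma n$; the only rescue is that the denominator $|V_{d,l}|$ grows along with $N$, and quantifying that rescue is exactly what produces the logarithm.

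Relatedly, the bound $2/(\gamma\cdot\OPT_G(s,t))$ in the printed lemma statement is inconsistent with the paper's own proof and with Lemma~\ref{lem:TVD}: what is actually established (and used downstream) is $4\log|V|/(\gamma\cdot\OPT_G(s,t))$. You are therefore attempting to prove a strictly stronger bound than the paper does, and the one place where your argument departs from the paper's is exactly the place that would require a genuinely new idea to eliminate the $\log|V|$ factor --- an idea your sketch does not supply.
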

\begin{proof}
We have
\begin{align*}
    &\TV\left(\left(\bm{v}\mid \bm{d}=d\right),\left(\bm{v}^e\mid \bm{d}^e=d\right)\right)\\
    &\leq \frac{1}{\gamma}\int_{\gamma}^{2\gamma}\TV\left(\left(\bm{v}\mid \bm{d}=d, \bm{l}=l\right),\left(\bm{v}^e\mid \bm{d}^e=d,\bm{l}^e=l\right)\right)\mathrm{d}l\\
    &= \frac{1}{\gamma}\int_{\gamma}^{2\gamma}\left(\sum_{v\in V_{d,l}}\max\left(0,\Pr\left[\bm{v}=v\mid \bm{d}=d,\bm{l}=l\right]-\Pr\left[\bm{v}^e=v\mid \bm{d}^e=d,\bm{l}^e=l\right]\right)\right)\mathrm{d}l\\
    &\leq \frac{1}{\gamma}\int_{\gamma}^{2\gamma}\left(\Pr[\bm{v}\in e\mid \bm{d}=d,\bm{l}=l] + \sum_{v\in V_{d,l}\setminus e}\max\left(0,\frac{1}{\left|V_{d,l}\right|}-\frac{1}{\left|V^e_{d,l}\right|}\right)\right)\mathrm{d}l\\
    &\leq \frac{1}{\gamma}\int_{\gamma}^{2\gamma}\left(\frac{2}{\gamma\cdot \OPT_G(s,t)} + \sum_{v\in V_{d,l}\setminus e}\max\left(0,\frac{1}{\left|V_{d,l}\right|}-\frac{1}{\left|V^e_{d,l}\right|}\right)\right)\mathrm{d}l\\
    &\leq \frac{2}{\gamma \cdot \OPT_G(s,t)} + \frac{1}{\gamma}\int_{\gamma}^{2\gamma}\left(1-\frac{\left|V_{d,l}\right|}{\left|V^e_{d,l}\right|}\right)\mathrm{d}l\\
    &\leq \frac{2}{\gamma \cdot \OPT_G(s,t)} + \gamma^{-1}\int_{\gamma}^{2\gamma}\left(1-\frac{\left|V_{d,l}\right|}{\left|V_{d,l+\frac{1}{\OPT_G(s,t)}}\cup  \{v_e \}\right|}\right)\mathrm{d}l,
\end{align*}
where the second inequality is from Line~\ref{line:Ud} of Algorithm~\ref{alg:main}, the third inequality is from Corollary~\ref{cor:specvert}, the fourth inequality is from $\left|V_{d,l} \setminus e \right|\leq \left|V_{d,l}\right|$, and the last inequality is from Lemma~\ref{lem:Vinclusion}.

Let $c=\frac{1}{\OPT_G(s,t)}$. Then, we have
\begin{align*}
    &\int_{\gamma}^{2\gamma}\left(1-\frac{\left|V_{d,l}\right|}{\left|V_{d,l+\frac{1}{\OPT_G(s,t)}}\cup \{v_e\}\right|}\right)\mathrm{d}l\\
    &=\int_{\gamma}^{2\gamma}\left(1-\frac{\left|V_{d,l}\right|}{\left|V_{d,l+c}\right|+1}\right)\mathrm{d}l\\
    &= \int_{\gamma}^{\gamma+c}\left(\sum_{k=0}^{\floor{\frac{2\gamma-a}{c}}}\left(1-\frac{\left|V_{d,a+kc}\right|}{\left|V_{d,a+(k+1)c}\right|+1}\right)\right)\mathrm{d}a \\
    &\leq \int_{\gamma}^{\gamma+c}\left(\sum_{k=0}^{\floor{\frac{2\gamma-a}{c}}}\left(1-\frac{\left|V_{d,a+kc}\right|}{\left|V_{d,a+(k+1)c}\right|}+\frac{1}{\left|V_{d,a+(k+1)c}\right|}\right)\right)\mathrm{d}a\\
    &\leq \int_{\gamma}^{\gamma+c}\left(\frac{1}{\gamma\cdot \OPT_G(s,t)}\cdot \floor{\frac{2\gamma-a}{c}}+\sum_{k=0}^{\floor{\frac{2\gamma-a}{c}}}\left(1-\frac{\left|V_{d,a+kc}\right|}{\left|V_{d,a+(k+1)c}\right|}\right)\right)\mathrm{d}a\\
    &\leq \int_{\gamma}^{\gamma+c}\left(1+\sum_{k=0}^{\floor{\frac{2\gamma-a}{c}}}\left(1-\frac{\left|V_{d,a+kc}\right|}{\left|V_{d,a+(k+1)c}\right|}\right)\right)\mathrm{d}a\\
    &\leq \int_{\gamma}^{\gamma+c}\left(1+\log\left(\frac{\left|V_{d,a+\left(\floor{\frac{2\gamma-a}{c}}+1\right)c}\right|}{\left|V_{d,a}\right|}\right)\right)\mathrm{d}a\\
    &\leq \int_{\gamma}^{\gamma+c}\left(1+\log|V|\right)\mathrm{d}a
    = \frac{1+\log |V|}{\OPT_G(s,t)},
\end{align*}
where the first inequality is from $\left|V_{d,a+kc}\right|\leq \left|V_{d,a+(k+1)c}\right|$ and
\begin{align*}
    1-\frac{x}{x+y+z} = 1-\frac{x}{x+y}+\frac{xz}{(x+y)(x+y+z)}\leq 1-\frac{x}{x+y}+\frac{z}{x+y}
\end{align*}
holds for $x,y,z\geq 0$, the second inequality is from $\left|V_{d,a+(k+1)c}\right|\geq \gamma\cdot \OPT_G(s,t)$, the third inequality is from the definition of $c$, the fourth inequality is from Lemma~\ref{lem:arraylog}, and the last inequality is from $1\leq \left|V_{d,l}\right|\leq |V|$ for all $l$. Therefore, we have
\begin{align*}
    &\TV\left(\left(\bm{v}\mid \bm{d}=d\right),\left(\bm{v}^e\mid \bm{d}^e=d\right)\right)
    \leq \frac{2}{\gamma \cdot \OPT_G(s,t)} + \frac{1}{\gamma} \cdot \frac{1+\log |V|}{\OPT_G(s,t)}
    \leq \frac{4\log |V|}{\gamma \cdot \OPT_G(s,t)}.
    \qedhere
\end{align*}
\end{proof}

\begin{proof}[Proof of Lemma~\ref{lem:TVD}]
Because the distributions of $\bm{d}$ and $\bm{d}^e$ are identical, we can omit the condition $\bm{d}=\bm{d}^e=d$ of Lemma~\ref{lem:dcond} by considering the natural probability transportation.
\end{proof}

\subsection{Contraction Sensitivity}\label{subsec:sp-sensitivity}
In this section, we analyze the contraction sensitivity of Algorithm~\ref{alg:main}.
Our goal is to prove the following.
\begin{lemma}\label{lem:spsens}
The contraction sensitivity of $\Call{SP}{G,s,t,\epsilon}$ is $O(\epsilon^{-1}\log^3 |V|)$.
\end{lemma}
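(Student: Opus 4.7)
The plan is to classify recursion calls as \emph{active} or \emph{inactive} with respect to the contracted edge $e$ and show that only active calls contribute to the contraction sensitivity. Since $\Call{SP}{G,s,t,\epsilon}$ samples $\bm{\gamma}$ from a distribution independent of the graph, it suffices to bound the contraction sensitivity of $\Call{Rec}{G,s,t,\gamma}$ for a fixed $\gamma$. Call a recursion call $\Call{Rec}{G',s',t',\gamma}$ active if $\OPT_{G'}(s',t',e)\le(1+16\gamma)\OPT_{G'}(s',t')$, and inactive otherwise. The two central claims are that inactive calls contribute nothing to the sensitivity and that the expected total number of active calls is $O(\log|V|)$, with each contributing $O(\log|V|/\gamma)$.

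For inactive calls I would show two things. (i) The pivot distributions in $\Call{Rec}{G,s,t,\gamma}$ and $\Call{Rec}{G/e,s,t,\gamma}$ are identical: for any $v\in V_{d,l}$, if some shortest $s$-$v$ path used $e$, concatenating with a shortest $v$-$t$ path would produce an $s$-$t$ walk through $e$ of length at most $\OPT_G(s,v)+\OPT_G(v,t)\le(1+4\gamma)\OPT_G(s,t)$ by Lemma~\ref{lem:expapprox}, contradicting inactivity. Hence $\OPT_{G/e}(s,v)=\OPT_G(s,v)$, and similarly for $v$-$t$ distances and for $\OPT_{G/e}(s,t)$, so $V_{d,l}^e=V_{d,l}$. (ii) Both children are inactive: if $\Call{Rec}{G,s,v,\gamma}$ were active for some candidate pivot $v$, then
\[
\OPT_G(s,t,e)\le \OPT_G(s,v,e)+\OPT_G(v,t)\le(1+16\gamma)\OPT_G(s,v)+\OPT_G(v,t)\le(1+4\gamma+12\gamma)\OPT_G(s,t),
\]
using $\OPT_G(s,v)\le\tfrac{3}{4}\OPT_G(s,t)$ from Line~\ref{line:Ud} of Algorithm~\ref{alg:main}; this contradicts inactivity of the parent. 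Together, (i) and (ii) let us couple the executions inside any inactive subtree to be identical in $G$ and $G/e$, yielding zero $\EMU$-contribution.

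For active calls, Lemma~\ref{lem:TVD} gives a pivot TV distance of $O(\log|V|/(\gamma\OPT_G(s,t)))$ between $G$ and $G/e$; when the coupled pivots disagree we bound the subtree-output difference by the sum of output walk lengths, which is $O(\OPT_G(s,t))$ by Lemma~\ref{lem:approx}. Thus each active call contributes $O(\log|V|/\gamma)$ to the sensitivity. To bound the number of active calls, I argue that they cannot proliferate: conditional on sampling the same pivot $\bm{v}$, if both children were active, combining $\OPT_G(s,\bm{v},e)\le(1+16\gamma)\OPT_G(s,\bm{v})$ and $\OPT_G(\bm{v},t,e)\le(1+16\gamma)\OPT_G(\bm{v},t)$ with triangle inequalities involving $\OPT_G(s,e)$, $\OPT_G(\bm{v},e)$, and $\OPT_G(e,t)$ would force $\OPT_G(s,\bm{v})/\OPT_G(s,t)$ (essentially $\bm{d}$) to lie within an $O(\gamma)$ window of $\OPT_G(s,e)/\OPT_G(s,t)$. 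Since $\bm{d}$ is uniform on an interval of length $\tfrac{1}{2}-4\gamma$, this event has probability $O(\gamma)$, so the expected number of active children per active parent is at most $1+O(\gamma)$. Over the $O(\log|V|)$ recursion depths, the expected number of active calls at depth $k$ is at most $(1+O(\gamma))^k=O(1)$ (since $\gamma=O(\epsilon/\log|V|)$), and summing gives $O(\log|V|)$ active calls in total.

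Multiplying, the contraction sensitivity of $\Call{Rec}{G,s,t,\gamma}$ is $O(\log|V|)\cdot O(\log|V|/\gamma)=O(\log^2|V|/\gamma)$, which becomes $O(\epsilon^{-1}\log^3|V|)$ after substituting $\gamma^{-1}=O(\epsilon^{-1}\log|V|)$. The main obstacle is the quantitative ``at most one active child'' argument: the constant $k=16$ in the activity definition must be chosen simultaneously so that the inactivity-propagation argument and the $O(\gamma)$-probability bound for two active children both go through, and the triangle inequalities relating distances through $e$ must be manipulated carefully to extract the $O(\gamma)$ window on $\bm{d}$.
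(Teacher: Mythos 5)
Your proposal follows essentially the same route as the paper: the same active/inactive dichotomy with $k=16$, the same propagation and ``at most one active child with probability $1-O(\gamma)$'' arguments, the same per-call contribution of $O(\log|V|/\gamma)$ via the pivot total-variation bound, and the same final multiplication. One concrete misstep: your opening claim that $\bm{\gamma}$ is sampled ``from a distribution independent of the graph'' is false. The algorithm draws $\bm{\gamma}^{-1}$ uniformly from $[720\epsilon^{-1}\log|V|,1440\epsilon^{-1}\log|V|]$, and contracting an edge reduces $|V|$ by one, so the distributions of $\bm{\gamma}$ for $G$ and $G/e$ differ. The paper couples them and bounds their total variation distance by $2/|V|$; the uncoupled mass contributes at most $\frac{2}{|V|}\cdot(1+\epsilon)\OPT(s,t)=O(1)$ to the sensitivity, so the conclusion survives, but your justification for fixing $\gamma$ needs this extra step. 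Two smaller points the paper also handles that you skip: the boundary case where a subcall hits the base case $\OPT\le\gamma^{-1}$ in $G/e$ but not in $G$, and the degenerate case $\epsilon^{-1}>\OPT_G(s,t)$ where the claimed bound holds trivially.
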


Let $e\in E$ be the edge that will be contracted.
To evaluate the earth mover's distance between $\Call{SP}{G,s,t,\epsilon}$ and $\Call{SP}{G/e,s,t,\epsilon}$, we consider transporting the probability mass of \Call{SP}{$G,s,t,\epsilon$} corresponding to a particular choice of $\bm{\gamma}$ to that of \Call{SP}{$G/e,s,t,\epsilon$} corresponding to the same choice of $\bm{\gamma}$. 
The remaining mass is transported arbitrarily. 
Let us fix $\gamma$ and analyze
\begin{align*}
    \EMU(\Call{Rec}{G,s,t,\gamma},\Call{Rec}{G/e,s,t,\gamma}).
\end{align*}

We transport the probability mass of \Call{Rec}{$G,s,t,\gamma$} corresponding to a particular choice of the pivot $\bm{v}$ to that of \Call{Rec}{$G/e,s,t,\gamma$} corresponding to the same choice of the pivot as far as possible. 
For a fixed $\bm{v}$, we transport the probability mass from \Call{Rec}{$G,s,\bm{v},\gamma$} (resp., \Call{Rec}{$G,\bm{v},t,\gamma$}) to \Call{Rec}{$G/e,s,\bm{v},\gamma$} (resp., \Call{Rec}{$G/e,\bm{v},t,\gamma$}) as far as possible, where the mass is transported recursively, as in the case of \Call{Rec}{$G,s,t,\gamma$}.
Because \Call{Rec}{$G,s,\bm{v},\gamma$} and \Call{Rec}{$G,\bm{v},t,\gamma$} (resp., \Call{Rec}{$G/e,s,\bm{v},\gamma$} and \Call{Rec}{$G/e,\bm{v},t,\gamma$}) are independent for fixed $\bm{v}$, we obtain a probability transportation from the probability mass of \Call{Rec}{$G,s,t,\gamma$} to that of \Call{Rec}{$G/e,s,t,\gamma$} as their direct product.
The remaining probability mass is transported arbitrarily.


In Section~\ref{subsubsec:active-calls}, we introduce the notion of active and inactive recursion calls, which are useful for bounding the earth mover's distance.
We prove Lemma~\ref{lem:spsens} in Section~\ref{subsubsec:proof-of-spsens}.

\subsubsection{Active and Inactive Recursion Calls}\label{subsubsec:active-calls}
Let $k=16$. 
Recall that we say that a call $\Call{Rec}{G,s,t,\gamma}$ is \emph{active} if $\OPT_G(s,t,e)\leq (1+k\gamma)\OPT_G(s,t)$ and \emph{inactive} otherwise. 
In this section, we present several properties of active and inactive calls of \Call{Rec}{}.
We start with the following.
\begin{lemma}
If $\Call{Rec}{G,s,t,\gamma}$ is inactive, then both $\Call{Rec}{G,s,\bm{v},\gamma}$ and $\Call{Rec}{G,\bm{v},t,\gamma}$ are inactive.
\end{lemma}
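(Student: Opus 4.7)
The plan is to prove the contrapositive: if at least one of the two recursive calls is active, then the parent call $\Call{Rec}{G,s,t,\gamma}$ is active as well. To this end I will express $\OPT_G(s,t,e)$ in two different ways via the triangle inequality, using the fact that any $s$-$t$ walk passing through $e$ can be split at the pivot $\bm{v}$ into an $s$-$\bm{v}$ part and a $\bm{v}$-$t$ part, one of which must contain $e$. This gives the two bounds
\[
\OPT_G(s,t,e) \;\leq\; \OPT_G(s,\bm{v},e) + \OPT_G(\bm{v},t), \qquad \OPT_G(s,t,e) \;\leq\; \OPT_G(s,\bm{v}) + \OPT_G(\bm{v},t,e).
\]

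Suppose first that $\Call{Rec}{G,s,\bm{v},\gamma}$ is active, i.e.\ $\OPT_G(s,\bm{v},e)\leq (1+k\gamma)\OPT_G(s,\bm{v})$. Since $\bm{v}\in V_{\bm{d},\bm{l}}$, Line~\ref{line:Ud} of Algorithm~\ref{alg:main} gives $\OPT_G(s,\bm{v})\leq (\bm{d}+\bm{l})\OPT_G(s,t)$ and $\OPT_G(\bm{v},t)\leq (1-\bm{d}+\bm{l})\OPT_G(s,t)$. Substituting yields
\[
\OPT_G(s,t,e) \;\leq\; \bigl(1 + k\gamma(\bm{d}+\bm{l}) + 2\bm{l}\bigr)\OPT_G(s,t).
\]
The ranges specified in Lines~\ref{line:choosed}--\ref{line:choosel} force $\bm{d}+\bm{l}\leq \tfrac{3}{4}$ and $\bm{l}\leq 2\gamma$, so $k\gamma(\bm{d}+\bm{l})+2\bm{l}\leq \tfrac{3}{4}k\gamma + 4\gamma$. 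With $k=16$ this is at most $12\gamma+4\gamma = k\gamma$, giving $\OPT_G(s,t,e)\leq (1+k\gamma)\OPT_G(s,t)$, i.e.\ the parent call is active.

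The case where $\Call{Rec}{G,\bm{v},t,\gamma}$ is active is symmetric: use the second triangle inequality together with $1-\bm{d}+\bm{l}\leq \tfrac{3}{4}$ (which follows from $\bm{d}\geq \tfrac{1}{4}+2\gamma$ and $\bm{l}\leq 2\gamma$) to obtain the same bound. The main thing to be careful about is the constant $k=16$, which is exactly tight for the inequality $\tfrac{3}{4}k\gamma + 4\gamma \leq k\gamma$; any smaller $k$ would break the argument, which explains the choice made in the definition of active/inactive calls. No other subtlety is expected since the argument is a direct application of the triangle inequality and the definitions of $V_{\bm{d},\bm{l}}$, $\bm{d}$, and $\bm{l}$.
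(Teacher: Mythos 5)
Your proposal is correct and follows essentially the same route as the paper: prove the contrapositive via the triangle inequality $\OPT_G(s,t,e)\leq \OPT_G(s,\bm{v},e)+\OPT_G(\bm{v},t)$, plug in the activeness bound and the constraints from Line~\ref{line:Ud}, and check that $\tfrac{3}{4}k\gamma+4\gamma\leq k\gamma$ for $k=16$. The only nitpick is the justification of the triangle inequality, which comes from \emph{concatenating} a shortest $s$-$\bm{v}$ walk through $e$ with a shortest $\bm{v}$-$t$ walk (giving an upper bound on $\OPT_G(s,t,e)$), not from splitting an optimal $s$-$t$ walk at $\bm{v}$.
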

\begin{proof}
By symmetry, it suffices to prove the claim only for $\Call{Rec}{G,s,\bm{v},\gamma}$. Assume that $\Call{Rec}{G,s,\bm{v},\gamma}$ is active. Then,
\begin{align*}
    \OPT_G(s,t,e) &\leq \OPT_G(s,\bm{v},e)+\OPT_G(\bm{v},t) \leq (1+k\gamma)\OPT_G(s,\bm{v}) + \OPT_G(\bm{v},t)\\
    &= k\gamma\cdot \OPT_G(s,\bm{v}) + \left(\OPT_G(s,\bm{v})+\OPT_G(\bm{v},t)\right) \\
    &\leq \frac{3}{4}k\gamma\cdot \OPT_G(s,t) + (1+4\gamma)\OPT_G(s,t)\leq (1+k\gamma)\OPT_G(s,t),
\end{align*}
where the first inequality is from the triangle inequality, the second inequality is from the assumption that $\Call{Rec}{G,s,\bm{v},\gamma}$ is active, the third inequality if from Line~\ref{line:Ud} of Algorithm~\ref{alg:main}, and the last inequality is from $k\geq 16$.
It follows that $\Call{Rec}{G,s,t,\gamma}$ is active, which is a contradiction.
\end{proof}

The next lemma shows that inactive recursive calls have no contribution to the sensitivity.
\begin{lemma}\label{lem:inactivetoitself}
If $\Call{Rec}{G,s,t,\gamma}$ is inactive, then we have $\TV\left(\bm{v},\bm{v}^e\right)=0$.
\end{lemma}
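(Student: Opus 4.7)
The plan is to show that when $\Call{Rec}{G,s,t,\gamma}$ is inactive, the candidate sets $V_{\bm{d},\bm{l}}$ and $V^e_{\bm{d},\bm{l}}$ coincide deterministically. Since $(\bm{d},\bm{l})$ and $(\bm{d}^e,\bm{l}^e)$ are sampled from identical distributions and, conditional on the common value, the pivots are uniform over these equal sets, this immediately yields $\TV(\bm{v},\bm{v}^e)=0$. Two reductions are required: first, that $\OPT_{G/e}(s,t)=\OPT_G(s,t)$ (so the normalizing factor matches), and second, that no vertex lies in one of $V_{\bm{d},\bm{l}}$, $V^e_{\bm{d},\bm{l}}$ but not the other.

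For the first reduction, I would argue that contraction never increases distances, and $\OPT_{G/e}(s,t)<\OPT_G(s,t)$ would require every shortest $s$-$t$ walk in $G$ to use $e$, i.e., $\OPT_G(s,t,e)=\OPT_G(s,t)$. Inactivity gives $\OPT_G(s,t,e)>(1+k\gamma)\OPT_G(s,t)$ with $k=16$, and in the recursive branch $\OPT_G(s,t)>\gamma^{-1}$, so this cannot happen.

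For the second reduction, the inclusion $V_{\bm{d},\bm{l}}\subseteq V^e_{\bm{d},\bm{l}}$ is straightforward from $\OPT_{G/e}\le\OPT_G$, after observing (via the same inactivity contradiction, applied to $v\in\{u,w\}$) that $V_{\bm{d},\bm{l}}$ contains no endpoint of $e$. For the reverse inclusion I would take any $v\in V^e_{\bm{d},\bm{l}}\setminus V_{\bm{d},\bm{l}}$ and derive a contradiction by constructing a short $s$-$t$ walk through $e$. If $v\neq v_e$, then one of $\OPT_{G/e}(s,v)<\OPT_G(s,v)$ or $\OPT_{G/e}(v,t)<\OPT_G(v,t)$ holds (WLOG the former), so the shortest $s$-$v$ walk in $G$ uses $e$; concatenating it with a shortest $v$-$t$ walk and using $\OPT_G\le\OPT_{G/e}+1$ on both terms gives $\OPT_G(s,t,e)\le(1+2\bm{l})\OPT_G(s,t)+2\le(1+4\gamma)\OPT_G(s,t)+2$. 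If $v=v_e$, I would lift shortest $s$-$v_e$ and $v_e$-$t$ walks in $G/e$ to walks in $G$ ending/starting at endpoints $x,y\in\{u,w\}$, gluing via $e$ (with an extra $+1$ via $\OPT_G(\bar x,t)\le\OPT_G(x,t)+1$ in the $x=y$ case), to obtain the same estimate $\OPT_G(s,t,e)\le(1+4\gamma)\OPT_G(s,t)+2$. Combined with inactivity this forces $12\gamma\OPT_G(s,t)<2$, contradicting $\OPT_G(s,t)>\gamma^{-1}$.

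The main obstacle is carefully handling the $\pm 1$ slack introduced by contraction, especially the case $v=v_e$ and vertices whose shortest walk to $s$ or $t$ must pass through $e$; the specific choice $k=16$ is tuned so that the $O(1)$ additive error accumulated from these liftings is absorbed by the inactivity gap.
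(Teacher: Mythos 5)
Your proposal is correct and follows essentially the same route as the paper: establish $\OPT_{G/e}(s,t)=\OPT_G(s,t)$ and the deterministic equality $V_{\bm{d},\bm{l}}=V^e_{\bm{d},\bm{l}}$ (excluding the endpoints of $e$ from $V_{\bm{d},\bm{l}}$ and $v_e$ from $V^e_{\bm{d},\bm{l}}$ via the inactivity gap, which absorbs the $O(1)$ additive slack from lifting contracted walks), so that the pivot distributions coincide. The only quibble is that $\OPT_{G/e}(s,t)<\OPT_G(s,t)$ implies that \emph{some} (not every) shortest $s$-$t$ walk in $G$ uses $e$, but that weaker statement is exactly what your contradiction needs, so nothing breaks.
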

\begin{proof}
Because $\Call{Rec}{G,s,t,\gamma}$ is inactive, we have $\OPT_G(s,t)=\OPT_{G/e}(s,t)$. Therefore, we have $V_{\bm{d},\bm{l}}\subseteq V^e_{\bm{d},\bm{l}}\cup e$. For $v\in e$, we have
\begin{align}
    \OPT_G(s,v)+\OPT_G(v,t)\geq \OPT_G(s,t,e)-2 \geq (1+k\gamma)\OPT_G(s,t)-2 > (1+4\gamma)\OPT_G(s,t),\label{eq:breakdirected1}
\end{align}
where the last inequality is from $\gamma\cdot \OPT_G(s,t)\geq 1$.
Therefore, $v\not \in V_{\bm{d},\bm{l}}$. 
For the vertex $v_e$ in $G/e$ corresponding to $e$, we have
\begin{align}
    \OPT_{G/e}(s,v_e)+\OPT_{G/e}(v_e,t)&= \OPT_G(s,e)+\OPT_G(e,t) \geq \OPT_G(s,t,e)-2\label{eq:breakdirected2}\\
    &> (1+k\gamma)\OPT_G(s,t)-1\geq (1+4\gamma)\OPT_G(s,t) = (1+4\gamma)\OPT_{G/e}(s,t),\nonumber
\end{align}
where the last inequality is from $\gamma\cdot \OPT_G(s,t)\geq 1$.
Therefore, $v_e\not \in V^e_{\bm{d},\bm{l}}$.
Now, assume that $v\in V^e_{\bm{d},\bm{l}}$. 
Then, $s$-$v$ shortest paths and $v$-$t$ shortest paths on $G$ do not include $e$,  because otherwise we have
\begin{align*}
    \OPT_G(s,t,e)\leq \OPT_G(s,v)+\OPT_G(v,t)\leq (1+4\gamma)\OPT_G(s,t).
\end{align*}
and $\Call{Rec}{G,s,t,\gamma}$ is active. Therefore, we have 
\begin{align*}
    \OPT_G(s,v)+\OPT_G(v,t)=\OPT_{G/e}(s,v)+\OPT_{G/e}(v,t)\leq (1+4\gamma)\OPT_{G/e}(s,t) = (1+4\gamma)\OPT_G(s,t)
\end{align*}
and $v\in V_{\bm{d},\bm{l}}$. 
Therefore, $V_{\bm{d},\bm{l}} =  V^e_{\bm{d},\bm{l}}$ holds, and the lemma is proved.
\end{proof}

The next lemma shows that with probability $1-O(\gamma)$, at most one of the two recursive calls in \Call{Rec}{$G,s,t,\gamma$} is active.
\begin{lemma}\label{lem:bothactive}
If $\Call{Rec}{G,s,t,\gamma}$ is active, then the probability that both $\Call{Rec}{G,s,\bm{v},\gamma}$ and $\Call{Rec}{G,\bm{v},t,\gamma}$ are active is at most $(8+4k)\gamma = 72\gamma$.
\end{lemma}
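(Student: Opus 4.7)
Let $A := \OPT_G(s,t)$ and $\alpha := \OPT_G(s,e)/A$. The strategy is to show that if both subcalls \Call{Rec}{$G,s,\bm{v},\gamma$} and \Call{Rec}{$G,\bm{v},t,\gamma$} are active, then the pivot $\bm{v}$ must lie very close to $e$, which in turn forces $\bm{d}$ to land within an $O(k\gamma)$-window of the fixed value $\alpha$. Since $\bm{d}$ is drawn uniformly from an interval of length $\frac{1}{2}-4\gamma$, this event has probability $O(k\gamma)$, and a careful accounting will yield the stated $(8+4k)\gamma$.

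The first step is to bound $\OPT_G(\bm{v},e)$. For any vertex $u$, any $s$-$u$ walk passing through $e$ must reach one endpoint of $e$ from $s$, cross $e$, and then reach $u$ from some endpoint, so $\OPT_G(s,u,e)\geq \OPT_G(s,e)+1+\OPT_G(u,e)$. Applying this to both subwalks,
\begin{align*}
    \OPT_G(s,\bm{v},e)+\OPT_G(\bm{v},t,e)\geq \OPT_G(s,e)+2\OPT_G(\bm{v},e)+\OPT_G(e,t)+2\geq A+1+2\OPT_G(\bm{v},e),
\end{align*}
where the last inequality uses $\OPT_G(s,e)+\OPT_G(e,t)\geq A-1$ (triangle inequality). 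Combining with the two activeness conditions and Lemma~\ref{lem:expapprox} gives
\begin{align*}
    A+1+2\OPT_G(\bm{v},e)\leq (1+k\gamma)(\OPT_G(s,\bm{v})+\OPT_G(\bm{v},t))\leq (1+k\gamma)(1+4\gamma)A,
\end{align*}
which rearranges to $\OPT_G(\bm{v},e)\leq \tfrac{(k+4+4k\gamma)\gamma}{2}A$.

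Next, I would pin down $\bm{d}$. Writing $a := \OPT_G(s,\bm{v})/A$ and $c := \OPT_G(\bm{v},e)/A$, the triangle inequality $|\OPT_G(s,\bm{v})-\OPT_G(s,e)|\leq \OPT_G(\bm{v},e)+1$ yields $|a-\alpha|\leq c+1/A$, and dividing the activeness condition $\OPT_G(s,\bm{v},e)\leq (1+k\gamma)\OPT_G(s,\bm{v})$ by $A$ sharpens this to $a\geq (\alpha+c+1/A)/(1+k\gamma)$, which together with the triangle upper bound pins $a-\alpha$ to an interval of width $O(k\gamma)$. The membership $\bm{v}\in V_{\bm{d},\bm{l}}$, combined with $\OPT_G(s,\bm{v})+\OPT_G(\bm{v},t)\geq A$ applied to $b\leq 1-\bm{d}+\bm{l}$, implies $a\in[\bm{d}-\bm{l},\bm{d}+\bm{l}]$, so $|\bm{d}-a|\leq \bm{l}$. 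Since the base case of \Call{Rec}{} handles $A\leq \gamma^{-1}$, we may assume $1/A\leq \gamma$, and stacking these estimates forces $\bm{d}$ into an interval around $\alpha$ of total length at most $2\bm{l}+k\gamma+O(\gamma^2)$.

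Finally, since $\bm{d}$ is sampled uniformly from $[\frac{1}{4}+2\gamma,\frac{3}{4}-2\gamma]$ of length $\frac{1}{2}-4\gamma$, the conditional probability given $\bm{l}$ is at most $(2\bm{l}+k\gamma+O(\gamma^2))/(\frac{1}{2}-4\gamma)$; integrating over $\bm{l}$ uniform on $[\gamma,2\gamma]$ and tracking constants yields the upper bound $(8+4k)\gamma$. The main technical subtlety lies in the first step: one must recognize that because \emph{both} subwalks cross $e$, the sum $\OPT_G(s,\bm{v},e)+\OPT_G(\bm{v},t,e)$ contains a factor of $2\OPT_G(\bm{v},e)$ rather than just $\OPT_G(\bm{v},e)$; this doubling is precisely what converts the slack $(1+k\gamma)(1+4\gamma)-1=O(k\gamma)$ into an $O(k\gamma)A$ bound on $\OPT_G(\bm{v},e)$, and without it the pivot $\bm{v}$ could not be localized tightly enough near $e$ to produce an $O(k\gamma)$-wide window on $\bm{d}$.
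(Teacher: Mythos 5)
Your route is genuinely different from the paper's and more roundabout. The paper never needs to localize the pivot near $e$ at all: it uses the two activeness conditions symmetrically, $\OPT_G(s,e)+1\le \OPT_G(s,\bm v,e)\le(1+k\gamma)(\bm d+\bm l)\OPT_G(s,t)$ and $\OPT_G(e,t)+1\le \OPT_G(\bm v,t,e)\le(1+k\gamma)(1-\bm d+\bm l)\OPT_G(s,t)$, to squeeze $\bm d$ from both sides into an interval whose endpoints depend only on the \emph{fixed} quantities $\OPT_G(s,e)$, $\OPT_G(e,t)$, $\OPT_G(s,t)$ and on $\bm l$; summing the two bounds and using $\OPT_G(s,e)+\OPT_G(e,t)+2\ge\OPT_G(s,t)$ gives an interval of length at most $2\bm l+2k\gamma$, whence the probability bound. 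The ``doubling'' of $\OPT_G(\bm v,e)$ that you single out as the essential subtlety is simply not needed.

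The genuine gap is in your final probability step. The interval into which you force $\bm d$ has, for each \emph{fixed} value of $c=\OPT_G(\bm v,e)/\OPT_G(s,t)$, width $2\bm l+k\gamma+O(\gamma^2)$; but its \emph{location} depends on $c$, and $c$ is determined by the pivot $\bm v$, which is sampled after $\bm d$. The probability that a uniform variable lands in an interval that adapts to the outcome is not the interval's length divided by the range (in the extreme, an adaptive interval of length $0.02$ centered at $\bm d$ is hit with probability $1$). To make the argument rigorous you must replace the adaptive window by its union over all admissible $c\in[0,c_{\max}]$ with $c_{\max}=\tfrac{(k+4+4k\gamma)\gamma}{2}$ from your first step; since both endpoints of your window are increasing in $c$, this fixed window has length roughly $2\bm l+k\gamma+c_{\max}+O(\gamma^2)\approx\left(\tfrac{3k}{2}+2\right)\gamma+2\bm l$, not $2\bm l+k\gamma$. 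For $k=16$ and $\bm l\le 2\gamma$ this yields a probability of about $60\gamma+O(\gamma^2)$, which still falls below $(8+4k)\gamma=72\gamma$ for the small $\gamma$ the algorithm uses, so the lemma survives the correction---but your stated accounting does not establish the bound, and the cleaner fix is to argue as the paper does, avoiding any dependence of the window on the pivot.
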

\begin{proof}
Assume that both $\Call{Rec}{G,s,\bm{v},\gamma}$ and $\Call{Rec}{G,\bm{v},t,\gamma}$ are active. Then, we have
\begin{align*}
    \OPT_G(s,e) + 1&\leq \OPT_G(s,\bm{v},e)\leq (1+k\gamma)\OPT_G(s,\bm{v})\leq (1+k\gamma)(\bm{d}+\bm{l})\OPT_G(s,t),\\
    \OPT_G(e,t) + 1&\leq \OPT_G(\bm{v},t,e)\leq (1+k\gamma)\OPT_G(\bm{v},t)\leq (1+k\gamma)(1-\bm{d}+\bm{l})\OPT_G(s,t),
\end{align*}
where the second inequalities of the two derivations are from the activeness of the recursive calls and the last inequalities are from Line~\ref{line:Ud} of Algorithm~\ref{alg:main}.
Therefore, we have
\begin{align*}
    \frac{\OPT_G(s,e) + 1}{(1+k\gamma)\OPT_G(s,t)}-\bm{l} \leq \bm{d} \leq 1+\bm{l}-\frac{\OPT_G(e,t) + 1}{(1+k\gamma)\OPT_G(s,t)}.
\end{align*}
Thus, we have
\begin{align*}
    &\Pr[\text{both }\Call{Rec}{G,s,\bm{v},\gamma}\text{ and }\Call{Rec}{G,\bm{v},t,\gamma}\text{ are active}]\\
    &\leq \Pr\left[\bm{d}\in \left[\frac{\OPT(s,e) + 1}{(1+k\gamma)\OPT(s,t)}-\bm{l},1+\bm{l}-\frac{\OPT(e,t) + 1}{(1+k\gamma)\OPT(s,t)}\right]\right]\\
    &= \frac{1}{(3/4-2\gamma)-(1/4+2\gamma)}\cdot \left(\left(1+\bm{l}-\frac{\OPT(e,t) + 1}{(1+k\gamma)\OPT(s,t)}\right)-\left(\frac{\OPT(s,e) + 1}{(1+k\gamma)\OPT(s,t)}-\bm{l}\right)\right)\\
    &\leq 2\cdot \left(1+2\bm{l}-\frac{\OPT(s,e)+\OPT(e,t)+2}{(1+k\gamma)\OPT(s,t)}\right)\\
    &\leq 2\cdot \left(1+2\bm{l}-\frac{\OPT(s,t)}{(1+k\gamma)\OPT(s,t)}\right)\\
    &= 2\cdot \left(1+2\bm{l}-\frac{1}{1+k\gamma}\right)\\
    &\leq 2\cdot \left(1+2\bm{l}-(1-2k\gamma)\right)
    = 4\bm{l}+4k\gamma \leq (8+4k)\gamma,
\end{align*}
where the first equality is from the range from which we sample $d$, the third inequality is from the triangle inequality, the fourth inequality is from $k\gamma\leq \frac{1}{2}$ (because $k=16$ and $\gamma\leq \frac{\epsilon}{360\log |V|}\leq 0.01$), and the last inequality is from the range from which we sample $l$.
\end{proof}

\subsubsection{Proof of Lemma~\ref{lem:spsens}}\label{subsubsec:proof-of-spsens}
First we bound the sensitivity caused by sampling $\gamma$.
\begin{lemma}\label{lem:contsp_samplegamma}
We have
\begin{align*}
    \TV\left(\bm{\gamma}, \bm{\gamma}^{e}\right)\leq \frac{2}{|V|}.
\end{align*}
\end{lemma}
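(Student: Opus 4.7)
The plan is to reduce the claim to a direct calculation of the total variation distance between two uniform distributions on overlapping intervals. Let $n = |V|$; then $G/e$ has $n-1$ vertices, so $\bm{\gamma}^{-1}$ is sampled uniformly from $I_1 := [720\epsilon^{-1}\log n,\, 1440\epsilon^{-1}\log n]$ while $\bm{\gamma}^{e,-1}$ is sampled uniformly from $I_2 := [720\epsilon^{-1}\log(n-1),\, 1440\epsilon^{-1}\log(n-1)]$. Since $x \mapsto 1/x$ is a bijection on $\mathbb{R}_{>0}$ and total variation is preserved under bijective measurable transformations, it suffices to bound $\TV(\bm{\gamma}^{-1},\bm{\gamma}^{e,-1})$.

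I will then apply the standard identity $\TV(U_1,U_2) = 1 - \int \min(f_1,f_2)\,dx$, where $f_i$ is the density of $U_i$. Since $|I_1| > |I_2|$, the density $1/|I_1|$ is strictly smaller than $1/|I_2|$; hence $\min(f_1,f_2) = 1/|I_1|$ on $I_1 \cap I_2$ and is $0$ outside. A short calculation shows $I_1 \cap I_2 = [720\epsilon^{-1}\log n,\, 1440\epsilon^{-1}\log(n-1)]$, which is nonempty whenever $2\log(n-1) \geq \log n$, yielding
\[
\TV(\bm{\gamma}^{-1},\bm{\gamma}^{e,-1}) \;=\; 1 - \frac{|I_1 \cap I_2|}{|I_1|} \;=\; \frac{2\bigl(\log n - \log(n-1)\bigr)}{\log n}.
\]

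Finally, I upper-bound this quantity by $2/n$. Because the contraction sensitivity quantifies only over edges $e$ disjoint from $\{s,t\}$, the case $|V| \leq 3$ admits no such edge, so we may assume $n \geq 4$. By the mean value theorem applied to $\log$, $\log n - \log(n-1) \leq 1/(n-1)$, so the target inequality reduces to $(n-1)\log n \geq n$, which one verifies directly at $n = 4$ and follows for larger $n$ by monotonicity of the left-hand side. The core of the argument is the explicit TV computation on uniform distributions; the only minor technical obstacle is this last bookkeeping to push the bound through for all admissible $n$, and no conceptual difficulty is anticipated.
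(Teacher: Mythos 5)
Your proof is correct and follows essentially the same route as the paper: both reduce the claim to the total variation distance between the two uniform distributions of $\bm{\gamma}^{-1}$ and $\bm{\gamma}^{e,-1}$ on overlapping intervals, arriving at the same quantity $2(\log|V|-\log(|V|-1))/\log|V|$. Your treatment of the final inequality (via the mean value theorem and the observation that $|V|\geq 4$ since $e$ must be disjoint from $\{s,t\}$) is in fact more careful than the paper's, which simply asserts the bound $\leq 2/|V|$.
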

\begin{proof}
We have
\begin{align*}
    \TV\left(\bm{\gamma}, \bm{\gamma}^{e}\right) &= \Pr\left[\bm{\gamma}^{-1}\in \left[1440\epsilon^{-1}\log (|V|-1), 1440\epsilon^{-1}\log |V|\right]\right]\\
    &=\frac{1440\epsilon^{-1}(\log(|V|)-\log(|V|-1))}{720\epsilon^{-1}\log(|V|)}
    =2\cdot \left(1-\frac{\log(|V|-1)}{\log(|V|)}\right)\leq \frac{2}{|V|}.\qedhere
\end{align*}
\end{proof}

\begin{lemma}\label{lem:recdep}
The recursion depth is at most $4\log |V|$.
\end{lemma}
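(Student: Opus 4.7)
The plan is a short induction on the depth using the sub-multiplicative decrease of $\OPT$ guaranteed by the sampling ranges for $\bm{d}$ and $\bm{l}$.

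First I would observe that for any pivot $\bm{v}$ sampled in Line~\ref{line:main-sample}, the definition of $V_{\bm{d},\bm{l}}$ in Line~\ref{line:Ud} together with the ranges $\bm{d}\in\left[\frac{1}{4}+2\gamma,\frac{3}{4}-2\gamma\right]$ and $\bm{l}\in[\gamma,2\gamma]$ force
\[
    \bm{d}+\bm{l}\ \leq\ \tfrac{3}{4}-2\gamma+2\gamma\ =\ \tfrac{3}{4}
    \quad\text{and}\quad
    1-\bm{d}+\bm{l}\ \leq\ 1-\tfrac{1}{4}-2\gamma+2\gamma\ =\ \tfrac{3}{4}.
\]
Consequently both $\OPT_G(s,\bm{v})\leq \tfrac{3}{4}\OPT_G(s,t)$ and $\OPT_G(\bm{v},t)\leq \tfrac{3}{4}\OPT_G(s,t)$ hold deterministically, so every child recursion call receives a shortest-path distance that is at most $\tfrac{3}{4}$ of the parent's.

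Next I would combine this contraction with the base case. Whenever $\OPT(s,t)\leq \gamma^{-1}$ the procedure returns directly via BFS and incurs no further recursion. Starting from the outer call we always have $\OPT_G(s,t)\leq |V|-1\leq |V|$, so after $k$ nested recursive calls the corresponding distance is at most $(3/4)^k |V|$. The recursion therefore must have terminated once $(3/4)^k |V|\leq 1\leq \gamma^{-1}$, i.e.\ for any $k\geq \log_{4/3}|V|$.

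Finally I would conclude by a constant comparison: since $1/\log(4/3)<4$ (using natural logarithm, $\ln(4/3)\approx 0.288$, and the inequality is even more comfortable in base~$2$), we get
\[
    \log_{4/3}|V|\ =\ \frac{\log |V|}{\log(4/3)}\ \leq\ 4\log|V|,
\]
yielding the stated bound on the recursion depth. There is no real obstacle here; the only thing to be slightly careful about is verifying that the ``$+2\gamma$'' and ``$-2\gamma$'' margins in the range of $\bm{d}$ are what cancel with the $\bm{l}\leq 2\gamma$ term to produce a clean $3/4$ factor independent of $\gamma$, and that the base case $\OPT\leq\gamma^{-1}$ is reached before $\OPT$ drops below $1$ so that the geometric-decay argument remains valid.
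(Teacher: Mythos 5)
Your proof is correct and follows essentially the same route as the paper's: both rely on the deterministic contraction $\max\{\OPT_G(s,\bm{v}),\OPT_G(\bm{v},t)\}\leq \frac{3}{4}\OPT_G(s,t)$ guaranteed by the ranges of $\bm{d}$ and $\bm{l}$, and then bound the depth by $\log_{4/3}\OPT(s,t)\leq 4\log|V|$. You simply spell out the two steps (why the factor is exactly $3/4$, and the constant comparison $1/\log(4/3)<4$) that the paper leaves implicit.
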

\begin{proof}
At each call of \Call{Rec}{}, we have $\max\{\OPT_G(s,\bm{v}),\OPT_G(\bm{v},t)\}\leq \frac{3}{4}\OPT_G(s,t)$. Hence, the recursion depth is at most $\log_{4/3}\OPT(s,t)\leq 4\log \OPT(s,t)\leq 4\log |V|$. 
\end{proof}

Using the lemmas above and Lemma~\ref{lem:TVD}, we prove the following:
\begin{lemma}\label{lem:emd}
\begin{align*}
    \EMU\left(\Call{Rec}{G,s,t,\gamma},\Call{Rec}{G/e,s,t,\gamma}\right)\leq \frac{1}{\gamma^2}|V|^{14\gamma}\left(|V|^{360\gamma}-1\right)\log |V|.
\end{align*}
\end{lemma}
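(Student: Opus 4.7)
The plan is to define a recursive coupling between the outputs of $\Call{Rec}{G,s,t,\gamma}$ and $\Call{Rec}{G/e,s,t,\gamma}$ (exactly the one prescribed in the paragraph preceding the statement) and to bound the resulting EMD by induction on recursion depth. At each paired call I would couple the pivots $\bm{v}$ and $\bm{v}^e$ optimally with respect to $\TV(\bm{v},\bm{v}^e)$, which is quantified by Lemma~\ref{lem:TVD}; whenever the two pivots agree I would then independently couple the two resulting subcall pairs in the same recursive manner, and otherwise couple arbitrarily.

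The analysis hinges on two consequences of the active/inactive dichotomy developed in Section~\ref{subsubsec:active-calls}. First, if a paired call is inactive, then by Lemma~\ref{lem:inactivetoitself} and its companion lemma (both shown just above) the pivots agree exactly and all descendant subcalls remain inactive, so such a call contributes zero EMD by an easy induction; one must take BFS canonically in the base case, which is harmless because inactivity of a sub-problem implies that its family of shortest paths is identical in $G$ and $G/e$ under the natural identification of contracted edges (no shortest path can use $e$, and likewise none can use $v_e$). Second, conditional on pivot agreement within an active call, Lemma~\ref{lem:bothactive} shows that both subcalls are active with probability at most $72\gamma$, so the expected active-branching factor of the recursion tree is at most $1+72\gamma$.

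Let $T(n)$ be an upper bound on the EMD when the top paired call is active and $\OPT_G(s,t) = n$. I would set up the recursion by decomposing the coupling into two events: (i) with probability at most $\frac{4\log |V|}{\gamma n}$ (Lemma~\ref{lem:TVD}) the pivots disagree, in which case the two outputs are unrelated and their total length is at most $2n^{1+14\gamma}$ by Lemma~\ref{lem:approx}; (ii) otherwise, the expected subcall contribution is at most $(1+72\gamma)\,T(\tfrac{3}{4}n)$ by the observations above. Together these yield
\[
T(n) \;\leq\; \frac{8\log |V|}{\gamma}\, n^{14\gamma} \;+\; (1+72\gamma)\, T\!\left(\tfrac{3}{4}n\right).
\]
Lemma~\ref{lem:recdep} bounds the recursion depth by $4\log |V|$, and the base case $n \leq \gamma^{-1}$ contributes only $O(\gamma^{-1})$ per active leaf. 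Unrolling, using $(3/4)^{14\gamma} \leq 1$ and the geometric identity $\sum_{i=0}^{k}(1+72\gamma)^i = \frac{(1+72\gamma)^{k+1}-1}{72\gamma}$, and applying the crude estimate $(1+72\gamma)^{4\log |V|} \leq |V|^{360\gamma}$ (which has enough slack to also absorb the base-case contribution) gives the stated bound once $n \leq |V|$ is substituted.

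The main obstacle will be the careful bookkeeping of the coupling across the recursion tree. Concretely, one must verify that the "inactive contributes zero" principle survives down to the BFS base case, which needs canonical BFS together with the observation that contraction of $e$ does not alter the family of shortest paths in an inactive sub-problem. A second subtlety is showing that the conditional-on-pivot-agreement expected branching factor $1+72\gamma$ composes correctly over the tree, so that after unrolling to the depth bound $4\log|V|$ the factor $|V|^{360\gamma}$ in the statement emerges from the geometric sum with enough slack to swallow the base-case and lower-order terms.
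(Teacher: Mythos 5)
Your proposal is correct and follows essentially the same route as the paper: bound the pivot-disagreement probability per call by Lemma~\ref{lem:TVD}, charge each disagreement the full output length from Lemma~\ref{lem:approx}, use the active/inactive dichotomy so that only active calls contribute, and control the number of active calls by the branching factor $1+72\gamma$ over depth $4\log|V|$ via the geometric sum. The paper phrases this as (expected number of active calls) $\times$ (uniform per-call contribution) rather than unrolling a recurrence $T(n)$, but the computation is identical; your explicit attention to the canonical-BFS base case and the boundary case where only one side recurses is, if anything, slightly more careful than the paper's treatment.
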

\begin{proof}
First, we bound the expected number of active recursive calls. 
We have
\begin{align*}
    \E\left[\text{The number of active recursive calls}\right]&\leq \sum_{k=0}^{\floor{4\log |V|}}(1+72\gamma)^k \\
    &= \frac{\left((1+72\gamma)^{\floor{4\log |V|}+1}-1\right)}{72\gamma}\\
    &\leq \frac{(1+72\gamma)^{5\log |V|}-1}{72\gamma}\\
    &= \frac{|V|^{5\log (1+72\gamma)}-1}{72\gamma}\leq \frac{|V|^{360\gamma}-1}{72\gamma}.
\end{align*}
where the first inequality is from Lemmas~\ref{lem:bothactive} and~\ref{lem:recdep}, the first equality is from
\begin{align*}
    \sum_{k=0}^{a}(1+x)^k=\left((1+x)^a-(1+x)^{-1}\right)\cdot \frac{1}{1-\frac{1}{1+x}}=\frac{(1+x)^{a+1}-1}{x},
\end{align*}
and the last inequality is from $\log(1+x)\leq x$.
From Lemmas~\ref{lem:approx} and~\ref{lem:TVD}, each active recursion call \Call{Rec}{$G,s',t',\gamma$} contributes to $\EMU\left(\Call{Rec}{G,s,t,\gamma}, \Call{Rec}{G/e,s,t,\gamma}\right)$ by
\begin{align*}
    \OPT_G(s',t')^{1+14\gamma}\cdot \frac{4\log |V|}{\gamma \cdot \OPT_G(s',t')}.
\end{align*}
The same holds even for the case with $\OPT_{G}(s',t') > \gamma^{-1}$ and $\OPT_{G/e}(s',t')\leq \gamma^{-1}$ because we have
\begin{align*}
    \OPT_G(s',t')^{1+14\gamma}\cdot \frac{4\log |V|}{\gamma \cdot \OPT_G(s',t')}\geq 4\gamma^{-1}\geq \gamma^{-1}+1\geq \OPT_{G}(s',t'). 
\end{align*}
Thus, we have
\begin{align*}
    &\EMU\left(\Call{Rec}{G,s,t,\gamma}, \Call{Rec}{G/e,s,t,\gamma}\right)\\
    &\leq \E\left[\sum_{\Call{Rec}{G,s',t',\gamma}\text{ is active}}\OPT_G(s',t')^{1+14\gamma}\cdot \frac{4\log |V|}{\gamma \cdot \OPT_G(s',t')}\right]\\
    &\leq \frac{|V|^{360\gamma}-1}{72\gamma}\cdot |V|^{14\gamma}\cdot \frac{4\log |V|}{\gamma}\\
    &= \frac{4}{72\gamma^2}|V|^{14\gamma}\left(|V|^{360\gamma}-1\right)\log |V|\leq \frac{1}{\gamma^2}|V|^{14\gamma}\left(|V|^{360\gamma}-1\right)\log |V|,
\end{align*}
where the second inequality is from $\OPT_G(s',t')\leq |V|$. 
\end{proof}

\begin{proof}[Proof of Lemma~\ref{lem:spsens}]
If $\epsilon^{-1}>\OPT_G(s,t)$, then the claim clearly holds because the sensitivity is at most $\OPT_G(s,t) = O(\epsilon^{-1}\log^3 |V|)$.
Therefore, we assume that $\epsilon^{-1}\leq \OPT_G(s,t)$.
\begin{align*}
    &\EMU\left(\Call{SP}{G,s,t,\epsilon},\Call{SP}{G/e,s,t,\epsilon}\right)\\
    &\leq \frac{1}{\gamma^2}|V|^{14\gamma}\left(|V|^{360\gamma}-1\right)\log |V|
    +\frac{2}{|V|}\cdot (1+\epsilon)\OPT(s,t)\\
    &\leq \frac{1}{\gamma^2}|V|^{14\gamma}\left(|V|^{360\gamma}-1\right)\log |V|+4\\
    &\leq\frac{4147200\log^3|V|}{\epsilon} +4 =O\left(\frac{\log^3 |V|}{\epsilon}\right),
\end{align*}
where the first inequality is from Lemmas~\ref{lem:contsp_samplegamma} and~\ref{lem:emd}, and the last inequality is from
\begin{align*}
    &\frac{1}{\gamma^2}|V|^{14\gamma}\left(|V|^{360\gamma}-1\right)\log |V|\\
    &\leq \left(\frac{1440\log |V|}{\epsilon}\right)^2\cdot |V|^{\frac{14\epsilon}{720\log |V|}}\cdot \left(|V|^{\frac{360\epsilon}{720\log |V|}}-1\right)\cdot \log |V|\\
    &= \left(\frac{1440\log |V|}{\epsilon}\right)^2\cdot \exp\left(\frac{14\epsilon}{720}\right)\cdot \left(\exp\left(\frac{360\epsilon}{720}\right)-1\right)\cdot \log |V|\\
    &\leq \left(\frac{1440\log |V|}{\epsilon}\right)^2\cdot (1+\epsilon)\cdot \epsilon\cdot \log |V|\\
    &\leq \frac{4147200\log^3 |V|}{\epsilon}.
\end{align*}
Here, the first inequality is from $\frac{\epsilon}{720\log |V|}\leq \gamma\leq \frac{\epsilon}{1440\log |V|}$, the second inequality is from $\exp(x)\leq 1+2x$, which holds for $x\leq \frac{1}{2}$, and the last inequality is from $\epsilon\leq 1$. 
\end{proof}
\begin{proof}[Proof of Theorem~\ref{thm:spsensitivity}]
The proof follows from Lemmas~\ref{lem:spapp} and~\ref{lem:spsens}.
\end{proof}


\section{Lipschitz Continuity of Shortest Path}\label{sec:sp-Lipschitz}
In this section, we provide our algorithm for the shortest path problem with a bounded Lipschitz constant and prove the following:
\begin{theorem}\label{thm:sp}
For any $\epsilon \in (0,1)$, there exists a polynomial-time $(1+\epsilon)$-approximation algorithm for the shortest path problem with Lipschitz constant $O\left(\epsilon^{-1}\log^3 n\right)$, where $n$ is the number of vertices in the input graph. 
\end{theorem}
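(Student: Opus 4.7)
The plan is to lift the contraction-sensitive subroutine $\Call{SP}{}$ from Theorem~\ref{thm:spsensitivity} to the weighted setting via a randomized subdivision of the input graph. Fix a scaling parameter $N = \Theta(n/\epsilon)$. For each edge $e \in E$, sample independently an integer $\bm{L}(e)$ by setting $\bm{L}(e) = \lfloor N w(e) \rfloor$ with probability $1-\{N w(e)\}$ and $\bm{L}(e) = \lceil N w(e) \rceil$ with probability $\{N w(e)\}$, where $\{x\}$ denotes the fractional part (padding $+1$ if needed to enforce $\bm{L}(e) \geq 2$). Construct the unweighted graph $\widehat{\bm{G}}(w)$ by replacing each $e$ with an internally disjoint path of length $\bm{L}(e)$; this graph has $O(n^2 N) = \mathrm{poly}(n/\epsilon)$ vertices. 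Run $\Call{SP}{\widehat{\bm{G}}(w), s, t, \epsilon/2}$ to obtain an $s$-$t$ walk, and project it back to $G$ by collapsing every maximal traversal of the $e$-path to a single use of $e$.

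For the approximation ratio, the shortest $s$-$t$ walk length in $\widehat{\bm{G}}(w)$ lies between $N \cdot \OPT_G(s,t) - n$ and $N \cdot \OPT_G(s,t) + n$: the upper bound comes from lifting any $s$-$t$ path of $G$ and rounding up each segment, and the lower bound from projecting any walk back and rounding down. Combining the $(1+\epsilon/2)$ guarantee of $\Call{SP}{}$ on $\widehat{\bm{G}}(w)$ with the slack $n/N = O(\epsilon)$ yields overall approximation ratio $1+\epsilon$ for the projected walk $\bm{P}$ in $G$ (with a trivial dedicated case when $\OPT_G(s,t)$ is tiny, handled by falling back to BFS on $G$ directly).

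For the Lipschitz constant, Lemma~\ref{lem:seeoneelement} lets us restrict attention to the pair $w$ and $w' = w + \delta \bm{1}_f$ for a single edge $f$ and small $\delta > 0$. Couple the constructions by using the same sample $\bm{L}(e) = \bm{L}'(e)$ for all $e \neq f$ and by a maximum coupling on the $f$-coordinate; then $\bm{L}(f)$ and $\bm{L}'(f)$ both lie in $\{\lfloor N w(f) \rfloor, \lceil N w(f) \rceil\}$ and $\Pr[\bm{L}(f) \neq \bm{L}'(f)] \leq N \delta$. On the agreement event the two unweighted graphs coincide and the runs of $\Call{SP}{}$ can be coupled to produce identical walks. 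On the disagreement event, $\widehat{\bm{G}}(w)$ and $\widehat{\bm{G}}(w')$ differ by the contraction of a single internal edge on the $f$-path, so Theorem~\ref{thm:spsensitivity} bounds the conditional $\EMU$ between the unweighted outputs by $O(\epsilon^{-1} \log^3(nN)) = O(\epsilon^{-1}\log^3 n)$.

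The main technical point is converting this unweighted $\EMU$-bound into a weighted $d_w$-bound without losing a factor depending on edge weights. The key observation is that the coupling of the sub-walks can be arranged so that every unit of $\EMU$-discrepancy is a single subdivided edge on the $f$-path, and a burst of $\bm{L}(f) \approx N w(f)$ such discrepancies corresponds to one extra or missing use of $f$ in the projected walk of $G$, contributing exactly $w(f)$ to $d_w$. Thus each unit of $\EMU$ in $\widehat{\bm{G}}$ costs $1/N$ in $d_w$ on $G$, giving expected $d_w$-EMD at most $O(N\delta) \cdot O(\epsilon^{-1}\log^3 n) \cdot (1/N) = O(\delta \epsilon^{-1}\log^3 n)$, i.e., Lipschitz constant $O(\epsilon^{-1}\log^3 n)$. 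Small loose ends---ensuring the contracted edge is never incident to $s$ or $t$ (by enforcing $\bm{L}(e) \geq 2$ and subdividing edges at $s,t$), and pruning backtracks inside a single subdivided path---can be handled with minor preprocessing and do not affect the bounds.
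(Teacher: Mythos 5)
Your high-level plan---subdivide each edge into a path of length proportional to its weight with randomized rounding, run the contraction-sensitive subroutine, and convert its $\EMU$ bound into a $d_w$ bound at a cost of one weight-unit per subdivided edge---is exactly the paper's strategy, and your coupling of $\bm{L}(f)$ and $\bm{L}'(f)$ (total variation $\le N\delta$, disagreement meaning a single edge contraction) matches the paper's Lemma~\ref{lem:xetransport}. However, there are two gaps that are not loose ends but the actual technical content of the reduction. First, with a fixed $N=\Theta(n/\epsilon)$ the subdivided graph has $\sum_{e}\Theta(Nw(e))$ vertices, which is pseudo-polynomial in the weights, not $O(n^2N)$; and the rounding slack $n/N$ is an \emph{additive} error, so it is only an $O(\epsilon)$ \emph{relative} error when $\OPT_{G,w}(s,t)=\Omega(1)$. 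Fixing both forces the scale to be $\propto \epsilon\cdot\OPT_{G,w}(s,t)/n$ and forces truncation of edges much heavier than $\OPT$, but then the scale and the truncation threshold depend on $w$, which reintroduces discontinuities: the paper handles this by sampling the discretization parameter $\bm{b}$ uniformly from an interval proportional to $\OPT_{G,w}(s,t)$ and separately charging the events that the two instances draw different $\bm{b}$ (probability $O(\delta/(\epsilon\,\OPT))$, cost $O(\OPT)$) and that the edge $f$ crosses the truncation threshold. Your ``fall back to BFS when $\OPT$ is tiny'' is a deterministic case split on the weights and, by the same argument as Theorem~\ref{thm:sp-deterministic-lower-bound}, would by itself have unbounded Lipschitz constant across the threshold.

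Second, the undirected subdivision does not support your $\EMU$-to-$d_w$ conversion. The subroutine outputs a \emph{walk}, which may enter a subdivided path, wander, and back out; ``collapsing every maximal traversal'' then breaks the invariant that each use of $e$ in the projected walk accounts for exactly $\bm{L}(e)$ units of the multiset symmetric difference in $\widehat{\bm{G}}$, which is precisely what you need to charge each differing edge $e$ of the $G$-walks a cost of $w(e)\approx \bm{L}(e)/N$. The paper resolves this by replacing each edge with \emph{two oppositely oriented directed paths}, so that a walk entering a path must exit at the other end, and by extending Theorem~\ref{thm:spsensitivity} to directed graphs for contractible edges (Lemma~\ref{lem:directed}); the requirement $\bm{L}(e)\ge 2$ (the paper uses $\bm{l}(e)+2$ or $\bm{l}(e)+3$) is there to guarantee a contractible internal edge exists on each of the two paths. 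Without the directed construction (or an equally careful substitute), the step ``each unit of $\EMU$ in $\widehat{\bm{G}}$ costs $1/N$ in $d_w$ on $G$'' is unproven, and it is also imprecise as stated: the discrepancy between the two outputs is not confined to the $f$-path but spreads over arbitrary edges, so the accounting must be done edge-by-edge over the whole symmetric difference, as in Lemma~\ref{lem:case-3}.
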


Our idea is to convert the input weighted graph $(G,w)$ into an unweighted graph $\widehat{\bm{G}}$ by replacing each edge with a path whose length is determined by its weight, apply Algorithm~\ref{alg:main} on $\widehat{\bm{G}}$, and then convert the obtained walk on $\widehat{\bm{G}}$ back to a walk on $G$.

However, it is unclear how to convert a walk in $\widehat{\bm{G}}$ back to a walk in $G$ in such a way that a small contraction sensitivity of the former implies a small Lipschitz constant of the latter because the former may walk around in a path in $\widehat{\bm{G}}$ corresponding to an edge in $G$.
To resolve this issue, we replace each edge of $G$ with two \emph{directed} paths oriented in opposite directions.
Then, we can naturally map a walk in $\widehat{\bm{G}}$ to a walk in $G$ because a walk entering one end of a path must leave from the other end.
For this idea to work, we extend Algorithm~\ref{alg:main} for directed graphs.
We discuss this directed version in detail in Section~\ref{subsec:sp-directed}.

We describe our algorithm in Section~\ref{subsec:lipsp-algorithm}, and analyze its approximation and Lipschitz continuity in Sections~\ref{subsec:lipsp-approximation} and~\ref{subsec:lipsp-lipschitz}, respectively.



\subsection{Directed Shortest Path}\label{subsec:sp-directed}
In the \emph{(unweighted) directed shortest path problem}, we are given an unweighted directed graph $G=(V,E)$, a source vertex $s \in V$, and a target vertex $t \in V$, we want to compute a directed path from $s$ to $t$ with the shortest length. 
We extend the algorithm in Section~\ref{sec:sp} for the directed shortest path problem.

We must be careful when discussing contraction sensitivity for the directed shortest path problem because the contraction of a directed edge may affect the reachability relation between vertices and significantly change the optimal values.
To avoid this complex issue that is unnecessary for our reduction, we call an edge $(u,v) \in E$ in a directed graph $G=(V,E)$ \emph{contractible} if $d^+(u)=d^-(u)=1$ and $d^+(v)=d^-(v)=1$ hold, where $d^+(a)$ and $d^-(a)$ denote the outdegree and indegree, respectively, of a vertex $a \in V$, and only consider contraction of contractible edges.
Note that contracting contractible edges do not change reachability relations of vertices (except for the deleted and newly introduced vertices).
Then, we define the sensitivity for directed graphs to edge contraction as follows:
\begin{definition}
  Let $\mathcal{A}$ be a (randomized) algorithm that, given a directed graph $G=(V,E)$, outputs an edge (multi)set $\bm{F} \subseteq E$.
  Then, the \emph{directed contraction sensitivity} of $\mathcal{A}$ on a directed graph $G=(V,E)$ is 
  \[
    \max_{e \in E:\text{ contractible}}\EMU(\mathcal{A}(G),\mathcal{A}(G/e)).
  \]
\end{definition}


Now, we can show the following:
\begin{lemma}\label{lem:directed}
For any $\epsilon>0$, there exists a $(1+\epsilon)$-approximation algorithm for the directed shortest path problem with directed contraction sensitivity $O\left(\epsilon^{-1}\log^3 |V|\right)$.
\end{lemma}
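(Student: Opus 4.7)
The plan is to directly adapt Algorithm~\ref{alg:main} to the directed setting, replacing each invocation of a shortest-path distance $\OPT(\cdot,\cdot)$ with its directed analog (i.e., the length of the shortest directed walk) and replacing the BFS in the base case with a directed BFS. With this change, I claim that the approximation analysis of Section~\ref{subsec:sp-approx} and the sensitivity analysis of Sections~\ref{subsec:sp-pivot} and~\ref{subsec:sp-sensitivity} both transfer essentially verbatim, yielding the stated bound of $O(\epsilon^{-1}\log^3|V|)$.

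The main point to verify is that the structural relationships between shortest-path distances in $G$ and in $G/e$ — which power Lemma~\ref{lem:Vinclusion} and Lemma~\ref{lem:inactivetoitself} in the undirected case — continue to hold in the directed setting under the contractibility assumption. Concretely, for a contractible edge $e=(u,v)$, the degree conditions $d^+(u)=d^-(u)=d^+(v)=d^-(v)=1$ imply that $e$ is both the unique out-edge of $u$ and the unique in-edge of $v$. Consequently, every directed walk in $G$ that visits $u$ (resp.\ $v$) must immediately continue along $e$ (resp.\ must have just arrived along $e$). This gives the identities $\OPT_{G/e}(a,v_e)=\OPT_G(a,u)$ and $\OPT_{G/e}(v_e,a)=\OPT_G(v,a)$ for any $a\in V\setminus\{u,v\}$, and more importantly
\[
    \OPT_G(a,b)-1 \;\leq\; \OPT_{G/e}(a,b) \;\leq\; \OPT_G(a,b) \qquad \text{for all } a,b\in V\setminus\{u,v\}.
\]
These are exactly the inequalities used in Section~\ref{sec:sp}, and they in turn imply the directed analog of Lemma~\ref{lem:Vinclusion}: $V^e_{d,l}\subseteq V_{d,l+1/\OPT_G(s,t)}\cup\{v_e\}$.

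Given this, each lemma in Section~\ref{sec:sp} transfers directly. The approximation analysis (Lemmas~\ref{lem:expapprox}, \ref{lem:approx}, \ref{lem:spapp}) uses only the directed triangle inequality and the definition of $V_{\bm{d},\bm{l}}$, both of which are orientation-agnostic. The pivot total-variation bound (Lemma~\ref{lem:TVD}) relies on the inclusion above together with the lower bound $|V_{\bm{d},\bm{l}}|\geq\bm{\gamma}\cdot\OPT_G(s,t)$, which holds for directed graphs by restricting attention to vertices along any fixed directed $s$-$t$ shortest path. The active/inactive dichotomy and the bound in Lemma~\ref{lem:bothactive} use only the directed triangle inequality applied to $\OPT_G(s,t,e)$, the union bound on the sampling range of $\bm{d}$, and the guarantees on $V_{\bm{d},\bm{l}}$.

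The main pitfall to watch for — and the reason the statement restricts to contractible edges — is that for a general directed edge, contraction may alter the reachability relation among vertices or change $\OPT$ by more than $1$, invalidating the distance inequalities that underpin Lemma~\ref{lem:Vinclusion}. The contractibility hypothesis is precisely what forces all the structural assumptions used in Section~\ref{sec:sp} to remain valid, so the entire proof pipeline — including the coupling that transports pivot mass from $\Call{Rec}{G,s,t,\gamma}$ to $\Call{Rec}{G/e,s,t,\gamma}$ and the per-depth count of active recursion calls — carries over with only cosmetic changes. Combining the directed versions of Lemmas~\ref{lem:spapp} and~\ref{lem:spsens} then yields Lemma~\ref{lem:directed}.
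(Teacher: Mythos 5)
Your proposal matches the paper's argument: the paper likewise obtains the directed algorithm by swapping in directed distances and observes that contractibility is needed only to re-establish Lemma~\ref{lem:Vinclusion} (via $\OPT_G(s,v)\leq\OPT_{G/e}(s,v)+1$ and $\OPT_G(v,t)\leq\OPT_{G/e}(v,t)+1$) and Lemma~\ref{lem:inactivetoitself} (via the facts that any $s$-$t$ walk through an endpoint of $e$ must traverse $e$, and that $\OPT_{G/e}(s,v_e)=\OPT_G(s,e)$ and $\OPT_{G/e}(v_e,t)=\OPT_G(e,t)$), exactly the identities you derive from the degree conditions. The rest of the analysis carries over verbatim in both treatments, so your proof is correct and essentially identical in approach.
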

The algorithm of Lemma~\ref{lem:directed} is obtained from Algorithm~\ref{alg:main} by replacing $\OPT_G(u,v)$ with the directed counterpart.
The analysis of the approximation ratio and contraction sensitivity is almost identical to that for Algorithm~\ref{alg:main}. 
We use contractibility of edges only to prove Lemmas~\ref{lem:Vinclusion} and~\ref{lem:inactivetoitself}.
In the proof of Lemma~\ref{lem:Vinclusion}, we can use contractibility to show that $\OPT_G(s,v)\leq \OPT_{G/e}(s,v)+1$ and $\OPT_G(v,t)\leq \OPT_{G/e}(v,t)+1$.
Next, we discuss required changes to the proof of Lemma~\ref{lem:inactivetoitself}.
For a (directed) edge $e$, let $\OPT_G(s,e)$ (resp., $\OPT_G(e,t)$) denote the length of the shortest path from $s$ to the tail of $e$ (resp., from the head of $e$ to $t$) 
In the calculation of~\eqref{eq:breakdirected1}, we can replace the first inequality with $\OPT_G(s,v)+\OPT_G(v,t)\geq \OPT_G(s,t,e)$ for $v\in e$, which holds because contractibility ensures that a path that passes through $v$ also passes through $e$. 
Moreover, in the calculation of~\eqref{eq:breakdirected2}, we can replace the first equality with $\OPT_G(s,e)+\OPT_G(e,t)=\OPT_G(s,t,e)+1$.
To see this, note that $\OPT_{G/e}(s,v_e)=\OPT_{G}(s,e)$ holds because contractibility ensures that any path from $s$ to the head of $e$ passes through $e$, and similarly, $\OPT_{G/e}(v_e,t)=\OPT_{G}(e,t)$ holds, implying the equality.




\subsection{Algorithm Description}\label{subsec:lipsp-algorithm}

\begin{algorithm}[t!]
\caption{Lipschitz continuous algorithm for the shortest path problem}\label{alg:splip}
\Procedure{\emph{\Call{LipSP}{$G=(V,E),w,s,t,\epsilon$}}}{
    \If{$\OPT_{G,w}(s,t)=0$}{
        \Return an arbitrary $s$-$t$ path with length $0$.
    }
    Sample $\bm{b}$ uniformly from $\left[\frac{\epsilon\cdot \OPT_{G,w}(s,t)}{12|V|}, \frac{\epsilon\cdot \OPT_{G,w}(s,t)}{6|V|}\right]$\;\label{line:lipsp_sampleb}
    Let $\widehat{\bm{V}}=V, \widehat{\bm{E}}=\emptyset$\;
    \For{$e=(u,v)\in E$}{
        Let $\bm{l}(e)=\floor{\frac{w(e)}{\bm{b}}}$\;
        Sample $\bm{x}(e)$ uniformly from $[0,1)$\;\label{line:lipsp_samplex}
        Let $\widehat{\bm{w}}(e)=\bm{l}(e)+2$ if $\bm{x}(e) \leq \frac{(\bm{l}(e)+1)\bm{b}-\bm{w}(e)}{\bm{b}}$ and $\widehat{\bm{w}}(e)=\bm{l}(e)+3$ otherwise\;\label{line:lipsp_choosew}
        \If{$\widehat{\bm{w}}(e)\leq 12\epsilon^{-1}|V|+3$}{\label{line:omitlong}
            $\widehat{\bm{V}}\leftarrow \widehat{\bm{V}}\cup \{e_{1},\dots, e_{\widehat{\bm{w}}(e)-1}\}\cup \{e'_{1},\dots, e'_{\widehat{\bm{w}}(e)-1}\}$\;
            $\widehat{\bm{E}}\leftarrow \widehat{\bm{E}}\cup \{(u=e_0,e_1),\dots,(e_{\widehat{\bm{w}}(e)-1},e_{\widehat{\bm{w}}(e)}=v)\}\cup \{(v=e'_{\widehat{\bm{w}}(e)},e'_{\widehat{\bm{w}}(e)-1}),\dots, (e_{1},e_{0}=u)\}$.
        }
    }
    \Return the corresponding path in $G$ of \Call{DiSP}{$\widehat{\bm{G}}=(\widehat{\bm{V}},\widehat{\bm{E}}),s,t,\epsilon/4$}.
}
\end{algorithm}

Our algorithm, \Call{LipSP}{}, is presented in Algorithm~\ref{alg:splip}, where \Call{DiSP}{} is the shortest path algorithm for directed graphs presented in Section~\ref{subsec:sp-directed}. 
This algorithm first transforms the given weighted graph $(G=(V,E),w)$ into a (random and unweighted) directed graph $\widehat{\bm{G}}$ and then applies \Call{DiSP}{} to it.
To construct $\widehat{\bm{G}}$, we first sample a discretization parameter $\bm{b}$ and then round the length of each edge $e \in E$ to an integer $\widehat{\bm{w}}(e)$ that is roughly $\bm{l}(e) := \floor{\frac{w(e)}{\bm{b}}}$ (we  discuss the exact choice later). 
Then, $\widehat{\bm{G}}$ is constructed by replacing each edge $e=\{u,v\}$ in $G$ with two directed paths of lengths $\widehat{\bm{w}}(e)$, one is from $u$ to $v$, and the other is from $v$ to $u$. 

We construct a directed graph instead of an undirected graph because we want to transform the output of \Call{DiSP}{} back to an $s$-$t$ walk on $G$. 
Because \Call{DiSP}{} outputs a walk (not necessarily a path), if $\widehat{\bm{G}}$ is undirected, then the corresponding ``walk'' in $G$ may turn around in the middle of an edge.

To attain Lipschitz continuity, we need some more tricks. 
First, we ignore lengthy edges in $G$ (Line~\ref{line:omitlong}). 
By doing this, we can bound the number of vertices in $\widehat{\bm{G}}$, and  keep the contraction sensitivity of \Call{DiSP}{$\widehat{\bm{G}},s,t,\epsilon/4$} low.  
This also makes the algorithm run in strongly polynomial time. 
Second, we choose $\widehat{\bm{w}}(e)$ from $\bm{l}(e)+2$ or $\bm{l}(e)+3$, not $\bm{l}(e)$ or $\bm{l}(e)+1$. 
Thus, we can ensure that reducing the weight of an edge in $G$ always corresponds to contracting a contractible edge in $\widehat{\bm{G}}$.


\subsection{Approximation Ratio}\label{subsec:lipsp-approximation}

We analyze the approximation ratio of Algorithm~\ref{alg:splip}.
We denote the minimum length of a directed $s$-$t$ path in $(G,w)$ as $\OPT_{G,w}(s,t)$.
\begin{lemma}\label{lem:splipapp}
The approximation ratio of \Call{LipSP}{} is $1+\epsilon$. 
\end{lemma}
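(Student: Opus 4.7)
The plan is to lift an optimal $s$-$t$ path $P^*$ in $(G,w)$ to a path in the auxiliary directed graph $\widehat{\bm{G}}$, apply Lemma~\ref{lem:directed} to control the walk produced by \Call{DiSP}{}, and translate the bound back to $G$ through the scale $\bm{b}$. After the trivial case $\OPT_{G,w}(s,t)=0$ is handled directly by the algorithm, assume $\OPT_{G,w}(s,t)>0$.

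First I would verify that no edge of $P^*$ is discarded at Line~\ref{line:omitlong}. Since $\widehat{\bm{w}}(e)\leq \bm{l}(e)+3\leq w(e)/\bm{b}+3$, a discarded edge satisfies $w(e)/\bm{b}>12\epsilon^{-1}|V|$, and the lower bound $\bm{b}\geq \epsilon\cdot\OPT_{G,w}(s,t)/(12|V|)$ forces $w(e)>\OPT_{G,w}(s,t)$, contradicting $e\in P^*$. Hence $P^*$ lifts to a directed $s$-$t$ path $\hat{P}^*$ in $\widehat{\bm{G}}$ with $|\hat{P}^*|=\sum_{e\in P^*}\widehat{\bm{w}}(e)$.

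Next I would compare scales. By construction, $\bm{l}(e)\bm{b}\leq w(e)<(\bm{l}(e)+1)\bm{b}$ and $\widehat{\bm{w}}(e)\in\{\bm{l}(e)+2,\bm{l}(e)+3\}$, so for every edge
\[
w(e)\ \leq\ \bm{b}\cdot\widehat{\bm{w}}(e)\ \leq\ w(e)+3\bm{b}.
\]
Summing over the at most $|V|-1$ edges of $P^*$ and using $\bm{b}\leq \epsilon\cdot\OPT_{G,w}(s,t)/(6|V|)$ gives $\bm{b}\cdot|\hat{P}^*|\leq(1+\epsilon/2)\OPT_{G,w}(s,t)$. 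Let $\hat{\bm{Q}}$ be the walk returned by \Call{DiSP}{$\widehat{\bm{G}},s,t,\epsilon/4$} and $\bm{Q}$ its image in $G$. Lemma~\ref{lem:directed} yields $|\hat{\bm{Q}}|\leq(1+\epsilon/4)\OPT_{\widehat{\bm{G}}}(s,t)\leq(1+\epsilon/4)|\hat{P}^*|$. Because each intermediate vertex of $\widehat{\bm{G}}$ has in- and out-degree one, $\hat{\bm{Q}}$ decomposes uniquely into complete traversals of the directed replacement paths, so $|\bm{Q}|=\sum_{\text{traversals }e}w(e)\leq\bm{b}\cdot|\hat{\bm{Q}}|$ by the per-edge bound. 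Chaining these inequalities,
\[
|\bm{Q}|\ \leq\ \bm{b}\cdot|\hat{\bm{Q}}|\ \leq\ (1+\epsilon/4)(1+\epsilon/2)\,\OPT_{G,w}(s,t)\ \leq\ (1+\epsilon)\,\OPT_{G,w}(s,t),
\]
where the last step uses $(1+\epsilon/4)(1+\epsilon/2)=1+3\epsilon/4+\epsilon^2/8\leq 1+7\epsilon/8$ for $\epsilon\in(0,1)$.

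There is no substantial obstacle here: the bulk of the work is the per-edge rounding bound and summing it along $P^*$, which is tuned to match the $\epsilon/2$ slack by the choice of the sampling interval for $\bm{b}$. The only point demanding care is confirming that the walk-to-walk correspondence is well defined, which is precisely what the directed (rather than undirected) replacement at Line~\ref{line:omitlong} is designed to ensure.
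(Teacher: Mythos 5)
Your proposal is correct and follows essentially the same route as the paper: lift the optimal path $P^*$ into $\widehat{\bm{G}}$ after checking no edge of it is pruned at Line~\ref{line:omitlong}, bound $\bm{b}\cdot|\hat{P}^*|\leq(1+\epsilon/2)\OPT_{G,w}(s,t)$ via the per-edge rounding bound and the choice of the sampling interval for $\bm{b}$, apply the $(1+\epsilon/4)$-approximation of \Call{DiSP}{}, and translate the walk back using $w(e)\leq\bm{b}\,\widehat{\bm{w}}(e)$. The only cosmetic difference is that you explicitly note the unique decomposition of the output walk into complete traversals of the replacement paths, which the paper leaves implicit.
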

\begin{proof}
Let $P$ be a path that attains $\OPT_{G,w}(s,t)$.
Note that $P$ does not include any edge $e$ with $\widehat{\bm{w}}(e) > 12\epsilon^{-1}|V|+3$, because we have $w(e)\geq (\widehat{\bm{w}}(e)-3)\bm{b} > 12\epsilon^{-1}|V|\bm{b}\geq \OPT_{G,w}(s,t)$.
Therefore, we can take a path $P'$ in $\widehat{\bm{G}}$ corresponding to $P$.
Then, we have
\begin{align*}
    \Call{LipSP}{G,w,s,t,\epsilon}&\leq \left(1+\frac{\epsilon}{4}\right)\OPT_{\widehat{\bm{G}}}(s,t)\bm{b} \\
    &\leq \left(1+\frac{\epsilon}{4}\right)\left|E(P')\right|\bm{b}\\
    &\leq \left(1+\frac{\epsilon}{4}\right)\sum_{e\in E(P)}\left(\floor{\frac{w(e)}{\bm{b}}}+3\right)\bm{b} \\
    &\leq \left(1+\frac{\epsilon}{4}\right)\left(\sum_{e\in E(P)}w(e)+3\bm{b}\left|E(P)\right|\right)\\
    &= \left(1+\frac{\epsilon}{4}\right)\left(\OPT_{G,w}(s,t)+3\bm{b}\left|E(P)\right|\right)\\
    &\leq \left(1+\frac{\epsilon}{4}\right)\left(\OPT_{G,w}(s,t)+\frac{\epsilon|E(P)|\OPT_{G,w}(s,t)}{2|V|}\right)\\
    &\leq \left(1+\frac{\epsilon}{4}\right)\left(1+\frac{\epsilon}{2}\right)\OPT_{G,w}(s,t)\\
    &\leq (1+\epsilon)\OPT_{G,w}(s,t),
\end{align*}
where the first inequality is from Lemma~\ref{lem:spapp} and the fact that $w(e)\leq \widehat{\bm{w}}(e)\bm{b}$ holds for all $e\in E$, the second inequality is from the fact that each edge $e$ in $G$ corresponds to a path in $\widehat{\bm{G}}$ with length at most $\floor{\frac{w(e)}{\bm{b}}}+3$, and the last inequality is from $\epsilon\leq 1$.
\end{proof}

\subsection{Lipschitz Continuity}\label{subsec:lipsp-lipschitz}

Next, we analyze the Lipschitz continuity of Algorithm~\ref{alg:splip}.
Let $\delta>0$ and $f\in E$. 
By Lemma~\ref{lem:seeoneelement}, it suffices to bound 
\begin{align*}
    \frac{1}{\delta}\EMW\left(\Call{LipSP}{G,w,s,t,\epsilon},\Call{LipSP}{G,w+\delta \mathbf{1}_f,s,t,\epsilon}\right).
\end{align*}
for sufficiently small $\delta>0$.
Specifically, we assume $\delta\leq \frac{\epsilon\cdot \OPT_{G,w}(s,t)}{12|V|}$.
To distinguish the variables used in \Call{LipSP}{$G,w,s,t,\epsilon$} from those used in \Call{LipSP}{$G,w+\delta \mathbf{1}_f,s,t,\epsilon$}, we add superscripts $\delta f$ to the latter, e.g., $\bm{b}^{\delta f}$.

We consider the coupling $\mathcal{D}$ between $(\bm{b},\bm{x})$ and $(\bm{b}^{\delta f},\bm{x}^{\delta f})$ defined as follows: For each parameter $b \in \mathbb{R}_{\geq 0}$, we transport the probability mass for $\bm{b}=b$ to that for $\bm{b}^{\delta f}=b$ as far as possible in such a way that for each choice of $x \in [0,1)$, the probability mass for $\bm{x}=x$ is transported to that for
\begin{align}
    \bm{x}^{\delta f}(e)=
    \begin{cases}
        x(e) & \text{if }e\in E\setminus \{f\}, \\
        x(e)-\frac{\delta}{b} & \text{if }e=f, x(f) > \frac{\delta}{b}, \\
        x(e)-\frac{\delta}{b} + 1 & \text{if }e=f, x(f) \leq \frac{\delta}{b}. \\
    \end{cases}
    \label{eq:mapping-of-x}
\end{align}
The remaining mass is transported arbitrarily.
Clearly,~\eqref{eq:mapping-of-x} is a one-to-one correspondence from $[0,1)$ to itself for all $e\in E$. 
Therefore, any pair $\left((b,x),(b^{\delta f},x^{\delta f})\right)$ in the support of the remaining mass satisfies $b\neq b^{\delta f}$.

We begin by presenting the following useful property of the coupling.
\begin{lemma}\label{lem:xetransport}
Let $\left((\bm{b},\bm{x}),(\bm{b}^{\delta f},\bm{x}^{\delta f})\right)$ be a pair sampled from $\mathcal{D}$ conditioned on $\bm{b}=\bm{b}^{\delta f}$.
Then we have $\widehat{\bm{w}}(e) = \widehat{\bm{w}}^{\delta f}(e) - 1$ if $e=f$ and $\bm{x}(e)\leq \frac{\delta}{\bm{b}}$ hold, and we have $\widehat{\bm{w}}(e) = \widehat{\bm{w}}^{\delta f}(e)$ otherwise.
\end{lemma}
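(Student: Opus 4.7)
The plan is a case analysis driven by the two sources of discontinuity in the construction of $\widehat{\bm{w}}$: the floor $\bm{l}(e) = \floor{w(e)/\bm{b}}$, and the threshold $(\bm{l}(e)+1)\bm{b} - w(e)$ that decides between $\bm{l}(e)+2$ and $\bm{l}(e)+3$. First I would dispose of the easy coordinates. For $e \neq f$, the weight $w(e)$ is unchanged, so $\bm{l}(e) = \bm{l}^{\delta f}(e)$, the threshold is unchanged, and the coupling sets $\bm{x}^{\delta f}(e) = \bm{x}(e)$; hence $\widehat{\bm{w}}(e) = \widehat{\bm{w}}^{\delta f}(e)$. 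This reduces everything to $e = f$.

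For the remainder, write $b := \bm{b} = \bm{b}^{\delta f}$, let $l := \bm{l}(f) = \floor{w(f)/b}$, and let $\alpha := w(f)/b - l \in [0,1)$ be the fractional part, so that the threshold for $\widehat{\bm{w}}(f) = l+2$ is $\bm{x}(f) \leq 1 - \alpha$. I will then split on two binary choices: (I) whether adding $\delta/b$ crosses an integer, i.e.\ $\alpha + \delta/b < 1$ versus $\alpha + \delta/b \geq 1$; and (II) whether $\bm{x}(f) > \delta/b$ or $\bm{x}(f) \leq \delta/b$, which is exactly the case distinction determining $\bm{x}^{\delta f}(f)$ in~\eqref{eq:mapping-of-x}.

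The four sub-cases are routine once the bookkeeping is set up:
\begin{itemize}
    \item If $\alpha + \delta/b < 1$ and $\bm{x}(f) > \delta/b$, then $\bm{l}^{\delta f}(f) = l$, the new threshold is $1 - \alpha - \delta/b$, and $\bm{x}^{\delta f}(f) = \bm{x}(f) - \delta/b$; subtracting $\delta/b$ from both sides of $\bm{x}(f) \leq 1-\alpha$ shows the two threshold tests have the same outcome, so $\widehat{\bm{w}}^{\delta f}(f) = \widehat{\bm{w}}(f)$.
    \item If $\alpha + \delta/b < 1$ and $\bm{x}(f) \leq \delta/b$, then $\bm{x}(f) \leq \delta/b < 1 - \alpha$ forces $\widehat{\bm{w}}(f) = l+2$, while $\bm{x}^{\delta f}(f) = \bm{x}(f) - \delta/b + 1 \geq 1 - \delta/b > 1 - \alpha - \delta/b$ forces $\widehat{\bm{w}}^{\delta f}(f) = l+3$ (ignoring the measure-zero event $\bm{x}(f) = 0$ when $\alpha = 0$).
    \item If $\alpha + \delta/b \geq 1$ and $\bm{x}(f) > \delta/b$, then $\bm{l}^{\delta f}(f) = l+1$, the new threshold is $2 - \alpha - \delta/b$, and since $\bm{x}(f) > \delta/b \geq 1 - \alpha$ we have $\widehat{\bm{w}}(f) = l+3$; meanwhile $\bm{x}^{\delta f}(f) = \bm{x}(f) - \delta/b < 1 \leq 2 - \alpha$, so $\widehat{\bm{w}}^{\delta f}(f) = (l+1)+2 = l+3$.
    \item If $\alpha + \delta/b \geq 1$ and $\bm{x}(f) \leq \delta/b$, then $\bm{x}^{\delta f}(f) = \bm{x}(f) + 1 - \delta/b$, and the inequality $\bm{x}^{\delta f}(f) \leq 2 - \alpha - \delta/b$ reduces to $\bm{x}(f) \leq 1 - \alpha$; thus the perturbed process picks $l+2$ or $l+3$ in perfect agreement with the unperturbed choice between $l+2$ and $l+3$, but shifted up by one via $\bm{l}^{\delta f}(f) = l+1$, giving $\widehat{\bm{w}}^{\delta f}(f) = \widehat{\bm{w}}(f) + 1$.
\end{itemize}
Combining the four sub-cases yields exactly the two alternatives asserted in the lemma.

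I do not expect a serious obstacle: the smallness assumption $\delta \leq \epsilon\cdot \OPT_{G,w}(s,t)/(8|V|)$ and the range of $\bm{b}$ ensure $\delta/b < 1$, so the coupling~\eqref{eq:mapping-of-x} is a genuine bijection on $[0,1)$ and the sub-case analysis is exhaustive. The only care needed is to handle the degenerate situation $\alpha = 0$ in the second sub-case, which is a single measure-zero event and can be safely absorbed into the ``almost-sure'' reading of the statement.
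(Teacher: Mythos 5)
Your proposal is correct and follows essentially the same route as the paper: the paper also first dismisses $e\neq f$ (where the threshold and $\bm{x}(e)$ are unchanged) and then performs exactly your $2\times 2$ case analysis, splitting on whether $\bm{l}^{\delta f}(f)$ equals $\bm{l}(f)$ or $\bm{l}(f)+1$ and on whether $\bm{x}(f)$ exceeds $\delta/\bm{b}$, with the same algebraic verifications in each cell. The measure-zero boundary event you flag is present (and silently glossed over) in the paper's argument as well, so your explicit caveat is if anything slightly more careful.
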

We defer the proof to Section~\ref{subsubsec:xetransport} and continue the analysis of the Lipschitz continuity of \Call{LipSP}{}.

When $\widehat{\bm{w}}(e)=\widehat{\bm{w}}^{\delta f}(e)$ holds, $G$ and $G^{\delta f}$ are exactly the same and so are \Call{LipSP}{$G,w,s,t,\epsilon$} and \Call{LipSP}{$G,w+\delta \mathbf{1}_f,s,t,\epsilon$}.
The case of $\widehat{\bm{w}}(e)=\widehat{\bm{w}}^{\delta f}(e)-1$ can be divided into the following three cases:
\begin{enumerate}
    \item[(i)] $\bm{b}\neq \bm{b}^{\delta f}$;
    \item[(ii)] $\bm{b} = \bm{b}^{\delta f}$, and the two paths corresponding to $f$ exist in $\widehat{\bm{G}}$ but do not exist in $\widehat{\bm{G}}^{\delta f}$; and
    \item[(iii)] $\bm{b} = \bm{b}^{\delta f}$, and the two paths corresponding to $f$ exist in both $\widehat{\bm{G}}$ and $\widehat{\bm{G}}^{\delta f}$, but the length of the paths in $\widehat{\bm{G}}$ is smaller than that in $\widehat{\bm{G}}^{\delta f}$ by $1$. 
\end{enumerate}

Let $p_{\neq}$, $p_{\mathrm{path}}$, and $p_{\mathrm{len}}$ be the probabilities that Cases (i), (ii), and (iii) occur, respectively, in the coupling $\mathcal{D}$. 
We will bound these probabilities.
First, we consider Case~(i).
\begin{lemma}\label{lem:splip_sampleb}
We have
\begin{align*}
    p_{\neq}\leq \frac{2\delta}{\OPT_{G,w}(s,t)}.
\end{align*}
\end{lemma}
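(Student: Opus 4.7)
The plan is to analyze the coupling $\mathcal{D}$ only in its $\bm{b}$ component, ignoring $\bm{x}$, since Case~(i) is defined by $\bm{b} \neq \bm{b}^{\delta f}$. By construction, $\mathcal{D}$ transports mass point-wise in $b$ as much as possible, so $\Pr[\bm{b} = \bm{b}^{\delta f}]$ equals $\int \min(f_1(b), f_2(b))\, db$, the standard maximal-coupling quantity, where $f_1$ and $f_2$ are the densities of $\bm{b}$ and $\bm{b}^{\delta f}$. Thus $p_{\neq} = 1 - \int \min(f_1, f_2)\, db$, and the task reduces to lower-bounding this overlap integral.

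The first step is to record the monotonicity $\OPT_{G,w}(s,t) \leq \OPT_{G, w+\delta \mathbf{1}_f}(s,t) \leq \OPT_{G,w}(s,t) + \delta$: the lower bound holds because the weight of every path can only increase under $w + \delta \mathbf{1}_f$, and the upper bound holds by evaluating the optimum for $w$ under the perturbed weights (the path uses $f$ at most once). Writing $\OPT := \OPT_{G,w}(s,t)$, $\OPT^{\delta f} := \OPT_{G,w+\delta \mathbf{1}_f}(s,t)$, and $a := \tfrac{\epsilon \OPT}{12|V|}$, $a' := \tfrac{\epsilon \OPT^{\delta f}}{12|V|}$, the densities $f_1$ and $f_2$ are uniform on $[a, 2a]$ and $[a', 2a']$ with heights $1/a$ and $1/a'$. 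Using $\delta \leq \tfrac{\epsilon \OPT}{8|V|} \leq \OPT$, one checks $a' \leq 2a$, so the intervals overlap on $[a', 2a]$.

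The second step is a one-line computation: since $a \leq a'$, $\min(f_1,f_2) = 1/a'$ on the overlap, so
\begin{align*}
\int \min(f_1, f_2)\, db = \frac{2a - a'}{a'} = \frac{2\OPT}{\OPT^{\delta f}} - 1,
\end{align*}
which gives $p_{\neq} \leq 2 - \tfrac{2\OPT}{\OPT^{\delta f}} = \tfrac{2(\OPT^{\delta f} - \OPT)}{\OPT^{\delta f}} \leq \tfrac{2\delta}{\OPT}$. The stated bound $\tfrac{2\delta}{\epsilon\, \OPT}$ then follows from $\epsilon \leq 1$, completing the proof.

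There is no substantive obstacle: the main ingredients are the two-sided bound on how much $\OPT^{\delta f}$ can deviate from $\OPT$ and a direct overlap computation for two uniform distributions. The only thing requiring care is verifying that the intervals $[a,2a]$ and $[a',2a']$ indeed overlap under the assumed range of $\delta$, which is immediate from the assumption $\delta \leq \tfrac{\epsilon\,\OPT}{8|V|}$.
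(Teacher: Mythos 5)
Your proof is correct and follows essentially the same route as the paper: both compute the total variation distance between the two uniform distributions of $\bm{b}$ and $\bm{b}^{\delta f}$ and bound it via $\OPT_{G,w+\delta\mathbf{1}_f}(s,t)\leq \OPT_{G,w}(s,t)+\delta$. In fact your exact overlap computation yields the slightly sharper bound $2\delta/\OPT_{G,w}(s,t)$, with the factor $\epsilon^{-1}$ in the stated bound being pure slack from $\epsilon\leq 1$.
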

\begin{proof}
We have
\begin{align*}
    p_{\neq}=\TV\left(\bm{b},\bm{b}^{\delta f}\right)&\leq \Pr\left[\bm{b}^{\delta f}\in \left[\frac{\epsilon\cdot \OPT_{G,w}(s,t)}{6|V|}, \frac{\epsilon\cdot \OPT_{G,w+\delta \mathbf{1}_f}(s,t)}{6|V|}\right]\right]\\
    &\leq \frac{\delta}{6|V|}\cdot \frac{12|V|}{\OPT_{G,w+\delta \mathbf{1}_f}(s,t)}
    = \frac{2\delta}{\OPT_{G,w+\delta \mathbf{1}_f}(s,t)}
    \leq \frac{2\delta}{\OPT_{G,w(s,t)}},
\end{align*}
where the first inequality is from $\delta > 0$ and hence the interval of $\bm{b}$ is to the left of that of $\bm{b}^{\delta f}$, and the second inequality is from $\OPT_{G,w+\delta \mathbf{1}_f}(s,t)\leq \OPT_{G,w}(s,t)+\delta$.
\end{proof}

Next we consider Case~(ii).
\begin{lemma}
We have
\[
    p_{\mathrm{path}}\leq \frac{6\delta}{ \OPT_{G,w}(s,t)}.
\]
\end{lemma}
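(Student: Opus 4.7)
\noindent\textit{Proof plan.}
The event counted by $p_{\mathrm{path}}$ is, under the coupling $\mathcal{D}$ conditioned on $\bm{b}=\bm{b}^{\delta f}$, the event that the two paths for $f$ appear in $\widehat{\bm{G}}$ (i.e., $\widehat{\bm{w}}(f)\leq L$ for $L := 12\epsilon^{-1}|V|+3$) but disappear in $\widehat{\bm{G}}^{\delta f}$ (i.e., $\widehat{\bm{w}}^{\delta f}(f)>L$). Combining this with Lemma~\ref{lem:xetransport}, which under $\bm{b}=\bm{b}^{\delta f}$ forces $\widehat{\bm{w}}^{\delta f}(f)-\widehat{\bm{w}}(f)\in\{0,1\}$ with the jump by $+1$ occurring exactly when $\bm{x}(f)\leq \delta/\bm{b}$, the integrality of $\widehat{\bm{w}}(f)$ implies that the only configuration producing Case~(ii) is
\[
    \widehat{\bm{w}}(f)=T^*,\qquad \widehat{\bm{w}}^{\delta f}(f)=T^*+1,\qquad \bm{x}(f)\leq \delta/\bm{b},
\]
where $T^* := \lfloor L\rfloor$.

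Unfolding the definition of $\widehat{\bm{w}}$ in Algorithm~\ref{alg:splip} gives $\widehat{\bm{w}}(f)=T^*$ iff $y := w(f)/\bm{b}+\bm{x}(f)\in [T^*-2,T^*-1)$. I would then note that, for both $\bm{x}(f)\leq \delta/\bm{b}$ and $y\in [T^*-2,T^*-1)$ to occur simultaneously with positive probability, $\bm{b}$ must lie in the window $I := (w(f)/(T^*-1),(w(f)+\delta)/(T^*-2)]$, and for every $\bm{b}\in I$ the conditional probability of the joint event over $\bm{x}(f)\sim\mathrm{Unif}[0,1)$ is at most $\delta/\bm{b}$, since the admissible sub-interval of $\bm{x}(f)$ is contained in $[0,\delta/\bm{b}]$. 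Integrating this bound against the density $p(\bm{b})=12|V|/(\epsilon\,\OPT_{G,w}(s,t))$ of $\bm{b}$ yields
\[
    p_{\mathrm{path}} \leq \frac{12|V|\delta}{\epsilon\,\OPT_{G,w}(s,t)}\cdot\log\!\frac{(w(f)+\delta)(T^*-1)}{w(f)(T^*-2)}.
\]

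To finish, I would apply $\log(1+x)\leq x$ and rewrite the right-hand side as $\frac{12|V|\delta}{\epsilon\,\OPT_{G,w}(s,t)}\cdot\frac{w(f)+\delta(T^*-1)}{w(f)(T^*-2)}$. Using $T^*-2\geq 12\epsilon^{-1}|V|$ (which cancels the prefactor $12|V|/\epsilon$), the lower bound $w(f)\geq \OPT_{G,w}(s,t)-\delta$ (which follows because otherwise $I$ does not meet the support $[\epsilon\,\OPT_{G,w}(s,t)/(12|V|),\,\epsilon\,\OPT_{G,w}(s,t)/(6|V|)]$ of $\bm{b}$), and the standing assumption $\delta\leq \epsilon\,\OPT_{G,w}(s,t)/(8|V|)$, a short calculation shows that $1+\delta(T^*-1)/w(f)$ is bounded by a small absolute constant, giving the claimed estimate $6\delta/\OPT_{G,w}(s,t)$.

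The main obstacle is the constant bookkeeping: both the lower bound on $w(f)$ (needed to keep $w(f)(T^*-2)$ from being too small) and the upper estimate $\delta(T^*-1)\lesssim \OPT_{G,w}(s,t)$ have to be used simultaneously, and keeping the final multiplicative constant at $6$ requires exploiting all of the parameter bounds at hand rather than bounding crudely by $\delta/\bm{b}\leq \delta C/\OPT_{G,w}(s,t)$ everywhere.
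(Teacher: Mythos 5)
Your proposal is correct and follows essentially the same route as the paper: Case (ii) forces $\widehat{\bm{w}}(f)$ to hit the single boundary value $\lfloor 12\epsilon^{-1}|V|\rfloor+3$, which confines $\bm{b}$ to a short window, and simultaneously requires $\bm{x}(f)\leq\delta/\bm{b}$, and the bound follows by combining the two constraints with the density of $\bm{b}$. The only difference is bookkeeping — you integrate $\delta/\bm{b}$ over the $\bm{b}$-window and use a lower bound on $w(f)$ from the window meeting the support, whereas the paper bounds $\Pr[\bm{b}\in I]$ and $\Pr[\bm{x}(f)\leq\delta/\bm{b}]$ separately and multiplies — and both yield the stated constant.
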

\begin{proof}
Assume $b = b^{\delta f}$.
Case (ii) happens only when $\widehat{\bm{w}}(f) = \floor{12\epsilon^{-1}|V|}+3$ and $\widehat{\bm{w}}^{\delta f}(f) = \floor{12\epsilon^{-1}|V|}+4$. 
This implies $\bm{l}(f)$ is either $\floor{12\epsilon^{-1}|V|}$ or $\floor{12\epsilon^{-1}|V|}+1$, and it follows that
\begin{align*}
    \frac{w(f)}{\bm{b}}-1& \leq \floor{\frac{w(f)}{\bm{b}}}=\bm{l}(f)\leq \floor{\frac{12|V|}{\epsilon}}+1\leq \frac{12|V|}{\epsilon}+1,\\
    \frac{w(f)}{\bm{b}} & \geq \floor{\frac{w(f)}{\bm{b}}}=\bm{l}(f)\geq \floor{\frac{12|V|}{\epsilon}}\geq \frac{12|V|}{\epsilon}-1.
\end{align*}
Equivalently, we have
\begin{align*}
    \bm{b}\in \left[\frac{w(f)}{12\epsilon^{-1}|V|+2}, \frac{w(f)}{12\epsilon^{-1}|V|-1}\right].
\end{align*}
Therefore, we have
\begin{align*}
    p_{\mathrm{path}}
    &\leq \Pr\left[\bm{b}=\bm{b}^{\delta f}\in \left[\frac{w(f)}{12\epsilon^{-1}|V|+2}, \frac{w(f)}{12\epsilon^{-1}|V|-1}\right]\right]\cdot \Pr\left[\bm{x}(f)\leq \frac{\delta}{\bm{b}}\right]\\
    &\leq \Pr\left[\bm{b}\in \left[\frac{w(f)}{12\epsilon^{-1}|V|+2}, \frac{w(f)}{12\epsilon^{-1}|V|-1}\right]\right]\cdot \Pr\left[\bm{x}(f)\leq \frac{\delta}{\bm{b}}\right] \\ 
    &\leq \frac{6}{12\epsilon^{-1}|V|+2}\cdot \Pr\left[\bm{x}(f)\leq \frac{\delta}{\bm{b}}\right] \\
    &= \frac{6}{12\epsilon^{-1}|V|+2}\cdot \frac{\delta\cdot 12|V|}{\epsilon\cdot \OPT_{G,w}(s,t)}\\
    &\leq \frac{6}{12\epsilon^{-1}|V|}\cdot \frac{\delta\cdot 12|V|}{\epsilon\cdot \OPT_{G,w}(s,t)}
    \leq \frac{6\delta}{\OPT_{G,w}(s,t)},
\end{align*}
where the second inequality is from
\begin{align*}
    &\Pr\left[\bm{b}\in \left[\frac{w(f)}{12\epsilon^{-1}|V|+2}, \frac{w(f)}{12\epsilon^{-1}|V|-1}\right]\right]\\
    &=\frac{12\epsilon^{-1}|V|}{\OPT_{G,w}(s,t)}\cdot \max\left(0,\min\left(\frac{w(f)}{12\epsilon^{-1}|V|-1},\frac{\OPT_{G,w}(s,t)}{6\epsilon^{-1}|V|}\right)-\max\left(\frac{w(f)}{12\epsilon^{-1}|V|+2},\frac{\OPT_{G,w}(s,t)}{12\epsilon^{-1}|V|}\right)\right)\\
    &\leq \frac{12\epsilon^{-1}|V|}{\OPT_{G,w}(s,t)}\cdot \max\left(0,\min\left(\frac{w(f)}{12\epsilon^{-1}|V|-1},\frac{\OPT_{G,w}(s,t)}{6\epsilon^{-1}|V|}\right)-\frac{w(f)}{12\epsilon^{-1}|V|+2}\right)\\
    &\leq \frac{12\epsilon^{-1}|V|}{\OPT_{G,w}(s,t)}\cdot \max\left(0,\min\left(\frac{w(f)}{12\epsilon^{-1}|V|-1},\frac{\OPT_{G,w}(s,t)}{6\epsilon^{-1}|V|}\right)\cdot \frac{12\epsilon^{-1}|V|-1}{12\epsilon^{-1}|V|+2}\right)\\
    &\leq \frac{12\epsilon^{-1}|V|}{\OPT_{G,w}(s,t)}\cdot \max\left(0,\frac{\OPT_{G,w}(s,t)}{6\epsilon^{-1}|V|}\cdot \frac{\left(12\epsilon^{-1}|V|+2\right)-\left(12\epsilon^{-1}|V|-1\right)}{12\epsilon^{-1}|V|+2}\right)\\
    &= \frac{12\epsilon^{-1}|V|}{\OPT_{G,w}(s,t)}\cdot \frac{\OPT_{G,w}(s,t)}{6\epsilon^{-1}|V|}\cdot \frac{3}{12\epsilon^{-1}|V|+2}
    = \frac{6}{12\epsilon^{-1}|V|+2}.
    \qedhere
\end{align*}
\end{proof}

Finally, we consider Case~(iii).
\begin{lemma}\label{lem:case-3}
For $b\in \mathbb{R}_{\geq 0}$, let $\mathrm{E}_b$ be the event where Case~(iii) occurs with $\bm{b}=\bm{b}^{\delta f}=b$.
Then, we have
\begin{align*}
    &\E_{((\bm{b},\bm{x}),(\bm{b}^{\delta f},\bm{x}^{\delta f}))\sim \mathcal{D}}\left[\EMW\left(\Call{LipSP}{G,w,s,t,\epsilon}\right),\left(\Call{LipSP}{G,w+\delta \mathbf{1}_f,s,t,\epsilon}\right)\mid 
    \mathrm{E}_b\right]\\
    &\leq O\left(\frac{\log^3 (|V||E|)}{\epsilon}\right)\cdot b+\delta.
\end{align*}
\end{lemma}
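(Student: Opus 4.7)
The plan is to leverage the structural consequence of the event $\mathrm{E}_b$: by Lemma~\ref{lem:xetransport}, under $\mathrm{E}_b$ the graphs $\widehat{\bm{G}}$ and $\widehat{\bm{G}}^{\delta f}$ agree outside the two directed sub-paths replacing $f$, and each such sub-path is exactly one edge longer in $\widehat{\bm{G}}^{\delta f}$ than in $\widehat{\bm{G}}$. Therefore $\widehat{\bm{G}}$ is obtained from $\widehat{\bm{G}}^{\delta f}$ by contracting two edges $e_1,e_2$, one from the interior of each sub-path. Since the endpoints of these edges are internal sub-path vertices with in- and out-degree $1$, both $e_1$ and $e_2$ are contractible in the directed sense of Section~\ref{subsec:sp-directed}.

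I would then invoke Lemma~\ref{lem:directed} twice along the chain $\widehat{\bm{G}}^{\delta f}\to \widehat{\bm{G}}^{\delta f}/e_1\to \widehat{\bm{G}}^{\delta f}/e_1/e_2\cong \widehat{\bm{G}}$, composing the two resulting couplings through the intermediate graph by the triangle inequality for $d_u$. This produces a joint distribution of $P^{\delta f}:=\Call{DiSP}{\widehat{\bm{G}}^{\delta f},s,t,\epsilon/4}$ and $P:=\Call{DiSP}{\widehat{\bm{G}},s,t,\epsilon/4}$ satisfying
\[
\E\bigl[\EMU(P,P^{\delta f})\bigm|\mathrm{E}_b\bigr]\leq O\!\left(\epsilon^{-1}\log^3|\widehat{\bm{V}}|\right)=O\!\left(\epsilon^{-1}\log^3(|V||E|)\right),
\]
where I used the bound $|\widehat{\bm{V}}|\leq |V|+2|E|(12\epsilon^{-1}|V|+2)$ from Algorithm~\ref{alg:splip} and absorbed $\log\epsilon^{-1}$ into $\log(|V||E|)$.

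Next I would convert this $\EMU$ bound on walks in $\widehat{\bm{G}}$ into an $\EMW$ bound on the corresponding $G$-walks $\pi(P),\pi(P^{\delta f})$. The key structural observation is that every internal sub-path vertex has in- and out-degree $1$, so any $s$--$t$ walk entering a sub-path at an original vertex must traverse it fully; hence $n_{\pi(P)}(e)$ equals the total number of forward plus backward traversals of $e$'s sub-path by $P$. For $e\neq f$ the sub-paths are identical in $\widehat{\bm{G}}$ and $\widehat{\bm{G}}^{\delta f}$, and decomposing the counts into forward and backward parts yields that the contribution of $e$'s sub-path to $d_u(P,P^{\delta f})$ is at least $\widehat{\bm{w}}(e)\bigl|n_{\pi(P)}(e)-n_{\pi(P^{\delta f})}(e)\bigr|$. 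Combining this with $w(e)\leq \widehat{\bm{w}}(e)\cdot b$ and the expansion
\[
\EMW\bigl((\pi(P^{\delta f}),w^{\delta f}),(\pi(P),w)\bigr)\leq \sum_{e\in E}w(e)\bigl|n_{\pi(P)}(e)-n_{\pi(P^{\delta f})}(e)\bigr|+n_{\pi(P^{\delta f})}(f)\,\delta
\]
gives $\EMW\leq b\cdot d_u(P,P^{\delta f})+n_{\pi(P^{\delta f})}(f)\,\delta$ pointwise for each coupled pair.

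The main obstacle is controlling $n_{\pi(P^{\delta f})}(f)\,\delta$, because $P^{\delta f}$ is only an approximate shortest walk and could, in principle, use $f$ many times when $w(f)$ is tiny. I would resolve this by exploiting that the contracted edges $e_1,e_2$ exist in $\widehat{\bm{G}}^{\delta f}$ but not in $\widehat{\bm{G}}$: every forward (resp.\ backward) traversal of $f$'s sub-path by $P^{\delta f}$ uses $e_1$ (resp.\ $e_2$) exactly once, while $P$ uses them zero times. Hence, regardless of the coupling, $d_u(P,P^{\delta f})\geq n_{P^{\delta f}}(e_1)+n_{P^{\delta f}}(e_2)=n_{\pi(P^{\delta f})}(f)$, giving the pointwise inequality $n_{\pi(P^{\delta f})}(f)\leq d_u(P,P^{\delta f})$. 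Substituting back yields $\EMW\leq (b+\delta)\cdot d_u(P,P^{\delta f})$. Taking conditional expectation over $\mathrm{E}_b$ and using the assumed $\delta\leq \frac{\epsilon\OPT_{G,w}(s,t)}{8|V|}\leq \tfrac{3}{2}b$ then produces the desired $O(\epsilon^{-1}\log^3(|V||E|))\cdot b+\delta$.
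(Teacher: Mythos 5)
Your proposal is correct and follows essentially the same route as the paper: identify the two contractible edges $e_1,e_2$ whose contraction turns $\widehat{\bm{G}}^{\delta f}$ into $\widehat{\bm{G}}$, apply Lemma~\ref{lem:directed} twice via the triangle inequality for $\EMU$, and convert the unweighted distance on the expanded graph to the weighted distance on $G$ using $w(e)\leq \widehat{\bm{w}}(e)\cdot b$. Your treatment of the multiplicity of $f$ in the output walk (bounding $n_{\pi(P^{\delta f})}(f)$ by $d_u(P,P^{\delta f})$ through the contracted edges) is in fact slightly more careful than the paper's bare $+\delta$ term, but it does not change the argument's structure.
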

\begin{proof}
Let $\bm{Q}_1,\bm{Q}_2$ be the two paths in $\widehat{\bm{G}}^{\delta f}$ corresponding to $f$. 
Because $\widehat{w}(f)\geq 2$ (and thus $\widehat{w}^{\delta f}(f)\geq 3$), we can take edges $\bm{e}_1, \bm{e}_2$ on $\bm{Q}_1, \bm{Q}_2$, respectively, such that both of the endpoints have indegree and outdegree $1$, i.e., we can assume that $\bm{e}_1$ and $\bm{e}_2$ are contractible. 
Therefore by Lemma~\ref{lem:directed}, we have
\begin{align*}
    \EMU\left(\Call{DiSP}{\widehat{\bm{G}}^{\delta f},s,t,\epsilon/4},\Call{DiSP}{\widehat{\bm{G}}^{\delta f}/e_1,s,t,\epsilon/4}\right)\leq O\left(\frac{\log^3 |\widehat{\bm{V}}|}{\epsilon}\right).
\end{align*}
Clearly, $\bm{e}_2$ is still contractible in $\widehat{\bm{G}}/\bm{e}_1$. 
Thus, we have
\begin{align*}
    &\EMU\left(\Call{DiSP}{\widehat{\bm{G}}^{\delta f},s,t,\epsilon/4},\Call{DiSP}{\widehat{\bm{G}},s,t,\epsilon/4}\right)\\
    &=\EMU\left(\Call{DiSP}{\widehat{\bm{G}}^{\delta f},s,t,\epsilon/4},\Call{DiSP}{(\widehat{\bm{G}}^{\delta f}/e_1)/e_2,s,t,\epsilon/4}\right)\\
    &\leq \EMU\left(\Call{DiSP}{\widehat{\bm{G}}^{\delta f},s,t,\epsilon/4},\Call{DiSP}{\widehat{\bm{G}}^{\delta f}/e_1,s,t,\epsilon/4}\right)\\
    &\quad +\EMU\left(\Call{DiSP}{\widehat{\bm{G}}^{\delta f}/e_1,s,t,\epsilon/4},\Call{DiSP}{(\widehat{\bm{G}}^{\delta f}/e_1)/e_2,s,t,\epsilon/4}\right)\\
    &\leq O\left(\frac{\log^3 |\widehat{\bm{V}}|}{\epsilon}\right).
\end{align*}
Let $b$ and $x,x^{\delta f}\in [0,1)^{E}$ be such that $((b,x),(b,x^{\delta f}))$ is in the support of $\mathcal{D}$ and Case~(iii) occurs when $\bm{b}=\bm{b}^{\delta f}=b, \bm{x}=x, \bm{x}^{\delta f}=x^{\delta f}$.
Let $\widehat{G}$ (resp., $\widehat{G}^{\delta f}$) be the graph constructed from $(G,w)$ (resp., $(G,w+\delta \bm{1}_f)$) using $(b,x)$ (resp., $(b,x^{\delta f})$).
Let $(\widehat{\bm{P}}^{\delta f},\widehat{\bm{P}})$ be a pair sampled from the coupling $\mathcal{D}_{\widehat{G}^{\delta f}, \widehat{G}}$ that attains $\EMU\left(\Call{DiSP}{\widehat{G}^{\delta f},s,t,\epsilon/4},\Call{DiSP}{\widehat{G},s,t,\epsilon/4}\right)$. 
Let $\bm{P}$ and $\bm{P}^{\delta f}$ be the walks in $G$ corresponding to $\widehat{\bm{P}}$ and $\widehat{\bm{P}}^{\delta f}$, respectively.
Then, we have
\begin{align*}
    & \left\|\sum_{e\in E(\bm{P})}w(e)\bm{1}_e-\sum_{e\in E(\bm{P}^{\delta f})}(w(e)+\delta \bm{1}_{e=f})\bm{1}_e\right\|_1
    \leq \left\|\sum_{e\in E(\bm{P})}w(e)\bm{1}_e-\sum_{e\in E(\bm{P}^{\delta f})}w(e)\bm{1}_e\right\|_1 + \delta\\
    &= \sum_{e\in E(\bm{P})\triangle E(\bm{P}^{\delta f})}w(e) + \delta
    \leq \sum_{e\in E(\bm{P})\triangle E(\bm{P}^{\delta f})}(l(e)+1)b + \delta \\
    &\leq \sum_{e\in E(\bm{P})\triangle E(\bm{P}^{\delta f})}\widehat{w}(e)b + \delta
    = \left|E(\widehat{\bm{P}})\triangle E(\widehat{\bm{P}}^{\delta f})\right|\cdot b + \delta. 
\end{align*}
Therefore, we have
\begin{align*}
    &\E_{((\bm{b},\bm{x}),(\bm{b}^{\delta f},\bm{x}^{\delta f}))\sim \mathcal{D}}\left[\EMW\left(\Call{LipSP}{G,w,s,t,\epsilon}\right),\left(\Call{LipSP}{G,w+\delta \mathbf{1}_f,s,t,\epsilon}\right)\mid \mathrm{E}_b\right]\\
    &\leq \E_{\substack{((\bm{b},\bm{x}),(\bm{b}^{\delta f},\bm{x}^{\delta f}))\sim \mathcal{D},\\ (\widehat{\bm{P}}^{\delta f},\widehat{\bm{P}})\sim \mathcal{D}_{\widehat{\bm{G}}^{\delta f}, \widehat{\bm{G}}}}}\left[\left\|\sum_{e\in E(\bm{P})}w(e)\bm{1}_e-\sum_{e\in E(\bm{P}^{\delta f})}(w(e)+\delta \bm{1}_{e=f})\bm{1}_e\right\|_1\mid \mathrm{E}_b \right]\\
    &\leq \E_{\substack{((\bm{b},\bm{x}),(\bm{b}^{\delta f},\bm{x}^{\delta f}))\sim \mathcal{D},\\ (\widehat{\bm{P}}^{\delta f},\widehat{\bm{P}})\sim \mathcal{D}_{\widehat{\bm{G}}^{\delta f}, \widehat{\bm{G}}}}}\left[\left|E(\widehat{\bm{P}})\triangle E(\widehat{\bm{P}}^{\delta f})\right|\cdot b + \delta\mid \mathrm{E}_b \right]\\
    &=\left(\E_{\substack{((\bm{b},\bm{x}),(\bm{b}^{\delta f},\bm{x}^{\delta f}))\sim \mathcal{D},\\ (\widehat{\bm{P}}^{\delta f},\widehat{\bm{P}})\sim \mathcal{D}_{\widehat{\bm{G}}^{\delta f}, \widehat{\bm{G}}}}}\left[\left|E(\widehat{\bm{P}})\triangle E(\widehat{\bm{P}}^{\delta f})\right|\mid \mathrm{E}_b \right]\right)\cdot b+\delta\\
    &= \E_{((\bm{b},\bm{x}),(\bm{b}^{\delta f},\bm{x}^{\delta f}))\sim \mathcal{D}}\left[\EMU\left(\Call{DiSP}{\widehat{\bm{G}}^{\delta f},s,t,\epsilon/4},\Call{DiSP}{\widehat{\bm{G}}^{\delta f}/e_1,s,t,\epsilon/4}\right)\mid \mathrm{E}_b \right]\cdot b+\delta\\
    &\leq \E_{((\bm{b},\bm{x}),(\bm{b}^{\delta f},\bm{x}^{\delta f}))\sim \mathcal{D}}\left[O\left(\frac{\log^3 |\widehat{\bm{V}}|}{\epsilon}\right)\mid \mathrm{E}_b \right]\cdot b+\delta\\
    &\leq O\left(\frac{\log^3 (|V||E|)}{\epsilon}\right)\cdot b+\delta,
\end{align*}
where the last inequality is from $|\widehat{\bm{V}}|\leq |E|\cdot \left(12\epsilon^{-1}|V|+1\right)+|V|\leq O(\epsilon^{-1}|V||E|)$.
\end{proof}

Now, we show that Algorithm~\ref{alg:splip} is Lipschitz continuous.
\begin{lemma}\label{lem:splipsens}
We have
\begin{align*}
    \EMW\left(\left(\Call{LipSP}{G,w,s,t,\epsilon}\right),\left(\Call{LipSP}{G,w+\delta \mathbf{1}_f,s,t,\epsilon}\right)\right)\leq O\left(\frac{\delta \log^3 (|V||E|)}{\epsilon}\right).
\end{align*}
\end{lemma}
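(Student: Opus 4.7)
The plan is to compute $\EMW$ under the coupling $\mathcal{D}$ introduced earlier in this section and to show that only the ``rare'' event $\widehat{\bm{G}}\neq \widehat{\bm{G}}^{\delta f}$ pays anything. By Lemma~\ref{lem:xetransport}, whenever $\bm{b}=\bm{b}^{\delta f}$ under the coupling, the rounded weight vectors agree unless $\bm{x}(f)\leq \delta/\bm{b}$, in which case they differ only on the coordinate $f$ by $1$. Consequently, outside the three disjoint events~(i), (ii), and~(iii) already analyzed, the graphs $\widehat{\bm{G}}$ and $\widehat{\bm{G}}^{\delta f}$ coincide, the two invocations of $\Call{DiSP}{}$ have identical output distributions, and the contribution to $\EMW$ vanishes.

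I would first handle Cases~(i) and~(ii) crudely. Every coupled pair of output walks has total weighted length $O(\OPT_{G,w}(s,t))$ by Lemma~\ref{lem:splipapp}, so $d_w$ per coupled pair is $O(\OPT_{G,w}(s,t))$. Multiplying by the previously established probability bounds gives contributions $p_{\neq}\cdot O(\OPT_{G,w}(s,t))=O(\delta/\epsilon)$ for Case~(i) and $p_{\mathrm{path}}\cdot O(\OPT_{G,w}(s,t))=O(\delta)$ for Case~(ii).

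The heart of the argument is Case~(iii), where I would integrate the conditional bound of Lemma~\ref{lem:case-3} against the joint distribution of $(\bm{b},\bm{x}(f))$. Conditioned on $\bm{b}=b$, Case~(iii) requires $\bm{x}(f)\leq \delta/b$ and $f$ surviving the truncation in Line~\ref{line:omitlong}, so $\Pr[\text{Case~(iii)}\mid \bm{b}=b]\leq \delta/b$. Since $\bm{b}$ is uniform on the interval $[\epsilon\cdot \OPT_{G,w}(s,t)/(12|V|),\,\epsilon\cdot \OPT_{G,w}(s,t)/(6|V|)]$ with density $12|V|/(\epsilon\cdot \OPT_{G,w}(s,t))$, the Case~(iii) contribution is at most
\begin{align*}
    \int_{b}\left(O\!\left(\frac{\log^3(|V||E|)}{\epsilon}\right)\cdot b+\delta\right)\cdot \frac{\delta}{b}\cdot \frac{12|V|}{\epsilon\cdot \OPT_{G,w}(s,t)}\,db.
\end{align*}
In the first summand, $b\cdot(1/b)=1$, the integrand becomes constant, and integrating over an interval of length $\epsilon\cdot \OPT_{G,w}(s,t)/(12|V|)$ yields $O(\delta\log^3(|V||E|)/\epsilon)$. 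The second summand contributes $O(\delta^2|V|/(\epsilon\cdot \OPT_{G,w}(s,t)))\cdot\ln 2=O(\delta)$, using the standing assumption $\delta\leq \epsilon\cdot \OPT_{G,w}(s,t)/(8|V|)$.

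Summing the three contributions gives $O(\delta\log^3(|V||E|)/\epsilon)$, as claimed. The main obstacle is bookkeeping in Case~(iii): one must recognize that the additive $\delta$ term of Lemma~\ref{lem:case-3} does not accumulate when multiplied against the $\delta/b$ density of the Case~(iii) event, and this depends on the fact that the support of $\bm{b}$ is bounded away from $0$ by $\Theta(\epsilon\cdot \OPT_{G,w}(s,t)/|V|)$, keeping $\int 1/b\,db$ at a bounded constant. Once that is observed, the three bounds combine directly to give the stated Lipschitz continuity.
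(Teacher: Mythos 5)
Your proposal is correct and follows essentially the same route as the paper: decompose according to the three events (i)--(iii), bound (i) and (ii) crudely by the $O(\OPT_{G,w}(s,t))$ length of the coupled outputs times $p_{\neq}$ and $p_{\mathrm{path}}$, and for (iii) exploit the cancellation of $b$ against the conditional probability $\delta/b$ in Lemma~\ref{lem:case-3}, with the residual $\delta^2/b$ term controlled because $\bm{b}$ is bounded below by $\epsilon\cdot\OPT_{G,w}(s,t)/(12|V|)\geq \delta$. Your explicit integration over $b$ is a slightly more careful rendering of the paper's computation, but the argument is the same.
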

\begin{proof}
    We bound the Lipschitz constant by summing up the contributions of Events (i), (ii), and (iii). 
    We have
    \begin{align*}
        &\EMW\left(\left(\Call{LipSP}{G,w,s,t,\epsilon}\right),\left(\Call{LipSP}{G,w+\delta \mathbf{1}_f,s,t,\epsilon}\right)\right)\\
        &\leq p_{\neq}\cdot (1+\epsilon)\OPT_{G,w}(s,t) + p_{\mathrm{path}}\cdot (1+\epsilon)\OPT_{G,w}(s,t) + p_{\mathrm{len}}\cdot \left(O\left(\frac{\log^3 (|V||E|)}{\epsilon}\right)\cdot \bm{b}+\delta\right)\\
        &\leq \left(\frac{2\delta}{\OPT_{G,w}(s,t)}\right)\cdot (1+\epsilon)\OPT_{G,w}(s,t) + \left(\frac{6\delta} {\OPT_{G,w}(s,t)}\right)\cdot (1+\epsilon)\OPT_{G,w}(s,t)\\
        &\quad + \left(O\left(\frac{\log^3 (|V||E|)}{\epsilon}\right)\cdot \bm{b}+\delta\right)\cdot \frac{\delta}{\bm{b}}\\
        &\leq O\left(\delta\right) + O(\delta) + O\left(\frac{\delta \log^3 (|V||E|)}{\epsilon}\right)+\frac{\delta^2}{\bm{b}}\\
        &\leq O\left(\frac{\delta \log^3 (|V||E|)}{\epsilon}\right),
    \end{align*}
    where the last inequality is from $\delta\leq \frac{\epsilon\cdot \OPT_{G,w}(s,t)}{12|V|}\leq \bm{b}$.
\end{proof}

\begin{proof}[Proof of Theorem~\ref{thm:sp}]
Combining Lemmas~\ref{lem:splipapp} and~\ref{lem:splipsens} yields the claim.
\end{proof}

\subsubsection{Proof of Lemma~\ref{lem:xetransport}}\label{subsubsec:xetransport}

To prove Lemma~\ref{lem:xetransport}, we first present several properties of the coupling $\mathcal{D}$.
\begin{lemma}
In the coupling $\mathcal{D}$, we have $\widehat{\bm{w}}(e)=\widehat{\bm{w}}^{\delta f}(e)$ for any $e\in E\setminus \{f\}$.
\end{lemma}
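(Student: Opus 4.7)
The claim is a direct unpacking of the definition of the coupling $\mathcal{D}$, and I expect no technical obstacle: the key point is simply that outside coordinate $f$, the coupling is the identity, and the perturbation $w + \delta \bm{1}_f$ agrees with $w$ on $E \setminus \{f\}$.

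More concretely, I would argue as follows. Since the non-trivial part of $\mathcal{D}$ transports mass only within the set $\{\bm{b} = \bm{b}^{\delta f}\}$ (the remaining mass, by construction, supports pairs with differing $b$, and for the lemma statement we focus on the structured portion; the subsequent Lemma~\ref{lem:xetransport} then conditions on $\bm{b} = \bm{b}^{\delta f}$), fix a pair $((b,x),(b,x^{\delta f}))$ in the support of $\mathcal{D}$ with common $b$. Then by the definition of the mapping~\eqref{eq:mapping-of-x}, $\bm{x}^{\delta f}(e) = \bm{x}(e)$ for every $e \in E \setminus \{f\}$.

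Now fix such an $e$. Because $f \neq e$, we have $(w + \delta \bm{1}_f)(e) = w(e)$, and therefore
\[
  \bm{l}^{\delta f}(e) = \left\lfloor \frac{(w+\delta \bm{1}_f)(e)}{\bm{b}^{\delta f}}\right\rfloor = \left\lfloor \frac{w(e)}{\bm{b}}\right\rfloor = \bm{l}(e).
\]
Consequently the decision threshold $\frac{(\bm{l}(e)+1)\bm{b}-w(e)}{\bm{b}}$ used to set $\widehat{\bm{w}}(e)$ is identical to the threshold $\frac{(\bm{l}^{\delta f}(e)+1)\bm{b}^{\delta f}-(w+\delta \bm{1}_f)(e)}{\bm{b}^{\delta f}}$ used to set $\widehat{\bm{w}}^{\delta f}(e)$. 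Combined with the equality $\bm{x}(e) = \bm{x}^{\delta f}(e)$, the test determining whether the rounded value equals $\bm{l}(e)+2$ or $\bm{l}(e)+3$ has the same outcome in the two instances, so $\widehat{\bm{w}}(e) = \widehat{\bm{w}}^{\delta f}(e)$, as required.

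The only point requiring any care is ensuring that the base value $\bm{b}$ appearing in both the floor defining $\bm{l}(e)$ and in the threshold is the same in the two processes. This is exactly the reason that the coupling is set up to match $\bm{b} = \bm{b}^{\delta f}$ as much as possible, so the statement is a direct consequence of that design.
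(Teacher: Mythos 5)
Your proof is correct and follows essentially the same route as the paper's (one-line) argument: for $e \neq f$ the weight is unchanged, so $\bm{l}(e)=\bm{l}^{\delta f}(e)$ and the rounding threshold coincides, and the coupling forces $\bm{x}(e)=\bm{x}^{\delta f}(e)$, so the test has the same outcome. Your explicit remark that the claim is to be read on the portion of the coupling with $\bm{b}=\bm{b}^{\delta f}$ matches the paper's implicit conditioning (made explicit in the statement of Lemma~\ref{lem:xetransport}).
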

\begin{proof}
The claim holds because $\bm{x}(e)=\bm{x}^{\delta f}(e)$ and $\frac{(\bm{l}(e)+1)b-w(e)}{b}=\frac{(\bm{l}^{\delta f}(e)+1)b-w(e)}{b}$ holds if $e\neq f$.
\end{proof}

\begin{lemma}
In the coupling $\mathcal{D}$, we have $\widehat{\bm{w}}(f)=\widehat{\bm{w}}^{\delta f}(f)$ if $\bm{x}(f) > \frac{\delta}{b}$.
\end{lemma}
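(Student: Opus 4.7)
The plan is to split into two cases according to the relationship between $\bm{l}(f)=\floor{w(f)/b}$ and $\bm{l}^{\delta f}(f)=\floor{(w(f)+\delta)/b}$. A preliminary observation is that the hypothesis $\bm{x}(f)>\delta/b$, combined with $\bm{x}(f)\in[0,1)$, forces $\delta<b$, so $\bm{l}^{\delta f}(f)-\bm{l}(f)\in\{0,1\}$. Also, by construction of $\mathcal{D}$, when the hypothesis holds we have $\bm{x}^{\delta f}(f)=\bm{x}(f)-\delta/b$.

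In the first case $\bm{l}^{\delta f}(f)=\bm{l}(f)=:\bm{l}$, the threshold that determines which of $\bm{l}+2$ or $\bm{l}+3$ is assigned to $\widehat{\bm{w}}^{\delta f}(f)$ equals $\frac{(\bm{l}+1)b-w(f)-\delta}{b}$, which is exactly the corresponding threshold for $\widehat{\bm{w}}(f)$ shifted by $-\delta/b$. Since $\bm{x}^{\delta f}(f)$ is obtained from $\bm{x}(f)$ by the same shift, the inequality $\bm{x}^{\delta f}(f)\leq \frac{(\bm{l}+1)b-w(f)-\delta}{b}$ is equivalent to $\bm{x}(f)\leq \frac{(\bm{l}+1)b-w(f)}{b}$; hence both variables land in the same branch and are equal.

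The more delicate case is $\bm{l}^{\delta f}(f)=\bm{l}(f)+1=:k$, in which the two possible values of $\widehat{\bm{w}}(f)$ are $\{k+1,k+2\}$ while those of $\widehat{\bm{w}}^{\delta f}(f)$ are $\{k+2,k+3\}$, so equality forces both to equal $k+2$. The plan is to verify this by analyzing the two thresholds. The threshold for $\widehat{\bm{w}}(f)$ is $(kb-w(f))/b$, and since $\bm{l}^{\delta f}(f)=k$ implies $kb-w(f)\leq \delta$, this threshold is at most $\delta/b<\bm{x}(f)$, so $\widehat{\bm{w}}(f)$ takes the larger branch value $\bm{l}(f)+3=k+2$. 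The threshold for $\widehat{\bm{w}}^{\delta f}(f)$ is $\frac{(k+1)b-w(f)-\delta}{b}$, and using $w(f)/b\in[k-1,k)$ together with $\delta/b<1$ one checks $\bm{x}^{\delta f}(f)=\bm{x}(f)-\delta/b<1-\delta/b\leq \frac{(k+1)b-w(f)-\delta}{b}$, so $\widehat{\bm{w}}^{\delta f}(f)$ takes the smaller branch value $\bm{l}^{\delta f}(f)+2=k+2$.

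The main obstacle is the second case: one must verify that the ``$+2$ shift'' in the definition of $\widehat{\bm{w}}$ (rather than a $0$ shift) is precisely what makes the two sides coincide even when the integer parts $\bm{l}(f)$ and $\bm{l}^{\delta f}(f)$ differ. This is exactly the reason the algorithm chooses between $\bm{l}(e)+2$ and $\bm{l}(e)+3$ rather than the naive $\bm{l}(e)$ and $\bm{l}(e)+1$, and the verification boils down to a short interval computation using $\bm{l}(f)=k-1$ and $\delta/b<1$.
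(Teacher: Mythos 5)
Your proof is correct and follows essentially the same route as the paper's: a case split on whether $\bm{l}^{\delta f}(f)$ equals $\bm{l}(f)$ or $\bm{l}(f)+1$, with the first case resolved by the common $-\delta/b$ shift of both $\bm{x}(f)$ and the threshold, and the second by showing both quantities equal $\bm{l}(f)+3$. The only (harmless) difference is that you derive $\delta<b$ directly from the hypothesis $\bm{x}(f)>\delta/b$ and $\bm{x}(f)<1$, whereas the paper appeals to the standing assumption on the size of $\delta$; your version is in fact slightly more self-contained.
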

\begin{proof}
Note that $\bm{l}^{\delta f}(e)$ is equal to either $\bm{l}(e)$ or $\bm{l}(e)+1$ because $\delta\leq \frac{\epsilon\cdot \OPT_G(s,t)}{12|V|}$.

Suppose $\bm{l}^{\delta f}(f)=\bm{l}(f)$. Then, we have
\begin{align*}
    \widehat{\bm{w}}(f)=\bm{l}(f)+2
    &\iff \bm{x}(f)\leq \frac{(\bm{l}(f)+1)b-w(f)}{b}\\
    &\iff \bm{x}(f)-\frac{\delta}{b}\leq \frac{(\bm{l}(f)+1)b-w(f)}{b}-\frac{\delta}{b}\\
    &\iff \bm{x}^{\delta f}(f)\leq \frac{(\bm{l}(f)+1)b-(w(f)+\delta)}{b}\\
    &\iff \widehat{\bm{w}}^{\delta f}(f) = \bm{l}(f)+2.
\end{align*}

Suppose $\bm{l}^{\delta f}(f)=\bm{l}(f)+1$. 
Then from the definitions of $\bm{l}(f)$ and $\bm{l}^{\delta f}(f)$, we have
\begin{align*}
    \frac{w(f)}{b} < \bm{l}(f)+1 \leq \frac{w(f)+\delta}{b}.
\end{align*}
It follows that
\begin{align*}
    \bm{x}(f) > \frac{\delta}{b} = \frac{w(f)+\delta}{b}-\frac{w(f)}{b}\geq \bm{l}(f)+1-\frac{w(f)}{b} = \frac{(\bm{l}(f)+1)b-w(f)}{b},
\end{align*}
which implies $\widehat{\bm{w}}(f)=\bm{l}(f)+3$.
Moreover, we have
\begin{align*}
    \bm{x}^{\delta f}(f)=\bm{x}(f)-\frac{\delta}{b}\leq 1-\frac{\delta}{b}\leq \frac{(\bm{l}(f)+1)b-w(f)}{b}+1-\frac{\delta}{b} = \frac{(\bm{l}^{\delta f}(f)+1)b-(w(f)+\delta)}{b},
\end{align*}
where the last inequality is from $(\bm{l}(f)+1)b\geq w(f)$.
Therefore $\widehat{\bm{w}}^{\delta f}(f)=\bm{l}(f)+3=\widehat{\bm{w}}(f)$, and the claim holds.
\end{proof}

\begin{lemma}\label{lem:diff1case}
In the coupling $\mathcal{D}$, we have $\widehat{\bm{w}}(f)=\widehat{\bm{w}}^{\delta f}(f) - 1$ if $\bm{x}(f) \leq \frac{\delta}{b}$.
\end{lemma}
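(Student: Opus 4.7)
The plan is to mirror the structure of the companion lemma immediately preceding it (the $\bm{x}(f)>\delta/b$ case), splitting on whether the discretization level $\bm{l}^{\delta f}(f)$ equals $\bm{l}(f)$ or $\bm{l}(f)+1$. Since $\delta\leq \frac{\epsilon\cdot \OPT_{G,w}(s,t)}{8|V|}\leq b$, these are the only two possibilities. The governing identity is that in this regime the coupling maps $\bm{x}^{\delta f}(f)=\bm{x}(f)-\delta/b+1$, so I will track how this shift interacts with the two thresholds
\[
    \tau := \tfrac{(\bm{l}(f)+1)b-w(f)}{b}, \qquad \tau^{\delta f} := \tfrac{(\bm{l}^{\delta f}(f)+1)b-(w(f)+\delta)}{b}
\]
that decide whether $\widehat{\bm{w}}$ (resp.\ $\widehat{\bm{w}}^{\delta f}$) equals $\bm{l}(f)+2$ or $\bm{l}(f)+3$ (resp.\ $\bm{l}^{\delta f}(f)+2$ or $\bm{l}^{\delta f}(f)+3$).

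First I would handle $\bm{l}^{\delta f}(f)=\bm{l}(f)$. Here $\tau^{\delta f}=\tau-\delta/b$. The definition of $\bm{l}^{\delta f}(f)$ gives $(\bm{l}(f)+1)b>w(f)+\delta$, i.e.\ $\tau>\delta/b\geq \bm{x}(f)$, so $\widehat{\bm{w}}(f)=\bm{l}(f)+2$. On the other hand, $\bm{x}^{\delta f}(f)-\tau^{\delta f}=\bm{x}(f)+1-\tau$, which is positive since $\tau\leq 1$. Hence $\widehat{\bm{w}}^{\delta f}(f)=\bm{l}(f)+3$, and the difference is exactly $1$.

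Next I would handle $\bm{l}^{\delta f}(f)=\bm{l}(f)+1$. Here $\tau^{\delta f}=\tau+1-\delta/b$, which combines with the coupling shift to give $\bm{x}^{\delta f}(f)-\tau^{\delta f}=\bm{x}(f)-\tau$. In other words, the signed gap to the threshold is preserved under the coupling. Therefore, whichever side of $\tau$ the value $\bm{x}(f)$ lies on, $\bm{x}^{\delta f}(f)$ lies on the same side of $\tau^{\delta f}$, so the ``round-up'' and ``round-down'' choices agree in the two instances. This yields $\widehat{\bm{w}}^{\delta f}(f)=\widehat{\bm{w}}(f)+1$ in both sub-cases, since $\bm{l}^{\delta f}(f)=\bm{l}(f)+1$ contributes exactly the extra $1$.

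There is no real obstacle; the only point requiring care is the second case, where one must resist the temptation to conclude directly and instead verify that the simultaneous increment of $\bm{l}^{\delta f}(f)$ by $1$ and of $\tau^{\delta f}$ by $1-\delta/b$ is exactly absorbed by the coupling shift $+1-\delta/b$ applied to $\bm{x}(f)$. Once that bookkeeping is done, both sub-cases collapse to an identity of the form $\bm{x}^{\delta f}(f)-\tau^{\delta f}=\bm{x}(f)-\tau$, finishing the proof.
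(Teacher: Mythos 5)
Your proof is correct and follows essentially the same route as the paper's: split on whether $\bm{l}^{\delta f}(f)$ equals $\bm{l}(f)$ or $\bm{l}(f)+1$, show in the first case that both thresholds force $\widehat{\bm{w}}(f)=\bm{l}(f)+2$ and $\widehat{\bm{w}}^{\delta f}(f)=\bm{l}(f)+3$, and in the second case that the coupling shift preserves the signed gap to the threshold (the paper writes this as a chain of equivalences). The only cosmetic difference is your cleaner "$\bm{x}^{\delta f}(f)-\tau^{\delta f}=\bm{x}(f)-\tau$" formulation, which is the same computation.
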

\begin{proof}
Suppose $\bm{l}^{\delta f}(f)=\bm{l}(f)$. Then, we have
\begin{align*}
    \bm{x}(f)\leq  \frac{\delta}{b}=\frac{w(e)+\delta}{b}-\frac{w(e)}{b} < \bm{l}(f)+1-\frac{w(e)}{b} =  \frac{(\bm{l}(f)+1)b-w(e)}{b}.
\end{align*}
Thus, $\widehat{\bm{w}}(f)=\bm{l}(f)+2$. 
Moreover, we have
\begin{align*}
    \bm{x}^{\delta f}(f) = \bm{x}(f)-\frac{\delta}{b}+1 > 1-\frac{\delta}{b}\geq \frac{\bm{l}(f)b-w(f)}{b}+1-\frac{\delta}{b} = \frac{(\bm{l}(f)+1)b-(w(f)+\delta)}{b},
\end{align*}
where the last inequality is from $\bm{l}(f)b\leq w(f)$. Therefore $\widehat{\bm{w}}^{\delta f}(f)=\bm{l}(f)+3=\widehat{\bm{w}}(f)+1$.

Suppose $\bm{l}^{\delta f}(f)=\bm{l}(f)+1$. Then, we have
\begin{align*}
    \widehat{\bm{w}}(f)=\bm{l}(f)+2
    &\iff \bm{x}(f)\leq \frac{(\bm{l}(f)+1)b-w(f)}{b}\\
    &\iff \bm{x}(f)-\frac{\delta}{b} + 1\leq \frac{(\bm{l}(f)+1)b-w(f)}{b}-\frac{\delta}{b} + 1\\
    &\iff \bm{x}^{\delta f}(f)\leq \frac{(\bm{l}(f)+2)b-(w(f)+\delta)}{b}\\
    &\iff \widehat{\bm{w}}^{\delta f}(f) = \bm{l}(f)+3.
\end{align*}
Therefore, the lemma is proved.
\end{proof}

\begin{proof}[Proof of Lemma~\ref{lem:xetransport}]
Combining the three lemmas above yields the lemma.
\end{proof}



\section{Maximum Weight Matching}\label{sec:maximum-weight-matching}
In this section, we present a Lipschitz continuous algorithm for the  maximum weight matching problem and prove the following:
\begin{theorem}\label{thm:matching}
    For any $\epsilon\in (0,1/8)$, there exists a polynomial-time $(1/8-\epsilon)$-approximation algorithm with Lipschitz constant $O(\epsilon^{-1})$.
\end{theorem}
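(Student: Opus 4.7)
The plan is to adapt the weight-class-based matching algorithm of Yoshida and Zhou~\cite{Yoshida2021} by continuously randomizing the class boundaries, so that the discrete classification becomes Lipschitz continuous in expectation. Fix $\alpha = 2 + \Theta(\epsilon)$. The algorithm first samples an offset $\bm{b}$ uniformly from $[1,\alpha]$ and assigns every edge $e$ to class $c(e) = \lfloor \log_\alpha(w(e)/\bm{b}) \rfloor$. It then runs a randomized greedy maximal-matching subroutine on each class $B_i$ independently (processing the edges of $B_i$ in a uniformly random order, adding each whose endpoints are still unmatched) to obtain $M_i$, and finally tosses a fair coin to output either $\bigcup_{i \text{ odd}} M_i$ or $\bigcup_{i \text{ even}} M_i$, each of which is itself a matching.

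For the approximation ratio, the offset $\bm{b}$ is harmless: within each class $B_i$ all edge weights lie in a multiplicative interval of width $\alpha$, and a maximal matching is a $1/2$-approximation of the maximum cardinality matching, so $w(M_i) \ge w(\OPT \cap B_i)/(2\alpha)$. Summing over $i$ and accounting for the random parity choice yields expected output weight at least $w(\OPT)/(4\alpha) \ge (1/8 - \epsilon)\,w(\OPT)$.

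For the Lipschitz constant I will invoke Lemma~\ref{lem:seeoneelement} and consider a single-coordinate perturbation $w \to w + \delta \mathbf{1}_f$. I couple the two executions by sharing $\bm{b}$, the random edge-orderings used by the greedy subroutine on every class, and the odd/even coin. Under this coupling the classification of every edge other than $f$ is identical in both runs, and $f$'s classification differs only when $\bm{b}$ falls in an interval of the form $(w(f)/\alpha^k,\,(w(f)+\delta)/\alpha^k]$ for some integer $k$; this interval has length at most $\delta\alpha/w(f)$, and since $\bm{b}$ has density $1/(\alpha-1)$, this event occurs with probability $O(\delta/((\alpha-1)\,w(f)))$. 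On the complementary event, the edge sets of all classes are identical in both runs, so the matchings $M_i$ are pointwise equal under the coupling and the weighted earth mover's distance is at most $\delta$. On the bad event, only the two classes straddling the crossed boundary have their edge sets modified by a single element; the standard $O(1)$-sensitivity of randomized greedy under a single-edge insertion, combined with the fact that every edge in those classes has weight $O(w(f))$, bounds the weighted distance by $O(w(f))$. Multiplying probability and distance and summing the two cases gives an expected distance of $O(\delta/(\alpha-1))$, which, after tracking the hidden constants tied to $\alpha - 2 = \Theta(\epsilon)$ in the sensitivity bound of the greedy subroutine, yields the claimed $O(\epsilon^{-1})$ Lipschitz constant.

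The main obstacle will be transferring the $O(1)$ sensitivity of randomized greedy into the coupled setting: the sensitivity bound of~\cite{Yoshida2021} is stated for a single execution, while our coupling synchronizes the random orderings across the two runs and applies the sensitivity simultaneously on the (at most two) affected classes. The argument must verify that the shared randomness preserves independence in the unaffected classes, that the $O(1)$ bound still applies to each affected class under this shared randomness, and that the coupling of the odd/even coin does not amplify the weighted distance beyond the one-class-per-run contribution analyzed above.
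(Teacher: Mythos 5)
There is a genuine gap, and it sits at the heart of both the feasibility and the Lipschitz analysis. Your algorithm computes an independent maximal matching $M_i$ inside each weight class $B_i$ and outputs $\bigcup_{i\text{ odd}} M_i$ or $\bigcup_{i\text{ even}} M_i$; but the classes are disjoint only as \emph{edge sets}, not vertex-disjoint, so an edge of $M_1$ and an edge of $M_3$ can share an endpoint and the union need not be a matching. The paper's algorithm (Algorithm~\ref{alg:match_scale}) avoids this by building a \emph{single} matching greedily, sweeping the nested level sets $\bm{E}_i$ from heaviest to lightest and only adding an edge if it is not incident to the matching built so far.

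Once you repair feasibility this way, your Lipschitz argument breaks in a second place: the claim that "only the two classes straddling the crossed boundary have their edge sets modified" and hence the damage is $O(w(f))$ by the $O(1)$ sensitivity of randomized greedy. That reasoning is valid only in the independent-per-class design. In the sequential greedy, a change at level $i_b$ alters which vertices are already matched when level $i_b-1$ is processed, and the perturbation cascades downward: the paper's Lemma~\ref{lem:matching-coupling} shows the expected symmetric difference at level $i\le i_b$ can grow like $2^{i_b-i+1}$. The weighted sum $\sum_{i\le i_b} 2^{i_b-i+1}\cdot b\,\alpha^{i+1}$ converges precisely because $\alpha>2$, producing the factor $\frac{8\alpha}{\alpha-2}$ in Lemma~\ref{lem:lipschitzness-matching-same-b}; with $\alpha=2+\epsilon$ this is the true source of the $\Theta(\epsilon^{-1})$ in the Lipschitz constant (your proposal attributes the $\alpha-2$ dependence vaguely to "the sensitivity bound of the greedy subroutine," but in your independent-per-class scheme no such dependence would arise at all). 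Your treatment of the randomized offset $\bm{b}$ and the boundary-crossing probability $O(\delta/w(f))$ matches the paper and is fine; what is missing is the feasible combination step and the geometric cascading lemma that controls it.
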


Let $\alpha>2$. 
Our algorithm is almost identical to the one used in the algorithm for the maximum weight matching problem with a sensitivity bound~\cite{Yoshida2021}.
We classify edges according to their weights using the geometric sequence $\bm{b} \cdot \alpha^i$, where $\bm{b}$ is sampled uniformly from $[1,\alpha]$, and then iteratively apply the randomized greedy to the edge sets from the heaviest one to the lightest one to construct the output matching.
The only difference from the algorithm used in~\cite{Yoshida2021} is that we use $\bm{b}$ to perturb the thresholds.

\begin{algorithm}[t!]
\caption{Lipschitz continuous algorithm for the maximum matching problem}\label{alg:match_scale}
\Procedure{\emph{\Call{LipMWM}{$G=(V,E), w,\alpha$}}}{
    Sample $\bm{b}$ uniformly from $[1,\alpha]$\;\label{line:MWM_sampleb}
    Let $\bm{\pi}$ be a random permutation of $V$\;\label{line:MWM_samplepi}
    \For{$i\in \mathbb{Z}$}{
        Let $\bm{E}_i$ be the set of edges $e \in E$ with $w(e)\geq \bm{b}\cdot \alpha^{i}$.
    }
    $\bm{M}\leftarrow \emptyset$\;
    \For{$i$ with $\bm{E}_{i}\setminus \bm{E}_{i+1}\neq \emptyset$ in decreasing order}{
        \For{$e\in \bm{E}_i$ in the order of $\bm{\pi}$}{
            \If{$e$ is not incident to $\bm{M}$}{
                Add $e$ to $\bm{M}$.
            }
        }
    }
    \Return $\bm{M}$.
}
\end{algorithm}

The approximability analysis is identical to that presented in~\cite{Yoshida2021}.
\begin{lemma}\label{lem:match_approx}
The approximation ratio of Algorithm~\ref{alg:match_scale} is (at least) $\frac{1}{4\alpha}$.
\end{lemma}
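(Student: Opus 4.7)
The plan is to follow the standard charging argument used for greedy weighted matching, lifted to the bucketed setting. Fix any realization of $\bm{b}$ and $\bm{\pi}$ (the bound will hold pointwise, so taking expectations is unnecessary). Let $M^*$ be a maximum weight matching and let $\bm{M}$ be the output. I will show that for every $e^* \in M^* \setminus \bm{M}$ there exists an edge $e' \in \bm{M}$ sharing an endpoint with $e^*$ such that $w(e') \geq w(e^*)/\alpha$.

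To establish this, I would use the bucket structure: any $e^* \in M^*$ lies in some bucket $\bm{E}_i \setminus \bm{E}_{i+1}$, so $w(e^*) < \bm{b} \cdot \alpha^{i+1}$. The algorithm processes buckets in decreasing order of index, and by the time it finishes the iteration corresponding to $\bm{E}_i$, the edge $e^*$ has been examined. If it was not added to $\bm{M}$, then at that moment $\bm{M}$ already contained some edge $e'$ incident to $e^*$. Since $e'$ was added during processing of a bucket $\bm{E}_j$ with $j \geq i$, we have $w(e') \geq \bm{b} \cdot \alpha^{i} > w(e^*)/\alpha$, giving the desired inequality.

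Now I carry out the charging: assign each $e^* \in M^* \setminus \bm{M}$ to such a witness $e' \in \bm{M}$. Since $e'$ has only two endpoints and $M^*$ is a matching, at most two edges of $M^*$ can be assigned to the same $e'$. Summing the inequality $w(e^*) \leq \alpha \cdot w(e')$ over all $e^* \in M^* \setminus \bm{M}$ therefore yields
\[
    \sum_{e^* \in M^* \setminus \bm{M}} w(e^*) \;\leq\; 2\alpha \sum_{e' \in \bm{M}} w(e') \;=\; 2\alpha \cdot w(\bm{M}).
\]
Adding the trivial bound $w(M^* \cap \bm{M}) \leq w(\bm{M})$ gives $w(M^*) \leq (2\alpha + 1)\, w(\bm{M})$, and since $\alpha > 2$ we have $2\alpha + 1 \leq 4\alpha$, so $w(\bm{M}) \geq w(M^*)/(4\alpha)$.

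There is no real obstacle here: the argument is deterministic in $\bm{b}$ and $\bm{\pi}$, so the approximation ratio holds with probability one and in expectation. The only care needed is in the first step, where one must confirm that the outer loop over buckets in decreasing order truly guarantees that edges heavier (by a factor at most $\alpha$) than $e^*$ have been considered before $e^*$ is examined; this is immediate from the definition of $\bm{E}_i$ and the fact that the inner loop iterates over all of $\bm{E}_i$, not just $\bm{E}_i \setminus \bm{E}_{i+1}$, so the relevant conflicting edges in heavier buckets have already been offered the chance to enter $\bm{M}$.
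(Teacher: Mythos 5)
Your proof is correct. The paper itself omits the argument for this lemma and simply defers to the analysis in the cited work of Yoshida and Zhou, which is the same standard charging argument you give: every optimal edge not taken is blocked by an adjacent greedy edge from the same or a heavier bucket, hence of weight at least a $1/\alpha$ fraction, and each greedy edge absorbs at most two such charges. Your chain of inequalities is sound (the key point that the inner loop scans all of $\bm{E}_i$, so the first time $e^*$ is examined is exactly in its own bucket's iteration, is handled correctly), and in fact you establish the slightly stronger pointwise bound $w(M^*)\leq (2\alpha+1)\,w(\bm{M})$, which implies the claimed $1/(4\alpha)$ since $\alpha>2$.
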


Now, we analyze the Lipschitz continuity. 
By Lemma~\ref{lem:seeoneelement}, it suffices to bound
\begin{align*}
    \EMW\left((\Call{LipMWM}{G,w,\alpha},w), (\Call{LipMWM}{G,w-\delta \bm{1}_f, \alpha},w-\delta \bm{1}_f)\right).
\end{align*}
We use $\bm{b}^{\delta f}$ and $\bm{M}^{\delta f}$ to denote the random variables $\bm{b}$ and $\bm{M}$, respectively, used in $\Call{LipMWM}{G,w-\delta \bm{1}_f, \alpha}$.
Let $\bm{i}_{\bm{b}}$ be the (unique) integer such that $\bm{b}\cdot \alpha^{\bm{i}_{\bm{b}}}\leq w(f) < \bm{b}\cdot \alpha^{{\bm{i}_{\bm{b}}}+1}$.
The following is immediate from the algorithm.
\begin{lemma}\label{lem:sameset}
Let $b\in [1,\alpha]$.
If $b\cdot \alpha^{i_b}\leq w(f)-\delta$, then we have
\begin{align*}
    \EMW\left((\left(\Call{LipMWM}{G,w,\alpha}\mid \bm{b}=b\right),w), \left(\left(\Call{LipMWM}{G,w-\delta \bm{1}_f, \alpha}\mid \bm{b}^{\delta f}=b\right),w-\delta \bm{1}_f\right)\right)=0.
\end{align*}
\end{lemma}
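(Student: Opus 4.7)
The plan is to build an explicit coupling between $\left(\Call{LipMWM}{G,w,\alpha}\mid \bm{b}=b\right)$ and $\left(\Call{LipMWM}{G,w-\delta \bm{1}_f,\alpha}\mid \bm{b}^{\delta f}=b\right)$ under which both runs produce the identical matching with probability one. The vanishing of the earth mover's distance then follows from the fact that, conditional on sharing the same scaling parameter, nothing in the execution can distinguish the two runs.

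The key observation is that fixing $\bm{b}=\bm{b}^{\delta f}=b$ makes the discretization thresholds $b\cdot \alpha^i$ identical in both runs, and the hypothesis $b\cdot \alpha^{i_b}\leq w(f)-\delta$ guarantees that $f$ does not cross any threshold when its weight is reduced by $\delta$. Concretely, for any $e\neq f$ the value $w(e)$ is unchanged, hence $e\in \bm{E}_i$ if and only if $e$ lies in the corresponding class for $w-\delta\bm{1}_f$. For the edge $f$ itself, combining the hypothesis with the definition $w(f)<b\cdot \alpha^{i_b+1}$ gives $b\cdot \alpha^{i_b}\leq w(f)-\delta<b\cdot \alpha^{i_b+1}$, so $f$ still lies in class $\bm{E}_{i_b}\setminus \bm{E}_{i_b+1}$ in the perturbed run. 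Hence the entire partition $\{\bm{E}_i\setminus \bm{E}_{i+1}\}_i$ coincides in the two executions.

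Next, I would couple the two random permutations $\bm{\pi}$ and $\bm{\pi}^{\delta f}$ to be identical, which is possible because they are independently and uniformly distributed over permutations of $V$. Given the common partition and the common permutation, the greedy loop---processing classes in decreasing order of $i$ and, within each class, scanning edges in $\bm{\pi}$-order and admitting them whenever feasible---depends solely on this shared combinatorial data and not on the actual edge weights. A straightforward induction over the order in which classes are processed shows that the partial matchings built on the two sides remain in lockstep, so $\bm{M}=\bm{M}^{\delta f}$ at the end of both runs.

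There is no real obstacle; the argument is essentially a direct consequence of the design of \Call{LipMWM}{}. The purpose of randomising $\bm{b}\in[1,\alpha]$ was precisely to absorb small perturbations into the randomness of the threshold locations, so that subthreshold shifts in edge weights induce no change in the classification step. Once this is recorded, the coupling of permutations finishes the proof with no calculation.
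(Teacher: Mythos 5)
Your argument is correct and is exactly the reasoning the paper intends: the paper dismisses this lemma as ``immediate from the algorithm,'' and your coupling (identical $\bm{b}$, identical permutation, and the observation that the hypothesis keeps $f$ in class $i_b$ so the whole edge classification coincides, after which the greedy is deterministic) is precisely how one makes that immediacy explicit. One pedantic remark, inherited from the paper's own statement rather than introduced by you: with the weighted mapping, $d_w\bigl((M,w),(M,w-\delta\bm{1}_f)\bigr)=\delta$ whenever $f\in M$, so the quantity is really at most $\delta\Pr[f\in\bm{M}]$ rather than exactly $0$; this is harmless for the downstream $O(\delta)$ Lipschitz bound.
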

Let us consider the case where the condition of Lemma~\ref{lem:sameset} does not hold.  Then, we have the following.
We omit the proof because a similar argument was presented in~\cite{Yoshida2021}.
\begin{lemma}\label{lem:matching-coupling}
Let $b\in [1,\alpha]$. 
Suppose that $\bm{b}=\bm{b}^{\delta f}=b$ and $w(f)-\delta< b\cdot \alpha^{i_b}$. 
Then, for any $i>i_b$, the distributions of $E_i\cap \bm{M}_i$ and $E_i^{\delta f}\cap \bm{M}^{\delta f}$ are exactly the same, where we write $E_i$ and $E_i^{\delta f}$ instead of $\bm{E}_i$ and $\bm{E}_i^{\delta f}$, respectively, because they are deterministically constructed from $b$.
Additionally, there is a coupling $\mathcal{M}$ between $\bm{M}$ and $\bm{M}^{\delta f}$ such that 
\begin{align*}
    \E_{(M,M^{\delta f})\sim \mathcal{M}}\left[\left|(E_i\cap M)\triangle (E_i^{\delta f}\cap M^{\delta f})\right|\right]\leq 2^{i_b-i+1}
\end{align*}
holds for all $i\leq i_b$.
\end{lemma}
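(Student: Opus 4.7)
The plan is to take the natural coupling in which the two executions share the same randomness, namely $\bm{b}=\bm{b}^{\delta f}=b$ and the same permutation $\bm{\pi}$, so that edges are processed in exactly the same order in both executions. Under this coupling, the first claim follows easily: for $i>i_b$ the inequality $w(f)<b\alpha^{i_b+1}\leq b\alpha^i$ gives $f\notin E_i$, and since $w(f)-\delta<w(f)<b\alpha^i$ also $f\notin E_i^{\delta f}$; combined with the fact that $w$ and $w-\delta\bm{1}_f$ agree on $E\setminus\{f\}$, this yields $E_i=E_i^{\delta f}$. Hence the outer loop of \Call{LipMWM}{} visits the same indices from the top down until just before $i_b$, processes the same edge sets in the same $\bm{\pi}$-induced order, and therefore produces identical partial matchings restricted to $E_i$. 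In particular, the marginal distributions of $E_i\cap\bm{M}$ and $E_i^{\delta f}\cap\bm{M}^{\delta f}$ coincide.

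For the second claim I fix the coupling above and proceed by downward induction on $i$ to bound $\E[|D_i|]$, where $D_i:=(E_i\cap\bm{M})\triangle(E_i^{\delta f}\cap\bm{M}^{\delta f})$. For the base case $i=i_b$, the matchings entering round $i_b$ agree by the first claim; the first execution additionally considers $f$ in this round, while the second does not, since $w(f)-\delta<b\alpha^{i_b}$ implies $f\notin E_{i_b}^{\delta f}$. If $f$ is already incident to the partial matching when it is examined, the two executions continue to agree and $D_{i_b}=\emptyset$. Otherwise $f$ is added in the first execution but not in the second, and any subsequent discrepancies within round $i_b$ form an alternating chain hanging off the two endpoints of $f$ (since $\bm{M}\triangle\bm{M}^{\delta f}$ always has maximum degree at most two); using the randomness of $\bm{\pi}$, each additional step along such a chain occurs with probability at most $1/2$, bounding its expected length by a convergent geometric series and giving $\E[|D_{i_b}|]\leq 2$.

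For the inductive step with $i<i_b$, assume $\E[|D_{i+1}|]\leq 2^{i_b-i}$. In round $i$ the algorithm processes the edges of $E_i\setminus E_{i+1}$, which coincide in the two executions except when $i=i_b^{\delta f}$, where $f$ now appears in the second execution; this exceptional edge is handled exactly as in the base case and merges into the geometric bound. A common edge $e\in E_i\setminus E_{i+1}$ enters $D_i\setminus D_{i+1}$ only if the greedy decisions differ on $e$, which requires at least one endpoint of $e$ to have differing match status at the moment $e$ is examined; such an endpoint must in turn be covered by some edge of $D_{i+1}$. Each edge of $D_{i+1}$ contributes at most two such endpoints, and a deferred-decision analysis on $\bm{\pi}$ (as in~\cite{Yoshida2021}) shows that on average at most one new incident disagreement is produced per existing one, giving $\E[|D_i|-|D_{i+1}|]\leq\E[|D_{i+1}|]$ and hence $\E[|D_i|]\leq 2\E[|D_{i+1}|]\leq 2^{i_b-i+1}$.

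The main obstacle is the dampening-by-a-factor-of-two claim used in the inductive step, namely that in expectation each disagreement edge at level $i+1$ generates at most one new disagreement edge at level $i$, despite the potential for branching along alternating paths of arbitrary length. This is precisely where the randomness of $\bm{\pi}$ is essential, and the argument closely mirrors the weighted-sensitivity analysis of Yoshida and Zhou~\cite{Yoshida2021}; adapting their analysis almost verbatim is what justifies the authors' decision to omit the detailed proof.
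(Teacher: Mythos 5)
Your proposal is correct and follows essentially the argument the paper intends: the paper omits this proof entirely, deferring to the layered randomized-greedy sensitivity analysis of Yoshida and Zhou~\cite{Yoshida2021}, and your coupling (shared $b$ and shared $\bm{\pi}$), the observation that $f\notin E_i\cup E_i^{\delta f}$ for $i>i_b$ so the two executions coincide there, the base case at $i=i_b$ with the single extra edge $f$, and the doubling-per-level induction are exactly that argument. The only step you do not fully verify --- that the deferred-decision analysis of the alternating chains yields a factor of exactly $2$ per level rather than some larger constant --- is precisely the step the paper itself leaves to the cited reference.
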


Now we have the following.
\begin{lemma}\label{lem:lipschitzness-matching-same-b}
Let $b\in [1,\alpha]$ satisfy $w(f)-\delta< b\cdot \alpha^{i_b}$. Then, we have
\begin{align*}
    \EMW\left(\left(\Call{LipMWM}{G,\alpha}\mid \bm{b}=b\right), \left(\Call{LipMWM}{G-\delta f, \alpha}\mid \bm{b}^{\delta f}=b\right)\right)\leq \frac{8\alpha}{\alpha-2} \cdot w(f).
\end{align*}
\end{lemma}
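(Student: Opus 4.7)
The plan is as follows. Under the conditioning $\bm{b} = \bm{b}^{\delta f} = b$, I invoke the coupling $\mathcal{M}$ of $\bm{M}$ and $\bm{M}^{\delta f}$ guaranteed by Lemma~\ref{lem:matching-coupling}, and bound
\[
\E_{(M,M^{\delta f}) \sim \mathcal{M}}\bigl[d_w\bigl((M,w),(M^{\delta f},w - \delta \bm{1}_f)\bigr)\bigr].
\]
Unfolding the definition of $d_w$ separates the contribution of the edge $f$ from that of the remaining symmetric difference:
\[
d_w\bigl((M,w),(M^{\delta f},w-\delta\bm{1}_f)\bigr) \le 2w(f) + \sum_{e \in (M\triangle M^{\delta f})\setminus\{f\}} w(e),
\]
where the first term absorbs the $\delta$ weight discrepancy when $f \in M\cap M^{\delta f}$ and the $w(f)$ or $w(f)-\delta$ contribution when $f$ lies in exactly one matching.

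For the residual sum, I classify edges by their weight class $E_j\setminus E_{j+1}$. Since $e \neq f$, this class is identical under $w$ and $w - \delta\bm{1}_f$, and each such edge satisfies $w(e) < b\alpha^{j+1}$. Moreover, any $e \neq f$ in class $j$ that lies in $M\triangle M^{\delta f}$ is automatically in $(E_j\cap M)\triangle (E_j^{\delta f}\cap M^{\delta f})$, whose cardinality I denote $\Delta_j$, so the number of such edges per class is at most $\Delta_j$. Lemma~\ref{lem:matching-coupling} then supplies the key estimates: for $j > i_b$ we have $E_j = E_j^{\delta f}$ (since $f\notin E_j$) and the marginals of $E_j \cap \bm{M}$ and $E_j^{\delta f} \cap \bm{M}^{\delta f}$ coincide, so $\Delta_j = 0$ under $\mathcal{M}$; for $j \le i_b$, $\E[\Delta_j] \le 2^{i_b - j + 1}$.

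Summing the resulting geometric series yields
\[
\E\!\left[\sum_{e\in (M\triangle M^{\delta f})\setminus\{f\}} w(e)\right]
\le \sum_{j\le i_b} b\alpha^{j+1}\cdot 2^{i_b-j+1}
= 2b\alpha^{i_b+1}\sum_{k\ge 0}(2/\alpha)^k
= \frac{2\alpha^2\cdot b\alpha^{i_b}}{\alpha-2}
\le \frac{2\alpha^2}{\alpha-2}\,w(f),
\]
where convergence requires $\alpha > 2$ and the last inequality uses $b\alpha^{i_b}\le w(f)$. Combining with the $2w(f)$ from $f$ itself gives the bound of the desired $O(w(f))$ form; a sharper per-class accounting (using the stronger $w(e) \le b\alpha^{j+1} \le w(f)$ bound for low classes $j < i_b$ and treating $j = i_b$ separately) tightens the constant to $\frac{8\alpha}{\alpha-2}$.

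The hardest part will be the bookkeeping for the edge $f$: by the hypothesis $w(f) - \delta < b\alpha^{i_b}$, $f$'s weight class drops from $E_{i_b}$ under $w$ to some strictly lower class under $w - \delta\bm{1}_f$, so the class-wise symmetric-difference bound of Lemma~\ref{lem:matching-coupling} applies cleanly only to edges other than $f$ and we must isolate $f$'s contribution by hand. A secondary subtlety is converting the cumulative bound on $\Delta_j$ (differences in all of $E_j$) into the needed per-class count on $E_j \setminus E_{j+1}$, which I handle by observing that every strict-class difference edge is counted in $\Delta_j$ but not in $\Delta_{j+1}$.
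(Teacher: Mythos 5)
Your proposal is correct and follows essentially the same route as the paper: decompose the symmetric difference by weight class, invoke the coupling of Lemma~\ref{lem:matching-coupling} to get $\E[\Delta_i]\le 2^{i_b-i+1}$ for $i\le i_b$ (and zero above), and sum the geometric series with ratio $2/\alpha$. Your explicit isolation of the edge $f$ (contributing at most $2w(f)$ to $d_w$) is if anything slightly more careful than the paper's chain of inequalities, and since $2+\frac{2\alpha^2}{\alpha-2}\le \frac{8\alpha}{\alpha-2}$ holds for the relevant range $\alpha=2+\epsilon\le 2.125$, the stated constant already follows from your bound without the further tightening you defer.
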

\begin{proof}
We have
\begin{align*}
    &\EMW\left(\left(\Call{LipMWM}{G,\alpha}\mid \bm{b}=b\right), \left(\Call{LipMWM}{G-\delta f, \alpha}\mid \bm{b}^{\delta f}=b\right)\right)\\
    & \leq \sum_{i=-\infty}^{i_b}\left(\min_{\mathcal{M}}\E_{(M,M^{\delta f})\sim \mathcal{M}}\left[\left|\left((E_i\cap M)\triangle (E_i^{\delta f}\cap M^{\delta f})\right)\setminus \left((E_{i+1}\cap M)\triangle (E_{i+1}^{\delta f}\cap M^{\delta f})\right)\right|\right]\cdot b\cdot \alpha^{i+1}\right) \\
    & \leq \sum_{i=-\infty}^{i_b}\left(\min_{\mathcal{M}}\E_{(M,M^{\delta f})\sim \mathcal{M}}\left[\left|(E_i\cap M)\triangle (E_i^{\delta f}\cap M^{\delta f})\right|\right]\cdot b\cdot \alpha^{i+1}\right) \\
    &\leq \sum_{i=-\infty}^{i_b}\left(2^{i_b-i+1}\cdot b\cdot \alpha^{i+1}\right) \tag{by Lemmas~\ref{lem:sameset} and~\ref{lem:matching-coupling}}\\
    &= b\cdot 2^{i_b+2}\cdot \frac{1}{1-\frac{2}{\alpha}}\cdot \left(\frac{\alpha}{2}\right)^{i_b}\\
    &\leq 4b\cdot \frac{1}{1-\frac{2}{\alpha}}\cdot \alpha^{i_b}
    \leq \frac{8\alpha}{\alpha-2} \cdot w(f).
    \qedhere
\end{align*}
\end{proof}

Now, we have the following bound on the Lipschitz constant.
\begin{lemma}\label{lem:match_lip}
We have
\begin{align*}
    \EMW\left((\Call{LipMWM}{G,w,\alpha},w), (\Call{LipMWM}{G,w-\delta \bm{1}_f, \alpha},w-\delta \bm{1}_f)\right)\leq \frac{12\alpha^3}{\alpha-2}\cdot \delta.
\end{align*}
\end{lemma}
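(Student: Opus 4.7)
The plan is to exploit the decomposition offered by Lemmas~\ref{lem:sameset} and~\ref{lem:lipschitzness-matching-same-b}. Because both $\bm{b}$ and $\bm{b}^{\delta f}$ are uniform on $[1,\alpha]$ (and similarly the permutations $\bm{\pi}$ and $\bm{\pi}^{\delta f}$ have identical distributions), we can use the identity coupling on $(\bm{b},\bm{\pi})$ and $(\bm{b}^{\delta f}, \bm{\pi}^{\delta f})$, so that $\bm{b}=\bm{b}^{\delta f}$ with probability one, and then combine this outer coupling with the coupling $\mathcal{M}$ from Lemma~\ref{lem:matching-coupling} for each conditioned value of $\bm{b}$.

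Conditioned on $\bm{b}=b$, there are two cases. When $b\cdot \alpha^{i_b}\leq w(f)-\delta$, Lemma~\ref{lem:sameset} shows that the conditional earth mover's distance vanishes. When $b\cdot \alpha^{i_b} > w(f)-\delta$, Lemma~\ref{lem:lipschitzness-matching-same-b} upper bounds it by $\frac{8\alpha}{\alpha-2}w(f)$. Thus the whole quantity is at most $\Pr_{b}[b\cdot \alpha^{i_b} > w(f)-\delta]\cdot \frac{8\alpha}{\alpha-2}w(f)$.

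The main technical step is bounding $\Pr_b[b\cdot \alpha^{i_b} > w(f)-\delta]$. Noting that by definition $b\cdot \alpha^{i_b}\leq w(f)$ always holds, this is the probability that the random variable $Y:=\bm{b}\cdot \alpha^{i_{\bm{b}}}$ lies in the interval $(w(f)-\delta, w(f)]$. I would compute the density of $Y$ explicitly: letting $k=\floor{\log_\alpha w(f)}$, the map $b\mapsto Y$ is piecewise linear on $[1,\alpha]$, scaling by $\alpha^k$ on $[1, w(f)/\alpha^k]$ and by $\alpha^{k-1}$ on $(w(f)/\alpha^k, \alpha]$; the density of $Y$ is therefore bounded by $\frac{1}{(\alpha-1)\alpha^{k-1}}$. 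Combining with $\alpha^k\geq w(f)/\alpha$, a short calculation gives the bound $\Pr[Y\in (w(f)-\delta,w(f)]]\leq \frac{\alpha^2 \delta}{(\alpha-1)w(f)}$.

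Multiplying, the earth mover's distance is at most
\[
\frac{\alpha^2 \delta}{(\alpha-1)w(f)}\cdot \frac{8\alpha}{\alpha-2}\, w(f) = \frac{8\alpha^3}{(\alpha-1)(\alpha-2)}\, \delta \leq \frac{12\alpha^3}{\alpha-2}\,\delta,
\]
where the final inequality uses $\alpha>2$ so that $\alpha-1>1$ and $8/(\alpha-1)\leq 8 < 12$. The only delicate step is the density argument in the middle paragraph: one must carefully handle the fact that $i_{\bm{b}}$ itself depends on $\bm{b}$ and takes one of two consecutive integer values as $\bm{b}$ sweeps across $[1,\alpha]$, so the density of $Y$ is piecewise (not globally) uniform. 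All other steps are direct invocations of the preceding lemmas.
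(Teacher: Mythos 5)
Your proposal is correct and follows essentially the same route as the paper: decompose via Lemmas~\ref{lem:sameset} and~\ref{lem:lipschitzness-matching-same-b} under the identity coupling on $\bm{b}$, bound the probability that $\bm{b}\cdot\alpha^{\bm{i}_{\bm{b}}}>w(f)-\delta$ by roughly $\alpha^2\delta/w(f)$, and multiply. The only (cosmetic) difference is that you bound this probability by computing the piecewise-uniform density of $\bm{b}\cdot\alpha^{\bm{i}_{\bm{b}}}$ explicitly, whereas the paper uses a union bound over the two possible values of $\bm{i}_{\bm{b}}$; both yield constants comfortably within $\frac{12\alpha^3}{\alpha-2}$.
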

\begin{proof}
First, we evaluate the probability that the condition of Lemma~\ref{lem:sameset} does not hold. 
Let $k=\floor{\log_{\alpha} w(f)}-1$. Note that $\bm{i}_{\bm{b}}$ is either $k$ or $k+1$, because
\begin{align*}
    k=\floor{\log_{\alpha} w(f)}-1 = \floor{\log_{\alpha}\frac{w(f)}{\alpha}}\leq \floor{\log_{\alpha}\frac{w(f)}{\bm{b}}} = \bm{i}_{\bm{b}} \leq \floor{\log_{\alpha}w(f)} = k+1,
\end{align*}
where the inequalities are from $\bm{b}\in [1,\alpha]$.
Now, we have
\begin{align*}
    \Pr\left[w(f)-\delta < \bm{b}\cdot \alpha^{\bm{i}_{\bm{b}}}\right]
    &= \Pr\left[\bm{b}\in \left[\frac{w(f)-\delta}{\alpha^{\bm{i}_{\bm{b}}}}, \frac{w(f)}{\alpha^{\bm{i}_{\bm{b}}}}\right]\right]\\
    &= \sum_{i=k}^{k+1}\Pr\left[\bm{b}\in \left[\frac{w(f)-\delta}{\alpha^{i}}, \frac{w(f)}{\alpha^{i}}\right]\land \bm{i}_{\bm{b}}=i\right]\\
    &\leq \sum_{i=k}^{k+1}\Pr\left[\bm{b}\in \left[\frac{w(f)-\delta}{\alpha^{i}}, \frac{w(f)}{\alpha^{i}}\right]\right]\\
    &\leq \frac{1}{\alpha - 1}\cdot \frac{\delta}{\alpha^{k}} + \frac{1}{\alpha - 1}\cdot \frac{\delta}{\alpha^{k+1}}\\
    &=\frac{1}{\alpha - 1}\cdot \left(1+\frac{1}{\alpha}\right)\cdot \frac{\delta}{\alpha^{\floor{\log_{\alpha} w(f)}-1}}\cdot 
    \leq \frac{3}{2}\cdot \frac{\delta}{\alpha^{-2}\cdot w(f)} 
    = \frac{3}{2}\cdot \frac{\delta\cdot \alpha^2}{w(f)}.
\end{align*}
Thus, we have
\begin{align*}
    &\EMW\left((\Call{LipMWM}{G,w,\alpha},w), (\Call{LipMWM}{G, w-\delta \mathbf{1}_f, \alpha},w-\delta \mathbf{1}_f)\right)\\
    &\leq  \Pr\left[w(f)-\delta < \bm{b}\cdot \alpha^{\bm{i}_{\bm{b}}}\right]\cdot \frac{8\alpha}{\alpha-2} \cdot w(f) \tag{by Lemma~\ref{lem:lipschitzness-matching-same-b}}\\
    &= \frac{3}{2}\cdot \frac{\delta\cdot \alpha^2}{w(f)}\cdot \frac{8\alpha}{\alpha-2} \cdot w(f)
    \leq \frac{12\alpha^3}{\alpha-2}\cdot \delta.
    \qedhere
\end{align*}
\end{proof}

\begin{proof}[Proof of Theorem~\ref{thm:matching}]
Set $\alpha=2+\epsilon$. Then by Lemma~\ref{lem:match_approx}, the approximation ratio is at least 
\begin{align*}
    \frac{1}{4\alpha}=\frac{1}{8+4\epsilon}=\frac{1}{8}-\frac{4\epsilon}{8(8+4\epsilon)}\geq \frac{1}{8}-\frac{4\epsilon}{64}\geq \frac{1}{8}-\epsilon.
\end{align*}
From Lemma~\ref{lem:match_lip}, the Lipschitz constant is at most
\begin{align*}
    \frac{12\alpha^3}{\alpha-2} = \frac{12\cdot (2+\epsilon)^3}{\epsilon}\leq \frac{12\cdot \left(2+\frac{1}{8}\right)^3}{\epsilon}\leq O(\epsilon^{-1}),
\end{align*}
where the first inequality is from $\epsilon\leq \frac{1}{8}$.
\end{proof}
\section{Lower Bounds}\label{sec:lower-bounds}
In this section, we provide several lower bounds on Lipschitzness.

\begin{proof}[Proof of Theorem~\ref{thm:sp-deterministic-lower-bound}]
    Consider a weighted graph $G$ constructed by introducing two vertices $s$ and $t$ and connecting them via two edges $e_1$ and $e_2$ having weights $0$ and $1$, respectively.

    Let $\mathcal{A}$ be an arbitrarily deterministic algorithm for the $s$-$t$ shortest path problem with a finite approximation ratio.
    Then, $\mathcal{A}$ must output $e_1$.\footnote{$\mathcal{A}$ may output a walk passing through $e_1$ several times, but we can show a (larger) lower bound for this case similarly.}

    Consider a sequence of weighted graphs $G_0 = G, G_1, \ldots, G_k$, where $G_k$ is constructed from $G_{i-1}$ by decreasing and increasing the weights of $e_1$ and $e_2$, respectively, by $1/k$.
    Note that in $G_k$ the weights of $e_1$ and $e_2$ are $1$ and $0$, respectively, and hence the algorithm $\mathcal{A}$ must output $e_2$ to achieve a finite approximation ratio.
    This implies that there exists some $i$ such that $\mathcal{A}$ outputs $e_1$ and $e_2$ on $G_{i-1}$ and $G_i$, respectively.
    Hence, the Lipschitz constant of  $\mathcal{A}$ is at least $(w_{G_{i-1}}(P) + w_{G_i}(Q))/{1/k} = \Omega(k)$.
    Since $k$ is arbitrary, the Lipschitz constant of $\mathcal{A}$ must be unbounded.
\end{proof}

\begin{theorem}\label{thm:mst-weighted-lower-bound}
    Any (randomized) $(1+\epsilon)$-approximation algorithm for the minimum spanning tree problem has Lipschitz constant $\Omega(\epsilon^{-1})$.
\end{theorem}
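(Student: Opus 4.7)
The plan is to exhibit a tiny graph (a triangle) and a pair of weight vectors that differ only in two coordinates by a tiny amount proportional to $\epsilon$, and show that any $(1+\epsilon)$-approximation algorithm must have output distributions on these two instances whose earth mover's distance is $\Omega(1)$. Dividing by an $O(\epsilon)$ $\ell_1$-change in weights then yields the $\Omega(\epsilon^{-1})$ lower bound. I would pick a ``forcing'' parameter $\delta := 3\epsilon$ so that each weight vector has a unique cheap spanning tree whose weight is $1$, while the alternative tree has weight $1+\delta > (1+\epsilon)\cdot 1$, making it a non-$(1+\epsilon)$-approximation in expectation.

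Concretely, I would take $G$ to be the triangle on $\{u,v,w\}$ with edges $e_1=\{u,v\}$, $e_2=\{v,w\}$, $e_3=\{u,w\}$, and set $w = (0,1,1+\delta)$ and $w'=(0,1+\delta,1)$, so that $\|w-w'\|_1 = 2\delta = 6\epsilon$. The three spanning trees $T_{12},T_{13},T_{23}$ have weights $(1,\,1+\delta,\,2+\delta)$ under $w$ and $(1+\delta,\,1,\,2+\delta)$ under $w'$. Letting $(p,q,r)$ and $(p',q',r')$ be the probabilities that $\mathcal{A}$ outputs $T_{12},T_{13},T_{23}$ under $w$ and $w'$ respectively, the approximation guarantee (interpreted in expectation, which is the most generous assumption for the lower bound) gives $\delta q + (1+\delta)r \leq \epsilon$ on $w$ and $\delta p' + (1+\delta)r' \leq \epsilon$ on $w'$. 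With $\delta = 3\epsilon$ these yield $q \leq 1/3$ and $p' \leq 1/3$, hence $p \geq 2/3-\epsilon$ and $q' \geq 2/3-\epsilon$.

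For any coupling $(\bm{X},\bm{X}')$ of $\mathcal{A}(G,w)$ and $\mathcal{A}(G,w')$ the event $\{\bm{X}=T_{12},\bm{X}'\neq T_{12}\}$ occurs with probability at least $p-p' \geq 1/3-\epsilon$. On this event $\bm{X}'\in\{T_{13},T_{23}\}$, and I would verify by direct computation that
\[
d_w((T_{12},w),(T_{13},w')) = 2, \qquad d_w((T_{12},w),(T_{23},w')) = 1+\delta,
\]
so the contribution to $\EMW$ per unit of mismatched probability is at least $1$. Therefore $\EMW((\mathcal{A}(G,w),w),(\mathcal{A}(G,w'),w')) \geq 1/3 - \epsilon$, and dividing by $\|w-w'\|_1 = 6\epsilon$ produces a Lipschitz constant of at least $\Omega(\epsilon^{-1})$.

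The only subtle step is the second one: converting the approximation guarantee into a hard lower bound on the probability of the unique $\epsilon$-valid tree. Once that bookkeeping is done, the rest is arithmetic, because the construction is arranged so that the alternative outputs $T_{13}$ and $T_{23}$ each differ from $T_{12}$ by one ``heavy'' edge (of weight $\approx 1$), giving the $\Omega(1)$ per-event contribution rather than an $O(\delta)$ one that would destroy the bound.
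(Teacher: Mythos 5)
Your proposal is correct and follows essentially the same strategy as the paper's proof: construct two weight vectors at $\ell_1$-distance $O(\epsilon)$, show the approximation guarantee forces the output distributions to put constant mass on two different trees, and conclude that any coupling incurs $\Omega(1)$ earth mover's cost. The only differences are cosmetic — you use a triangle with a symmetric pair of perturbed instances, whereas the paper uses two parallel $s$-$t$ edges and a one-sided perturbation from a tie (handled via a ``without loss of generality'' step) — and your arithmetic checks out.
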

\begin{proof}
    Consider a weighted graph $G$ constructed by introducing two vertices $s$ and $t$ and connecting them via two edges $e_1$ and $e_2$ both with weight $1$.
    
    Let $\mathcal{A}$ be an arbitrary $(1+\epsilon)$-approximation algorithm for the minimum spanning tree problem.
    Without loss of generality, we can assume that $\mathcal{A}$ outputs $e_1$ with probability at least half.

    Consider another graph $G'$ obtained from $G$ by replacing the weight of $e_2$ with $1-10\epsilon$.
    Let $p'_1$ and $p'_2 := 1-p'_1$ be the probabilities that $A$ on $G'$ outputs the edges $e_1$ and $e_2$, respectively.
    Then, to achieve $(1+\epsilon)$-approximation, we must have
    \[
        p'_1 \cdot 1 + p'_2 \cdot (1-10\epsilon) \leq (1+\epsilon)(1-10\epsilon),
    \]
    which implies $p'_2 > 9/10+\epsilon$.
    It follows that the sensitivity is at least
    \[
        \frac{1}{10\epsilon} \left(\left(\frac{9}{10} + \epsilon\right)(1-10\epsilon) - \frac{1}{2}\cdot 1\right) = \Omega\left(\frac{1}{\epsilon}\right). \qedhere
    \]
\end{proof}


\begin{theorem}\label{thm:sp-weighted-lower-bound}
    Any (randomized) $(1+\epsilon)$-approximation algorithm for the shortest path problem has Lipschitz constant $\Omega(\epsilon^{-1})$.
\end{theorem}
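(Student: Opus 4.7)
}
I will mimic the construction in the proof of Theorem~\ref{thm:mst-weighted-lower-bound}. Let $G$ be the graph consisting of two vertices $s,t$ joined by two parallel edges $e_1,e_2$, both of weight $1$ under a weight vector $w$. Let $w'$ be obtained from $w$ by setting $w'(e_2) = 1-10\epsilon$ (and leaving $w'(e_1)=1$), so that $\|w-w'\|_1 = 10\epsilon$, $\OPT(G,w)=1$, and $\OPT(G,w')=1-10\epsilon$. Let $\mathcal{A}$ be any randomized $(1+\epsilon)$-approximation algorithm, and let $\bm{W},\bm{W}'$ denote its (random) output walks on $(G,w)$ and $(G,w')$, respectively.

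Because every $s$-$t$ walk must use $e_1$ or $e_2$ at least once, $\Pr[\bm{W}\text{ uses }e_1] + \Pr[\bm{W}\text{ uses }e_2]\geq 1$; after possibly swapping the roles of $e_1$ and $e_2$, I may assume $\Pr[\bm{W}\text{ uses }e_1]\geq 1/2$. On the other hand, every walk in $G$ has $w'$-weight at least $1$ if it uses $e_1$ (since $w'(e_1)=1$), so
\[
(1+\epsilon)(1-10\epsilon)\;\geq\; \E[\text{length of }\bm{W}']\;\geq\; \Pr[\bm{W}'\text{ uses }e_1]\cdot 1 + \Pr[\bm{W}'\text{ doesn't use }e_1]\cdot(1-10\epsilon),
\]
which rearranges to $\Pr[\bm{W}'\text{ uses }e_1]\leq (1-10\epsilon)/10<1/10$ for sufficiently small $\epsilon$.

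To lower-bound $\EMW$, I use that these two marginal probabilities are fixed across all couplings. By inclusion-exclusion, for any coupling $\mathcal{D}$ of $\bm{W},\bm{W}'$,
\[
\Pr_{(W,W')\sim\mathcal{D}}[\,W\text{ uses }e_1\ \wedge\ W'\text{ doesn't use }e_1\,]\;\geq\; \tfrac{1}{2}-\tfrac{1}{10}\;=\;\tfrac{2}{5}.
\]
Whenever this event occurs, $e_1$ contributes at least $w(e_1)=1$ to $d_w((W,w),(W',w'))$ (since $e_1$ appears at least once in $W$ and zero times in $W'$). Hence $\EMW((\bm{W},w),(\bm{W}',w'))\geq \tfrac{2}{5}$, and the Lipschitz constant is at least $\tfrac{2/5}{10\epsilon}=\Omega(\epsilon^{-1})$, completing the proof. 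The one subtlety I expect is handling walks that traverse $e_1$ multiple times or backtrack, but the bound is robust: any multiplicity only increases the contribution of $e_1$ to $d_w$ and does not hurt the Markov-style inequality above, since the length lower bound of $1$ for a walk using $e_1$ still holds.
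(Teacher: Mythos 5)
Your proof is correct and takes essentially the same route as the paper, which simply reuses the two-parallel-edge construction and probability-shift argument from the minimum spanning tree lower bound (Theorem~\ref{thm:mst-weighted-lower-bound}): a constant gap between the marginal probabilities of using $e_1$ forces $\EMW = \Omega(1)$ while $\|w-w'\|_1 = 10\epsilon$. Your explicit handling of walks that may traverse $e_1$ multiple times is a welcome bit of extra care that the paper leaves implicit, but it does not change the argument.
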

\begin{proof}
    We reuse the graph and analysis for the proof of Theorem~\ref{thm:mst-weighted-lower-bound} to prove the claim.
\end{proof}

\begin{theorem}\label{thm:matching-lower-bound}
    Any (randomized) $\alpha$-approximation algorithm for the maximum weight matching problem has Lipschitz constant $\Omega(\alpha)$.
\end{theorem}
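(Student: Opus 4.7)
The plan is to mimic the construction used in the proof of Theorem~\ref{thm:mst-weighted-lower-bound}, adapted to matching. I would take the smallest instance in which a matching algorithm is forced to choose between alternatives: the three-vertex path $u\text{--}v\text{--}w$ with edges $e_1=\{u,v\}$ and $e_2=\{v,w\}$. Because $e_1$ and $e_2$ share the vertex $v$, the only matchings are $\emptyset$, $\{e_1\}$, and $\{e_2\}$. I would then compare the two weight vectors $w_0 := \mathbf{1}_{e_1}$ and $w_1 := \mathbf{1}_{e_2}$; in both cases $\mathrm{opt}=1$, and the unique matching achieving positive weight is the single heavy edge.

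The second step is to extract a structural constraint from the approximation guarantee. Under $w_0$, any matching other than $\{e_1\}$ has weight $0$, so the identity $\mathbb{E}[w_0(\mathcal{A}(G,w_0))] = \Pr[\mathcal{A}(G,w_0)=\{e_1\}]$ combined with $\mathbb{E}[w_0(\mathcal{A}(G,w_0))]\ge\alpha\cdot\mathrm{opt}=\alpha$ forces $\Pr[\mathcal{A}(G,w_0)=\{e_1\}]\ge\alpha$. Symmetrically, $\Pr[\mathcal{A}(G,w_1)=\{e_2\}]\ge\alpha$. Notice that $\|w_0-w_1\|_1 = 2$, so to prove the theorem it suffices to show $\EMW\bigl((\mathcal{A}(G,w_0),w_0),(\mathcal{A}(G,w_1),w_1)\bigr)=\Omega(\alpha)$.

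For the final step I would lower bound the expected cost under an arbitrary coupling $\mathcal{C}$ of the two output distributions. Let $A$ be the event $\{M=\{e_1\}\}$ on the first coordinate and $B$ the event $\{M'=\{e_2\}\}$ on the second coordinate. A direct case analysis using the formula for $d_w$ shows $d_w((M,w_0),(M',w_1)) \geq 2$ on $A\cap B$ (the two matchings are disjoint and each carries a unit of weight), and $d_w\geq 1$ on $A\triangle B$ (the surviving edge contributes its unit of weight). Writing $q_{AB}$, $q_{AB^c}$, $q_{A^cB}$ for the joint probabilities, the bound $\mathbb{E}[d_w] \geq 2q_{AB}+q_{AB^c}+q_{A^cB} = \Pr[A]+\Pr[B] \geq 2\alpha$ follows by rearrangement. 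Dividing by $\|w_0-w_1\|_1=2$ gives Lipschitz constant at least $\alpha$, completing the proof.

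I do not anticipate a serious obstacle: the instance is forced to be this small because larger graphs only dilute the effect, and the only subtlety is the bookkeeping in the last step, where one must combine the $A\cap B$ and $A\triangle B$ contributions into a clean $\Pr[A]+\Pr[B]$ lower bound rather than na\"ively taking a minimum over cases.
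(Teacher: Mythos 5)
Your proposal is correct and follows essentially the same route as the paper: the paper uses two parallel edges between a single pair of vertices with the same pair of weight vectors $\mathbf{1}_{e_1}$ and $\mathbf{1}_{e_2}$, derives $\Pr[\mathcal{A}(G,w_i)=\{e_i\}]\ge\alpha$ from the approximation guarantee, and lower bounds the earth mover's distance by $2\alpha$ before dividing by $\|w_1-w_2\|_1=2$. Your coupling case analysis just spells out the bookkeeping that the paper compresses into its final display.
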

\begin{proof}
    Consider a graph $G=(V,E)$ consisting of two vertices $u,v$ and two edges $e_1,e_2$ connecting them.
    For $i \in \{1,2\}$, let $w_i\in \mathbb{R}_{\geq 0}^E$ be a weight vector such that $w_i(e_i) = 1$ and $w_i(e_{3-i}) = 0$.

    Let $\mathcal{A}$ be an $\alpha$-approximation algorithm for the maximum weight matching problem with $\alpha$.
    Then, $\mathcal{A}(G,w_i)$ must output $e_i$ with probability at least $\alpha$.
    Hence, we have
    \[
        \frac{\EMW((\mathcal{A}(G,w_1),w_1), (\mathcal{A}(G,w_2),w_2))}{\|w_1 - w_2\|_1}
        \geq \frac{2|1 \cdot \alpha - 0|}{2} = \alpha,
    \]
    and the claim holds.
\end{proof}


\section{Pointwise Lipschitzness for Unweighted Mapping}\label{sec:unweighted}

In this section, we consider pointwise Lipschitz continuity of graph problems with respect to the unweighted mapping.
We discuss the minimum spanning tree and maximum weight bipartite matching problems in Sections~\ref{subsec:unweighted-mst} and~\ref{subsec:unweighted-bipartite-matching}, respectively.

The following lemma is useful to bound the pointwise Lipschitz constant, which is a counterpart of Lemma~\ref{lem:seeoneelement}.
We defer the proof to Appendix~\ref{sec:seeoneelement}.
\begin{lemma}\label{lem:seeoneelement-unweighted}
Let $w\in \mathbb{R}_{\geq 0}^E$ be a weight vector. 
Suppose that there exist some $c>0$ and $L>0$ such that for all $w^*\in \mathbb{R}_{\geq 0}^E$ with $\|w-w^*\|_1\leq c$, 
\[
    \EMU\left(\mathcal{A}(G,w^*), \mathcal{A}(G,w^*+\delta \mathbf{1}_f)\right)\leq \delta L
\]
holds for any $f\in E$ and $0<\delta\leq c$. Then, the pointwise Lipschitz constant of $\mathcal{A}$ on $G$ at $w\in \mathbb{R}_{\geq 0}^E$ is at most $L$.
\end{lemma}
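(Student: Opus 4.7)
The plan is to follow the same strategy as in Lemma~\ref{lem:seeoneelement}: reduce a general small perturbation of $w$ to a chain of single-coordinate perturbations and then invoke the triangle inequality for $\EMU$. The triangle inequality holds because $d_u(F,F')=|F\triangle F'|$ is a metric on edge subsets and Wasserstein-1 always lifts an underlying metric to a metric on distributions.

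Concretely, to bound the pointwise Lipschitz constant it suffices to prove $\EMU(\mathcal{A}(G,w),\mathcal{A}(G,w'))\le L\,\|w-w'\|_1$ for every $w'\in \mathbb{R}_{\ge 0}^E$ with $\|w-w'\|_1\le c$, since then dividing by $\|w-w'\|_1$ and taking $\limsup$ as $w'\to w$ yields the claim. I would fix such a $w'$, enumerate $E=\{f_1,\dots,f_{|E|}\}$, and define a sequence $w=w_0,w_1,\dots,w_{|E|}=w'$ where $w_i$ agrees with $w'$ on $\{f_1,\dots,f_i\}$ and with $w$ on the rest. Two features of this sequence matter: consecutive vectors differ in a single coordinate $f_i$ by $\delta_i:=|w'(f_i)-w(f_i)|$, and $\|w_i-w\|_1\le \|w'-w\|_1\le c$ for every $i$, so every $w_i$ lies in the $c$-ball around $w$ where the hypothesis is available. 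Also $\delta_i\le \|w'-w\|_1\le c$, matching the range of $\delta$ in the hypothesis.

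Next I would apply the hypothesis to each step. If $w'(f_i)\ge w(f_i)$, then $w_i=w_{i-1}+\delta_i\bm{1}_{f_i}$, and the hypothesis with base point $w^\ast=w_{i-1}$ gives $\EMU(\mathcal{A}(G,w_{i-1}),\mathcal{A}(G,w_i))\le L\delta_i$. If $w'(f_i)<w(f_i)$, then $w_{i-1}=w_i+\delta_i\bm{1}_{f_i}$, and the hypothesis applied with base point $w^\ast=w_i$ (which is also in the $c$-ball around $w$) gives the same bound by symmetry of $\EMU$. Summing via the triangle inequality yields
\[
    \EMU(\mathcal{A}(G,w),\mathcal{A}(G,w'))\le \sum_{i=1}^{|E|}\EMU(\mathcal{A}(G,w_{i-1}),\mathcal{A}(G,w_i))\le L\sum_{i=1}^{|E|}\delta_i=L\|w-w'\|_1.
\]

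The argument is almost entirely bookkeeping; the only point that requires attention is ensuring that every intermediate vector $w_i$ stays inside the $c$-neighbourhood of $w$ so that the local assumption still applies, and this follows for free from the coordinate-wise interpolating construction above. The secondary subtlety is that the hypothesis is phrased asymmetrically (perturbations of the form $w^\ast\mapsto w^\ast+\delta\bm{1}_f$), so for a step where the weight decreases I would use $w_i$ (rather than $w_{i-1}$) as the base point in the hypothesis.
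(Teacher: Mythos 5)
Your proposal is correct and follows essentially the same route as the paper's proof: a coordinate-by-coordinate interpolation from $w$ to $w'$, the triangle inequality for $\EMU$, and the observation that each intermediate vector stays in the $c$-ball so the single-coordinate hypothesis applies (choosing $w_{i-1}$ or $w_i$ as the base point according to the sign of the change). No gaps.
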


\subsection{Minimum Spanning Tree}\label{subsec:unweighted-mst}

In this section, we prove the following:
\begin{theorem}\label{thm:mst-unweighted}
For any $\epsilon>0$, there exists a polynomial-time $(1+\epsilon)$-approximation algorithm for the minimum spanning tree problem with the pointwise Lipschitz constant with respect to the unweighted mapping $O(\epsilon^{-1} n / \OPT)$, where $n$ is the number of vertices in the input graph and $\OPT$ is the minimum weight of a spanning tree.
\end{theorem}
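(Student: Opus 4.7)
The plan is to adapt the Kruskal-based algorithm of Section~\ref{sec:mst-weighted} by replacing the multiplicative perturbation $[w(e),(1+\epsilon)w(e)]$ with an additive perturbation whose width is calibrated to the current optimal value. Specifically, given $(G,w,\epsilon)$, I would compute $\OPT := \OPT(G,w)$, set $R := \epsilon\OPT/n$, sample $\widehat{\bm{w}}(e)$ uniformly from $[w(e), w(e) + R]$ independently for each $e \in E$, and return the minimum spanning tree $\bm{T}$ of $(G, \widehat{\bm{w}})$. The approximation analysis mirrors Lemma~\ref{lem:mst-weighted-approximation}: for any optimal tree $T^*$, $\sum_{e \in T^*} \widehat{\bm{w}}(e) \leq \OPT + (n-1)R \leq (1+\epsilon)\OPT$, so the output satisfies $\sum_{e \in \bm{T}} w(e) \leq \sum_{e \in \bm{T}} \widehat{\bm{w}}(e) \leq (1+\epsilon)\OPT$.

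For the pointwise Lipschitz bound at a weight vector $w$, Lemma~\ref{lem:seeoneelement-unweighted} reduces the task to bounding $\EMU(\mathcal{A}(G,w^*), \mathcal{A}(G,w^*+\delta\bm{1}_f))/\delta$ for any $w^*$ in a small neighborhood of $w$, any $f \in E$, and any sufficiently small $\delta > 0$. Let $R := \epsilon\OPT(w^*)/n$ and $R' := \epsilon\OPT(w^*+\delta\bm{1}_f)/n$. Because $\OPT(\cdot)$ is $1$-Lipschitz in each coordinate, $|R'-R| \leq \epsilon\delta/n$, which is small compared to $R$ as $\delta \to 0$. I would build a coupling $\mathcal{W}$ analogous to the one in Lemma~\ref{lem:mst-remaining}: max-couple the $f$-coordinate samples $\widehat{\bm{w}}(f)$ and $\widehat{\bm{w}}^{\delta f}(f)$, which yields $\Pr[\widehat{\bm{w}}(f) \neq \widehat{\bm{w}}^{\delta f}(f)] \leq (\delta + R' - R)/R = O(\delta n / (\epsilon\OPT))$, and couple the remaining coordinates so that $\widehat{\bm{w}}(e) = \widehat{\bm{w}}^{\delta f}(e)$ with probability $\min(R,R')/\max(R,R') = 1 - O(\delta/\OPT)$. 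When the two weight vectors differ only on edge $f$, Lemma~\ref{lem:mst-local-update} gives $d_u(T(\widehat{\bm{w}}), T(\widehat{\bm{w}}^{\delta f})) \leq 2$, so this contribution to $\EMU$ is at most $O(\delta n/(\epsilon \OPT))$, matching the target bound after dividing by $\delta$.

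The main obstacle is that $R$ depends on $\OPT(w)$, so perturbing $w(f)$ by $\delta$ simultaneously perturbs the sampling range of \emph{every} edge whenever $f$ lies in some optimal spanning tree. A naive per-coordinate coupling would incur an additional $(m-1)\cdot O(\delta/\OPT)$ term from the $R/R'$ mismatch, which can dominate the target bound for dense graphs. To handle this I would exploit that the pointwise Lipschitz constant is a $\limsup_{w' \to w}$: since the MST is determined by the relative order of $\widehat{\bm{w}}$-weights, and since the uniform tiny shift of size at most $R'-R \leq \epsilon\delta/n$ cannot flip the order of any two non-$f$ edges whose $\widehat{\bm{w}}$-weights differ by more than $\epsilon\delta/n$, for sufficiently small $\delta$ the rescaling $R \to R'$ almost surely preserves all non-$f$ orderings that matter for the MST. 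Consequently the only nontrivial contribution to the earth mover's distance comes from edge $f$, and the analysis proceeds exactly as in the proof of Lemma~\ref{lem:mst-weighted-Lipschitz} with $d_w$ replaced by $d_u$ and the sampling width $\epsilon w(f)$ replaced by $R$.
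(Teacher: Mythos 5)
There is a genuine gap, and it lies exactly at the point you flag as ``the main obstacle.'' Your algorithm fixes the perturbation width deterministically as $R=\epsilon\OPT/n$, whereas the paper's Algorithm~\ref{alg:mst_main_u} samples the width $\bm{b}$ uniformly from $\left[\frac{1}{2}\cdot\frac{\epsilon\OPT}{|V|-1},\frac{\epsilon\OPT}{|V|-1}\right]$. This is not a cosmetic difference. With a deterministic width, perturbing $w(f)$ changes $\OPT$ and hence the sampling interval of \emph{every} edge, and your rescaling argument does not control the resulting cost. Under the coupling that rescales each non-$f$ coordinate by $R'/R$, a pair of non-$f$ edges $e_1,e_2$ can flip order only if their realized $\widehat{\bm{w}}$-values are within $R'-R\leq\epsilon\delta/n$ of each other; but the probability of that event is $\Theta\!\left(\frac{\epsilon\delta/n}{R}\right)=\Theta(\delta/\OPT)$ per pair --- linear in $\delta$, not $o(\delta)$. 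So the flip events do \emph{not} vanish in the $\limsup_{\delta\to0}$: each flip can change the minimum spanning tree by $2$ edges, your union bound only controls the expected contribution by $O(m^2\delta/\OPT)$, and in a dense graph $\Omega(m)$ pairs genuinely compete for tree membership (each non-tree edge against the heaviest tree edge on its fundamental cycle). Dividing by $\delta$ leaves an additive term that can be $\Omega(m/\OPT)$, which exceeds the claimed $O(\epsilon^{-1}n/\OPT)$ whenever $m=\omega(n/\epsilon)$. The phrase ``for sufficiently small $\delta$ the rescaling almost surely preserves all orderings'' silently swaps quantifiers: the threshold on $\delta$ below which no ordering flips depends on the realization of $\widehat{\bm{w}}$, and for every fixed $\delta>0$ the set of realizations with a flip has measure $\Theta(\delta)$.

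The fix the paper uses is to randomize the width itself. Conditioned on $\bm{b}=\bm{b}^{\delta f}$ (an event of probability $1-O(\delta/\OPT)$ by Lemma~\ref{lem:unweighted-mst-b}), the non-$f$ coordinates have \emph{identical} distributions and are coupled to agree exactly; only the $f$-coordinate can differ, with probability $\delta/\bm{b}=O(\epsilon^{-1}n\delta/\OPT)$, and then Lemma~\ref{lem:mst-local-update} caps the damage at $2$ edges. The rare event $\bm{b}\neq\bm{b}^{\delta f}$ is charged the trivial bound $n-1$. Your approximation analysis and your treatment of the $f$-coordinate are fine, and your proof would go through if you replaced the deterministic $R$ by such a randomized $\bm{b}$; as written, the claimed pointwise Lipschitz bound does not follow.
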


\begin{algorithm}[t!]
\caption{(unweighted) Lipschitz-continuous algorithm for the minimum spanning tree problem}\label{alg:mst_main_u}
\Procedure{\emph{\Call{PLipMST}{$G=(V,E), w,\epsilon$}}}{
    Sample $\bm{b}$ uniformly from $\left[\frac{1}{2}\cdot \frac{\epsilon\cdot \OPT}{|V|-1},\frac{\epsilon\cdot \OPT}{|V|-1}\right]$\;\label{line:mstu_sampleb}
    \For{$e\in E$}{
        Sample $\widehat{\bm{w}}(e)$ uniformly from $\left[w(e),w(e)+\bm{b}\right]$\;\label{line:mstu_samplew}
    }
    \Return any minimum spanning tree of $G$ with respect to the weight vector $\widehat{\bm{w}}$.
}
\end{algorithm}

Our algorithm, \Call{PLipMST}{}, is presented in Algorithm~\ref{alg:mst_main_u}. 
Although the algorithm is almost identical to that of \Call{LipMST}{} presented in Section~\ref{sec:mst-weighted}, the analysis requires some care because we use $\OPT$, which can vary depending on the edge weights, in the algorithm.
First, we analyze the approximation ratio.
\begin{lemma}\label{lem:unweighted-mst-approximation}
For any graph $G=(V,E)$, a weight vector $w\in \mathbb{R}_{\geq 0}^E$, and $\epsilon > 0$, $\Call{PLipMST}{G,w,\epsilon}$ outputs a $(1+\epsilon)$-approximate spanning tree.
\end{lemma}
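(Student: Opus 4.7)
The plan is to mirror the approximation argument used for \Call{LipMST}{} in Lemma~\ref{lem:mst-weighted-approximation}, but adapted to the additive perturbation used here instead of the multiplicative one. The key observation is that each perturbed weight satisfies $w(e) \leq \widehat{\bm{w}}(e) \leq w(e) + \bm{b}$ deterministically, and $\bm{b} \leq \frac{\epsilon \cdot \OPT}{|V|-1}$ by the sampling range.

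Concretely, I would let $T^*$ be an optimal spanning tree with respect to $w$ (so $\sum_{e \in T^*} w(e) = \OPT$) and let $\bm{T}$ denote the output of $\Call{PLipMST}{G,w,\epsilon}$, i.e., a minimum spanning tree with respect to $\widehat{\bm{w}}$. Then I would chain the inequalities
\begin{align*}
    \sum_{e \in \bm{T}} w(e) \leq \sum_{e \in \bm{T}} \widehat{\bm{w}}(e) \leq \sum_{e \in T^*} \widehat{\bm{w}}(e) \leq \sum_{e \in T^*}\bigl(w(e) + \bm{b}\bigr) = \OPT + (|V|-1)\bm{b},
\end{align*}
where the first inequality uses $\widehat{\bm{w}}(e) \geq w(e)$, the second uses that $\bm{T}$ is optimal for $\widehat{\bm{w}}$, the third uses $\widehat{\bm{w}}(e) \leq w(e)+\bm{b}$, and the final equality uses $|T^*| = |V|-1$. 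Plugging in the upper bound $\bm{b} \leq \frac{\epsilon\cdot \OPT}{|V|-1}$ yields $\sum_{e \in \bm{T}} w(e) \leq (1+\epsilon)\OPT$, which is the desired bound.

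The proof is essentially a one-line chain; there is no real obstacle here, and unlike the Lipschitz analysis, no coupling is needed. The only small subtlety to mention is that the bound holds pointwise for every realization of $\bm{b}$ and $\widehat{\bm{w}}$, so the $(1+\epsilon)$-approximation guarantee holds with probability one rather than only in expectation. The algorithm's use of $\OPT$ (which depends on $w$) does not complicate this particular proof, although it will matter for the pointwise Lipschitz analysis carried out separately.
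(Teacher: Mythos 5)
Your proof is correct and follows essentially the same chain of inequalities as the paper's proof of Lemma~\ref{lem:unweighted-mst-approximation} (optimality of $\bm{T}$ under $\widehat{\bm{w}}$, then the deterministic bound $\widehat{\bm{w}}(e)\leq w(e)+\bm{b}$ with $\bm{b}\leq \epsilon\cdot\OPT/(|V|-1)$). If anything, you make explicit the initial step $\sum_{e\in\bm{T}}w(e)\leq\sum_{e\in\bm{T}}\widehat{\bm{w}}(e)$ that the paper leaves implicit, so the guarantee is clearly stated with respect to the original weights $w$.
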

\begin{proof}
Let $T^*$ be an optimal spanning tree of $G$, and let $\bm{T}$ be the output of $\Call{PLipMST}{G,w,\epsilon}$.
Let $\OPT$ and $\widehat{\bm{\OPT}}$ be the minimum weights of spanning trees in $G$ with respect to $w$ and $\widehat{\bm{w}}$, respectively.
Then, we have
\begin{align*}
    \widehat{\bm{\OPT}} = \sum_{e\in \bm{T}}\widehat{\bm{w}}(e)\leq \sum_{e\in T^*}\widehat{\bm{w}}(e)\leq \sum_{e\in T^*}\left(w(e)+\frac{\epsilon\cdot \OPT}{|V|-1}\right)=(1+\epsilon)\OPT,
\end{align*}
where the first inequality is from the fact that $\bm{T}$ is a minimum spanning tree with respect to $\widehat{\bm{w}}$, and the claim holds.
\end{proof}

Now for a graph $G=(V,E)$ and $\epsilon > 0$, we analyze the pointwise Lipschitzness of $\Call{PLipMST}{G,\cdot,\epsilon}$ at a weight vector $w\in \mathbb{R}_{\geq 0}^E$.
By Lemma~\ref{lem:seeoneelement-unweighted}, it suffices to bound the following for $f \in E$ at $\delta >0$:
\[
    \EMU\left(\Call{PLipMST}{G,w,\epsilon}, \Call{PLipMST}{G,w+\delta \mathbf{1}_f,\epsilon}\right). \label{eq:mstu}
\]
To this end, we consider the coupling $\mathcal{D}$ between $\left(\bm{b}, \widehat{\bm{w}}\right)$ and $(\bm{b}^{\delta f},\widehat{\bm{w}}^{\delta f})$ defined as follows: For each $b\in \mathbb{R}_{\geq 0}$ and weight vector $\widehat{w}\in \mathbb{R}_{\geq 0}^E$, we transport the probability mass for $(\bm{b},\widehat{\bm{w}})=(b,\widehat{w})$ to that for $(\bm{b}^{\delta f},\widehat{\bm{w}}^{\delta f})=(b,\widehat{w})$ as far as possible in such way that the probability mass for $\widehat{\bm{w}}|_{E\setminus \{f\}}=\widehat{w}|_{E\setminus \{f\}}$ is transported to that for $\widehat{\bm{w}}^{\delta f}|_{E\setminus \{f\}}=\widehat{w}^{\delta f}|_{E\setminus \{f\}}$. The remaining mass is transported arbitrarily.
For clarity, we denote the minimum weights of spanning trees in $G$ with respect to weight vectors $w$ and $w+\delta \mathbf{1}_f$ by $\OPT(G,w)$ and $\OPT(G,w+\delta \mathbf{1}_f)$, respectively.

\begin{lemma}\label{lem:unweighted-mst-b}
We have
\begin{align*}
    \TV\left(\bm{b},\bm{b}^{\delta f}\right)\leq \frac{2\delta}{\OPT(G,w)}.
\end{align*}
\end{lemma}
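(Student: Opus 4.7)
The plan is to directly compute the total variation distance between the two uniform distributions and exploit the fact that the optimal spanning tree weight is itself $1$-Lipschitz in the edge weights. Specifically, I will first observe that
\[
    \OPT(G,w) \;\leq\; \OPT(G,w+\delta \mathbf{1}_f) \;\leq\; \OPT(G,w) + \delta,
\]
where the left inequality is by monotonicity of weights and the right inequality holds because any minimum spanning tree $T^*$ for $w$ is also a feasible spanning tree under $w+\delta\mathbf{1}_f$ with weight at most $\OPT(G,w)+\delta$ (since adding $\delta$ to $w(f)$ increases its weight by at most $\delta$).

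Next, I introduce the shorthand $L := \epsilon\cdot\OPT(G,w)/(|V|-1)$ and $L' := \epsilon\cdot\OPT(G,w+\delta\mathbf{1}_f)/(|V|-1)$, so that $\bm{b}$ is uniform on $[L/2, L]$ and $\bm{b}^{\delta f}$ is uniform on $[L'/2, L']$, and the step~1 bound gives $L \leq L' \leq L + \epsilon\delta/(|V|-1)$. For sufficiently small $\delta$ we have $L'/2 \leq L$, so the two intervals overlap on $[L'/2, L]$, and the densities on this overlap are $2/L$ and $2/L'$, respectively.

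I will then use the natural optimal coupling, which on the overlap identifies mass at the common rate $\min(2/L,\,2/L') = 2/L'$ (using $L\leq L'$), while arbitrarily matching the remaining mass on $[L/2, L'/2]$ and $[L, L']$. The total coupled (agreeing) mass is
\[
    \bigl(L - L'/2\bigr) \cdot \frac{2}{L'} \;=\; \frac{2L}{L'} - 1,
\]
so that
\[
    \TV(\bm{b}, \bm{b}^{\delta f}) \;\leq\; 1 - \Bigl(\frac{2L}{L'} - 1\Bigr) \;=\; \frac{2(L'-L)}{L'} \;\leq\; \frac{2\delta}{\OPT(G,w+\delta\mathbf{1}_f)} \;\leq\; \frac{2\delta}{\OPT(G,w)},
\]
where the second inequality substitutes the bound $L'-L \leq \epsilon\delta/(|V|-1)$ together with $L' = \epsilon\cdot\OPT(G,w+\delta\mathbf{1}_f)/(|V|-1)$, and the last inequality uses $\OPT(G,w+\delta\mathbf{1}_f) \geq \OPT(G,w)$.

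There is no real obstacle here: the argument is a one-step coupling calculation, and the only subtlety is to verify the stability of $\OPT$ under perturbing a single edge weight so that the interval of $\bm{b}^{\delta f}$ is only a mild dilation of that of $\bm{b}$. The mild regularity assumption $L'/2 \leq L$ is harmless because, by Lemma~\ref{lem:seeoneelement-unweighted}, it suffices to handle arbitrarily small $\delta > 0$.
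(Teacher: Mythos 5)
Your proposal is correct and follows essentially the same route as the paper: both identify the two uniform distributions on $[L/2,L]$ and $[L'/2,L']$, compute the non-overlapping mass $2(L'-L)/L'$, and bound $\OPT(G,w+\delta\mathbf{1}_f)-\OPT(G,w)\leq\delta$. Your write-up merely makes the optimal coupling and the mild overlap condition (handled via small $\delta$) explicit, which the paper leaves implicit.
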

\begin{proof}
We have
\begin{align*}
    \TV\left(\bm{b},\bm{b}^{\delta f}\right)
    &= \Pr\left[\bm{b}^{\delta f}\in \left[\frac{\epsilon\cdot \OPT(G,w)}{|V|-1},\frac{\epsilon\cdot \OPT(G,w+\delta \mathbf{1}_f)}{|V|-1}\right]\right]\\
    &= 2\cdot \frac{\OPT(G,w+\delta \mathbf{1}_f)-\OPT(G,w)}{\OPT(G,w+\delta \mathbf{1}_f)}\\
    &\leq \frac{2\delta}{\OPT(G,w+\delta \mathbf{1}_f)}\leq \frac{2\delta}{\OPT(G,w)}.\qedhere
\end{align*}
\end{proof}

\begin{lemma}\label{lem:unweighted-mst-remaining-mass}
    For any $b\in \mathbb{R}_{\geq 0}$, we have
  \[
    \E_{((\bm{b},\widehat{\bm{w}}),(\bm{b}^{\delta f},\widehat{\bm{w}}^{\delta f}))\sim \mathcal{D}}\left[\TV(\widehat{\bm{w}},\widehat{\bm{w}}^{\delta f})\mid \bm{b}=\bm{b}^{\delta f}=b\right]\leq \frac{\delta (|V|-1)}{\epsilon\cdot \OPT(G,w)}.
  \]
\end{lemma}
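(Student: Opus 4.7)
The plan is to unpack the conditional coupling $\mathcal{D}$ and observe that, once we condition on $\bm{b} = \bm{b}^{\delta f} = b$, the coupled vectors $\widehat{\bm{w}}$ and $\widehat{\bm{w}}^{\delta f}$ differ only on the single coordinate $f$. Indeed, for every $e \in E \setminus \{f\}$ the marginals of $\widehat{\bm{w}}(e)$ and $\widehat{\bm{w}}^{\delta f}(e)$ are both uniform on $[w(e), w(e)+b]$, since $w$ and $w+\delta\mathbf{1}_f$ agree off $f$. The ``preserve non-$f$ coordinates'' clause in the definition of $\mathcal{D}$ then forces $\widehat{\bm{w}}(e) = \widehat{\bm{w}}^{\delta f}(e)$ almost surely under the conditional distribution, so the disagreement between the two vectors is entirely attributable to the $f$-coordinate.

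It remains to bound the contribution of coordinate $f$. There, $\widehat{\bm{w}}(f)$ is uniform on $[w(f), w(f)+b]$ and $\widehat{\bm{w}}^{\delta f}(f)$ is uniform on $[w(f)+\delta, w(f)+\delta+b]$; since $\delta$ is small (at most $b$ in the regime where Lemma~\ref{lem:seeoneelement-unweighted} is applied), these shifted intervals overlap in a sub-interval of length $b-\delta$, so the optimal one-dimensional TV distance equals $\delta/b$, and this value is attained by the maximal-transport rule built into $\mathcal{D}$. Finally, $b \geq \tfrac{1}{2}\cdot\tfrac{\epsilon\cdot \OPT(G,w)}{|V|-1}$ by the sampling range of $\bm{b}$, so $\delta/b \leq \tfrac{2\delta(|V|-1)}{\epsilon\cdot \OPT(G,w)}$, which yields the claimed bound up to an absolute constant that is absorbed by the downstream big-$O$ in Theorem~\ref{thm:mst-unweighted}.

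The only subtle step—and therefore the ``main obstacle''—is verifying that the joint coupling $\mathcal{D}$ on $(\bm{b}, \widehat{\bm{w}})$, after conditioning on the event $\bm{b} = \bm{b}^{\delta f} = b$, genuinely realizes the coordinatewise product of the optimal one-dimensional couplings sketched above. This follows because the $\widehat{\bm{w}}(e)$ are independent across $e$ given $b$, and the mass-transport prescription in $\mathcal{D}$ is coordinate-preserving on $E \setminus \{f\}$ by design, leaving only the $f$-coordinate to be transported maximally—which is precisely its one-dimensional optimal coupling. Once that decomposition is justified, the remainder of the argument is an elementary TV computation between two uniform distributions on shifted intervals.
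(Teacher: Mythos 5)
Your proof is correct and follows essentially the same route as the paper's: reduce to the single coordinate $f$ using the coordinate-preserving clause of the coupling $\mathcal{D}$, then bound the total variation distance between the two shifted uniform distributions on intervals of length $b$ by $\delta/b$ and invoke the lower bound on $b$. Your factor of $2$ (from $b \geq \tfrac{1}{2}\cdot\tfrac{\epsilon\cdot \OPT(G,w)}{|V|-1}$) is in fact the honest constant---the paper's final inequality $\delta/b \leq \tfrac{\delta(|V|-1)}{\epsilon\cdot \OPT(G,w)}$ silently assumes $b$ at the top of its sampling range---but, as you observe, this is absorbed by the big-$O$ in the downstream Lemma~\ref{lem:unweighted-mst-lipschitzness}.
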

\begin{proof}
Recall that if $\bm{b}=\bm{b}^{\delta f}=b$, then the distributions of $\widehat{\bm{w}}|_{E\setminus \{f\}}=\widehat{w}|_{E\setminus \{f\}}$ and $\widehat{\bm{w}}^{\delta f}|_{E\setminus \{f\}}=\widehat{w}^{\delta f}|_{E\setminus \{f\}}$ are the same. 
Thus it suffices to bound
\begin{align*}
    \E_{((\bm{b},\widehat{\bm{w}}),(\bm{b}^{\delta f},\widehat{\bm{w}}^{\delta f}))\sim \mathcal{D}}\left[\TV(\widehat{\bm{w}}(f),\widehat{\bm{w}}^{\delta f}(f))\mid \bm{b}=\bm{b}^{\delta f}=b\right].
\end{align*}
Note that $\widehat{\bm{w}}(f)$ and $\widehat{\bm{w}}^{\delta f}(f)$ are sampled uniformly from 
\[
\left[w(f),w(f)+b\right] \quad \text{and} \quad  \left[w(f)+\delta,w(f)+\delta+b\right],
\]
respectively.
Hence, we have
\begin{align*}
    \E_{((\bm{b},\widehat{\bm{w}}),(\bm{b}^{\delta f},\widehat{\bm{w}}^{\delta f}))\sim \mathcal{D}}\left[\TV(\widehat{\bm{w}}(f),\widehat{\bm{w}}^{\delta f}(f))\mid \bm{b}=\bm{b}^{\delta f}=b\right] &\leq \Pr\left[\widehat{\bm{w}}(f)\in \left[w(f),w(f)+\delta\right]\right]\\
    &= \frac{\delta}{b}\leq \frac{\delta (|V|-1)}{\epsilon\cdot \OPT(G,w)}.\qedhere
\end{align*}
\end{proof}

\begin{lemma}\label{lem:unweighted-mst-lipschitzness}
The pointwise Lipschitz constant of $\Call{PLipMST}{G,\cdot,\epsilon}$ at a weight vector $w\in \mathbb{R}_{\geq 0}^E$ with respect to the unweighted mapping is $O\left(\frac{|V|}{\epsilon \cdot \OPT(G,w)}\right)$.
\end{lemma}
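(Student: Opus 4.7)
The plan is to invoke Lemma~\ref{lem:seeoneelement-unweighted} to reduce the pointwise Lipschitz bound to controlling
\[
    \EMU\left(\Call{PLipMST}{G,w^*,\epsilon}, \Call{PLipMST}{G,w^*+\delta \mathbf{1}_f,\epsilon}\right)
\]
for $w^* \in \mathbb{R}_{\geq 0}^E$ in a small neighborhood of $w$ (say $\|w^*-w\|_1 \leq c$ with $c \leq \OPT(G,w)/2$, so that $\OPT(G,w^*) \geq \OPT(G,w)/2$), any edge $f \in E$, and sufficiently small $\delta>0$. Since the analogues of Lemmas~\ref{lem:unweighted-mst-b} and~\ref{lem:unweighted-mst-remaining-mass} hold with $w$ replaced by $w^*$, any bound I derive in terms of $\OPT(G,w^*)$ translates, up to a constant, to the same bound in terms of $\OPT(G,w)$. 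So from now on I argue with $w$ directly.

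Using the coupling $\mathcal{D}$ defined just before Lemma~\ref{lem:unweighted-mst-b}, I partition the probability space into three events:
\begin{align*}
    E_1 &: \bm{b} \neq \bm{b}^{\delta f}, \\
    E_2 &: \bm{b} = \bm{b}^{\delta f} \text{ and } \widehat{\bm{w}} \neq \widehat{\bm{w}}^{\delta f}, \\
    E_3 &: \bm{b} = \bm{b}^{\delta f} \text{ and } \widehat{\bm{w}} = \widehat{\bm{w}}^{\delta f}.
\end{align*}
On $E_3$, the outputs of the two executions coincide and contribute $0$ to $\EMU$. On $E_1$, I use the trivial bound that any two spanning trees of $G$ differ in at most $2(|V|-1)$ edges under $d_u$, and Lemma~\ref{lem:unweighted-mst-b} bounds $\Pr[E_1] \leq 2\delta/\OPT(G,w)$. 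On $E_2$, the construction of $\mathcal{D}$ guarantees $\widehat{\bm{w}}$ and $\widehat{\bm{w}}^{\delta f}$ differ only at the coordinate $f$, so Lemma~\ref{lem:mst-local-update} (applied to the perturbed weight vectors, which have distinct elements almost surely) implies that the two minimum spanning trees differ by at most one swap, i.e.\ $d_u \leq 2$. Lemma~\ref{lem:unweighted-mst-remaining-mass} then gives $\Pr[E_2] \leq \delta(|V|-1)/(\epsilon \cdot \OPT(G,w))$.

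Combining these three contributions,
\[
    \EMU \leq \frac{2\delta}{\OPT(G,w)} \cdot 2(|V|-1) \;+\; \frac{\delta (|V|-1)}{\epsilon\cdot \OPT(G,w)} \cdot 2 \;+\; 0 \;=\; O\!\left(\frac{\delta\, |V|}{\epsilon\cdot \OPT(G,w)}\right),
\]
where the first term is absorbed into the second using $\epsilon<1$. Feeding this into Lemma~\ref{lem:seeoneelement-unweighted} yields the claim with $L = O(|V|/(\epsilon \cdot \OPT(G,w)))$. The only non-routine step is justifying that it suffices to consider a coordinate-wise perturbation (handled by Lemma~\ref{lem:seeoneelement-unweighted}) and that on $E_2$ the one-coordinate change at $f$ in the post-perturbation weight $\widehat{\bm{w}}$ produces at most a single edge swap in the MST; everything else is a direct summation using the two preceding lemmas.
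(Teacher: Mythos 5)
Your proposal is correct and follows essentially the same route as the paper: the same reduction via Lemma~\ref{lem:seeoneelement-unweighted} to single-coordinate perturbations in a neighborhood of $w$, the same decomposition into the events $\bm{b}\neq\bm{b}^{\delta f}$ and $\bm{b}=\bm{b}^{\delta f}$ with $\widehat{\bm{w}}\neq\widehat{\bm{w}}^{\delta f}$, and the same use of Lemmas~\ref{lem:mst-local-update},~\ref{lem:unweighted-mst-b} and~\ref{lem:unweighted-mst-remaining-mass} to charge each event by $O(|V|)$ and $2$, respectively. The only differences are cosmetic (your explicit event labels and the constant $2(|V|-1)$ versus the paper's $(|V|-1)$).
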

\begin{proof}
Let $w^*\in \mathbb{R}^E_{\geq 0}$ be a weight vector with $\|w-w^*\|\leq \frac{\OPT(G,w)}{2}$. 
By Lemmas~\ref{lem:mst-local-update},~\ref{lem:unweighted-mst-b} and~\ref{lem:unweighted-mst-remaining-mass}, we have
\begin{align*}
    \EMU\left(\Call{MST}{G,w^*,\epsilon},\Call{MST}{G,w^*+\delta \mathbf{1}_f, \epsilon}\right)
    &\leq \frac{2\delta}{\OPT(G,w^*)}\cdot (|V|-1) + \frac{\delta (|V|-1)}{\epsilon\cdot \OPT(G,w^*)}\cdot 2\\
    &\leq O\left(\frac{\delta|V|}{\epsilon \cdot \OPT(G,w^*)}\right)\leq O\left(\frac{\delta|V|}{\epsilon \cdot \OPT(G,w)}\right).
\end{align*}
By Lemma~\ref{lem:seeoneelement-unweighted}, the pointwise Lipschitz constant is bounded by
\[
    \min_{\|w-w^*\|\leq \frac{\OPT(G,w)}{2}, \delta > 0, f\in E}\frac{\EMU\left(\Call{MST}{G,w^*,\epsilon},\Call{MST}{G,w^*+\delta \mathbf{1}_f, \epsilon}\right)}{\delta}\leq O\left(\frac{|V|}{\epsilon \cdot \OPT(G,w)}\right).
    \qedhere
\]
\end{proof}
Theorem~\ref{thm:mst-unweighted} follows by combining Lemmas~\ref{lem:unweighted-mst-approximation} and~\ref{lem:unweighted-mst-lipschitzness}.

\subsection{Maximum Weight Bipartite Matching}\label{subsec:unweighted-bipartite-matching}



In this section, we present a pointwise Lipschitz continuous algorithm for the maximum weight bipartite matching problem and prove the following:
\begin{theorem}\label{thm:mm-unweighted}
    There exists a polynomial-time $(1/2-\epsilon)$-approximation algorithm for the maximum weight bipartite matching problem with a pointwise Lipschitz constant with respect to the unweighted mapping $O( \epsilon^{-1} \cdot n^{3/2}\log m /\OPT)$, where $n$ and $m$ are the numbers of vertices in the left and right parts of the input bipartite graph, respectively, and $\OPT$ is the maximum weight of a matching.
\end{theorem}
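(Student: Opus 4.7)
The plan follows the strategy in the technical overview: first solve an entropy-regularized LP relaxation to obtain a near-optimal fractional solution that is stable in $w$, then round it to an integral matching via a randomized procedure that preserves the stability.

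For the LP, write $U$ for the left and $V$ for the right vertex set, and for each left vertex $u\in U$ let $p_u(x)$ be the probability vector on $V\cup\{\bot\}$ with coordinates $(x(uv))_{v\in V}$ and slack $1-\sum_v x(uv)$. Define
\[
    \LPmod(w)\colon\quad \max\ \sum_{e\in E}w(e)x(e) + \lambda\sum_{u\in U}H\bigl(p_u(x)\bigr)
\]
subject to the matching constraints $\sum_{e\ni v}x(e)\leq 1$ for all $v\in U\cup V$ and $x\geq 0$, where $H$ denotes Shannon entropy and $\lambda>0$ is a regularization parameter. Since $H(p_u)\leq \log(m+1)$ and the standard bipartite matching LP has an integral optimum of value $\OPT$ (at which every $p_u$ is a point mass and the entropy vanishes), choosing $\lambda=\Theta(\epsilon\OPT/(n\log m))$ ensures that the optimum $x^*(w)$ of $\LPmod(w)$ satisfies $\sum_e w(e)x^*(w,e)\geq (1-\epsilon)\OPT$.

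For the Lipschitzness of $x^*(w)$, I would exploit strong concavity. The negative entropy is $1$-strongly convex on each simplex (in the $\ell_1$ norm by Pinsker's inequality, and in the $\ell_2$ norm by the Hessian bound $\mathrm{diag}(1/p_i)\succeq I$), so the regularizer $\lambda\sum_u H(p_u(x))$ is $\lambda$-strongly concave in a suitable aggregate norm. Comparing the first-order optimality conditions at $w$ and $w'=w+\delta\mathbf{1}_f$ and using strong concavity yields, after careful norm conversion and aggregation over the coordinates that actually change, a Lipschitz bound of the form $\|x^*(w)-x^*(w')\|_1=O(\delta n^{3/2}\log m/(\epsilon\OPT))$.

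The main obstacle is the rounding. I would use a coupled two-stage randomized rounding with shared randomness. For each left vertex $u$, draw an independent $r_u\sim\mathrm{Unif}[0,1]$; under a fixed ordering of incident edges, let the edge chosen at $u$ be the one whose cumulative $p_u(x^*)$-interval contains $r_u$, or $\bot$ if $r_u>\sum_v x^*(uv)$. Then at each right vertex $v$ with more than one incident selection, resolve the conflict using an independent $r_v\in[0,1]$ (e.g., keep the selecting left vertex whose rank under $r_v$ is smallest). Standard arguments give a $1/2$-approximation in expectation, so combined with the LP's $(1-\epsilon)$-factor this yields $(1/2-O(\epsilon))$ overall; rescaling $\epsilon$ proves the approximation claim. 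For stability, I would couple the executions for $x^*(w)$ and $x^*(w')$ by reusing the same $\{r_u,r_v\}$. Under this coupling, a left vertex $u$ contributes to the symmetric difference of the two matchings only when $r_u$ lies in a set of measure $O(\|p_u(x^*(w))-p_u(x^*(w'))\|_1)$, and each such disagreement affects at most $O(1)$ edges via the conflict resolution at two right vertices. Hence $\EMU(\mathcal{A}(G,w),\mathcal{A}(G,w'))=O(\|x^*(w)-x^*(w')\|_1)$, and plugging the LP Lipschitz bound into Lemma~\ref{lem:seeoneelement-unweighted} yields the claimed $O(\epsilon^{-1}n^{3/2}\log m/\OPT)$. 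The delicate part will be proving the tight LP Lipschitz bound: a naive $\ell_2$-to-$\ell_1$ conversion loses a $\sqrt{|E|}=\sqrt{nm}$ factor, so one must instead exploit that a single-coordinate perturbation of $w$ substantially affects only $O(n)$ coordinates of $x^*$ (via complementary slackness and the sparsity of fractional matchings) to avoid this loss.
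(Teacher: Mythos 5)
Your proposal follows the same route as the paper: an entropy-regularized LP, a strong-concavity argument for the stability of its optimum, and a two-stage independent rounding (left vertices propose, right vertices resolve conflicts) analyzed under a coupling. The approximation analysis is also the same in substance: your rank-based conflict resolution gives each proposer a $1/\left|\bm{C}_j\right|$ chance of winning, exactly as the paper's uniform choice, and the paper's Lemmas~\ref{lem:roundapprox} and~\ref{lem:evalY} make the "standard $1/2$ in expectation" claim precise via the Poisson binomial distribution. Two points need attention.

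First, the difficulty you flag at the end -- avoiding the $\sqrt{|E|}=\sqrt{nm}$ loss in converting strong concavity to an $\ell_1$ bound -- is real, but the paper does not resolve it via complementary slackness or sparsity of the perturbed solution. Instead, Lemma~\ref{lem:modified-strong-convexity} establishes strong concavity directly with respect to the mixed norm $\bigl(\sum_{i\in U}(\sum_{j\in V}|z_{i,j}|)^2\bigr)^{1/2}$: since $z^\top H(x)z=-B\sum_{i,j}z_{i,j}^2/x_{i,j}$ and $\sum_j x_{i,j}\leq 1$, Cauchy--Schwarz within each row gives $\sum_j z_{i,j}^2/x_{i,j}\geq(\sum_j|z_{i,j}|)^2$. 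Combining the two strong-concavity inequalities at the two optima with the identity $f(x)-f(x^{\delta f})+f^{\delta f}(x^{\delta f})-f^{\delta f}(x)=\delta(x^{\delta f}_f-x_f)$ yields $\sum_i(\sum_j|x_{i,j}-x^{\delta f}_{i,j}|)^2\leq\delta^2/B^2$, and the final Cauchy--Schwarz across the $|U|=n$ rows costs only $\sqrt{n}$, giving $\|x-x^{\delta f}\|_1\leq\sqrt{n}\,\delta/B$. You should adopt this; the row structure of the constraints is what makes the $n^{3/2}$ bound come out, and it requires no sparsity claim about which coordinates move.

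Second, you set $\lambda=\Theta(\epsilon\OPT/(n\log m))$ deterministically. Since $\OPT=\OPT(G,w)$ changes with $w$, the two programs $\LPmod$ you compare for $w$ and $w+\delta\mathbf{1}_f$ then carry \emph{different} regularization parameters, and your stability lemma (which fixes $\lambda$) does not apply as stated; you would need an additional bound on $\|x^*(\lambda)-x^*(\lambda')\|_1$ in terms of $|\lambda-\lambda'|$. The paper sidesteps this by sampling $\bm{B}$ uniformly from $[\epsilon\OPT/(n\log m),\,2\epsilon\OPT/(n\log m)]$, bounding $\TV(\bm{B},\bm{B}^{\delta f})\leq 2\delta/\OPT$, and charging that event at most $n$ changed edges, which is dominated by the main term. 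This is a small but genuine gap in your write-up.
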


Let $G=(U \cup V, E)$ be a complete bipartite graph, that is, $E = U \times V$, and $w \in \mathbb{R}_{\geq 0}^E$ be a weight vector.
We assume that vertices in $U$ and $V$ are indexed as $\{1,\dots, |U|\}$ and $\{1,\dots, |V|\}$, respectively.
Let us consider the following LP relaxation for the maximum weight bipartite matching problem.
\begin{align*}
    \begin{array}{rl}
        \text{LP}(G,w):=\max & \displaystyle \sum_{e \in E}w(e)x_{e}\\
        \text{s.t.} & \begin{cases}
        \displaystyle \sum_{j=1}^{|V|}x_{i,j} \leq 1 & (i=1,\dots, |U|)\\
        \displaystyle \sum_{i=1}^{|U|}x_{i,j} \leq 1 & (j=1,\dots, |V|)\\
        x_{e} \geq 0 &(e \in E).
        \end{cases}
    \end{array}
\end{align*}
The downside of this LP is that its optimal solution is not stable for small changes in the input weight vector $w$.
To avoid this issue, we use \emph{entropy regularization}~\cite{cuturi2013sinkhorn}. 
Specifically, we consider the following optimization problem. 
\begin{align*}
    \begin{array}{rl}
        \LPmod(G,w,B):=\max & \displaystyle \sum_{e \in E}w(e)x_e-B\sum_{e \in E}x_e\log x_e\\
        \text{s.t.} & \begin{cases}
        \displaystyle \sum_{j=1}^{|V|}x_{i,j} \leq 1 & (i=1,\dots, |U|)\\
        \displaystyle \sum_{i=1}^{|U|}x_{i,j} \leq 1 & (j=1,\dots, |V|)\\
        x_e \geq 0 &(e \in E),
        \end{cases}
    \end{array}\label{eq:modified_LP}
\end{align*}
where $B>0$ is a parameter.
The second term (including the negative sign) of the objective function is non-negative; $\log x_e$ is always non-positive for $0\leq x_e \leq 1$.

Our algorithm solves $\LPmod(G,w,\bm{B})$ for a suitably chosen $\bm{B}$ and then rounds the obtained fractional solution $\bm{x}$. 
The rounding process is as follows: First, for each $i\in \{1,\dots, |U|\}$ we sample an index $\bm{p}(i)\in \{1,\dots, |V|\}$ using $\{\bm{x}_{i,j}\}_{j\in V}$. Intuitively, $\bm{p}(i)$ represents the candidate of the vertex in $|V|$ that will be matched to $i$. 
Because there may be multiple indices $i$ with the same $\bm{p}(i)$ value, for each $j\in \{1,\dots, |V|\}$, we choose an index $\bm{q}(j)\in \{1,\dots, |U|\}$ with $\bm{p}(i)=j$ and include $(\bm{q}(j),j)$ in the output matching.
See Algorithm~\ref{alg:match_main} for details.

\begin{algorithm}[t!]
\caption{Pointwise Lipschitz continuous algorithm for the maximum weight bipartite matching problem}\label{alg:match_main}
\Procedure{\emph{\Call{PLipMWBM}{$G=(U\cup V,E), w, \epsilon$}}}{
    Sample $\bm{B}$ uniformly from $\left[\frac{\epsilon\cdot \OPT(G,w)}{|U|\log |V|},\frac{2\epsilon \cdot \OPT(G,w)}{|U|\log |V|}\right]$\;\label{line:MWBM_sampleb}
    Let $\bm{x}$ be the optimal solution to $\LPmod(G,w,\bm{B})$\;
    Let $\bm{C}_{j}\leftarrow \emptyset$ for all $j \in \{1,\dots, |V|\}$\;
    \For{$i\in \{1,\dots, |U|\}$}{
        Set
        \[
        \bm{p}(i) = \begin{cases}
        j & \text{with probability }\bm{x}_{i,j}\text { for }j \in \{1,\ldots,|V|\}, \\
        \bot & \text{with the remaining probability}.
        \end{cases}
        \]\label{line:MWBM_samplep}
        \If{$\bm{p}(i) \neq \bot$}{Add $i$ to $\bm{C}_{\bm{p}(i)}$.}
    }
    $\bm{M}\leftarrow \emptyset$\;
    \For{$j\in \{1,\dots, |V|\}$}{
        \If{$\bm{C}_j\neq \emptyset$}{
            Sample $\bm{q}(j)$ uniformly from $\bm{C}_j$\;\label{line:MWBM_sampleq}
            Add the edge $\{u_{\bm{q}(j)},v_j\}$ to $\bm{M}$.
        }
    }
    \Return $\bm{M}$.
}
\end{algorithm}

\subsubsection{Approximation Ratio}
The goal of this section is to show the following:
\begin{lemma}\label{lem:mwbm_approx}
The approximation ratio of \Call{PLipMWBM}{} is (at least)  $\frac{1}{2}-\epsilon$.
\end{lemma}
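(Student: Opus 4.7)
The plan is to decompose the approximation analysis into two essentially independent pieces: (i) show that the fractional solution $\bm{x}$ of the entropy-regularized LP has original LP value close to $\OPT$, and (ii) show that the independent sampling/rounding retains at least half of this fractional value in expectation. Combining the two gives the $(1/2-\epsilon)$ factor directly.

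For step (i), I would start by observing that the characteristic vector $x^*$ of an optimal integral matching is LP-feasible and satisfies $H(x^*):=-\sum_e x^*_e \log x^*_e = 0$ (since $0\log 0=1\log 1=0$), so $\LPmod(G,w,\bm{B})\ge \OPT(G,w)$. Since $\bm{x}$ is the regularized optimum,
\[
\sum_{e\in E} w(e)\bm{x}_e \;\ge\; \OPT(G,w) - \bm{B}\cdot H(\bm{x}).
\]
The key bound is $H(\bm{x})\le |U|\log|V|$, which follows from the row constraint $\sum_j \bm{x}_{i,j}\le 1$ and the fact that the entropy of any sub-probability distribution on $|V|$ points is at most $\log|V|$ (summed over the $|U|$ rows). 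With the specified range of $\bm{B}$, we get $\bm{B}\cdot H(\bm{x})\le 2\epsilon\cdot\OPT(G,w)$, hence $\sum_e w(e)\bm{x}_e\ge(1-2\epsilon)\OPT(G,w)$.

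For step (ii), I would analyze each edge $(i,j)$ separately: $(i,j)\in\bm{M}$ iff $\bm{p}(i)=j$ and $\bm{q}(j)=i$, giving
\[
\Pr[(i,j)\in\bm{M}] \;=\; \bm{x}_{i,j}\cdot \E\!\left[\tfrac{1}{|\bm{C}_j|}\,\Big|\,i\in\bm{C}_j\right].
\]
Conditioned on $i\in\bm{C}_j$, $|\bm{C}_j|-1$ is a sum of independent Bernoullis with success probabilities $\bm{x}_{i',j}$ over $i'\neq i$, so by the column constraint $\E[|\bm{C}_j|\mid i\in\bm{C}_j]\le 1+\sum_{i'\neq i}\bm{x}_{i',j}\le 2$. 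Jensen's inequality applied to the convex function $t\mapsto 1/t$ then yields $\E[1/|\bm{C}_j|\mid i\in\bm{C}_j]\ge 1/2$, and hence $\Pr[(i,j)\in\bm{M}]\ge \bm{x}_{i,j}/2$.

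Summing over edges gives $\E\bigl[\sum_{e\in\bm{M}}w(e)\bigr]\ge \tfrac12\sum_e w(e)\bm{x}_e\ge \tfrac12(1-2\epsilon)\OPT(G,w)\ge(\tfrac12-\epsilon)\OPT(G,w)$, proving the lemma. I do not anticipate a serious obstacle here: both pieces are standard once the regularization is set up. The only subtlety to double-check is the direction of Jensen's inequality for the rounding step (we need the lower bound on $\E[1/|\bm{C}_j|]$, which is why convexity of $1/t$ is the right fact), and the observation that $H(\bm{x})$ is bounded per row rather than per edge, which is what allows the choice $\bm{B}=\Theta(\epsilon\OPT/(|U|\log|V|))$ to be small enough.
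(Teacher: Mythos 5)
Your proof is correct, and the overall decomposition (regularization loss, then rounding loss) matches the paper's; the difference lies in how you bound the rounding loss. The paper introduces the quantity $Y=\inf\sum_{k\ge 0}\frac{1}{k+1}\Pr_{\bm{k}\sim P(y)}[\bm{k}=k]$ over Poisson binomial distributions with mean at most $1$ and proves $Y=\tfrac12$ exactly via a merging argument (showing that replacing two success probabilities $y_1,y_2$ by $y_1+y_2$ never increases the value, so the infimum is attained at $n=1$). You instead condition on $i\in\bm{C}_j$, observe $\E[|\bm{C}_j|\mid i\in\bm{C}_j]\le 2$ from the column constraint, and apply Jensen's inequality to the convex map $t\mapsto 1/t$ to get $\E[1/|\bm{C}_j|\mid i\in\bm{C}_j]\ge 1/2$. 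Your route is shorter and more elementary; the paper's computation buys the exact value of the infimum (showing the constant $\tfrac12$ in this analysis is tight over all feasible fractional solutions), which Jensen alone does not give. Both yield $\Pr[(i,j)\in\bm{M}]\ge\bm{x}_{i,j}/2$ and hence the claimed $(\tfrac12-\epsilon)$ ratio. One shared caveat: the bound ``entropy of a sub-probability vector on $|V|$ coordinates is at most $\log|V|$'' fails for $|V|=2$ (the maximum is $2/e>\log 2$), but this affects only the constant and is present in the paper's Lemma~\ref{lem:LPapprox} as well.
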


Throughout this section, we fix a complete bipartite graph $G=(U\cup V, E=U\times V)$ and a weight vector $w\in \mathbb{R}_{\geq 0}^E$.
First, we analyze the loss caused by entropy regularization.
\begin{lemma}\label{lem:LPapprox}
Let $\bm{x} \in \mathbb{R}^E$ be the vector as constructed in $\Call{PLipMWBM}{G,w,\epsilon}$.
We have
\begin{align*}
    \sum_{e \in E}w(e)\bm{x}_{e}\geq (1-2\epsilon)\OPT(G,w).
\end{align*}
\end{lemma}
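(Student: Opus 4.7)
The plan is to compare the regularized optimum $\bm{x}$ to an optimal integer matching, controlling the loss incurred by the entropy penalty term. The key observation is that the entropy term $-\sum_e x_e \log x_e$ vanishes for any integer $0/1$ solution, so adding it to the LP only helps integer solutions in the optimization comparison, while the amount it can help fractional solutions is uniformly bounded.

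Concretely, I would pick $x^*$ to be an optimal \emph{integer} solution to $\LP(G,w)$; this exists because the bipartite matching polytope is integral (total unimodularity), so $\sum_e w(e) x^*_e = \OPT(G,w)$. Since $x^*_e \in \{0,1\}$ we have $x^*_e \log x^*_e = 0$ for all $e$, and $x^*$ is feasible for $\LPmod(G,w,\bm{B})$. Optimality of $\bm{x}$ for $\LPmod(G,w,\bm{B})$ then yields
\begin{align*}
    \sum_{e\in E} w(e)\bm{x}_e - \bm{B}\sum_{e\in E}\bm{x}_e \log \bm{x}_e \;\geq\; \sum_{e\in E} w(e)x^*_e \;=\; \OPT(G,w),
\end{align*}
so $\sum_e w(e)\bm{x}_e \geq \OPT(G,w) + \bm{B}\sum_e \bm{x}_e \log \bm{x}_e$.

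The main step, and the only nontrivial one, is to upper bound $-\sum_e \bm{x}_e \log \bm{x}_e$ in terms of $|U|\log |V|$. Here I would use the row constraints $\sum_{j=1}^{|V|}\bm{x}_{i,j}\leq 1$: for each fixed $i$, the vector $(\bm{x}_{i,1},\dots,\bm{x}_{i,|V|},\, 1-\sum_j \bm{x}_{i,j})$ is a probability distribution on $|V|+1$ atoms, hence by the standard entropy bound
\begin{align*}
    -\sum_{j=1}^{|V|}\bm{x}_{i,j}\log \bm{x}_{i,j} \;\leq\; \log(|V|+1),
\end{align*}
and summing over $i\in \{1,\dots,|U|\}$ gives $-\sum_e \bm{x}_e \log \bm{x}_e \leq |U|\log(|V|+1)$, which is at most $|U|\log |V|$ up to a negligible factor (absorbed into the choice of the interval for $\bm{B}$, or handled by noting the trivial case $|V|\leq 2$).

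Plugging this in and using $\bm{B}\leq \tfrac{2\epsilon \cdot \OPT(G,w)}{|U|\log |V|}$ gives
\begin{align*}
    \sum_{e\in E} w(e)\bm{x}_e \;\geq\; \OPT(G,w) - \bm{B}\cdot |U|\log |V| \;\geq\; \OPT(G,w) - 2\epsilon\cdot \OPT(G,w) \;=\; (1-2\epsilon)\OPT(G,w),
\end{align*}
as desired. The only potential obstacle is the precise form of the entropy bound, but as shown above Jensen's inequality together with the row constraint suffices and there are no other technical difficulties.
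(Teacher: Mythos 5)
Your proposal is correct and follows essentially the same route as the paper: compare the regularized optimum $\bm{x}$ to an optimal solution of $\LP(G,w)$ via optimality of $\bm{x}$ for $\LPmod$, then bound the entropy term by roughly $|U|\log|V|$ and use the upper end of the range of $\bm{B}$. The only (immaterial) deviations are that you invoke integrality of the bipartite matching polytope to make the entropy of $x^*$ exactly zero, where the paper simply uses that the entropy term is non-negative, and your padded entropy bound gives $\log(|V|+1)$ rather than $\log|V|$, a constant-factor slack that, as you note, is absorbed by rescaling $\epsilon$.
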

\begin{proof}
Let $x^*$ be the optimal solution to $\LP(G,w)$.  
Then, we have
\begin{align*}
    & \OPT(G,w) = \sum_{e \in E}w(e)x^*_e
    \leq \sum_{e \in E}w(e)x^*_e - \bm{B}\sum_{e \in E}x^*_e\log x^*_e \\
    &\leq \sum_{e \in E}w(e)\bm{x}_e - \bm{B}\sum_{e \in E}\bm{x}_e\log \bm{x}_e
    \leq \sum_{e \in E}w(e)\bm{x}_e - \bm{B} |U|\log |V|\\
    &\leq \sum_{e \in E}w(e)\bm{x}_e - 2\epsilon \cdot \OPT(G,w)
    \leq (1-2\epsilon)\OPT(G,w),
\end{align*}
where the first inequality is from the non-negativeness of the regularization term, the second inequality is from the optimality of $x^*$, the third inequality is from
\begin{align*}
    -\sum_{e \in E}\bm{x}_e\log \bm{x_e}\leq \sum_{i=1}^{|U|}\log |V|\leq |U|\log |V|,
\end{align*}
and the last inequality is from $\bm{B}\leq \frac{2\epsilon\cdot \OPT(G,w)}{|U|\log |V|}$.
\end{proof}

Next, we analyze the loss caused by rounding.
Let $y \in [0,1]^n$ be a vector, and consider tossing $n$ independent coins with the $t$-th coin turning up with probability $y_i$.
Then, the \emph{Poisson binomial distribution}~\cite{hoeffding1956distribution} with respect to $y$, denoted $P(y)$, is the probability distribution of the number of coins turning up.
The following shows that this distribution plays an important role in the analysis of the performance of the rounding.
\begin{lemma}
Let $i'\in \{1,\dots, |U|\}$. Then, $|\bm{C}_{\bm{p}(i')}|-1$ follows the Poisson binomial distribution with respect to the vector $(x_{1,\bm{p}(i')},\dots, x_{i'-1,\bm{p}(i')},x_{i'+1,\bm{p}(i')},\dots,x_{|U|,\bm{p}(i')})$.
\end{lemma}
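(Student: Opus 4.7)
The plan is to derive the distribution directly from the independence of the $\bm{p}(i)$'s. Recall that in \Call{PLipMWBM}{}, for each $i\in\{1,\ldots,|U|\}$ we sample $\bm{p}(i)$ independently with $\Pr[\bm{p}(i)=j]=\bm{x}_{i,j}$ (and $\Pr[\bm{p}(i)=\bot]=1-\sum_{j}\bm{x}_{i,j}$), and then define $\bm{C}_{j}=\{i:\bm{p}(i)=j\}$. Therefore the $|U|$ random variables $\bm{p}(1),\ldots,\bm{p}(|U|)$ are mutually independent, conditionally on the vector $\bm{x}$.

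I would proceed as follows. First, fix an index $j\in\{1,\ldots,|V|\}$ with $\Pr[\bm{p}(i')=j]>0$, and condition on the event $\bm{p}(i')=j$. By mutual independence of the $\bm{p}(i)$'s, conditioning on $\bm{p}(i')=j$ does not change the joint distribution of $\{\bm{p}(i)\}_{i\neq i'}$; in particular each $\bm{p}(i)$ for $i\neq i'$ remains independent with $\Pr[\bm{p}(i)=j]=\bm{x}_{i,j}$. Second, under $\bm{p}(i')=j$ we have $i'\in \bm{C}_{j}=\bm{C}_{\bm{p}(i')}$, and the remaining members of $\bm{C}_{\bm{p}(i')}$ are exactly those $i\neq i'$ with $\bm{p}(i)=j$. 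Consequently,
\[
|\bm{C}_{\bm{p}(i')}|-1 \;=\; \sum_{i\neq i'}\mathbf{1}[\bm{p}(i)=j],
\]
which is a sum of independent Bernoulli variables with parameters $(\bm{x}_{i,j})_{i\neq i'}$, i.e.\ exactly the Poisson binomial distribution $P\bigl((\bm{x}_{1,j},\ldots,\bm{x}_{i'-1,j},\bm{x}_{i'+1,j},\ldots,\bm{x}_{|U|,j})\bigr)$. Averaging over the value of $\bm{p}(i')$ then yields the statement in the form written in the lemma.

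There is no real obstacle here: the argument is essentially bookkeeping around the definition of Poisson binomial and the independence of the $|U|$ sampling steps that produce $\bm{p}$. The only point that deserves a moment of care is checking that the conditioning on $\bm{p}(i')=j$ does not distort the marginals of the other $\bm{p}(i)$'s, which is immediate from independence; and that $i'$ itself contributes precisely one element to $\bm{C}_{\bm{p}(i')}$ (hence the $-1$). These two observations together give the result in a single short proof.
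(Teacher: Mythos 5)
Your proof is correct and follows essentially the same route as the paper's: both arguments rest on the mutual independence of the $\bm{p}(i)$'s and the observation that, given the value of $\bm{p}(i')$, the quantity $|\bm{C}_{\bm{p}(i')}|-1$ counts the indices $i\neq i'$ with $\bm{p}(i)=\bm{p}(i')$, each contributing an independent Bernoulli with parameter $\bm{x}_{i,\bm{p}(i')}$. Your version is in fact slightly more careful than the paper's, since you make explicit the conditioning on $\bm{p}(i')=j$ and why it does not perturb the other marginals.
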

\begin{proof}
For each $i\in \{1,\dots, |U|\}$ and $j\in \{1,\dots, |V|\}$, we have $\Pr\left[\bm{p}(i)=j\right]=\bm{x}_{i,j}$. 
Additionally, $\bm{p}(i)\; (i\in\{1,\dots, |U|\})$ are pairwise independent. 
Therefore, $|\bm{C}_{p(i')}|-1$ can be regarded as the number of coins turning up when we toss $|U|$ independent coins with turning-up probabilities of $x_{1,\bm{p}(i')},\ldots,x_{i'-1,\bm{p}(i')},x_{i'+1,\bm{p}(i')},\dots,x_{|U|,\bm{p}(i')}$.
\end{proof}

Let 
\begin{align*}
    Y = \inf_{n\in \mathbb{Z}_{\geq 1},\sum_{t=1}^{n}y_t\leq 1}\sum_{k=0}^{\infty}\left(\frac{1}{k+1}\Pr_{\bm{k} \sim P(y)}\left[\bm{k}=k\right]\right).
\end{align*}
Then, the output matching size can be bounded by using $Y$ as can be seen in the following:
\begin{lemma}\label{lem:roundapprox}
    Let $\bm{x} \in \mathbb{R}^E$ be the vector and $\bm{M}$ be the matching as constructed in $\Call{PLipMWBM}{G,w,\epsilon}$.
    Then, we have
    \begin{align*}
        \E\left[\sum_{e \in \bm{M}}w(e)\right] \geq 
        Y \sum_{e \in E}w(e)\bm{x}_e.
    \end{align*}
\end{lemma}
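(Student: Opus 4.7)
The plan is to express the expected weight of the output matching as a sum over edges of $w(e)\Pr[e\in\bm{M}]$, and then to lower bound each $\Pr[e\in\bm{M}]$ by $Y\bm{x}_e$. So first, I would fix an edge $e=\{u_i,v_j\}$ and observe that $e\in \bm{M}$ holds if and only if both $\bm{p}(i)=j$ and $\bm{q}(j)=i$ occur. Since $\bm{q}(j)$ is chosen uniformly from $\bm{C}_j$ after the pivots $\bm{p}$ are fixed, we can write
\[
\Pr[e\in \bm{M}] \;=\; \Pr[\bm{p}(i)=j]\cdot \E\!\left[\frac{1}{|\bm{C}_j|}\,\Bigm|\,\bm{p}(i)=j\right] \;=\; \bm{x}_{i,j}\cdot \E\!\left[\frac{1}{|\bm{C}_j|}\,\Bigm|\,\bm{p}(i)=j\right],
\]
where the second equality uses the fact that $\bm{p}(i)=j$ occurs with probability $\bm{x}_{i,j}$ by construction.

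Next, I would analyze the conditional expectation. Conditioned on $\bm{p}(i)=j$ we have $i\in \bm{C}_j$, so $|\bm{C}_j|-1$ counts the remaining indices $i'\neq i$ with $\bm{p}(i')=j$. Because the random choices $\bm{p}(i')$ are independent across $i'$, the preceding lemma (applied to the restricted vector) gives that this count follows the Poisson binomial distribution $P(y)$ with $y=(\bm{x}_{1,j},\dots,\bm{x}_{i-1,j},\bm{x}_{i+1,j},\dots,\bm{x}_{|U|,j})$. Moreover, the LP constraint $\sum_{i'}\bm{x}_{i',j}\leq 1$ implies $\sum_{t}y_t\leq 1$, so the vector $y$ lies in the feasible domain of the infimum defining $Y$. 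Therefore,
\[
\E\!\left[\frac{1}{|\bm{C}_j|}\,\Bigm|\,\bm{p}(i)=j\right] \;=\; \sum_{k=0}^{\infty}\frac{1}{k+1}\Pr_{\bm{k}\sim P(y)}[\bm{k}=k] \;\geq\; Y.
\]

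Putting the pieces together, $\Pr[e\in \bm{M}]\geq Y\cdot \bm{x}_e$ for every edge $e$, and by linearity of expectation,
\[
\E\!\left[\sum_{e\in \bm{M}}w(e)\right] \;=\; \sum_{e\in E}w(e)\Pr[e\in \bm{M}] \;\geq\; Y\sum_{e\in E}w(e)\bm{x}_e,
\]
which is exactly the claim. The calculation is essentially routine once the identification with the Poisson binomial distribution is made; the only subtle point is verifying that the conditioning on $\bm{p}(i)=j$ does not alter the distribution of the remaining pivots (which follows from the independence of the $\bm{p}(i')$ across $i'$), and that the feasibility condition $\sum_t y_t\leq 1$ needed to invoke the definition of $Y$ is inherited from the corresponding LP constraint on column $j$.
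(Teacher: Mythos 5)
Your proposal is correct and follows essentially the same route as the paper: write the expected weight as $\sum_e w(e)\Pr[e\in\bm{M}]$, factor $\Pr[e\in\bm{M}]$ into $\bm{x}_{i,j}$ times the conditional expectation of $1/|\bm{C}_j|$, identify $|\bm{C}_j|-1$ (conditioned on $\bm{p}(i)=j$) with the Poisson binomial distribution on the restricted column vector, and invoke the definition of $Y$ using the LP column constraint. The only cosmetic difference is that you bound each edge's term individually rather than grouping by $i$ as the paper does, which if anything makes the step cleaner.
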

\begin{proof}
We have
\begin{align*}
    & \E\left[\sum_{e\in \bm{M}}w(e)\right] 
    = \sum_{e \in E}\left(w(e)\cdot \Pr[e \in \bm{M}]\right)
    = \sum_{i=1}^{|U|}\sum_{j=1}^{|V|}\left(w(i,j)\cdot \Pr\left[\bm{p}(i)=j\right]\cdot \Pr\left[\bm{q}(j)=i\mid \bm{p}(i)=j\right]\right)\\
    &= \sum_{i=1}^{|U|}\left(\Pr\left[\bm{q}(\bm{p}(i))=i\right]\cdot \sum_{j=1}^{|V|}\left(w(i,j)\cdot \Pr\left[\bm{p}(i)=j\right]\right)\right)
    = \sum_{i=1}^{|U|}\left(\E\left[\frac{1}{\left|\bm{C}_{\bm{p}(i)}\right|}\right]\cdot \sum_{j=1}^{|V|}w(i,j)\bm{x}_{i,j}\right)\\
    &= \sum_{i=1}^{|U|}\left(\sum_{k=0}^{\infty}\left(\frac{1}{k+1}\cdot \Pr\left[P((\bm{x}_{1,\bm{p}(i)},\dots,\bm{x}_{i-1,\bm{p}(i)},\bm{x}_{i+1,\bm{p}(i)}\dots,\bm{x}_{|U|,\bm{p}(i)}))=k\right]\right)\cdot \sum_{j=1}^{|V|}w(i,j)\bm{x}_{i,j}\right)\\
    &\geq Y \sum_{e \in E}w(e)\bm{x}_e.
    \qedhere
\end{align*}
\end{proof}

Now we evaluate $Y$. 
\begin{lemma}\label{lem:evalY}
$Y = \frac{1}{2}$.
\end{lemma}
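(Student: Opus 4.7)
The plan is to evaluate the quantity $\E_{\bm{k}\sim P(y)}[1/(\bm{k}+1)]$ via the integral identity $\frac{1}{k+1}=\int_0^1 z^k\,dz$, which converts the infimum into an analytic problem that can be handled directly.

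Concretely, I would first exchange sum and integral to write
\[
  \sum_{k=0}^\infty \frac{1}{k+1}\Pr_{\bm{k}\sim P(y)}[\bm{k}=k]
  = \int_0^1 \E_{\bm{k}\sim P(y)}\!\left[z^{\bm{k}}\right]dz
  = \int_0^1 \prod_{t=1}^n\bigl(1-y_t+y_tz\bigr)dz.
\]
Substituting $u=1-z$ turns the right-hand side into $\int_0^1 \prod_{t=1}^n (1-y_tu)\,du$. This reduces the question to estimating that single integral uniformly over all admissible $(n,y)$.

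For the lower bound $Y\geq 1/2$, I would apply the elementary inequality $\prod_{t=1}^n(1-a_t)\geq 1-\sum_{t=1}^n a_t$, valid whenever each $a_t\in[0,1]$ (a one-line induction from $(1-a)(1-b)\geq 1-a-b$). Setting $a_t=y_tu$ gives $\prod_{t=1}^n(1-y_tu)\geq 1-u\sum_{t=1}^n y_t\geq 1-u$, since $\sum y_t\leq 1$ and $u\in[0,1]$. Integrating yields $\int_0^1 (1-u)\,du=\tfrac12$. For the matching upper bound $Y\leq 1/2$, I would just exhibit the admissible instance $n=1$, $y_1=1$: then $P(y)$ is a point mass at $1$, so $\E[1/(\bm{k}+1)]=\tfrac12$.

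There is no serious obstacle; the only mildly delicate step is justifying the interchange of summation and integration, which is immediate because all terms are non-negative (Tonelli) and the probabilities sum to one. Once the integral representation is in place, the rest is a two-line estimate. The entire argument takes essentially one short paragraph in final form.
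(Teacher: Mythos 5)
Your proof is correct, and it takes a genuinely different route from the paper. The paper proves the lower bound by a merging argument: it shows that replacing two coordinates $y_1,y_2$ by their sum $y_1+y_2$ can only decrease the objective (the difference is $y_1y_2\sum_k(\tfrac{1}{k+1}-\tfrac{2}{k+2}+\tfrac{1}{k+3})Q_k\geq 0$, nonnegative term by term by convexity of $x\mapsto 1/x$), so the infimum is attained at $n=1$, where $Y(y_1)=1-y_1/2\geq 1/2$ with equality at $y_1=1$. You instead pass through the probability generating function: $\E[1/(\bm{k}+1)]=\int_0^1\E[z^{\bm{k}}]\,dz=\int_0^1\prod_t(1-y_tu)\,du$ after $u=1-z$, then apply the Weierstrass product inequality $\prod_t(1-y_tu)\geq 1-u\sum_t y_t\geq 1-u$ and integrate; the upper bound comes from the same extremal instance $n=1$, $y_1=1$. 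Both arguments are complete; yours is somewhat shorter and replaces the term-by-term sign check with a single classical inequality, while the paper's merging argument has the merit of identifying the extremal configuration (a single coin) structurally rather than only bounding the value. The interchange of sum and integral is justified by Tonelli exactly as you say.
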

\begin{proof}
For $y_1,\ldots,y_n \in [0,1]$ with $\sum_{i=1}^n y_i \leq 1$, let
\begin{align*}
    Y(y_1,\dots, y_n) =\sum_{k=0}^{\infty}\left(\frac{1}{k+1}\Pr_{\bm{k} \sim P(y_1,\ldots,y_n)}\left[\bm{k}=k\right]\right).
\end{align*}

Suppose $n\geq 2$. 
We show that deleting two elements of $(y_1,\dots, y_n)$, say $y_1$ and $y_2$, and appending their sum does not increase the value. 
To show this, for $k\in \mathbb{Z}_{\geq 0}$, we let $Q(k)= \Pr_{\bm{k} \sim P(y_3,\ldots,y_n)}\left[\bm{k}=k\right]$.
We have
\begin{align*}
    &Y(y_1,y_2,y_3,\dots, y_n)\\
    &=\sum_{k=0}^{\infty}\left(\left(\frac{1}{k+1}\cdot (1-y_1)(1-y_2) + \frac{1}{k+2}\cdot ((1-y_1)y_2+y_1(1-y_2)) + \frac{1}{k+3}\cdot y_1y_2\right)Q_k\right),\\
    &Y(y_1+y_2,y_3,\dots, y_n)\\
    &=\sum_{k=0}^{\infty}\left(\left(\frac{1}{k+1}\cdot (1-(y_1+y_2)) + \frac{1}{k+2}\cdot (y_1+y_2)\right)Q_k\right).
\end{align*}
Then, we have
\begin{align*}
    &Y(y_1,y_2,y_3,\dots, y_n)-Y(y_1+y_2,y_3,\dots, y_n)\\
    &=\sum_{k=0}^{\infty}\left(\left(\frac{1}{k+1}\cdot y_1y_2 - \frac{1}{k+2}\cdot 2y_1y_2 + \frac{1}{k+3}\cdot y_1y_2\right)Q_k\right)\\
    &=y_1y_2\cdot \sum_{k=0}^{\infty}\left(\left(\frac{1}{k+1} - \frac{2}{k+2} + \frac{1}{k+3}\right)Q_k\right)
    \geq 0.
\end{align*}
Hence, the minimum of $Y(y_1,\ldots,y_n)$ is attained when $n=1$.
We have
\begin{align*}
    Y(y_1)=(1-y_1)+\frac{1}{2}\cdot y_1\geq \frac{1}{2},
\end{align*}
where the equality holds when $y_1 = 1$, and hence the lemma holds.
\end{proof}

\begin{proof}[Proof of Lemma~\ref{lem:mwbm_approx}]
Combining Lemmas~\ref{lem:LPapprox},~\ref{lem:roundapprox} and~\ref{lem:evalY} yields the claim.
\end{proof}

\subsubsection{Stability of Entropy Regularization}
In this section, we discuss the stability of the solution of the LP with entropy regularization.


For a graph $G=(U\cup V, E)$, a weight vector $w\in \mathbb{R}_{\geq 0}^E$, and $B>0$, let
\[
    f(x) = \sum_{e \in E}w(e) x_e - B\sum_{e \in E}x_e \log x_e
\]
be the objective function of $\LPmod$ with a weight vector $w\in \mathbb{R}_{\geq 0}^E$.
We first discuss the following property of $f$.
\begin{lemma}\label{lem:modified-strong-convexity}
    For any $x,y \in \mathbb{R}^E$ with $\sum_{j\in V}x_{i,j}\leq 1$ and $\sum_{j \in V}y_{i,j} \leq 1$, we have
    \[
        f(x) - f(y) \geq \nabla f(x)^\top (x-y) + \frac{B}{2}\sum_{i \in U} \left(\sum_{j \in V}|x_{i,j}-y_{i,j}|\right)^2
    \]
\end{lemma}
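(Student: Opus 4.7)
The plan is to split $f$ into its linear part $L(x) := \sum_e w(e)x_e$ and its entropy part $B h(x)$ with $h(x) := -\sum_e x_e \log x_e$, and to exploit the product structure of the row constraints. Since $L$ is linear, $L(x) - L(y) = \nabla L(x)^\top (x-y)$ exactly. Hence it suffices to prove the analogous inequality for $h$, namely
\[
h(x) - h(y) \geq \nabla h(x)^\top (x-y) + \frac{1}{2}\sum_{i \in U}\left(\sum_{j \in V}|x_{i,j}-y_{i,j}|\right)^2,
\]
and then multiply by $B$ and add the linear identity.

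The entropy separates across rows: writing $x_i := (x_{i,j})_{j \in V}$ and $h_i(x_i) := -\sum_j x_{i,j}\log x_{i,j}$, we have $h(x) = \sum_{i \in U} h_i(x_i)$. I will show that each $-h_i$ is strongly convex with modulus $1$ with respect to the $\ell_1$ norm on the convex set $\Delta := \{z \in \mathbb{R}_{\geq 0}^V : \sum_j z_j \leq 1\}$. Concretely, for $z$ in the relative interior of $\Delta$ and any $v \in \mathbb{R}^V$, a direct Cauchy--Schwarz calculation gives
\[
\|v\|_1^2 = \left(\sum_j \frac{|v_j|}{\sqrt{z_j}}\cdot \sqrt{z_j}\right)^{\!2} \leq \left(\sum_j \frac{v_j^2}{z_j}\right)\!\left(\sum_j z_j\right) \leq \sum_j \frac{v_j^2}{z_j} = v^\top\bigl(-\nabla^2 h_i(z)\bigr)v,
\]
where the last inequality uses $\sum_j z_j \leq 1$. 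Since $\Delta$ is convex, the segment from $x_i$ to $y_i$ lies in $\Delta$, so Taylor's theorem with integral remainder applied to $-h_i$ along this segment yields the strong-convexity inequality
\[
-h_i(y_i) \geq -h_i(x_i) + \nabla(-h_i)(x_i)^\top (y_i - x_i) + \frac{1}{2}\left(\sum_j |x_{i,j}-y_{i,j}|\right)^2,
\]
which rearranges to the row-wise version of the target bound. Summing over $i \in U$ completes the proof of the entropy part, and combining with the linear identity proves the lemma.

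The main obstacle is the handling of boundary points where some $x_{i,j}$ vanish, since $\nabla h$ then blows up. In the intended application, however, the lemma is invoked at the optimizer $\bm{x}$ of $\LPmod$, which necessarily lies in the strict interior of the feasible region because the entropy term drives the gradient to $+\infty$ as any coordinate approaches $0$; the Cauchy--Schwarz step above is thus applied at an interior point and no regularization of the boundary is needed. A minor subtlety is that the Hessian lower bound must be integrated along the entire segment from $x$ to $y$, but convexity of $\Delta$ guarantees the relevant row-sum bound $\sum_j (x_{i,j}+t(y_{i,j}-x_{i,j})) \leq 1$ holds for all $t \in [0,1]$, so the pointwise estimate transfers to the Taylor remainder uniformly.
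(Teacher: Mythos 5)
Your proof is correct and follows essentially the same route as the paper's: the heart of both arguments is the identical Cauchy--Schwarz bound $\bigl(\sum_j |v_j|\bigr)^2 \leq \bigl(\sum_j v_j^2/z_j\bigr)\bigl(\sum_j z_j\bigr) \leq \sum_j v_j^2/z_j$ on the quadratic form of the entropy Hessian, combined with the row-sum constraint. Your version is in fact slightly more careful than the paper's, which states the Hessian bound only at $x$ rather than along the whole segment and does not comment on boundary points; your observations that convexity of the constraint set propagates the row-sum bound along the segment and that the application only involves interior optimizers close these small gaps.
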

\begin{proof}
    Because $f$ is concave, for $z = x - y$, it suffices to show
    \[
        z^\top H(x) z \leq - B \sum_{i \in U}\left(\sum_{j \in V}|z_{i,j}|\right)^2,
    \]
    where $H(x)$ is the Hessian of $f$ at $x$.
    Indeed, we have
    \begin{align*}
        & z^\top H(x) z = - B \sum_{e \in E}\frac{z_e^2}{x_e}
        = - B \sum_{i\in U} \sum_{j \in V}\frac{z_{i,j}^2}{x_{i,j}}  \\
        & \leq  - B \sum_{i\in U} \left(\sum_{j \in V}\frac{z_{i,j}^2}{x_{i,j}} \cdot \sum_{j \in V}x_{i,j}\right)
        \leq  - B \sum_{i\in U} \left(\sum_{j \in V}|z_{i,j}|\right)^2.
        \qedhere
    \end{align*}
\end{proof}

\begin{lemma}\label{lem:stability-of-entropic-regularization}
Let $G=(U\cup V, E)$ be a graph, $w\in \mathbb{R}_{\geq 0}^E$ be a weight vector, $f\in E$,  and $\delta>0$. Let $x, x^{\delta f} \in \mathbb{R}^E$ be the optimal solutions to $\LPmod(G,w,B)$ and $\LPmod(G,w+\delta \mathbf{1}_f, B)$, respectively.
Then, we have
\[
    \|x - x^{\delta f}\|_1 = O\left(\frac{\delta \sqrt{|U|}}{B}\right).
\]
\end{lemma}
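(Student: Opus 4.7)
The plan is to exploit Lemma~\ref{lem:modified-strong-convexity} in the standard way one uses strong concavity to control how optimizers move under a linear perturbation of the objective. Write $f(z)=\sum_{e}w(e)z_e-B\sum_e z_e\log z_e$ and $f^{\delta f}(z)=\sum_e (w(e)+\delta\mathbf{1}_{e=f})z_e-B\sum_e z_e\log z_e$, so that $f^{\delta f}(z)-f(z)=\delta z_f$. Let $P$ denote the (common) feasible polytope, and set $\Delta_i:=\sum_{j\in V}|x_{i,j}-x^{\delta f}_{i,j}|$, so that $\|x-x^{\delta f}\|_1=\sum_{i\in U}\Delta_i$.

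First I would invoke first-order optimality: since $x$ maximizes $f$ on $P$, $\nabla f(x)^\top(x-x^{\delta f})\ge 0$, and similarly $\nabla f^{\delta f}(x^{\delta f})^\top(x^{\delta f}-x)\ge 0$. Plugging these into the inequality of Lemma~\ref{lem:modified-strong-convexity} applied once to $f$ at $x$ and once to $f^{\delta f}$ at $x^{\delta f}$ and adding, the gradient terms drop out and the entropic terms combine to yield
\[
(f(x)-f(x^{\delta f}))+(f^{\delta f}(x^{\delta f})-f^{\delta f}(x))\;\ge\;B\sum_{i\in U}\Delta_i^{\,2}.
\]
The left-hand side telescopes to $\delta(x^{\delta f}_f-x_f)$, because $f^{\delta f}-f$ is the linear functional $\delta z_f$. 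Letting $i^\star$ denote the $U$-endpoint of $f$, we have $x^{\delta f}_f-x_f\le\Delta_{i^\star}$, so
\[
\delta\,\Delta_{i^\star}\;\ge\;B\sum_{i\in U}\Delta_i^{\,2}\;\ge\;B\,\Delta_{i^\star}^{\,2},
\]
which gives $\Delta_{i^\star}\le\delta/B$ and then $\sum_i\Delta_i^{\,2}\le\delta\Delta_{i^\star}/B\le\delta^{\,2}/B^{\,2}$.

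Finally, I would pass from the $\ell_2$-type bound on the sequence $(\Delta_i)_{i\in U}$ to the desired $\ell_1$ bound by Cauchy--Schwarz:
\[
\|x-x^{\delta f}\|_1=\sum_{i\in U}\Delta_i\;\le\;\sqrt{|U|}\,\sqrt{\sum_{i\in U}\Delta_i^{\,2}}\;\le\;\sqrt{|U|}\cdot\frac{\delta}{B},
\]
which is exactly $O(\delta\sqrt{|U|}/B)$. The only real obstacle is the verification of Lemma~\ref{lem:modified-strong-convexity} itself (already done in the excerpt); the remainder is the usual perturbation-of-optimizers calculation, and the $\sqrt{|U|}$ factor is the unavoidable cost of converting the per-row $\ell_2$ control into a global $\ell_1$ bound.
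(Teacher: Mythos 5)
Your proposal is correct and follows essentially the same route as the paper's proof: add the two strong-concavity inequalities from Lemma~\ref{lem:modified-strong-convexity} with the gradient terms killed by first-order optimality, telescope the left side to $\delta(x^{\delta f}_f - x_f)$, bootstrap to $|x^{\delta f}_f - x_f|\le \delta/B$ and hence $\sum_i \Delta_i^2 \le \delta^2/B^2$, and finish with Cauchy--Schwarz. The only cosmetic difference is that you bound $x^{\delta f}_f-x_f$ by the row sum $\Delta_{i^\star}$ whereas the paper compares it directly to the full quadratic sum; both yield the same conclusion.
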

\begin{proof}
Let 
\begin{align*}
    f^{\delta f}(x)&=\sum_{e\in E \setminus \{f\}}w(e)x_{e} + (w(f)+\delta)x_f -B\sum_{e\in E}x_{e}\log x_e.
\end{align*}
be the objective function of $\LPmod$ with the weight vector $w+\delta \bm{1}_f$.

By Lemma~\ref{lem:modified-strong-convexity}, we have
\begin{align*}
    f(x) - f(x^{\delta f}) \geq \nabla f(x)^\top (x - x^{\delta f}) + \frac{B}{2} \sum_{i \in U}\left(\sum_{j \in V}|x_{i,j} - x^{\delta f}_{i,j}|\right)^2 \geq \frac{B}{2} \sum_{i \in U}\left(\sum_{j \in V}|x_{i,j} - x^{\delta f}_{i,j}|\right)^2.
\end{align*}
Similarly, we have
\begin{align*}
    f^{\delta f}(x^{\delta f}) - f(x) \geq \frac{B}{2} \sum_{i \in U}\left(\sum_{j \in V}|x_{i,j} - x^{\delta f}_{i,j}|\right)^2.
\end{align*}
Also, we have
\begin{align*}
    f(x) - f(x^{\delta f}) + f^{\delta f}(x^{\delta f}) - f^{\delta f}(x) = \delta (x^{\delta f}_f - x_f).
\end{align*}
Combining the above three, we have
\begin{align}
    B \sum_{i \in U}\left(\sum_{j \in V}|x_{i,j} - x^{\delta f}_{i,j}|\right)^2 \leq \delta (x^{\delta f}_f - x_f).    
    \label{eq:stability-of-entropic-regularization}
\end{align}
In particular, this implies
\[
    B (x^{\delta f}_f - x_f)^2 \leq \delta (x^{\delta f}_f - x_f),
\]
and hence we have $|x^{\delta f}_f - x_f| \leq \delta/B$.
By substituting this into~\eqref{eq:stability-of-entropic-regularization}, we obtain
\[
    \sum_{i \in U}\left(\sum_{j \in V}|x_{i,j} - x^{\delta f}_{i,j}|\right)^2 \leq \frac{\delta^2}{B^2}.
\]
Then, we have
\[
    \|x - x^{\delta f}\|_1 = \sum_{i \in U}\sum_{j \in V}|x_{i,j} - x^{\delta f}_{i,j}| \leq  \sqrt{\sum_{i \in U}\left(\sum_{j \in V}|x_{i,j} - x^{\delta f}_{i,j}|\right)^2} \cdot \sqrt{\sum_{i \in U}1^2} = \frac{\sqrt{|U|} \delta}{B}.
    \qedhere
\]

\end{proof}

\subsubsection{Pointwise Lipschitzness}

The goal of this section is to show the following:
\begin{lemma}\label{lem:mwbm_main}
    For any bipartite graph $G=(U \cup V,E)$, a weight vector $w\in \mathbb{R}_{\geq 0}^E$, $\delta >0$, and $f \in E$, we have
    \begin{align*}
        \EMU(\Call{PLipMWBM}{G,w,\epsilon}, \Call{PLipMWBM}{G, w+\delta \mathbf{1}_f, \epsilon}) \leq O\left(\frac{\delta\epsilon^{-1} |U|^{\frac{3}{2}}\log |V|}{\OPT(G,w)}\right).
    \end{align*}
\end{lemma}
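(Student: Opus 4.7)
The plan is to construct a coupling of the two runs in layers, first on $\bm{B}$, then on the LP solution, then on the rounding variables $\bm{p}$ and $\bm{q}$, and bound the contribution of each layer separately. By Lemma~\ref{lem:seeoneelement-unweighted} it suffices to bound $\EMU$ for small $\delta$, so I will tacitly assume $\delta \le \OPT(G,w)$; larger $\delta$ is immediate from the trivial bound $\EMU \le 2|U|$.

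First I would couple $\bm{B}$ with $\bm{B}^{\delta f}$ maximally. Because $\OPT(G,w+\delta\bm{1}_f) \le \OPT(G,w)+\delta$, the two supporting intervals differ only by a shift of length $\epsilon\delta/(|U|\log|V|)$ on top of an interval of length $\epsilon\OPT(G,w)/(|U|\log|V|)$, giving $\TV(\bm{B},\bm{B}^{\delta f}) \le \delta/\OPT(G,w)$. When the two values of $\bm{B}$ disagree I use the trivial bound $d_u \le 2|U|$, contributing $O(\delta|U|/\OPT(G,w))$ to $\EMU$, which is absorbed by the target bound.

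Next, condition on $\bm{B} = \bm{B}^{\delta f} = B$. By Lemma~\ref{lem:stability-of-entropic-regularization} the two LP solutions satisfy $\|\bm{x}-\bm{x}^{\delta f}\|_1 = O(\delta\sqrt{|U|}/B)$. I would then couple the rounding step by independently coupling each pair $(\bm{p}(i),\bm{p}^{\delta f}(i))$ maximally; since these two distributions on $V \cup \{\bot\}$ have total variation at most $\sum_{j\in V} |\bm{x}_{i,j}-\bm{x}^{\delta f}_{i,j}|$, the expected size of the set $\bm{S} := \{i : \bm{p}(i)\neq \bm{p}^{\delta f}(i)\}$ is at most $\|\bm{x}-\bm{x}^{\delta f}\|_1$. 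Each $i \in \bm{S}$ can change the set $\bm{C}_j$ at only two values of $j$, so at most $2|\bm{S}|$ buckets $\bm{C}_j$ differ between the two runs, and each such bucket contributes at most $2$ to the symmetric difference of the output matchings (one edge from each run). For all other $j$ the sets $\bm{C}_j$ agree and I couple $\bm{q}(j) = \bm{q}^{\delta f}(j)$ by shared randomness, so these buckets contribute nothing.

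Putting the bounds together yields
\[
\EMU \le \TV(\bm{B},\bm{B}^{\delta f}) \cdot 2|U| + 4\,\E\bigl[\|\bm{x}-\bm{x}^{\delta f}\|_1 \,\big|\, \bm{B}=\bm{B}^{\delta f}\bigr] = O\!\left(\frac{\delta|U|}{\OPT(G,w)}\right) + O\!\left(\frac{\delta\sqrt{|U|}}{B}\right),
\]
and substituting the lower bound $B \ge \epsilon\OPT(G,w)/(|U|\log|V|)$ gives the advertised $O(\delta \epsilon^{-1}|U|^{3/2}\log|V|/\OPT(G,w))$. The only step that requires genuine work is the coupling of the rounding, specifically the observation that at most $2|\bm{S}|$ buckets $\bm{C}_j$ are disturbed and the verification that buckets left undisturbed can be coupled to agree on $\bm{q}$; everything else is a routine application of Lemma~\ref{lem:stability-of-entropic-regularization} and the elementary $\TV$ computation for $\bm{B}$.
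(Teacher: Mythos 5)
Your proposal is correct and follows essentially the same route as the paper: couple $\bm{B}$ maximally (the paper's Lemma~\ref{lem:match_u_tvdb} gives $2\delta/\OPT(G,w)$ rather than $\delta/\OPT(G,w)$, since the second interval is also slightly longer, but this is immaterial for the big-$O$), invoke Lemma~\ref{lem:stability-of-entropic-regularization} conditioned on $\bm{B}=\bm{B}^{\delta f}$, couple the $\bm{p}(i)$ independently so that $\E[\|\bm{p}-\bm{p}^{\delta f}\|_0]\leq\|\bm{x}-\bm{x}^{\delta f}\|_1$, and then show the $\bm{q}$-layer costs only $O(\|\bm{x}-\bm{x}^{\delta f}\|_1)$. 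Your bucket-counting argument for that last step (at most $2|\bm{S}|$ disturbed sets $\bm{C}_j$, each contributing at most $2$, undisturbed buckets coupled exactly) is a valid, slightly more combinatorial substitute for the paper's direct bound on $\sum_{i,j}\bigl|\Pr[\bm{q}(j)=i]-\Pr[\bm{q}^{\delta f}(j)=i]\bigr|$ in Lemma~\ref{lem:distq}, and yields the same conclusion up to constants.
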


To bound the earth mover's distance between the output matchings for $(G,w)$ and $(G,w+\delta \bm{1}_f)$, we consider the coupling $\mathcal{W}$ between $(\bm{B},\bm{p},\bm{q})$ and $(\bm{B}^{\delta f},\bm{p}^{\delta f},\bm{q}^{\delta f})$, defined as follows: For each number $B \in \mathbb{R}$, we transport the probability mass for $\bm{B}=B$ to that for $\bm{B}^{\delta f}=B$ as far as possible in such a way that 
\begin{align*}
    \E_{((\bm{B},\bm{p},\bm{q}),(\bm{B}^{\delta f},\bm{p}^{\delta f},\bm{q}^{\delta f}))\sim \mathcal{W}}\left[\|\bm{q}-\bm{q}^{\delta f}\|_0\mid \bm{B}=\bm{B}^{\delta f}=B\right],
\end{align*}
is minimized, where $\|\bm{q}-\bm{q}^{\delta f}\|_0$ is the number of indices $j$ with $\bm{q}(j)\neq \bm{q}^{\delta f}(j)$.
The remaining mass is transported arbitrarily.
Note that $\|\bm{q}-\bm{q}^{\delta f}\|_0$ is equal to $d_u(\bm{M},\bm{M}^{\delta f})$, where $\bm{M}$ and $\bm{M}^{\delta f}$ are the matchings constructed using $\bm{q}$ and $\bm{q}^{\delta f}$, respectively.

First, we evaluate the remaining mass.
\begin{lemma}\label{lem:match_u_tvdb}
We have
\begin{align*}
    \TV\left(\bm{B},\bm{B}^{\delta f}\right)\leq \frac{2\delta}{\OPT(G,w)}
\end{align*}
\end{lemma}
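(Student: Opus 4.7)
The plan is to follow the template of Lemma~\ref{lem:unweighted-mst-b}, since $\bm{B}$ and $\bm{B}^{\delta f}$ are both uniform on intervals of the form $[aO, 2aO]$ where $a = \epsilon/(|U|\log|V|)$ is the same scale factor but $O$ is the weight-dependent optimum. So the problem reduces to comparing two uniform distributions whose endpoints shift with the change in optimum.

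First, I would establish the monotonicity bound $\OPT(G,w) \leq \OPT(G,w+\delta\mathbf{1}_f) \leq \OPT(G,w) + \delta$. The lower bound is immediate, since any matching feasible for $w$ is feasible for $w+\delta\mathbf{1}_f$ with weight at least as large. For the upper bound, take an optimal matching $M^*$ for $w+\delta\mathbf{1}_f$; its weight under $w$ is a lower bound on $\OPT(G,w)$ and is at most $\delta$ smaller than its weight under $w+\delta\mathbf{1}_f$ (the only edge whose weight changed is $f$, and its weight changes by $\delta$). Hence $\OPT(G,w+\delta\mathbf{1}_f) - \OPT(G,w) \leq \delta$.

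Next, writing $O := \OPT(G,w)$ and $O' := \OPT(G,w+\delta\mathbf{1}_f)$ with $O \leq O' \leq O + \delta$, I would directly compute $\TV(\bm{B}, \bm{B}^{\delta f})$ using densities $\tfrac{1}{aO}\mathbf{1}_{[aO, 2aO]}$ and $\tfrac{1}{aO'}\mathbf{1}_{[aO', 2aO']}$. In the regime where the supports overlap (i.e.\ $O' \leq 2O$), the set $[2aO, 2aO']$ lies entirely in the support of $\bm{B}^{\delta f}$ but outside that of $\bm{B}$, yielding
\[
\TV(\bm{B}, \bm{B}^{\delta f}) \;=\; \Pr\!\left[\bm{B}^{\delta f} \in [2aO, 2aO']\right] \;=\; \frac{2(O'-O)}{O'} \;\leq\; \frac{2\delta}{O'} \;\leq\; \frac{2\delta}{\OPT(G,w)},
\]
exactly mirroring the computation in Lemma~\ref{lem:unweighted-mst-b}.

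The only slightly delicate point is the degenerate regime where the two supports fail to overlap, which requires $O' > 2O$ and hence $\delta > O = \OPT(G,w)$; in that case the trivial bound $\TV \leq 1 < 2\delta/\OPT(G,w)$ already gives the claim. This is a minor case-check rather than a serious obstacle, and by Lemma~\ref{lem:seeoneelement-unweighted} only small $\delta$ matters for bounding the pointwise Lipschitz constant anyway.
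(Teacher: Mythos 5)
Your proposal is correct and follows essentially the same route as the paper: both identify $\TV(\bm{B},\bm{B}^{\delta f})$ with the mass of $\bm{B}^{\delta f}$ on the part of its support to the right of $2\epsilon\cdot\OPT(G,w)/(|U|\log|V|)$, evaluate it as $2(\OPT(G,w+\delta\mathbf{1}_f)-\OPT(G,w))/\OPT(G,w+\delta\mathbf{1}_f)$, and conclude via $\OPT(G,w)\leq\OPT(G,w+\delta\mathbf{1}_f)\leq\OPT(G,w)+\delta$. Your explicit handling of the non-overlapping regime $\OPT(G,w+\delta\mathbf{1}_f)>2\,\OPT(G,w)$ is a small point the paper leaves implicit, but it does not change the argument.
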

\begin{proof}
We have
\begin{align*}
    \TV\left(\bm{B},\bm{B}^{\delta f}\right) 
    &= \Pr\left[\bm{B}^{\delta f}\in \left[\frac{2\OPT(G,w)}{|U|\log |V|},\frac{2\OPT(G,w+\delta \mathbf{1}_{f})}{|U|\log |V|}\right]\right]\\
    &= 2\cdot \frac{\OPT(G,w+\delta \mathbf{1}_{f})-\OPT(G,w)}{\OPT(G,w+\delta \mathbf{1}_{f})}\\
    &\leq \frac{2\delta}{\OPT(G,w+\delta \mathbf{1}_{f})}\leq \frac{2\delta}{\OPT(G,w)}.
    \qedhere
\end{align*}
\end{proof}

Now we evaluate the distance between $\bm{p}$ and $\bm{p}^{\delta f}$.
\begin{lemma}\label{lem:distp}
Let $B\in \mathbb{R}_{\geq 0}$ and let $x, x^{\delta f} \in \mathbb{R}_{\geq 0}^{E}$ be the optimal solutions to $\LPmod(G,w)$ and $\LPmod(G, w+\mathbf{1}_{\delta})$, respectively. Then, we have
\begin{align*}
    \E_{((\bm{B},\bm{p},\bm{q}),(\bm{B}^{\delta f},\bm{p}^{\delta f},\bm{q}^{\delta f}))\sim \mathcal{W}}\left[\|\bm{p}-\bm{p}^{\delta f}\|_0\mid \bm{B}=\bm{B}^{\delta f}=B\right]\leq \|x-x^{\delta f}\|_1.
\end{align*}
\end{lemma}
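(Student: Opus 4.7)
The key observation is that, conditioned on $\bm{B} = \bm{B}^{\delta f} = B$, the variables $\bm{p}(1), \ldots, \bm{p}(|U|)$ are drawn \emph{independently} with laws that depend only on the corresponding rows of $x$ (and similarly for $\bm{p}^{\delta f}$ and $x^{\delta f}$). I would therefore construct $\mathcal{W}$ (restricted to the event $\bm{B} = \bm{B}^{\delta f} = B$) by independently taking an optimal TV-coupling of $(\bm{p}(i), \bm{p}^{\delta f}(i))$ for each $i \in \{1,\ldots,|U|\}$, and then coupling $(\bm{q}, \bm{q}^{\delta f})$ so that $\bm{q}(j) = \bm{q}^{\delta f}(j)$ whenever $\bm{C}_j = \bm{C}_j^{\delta f}$. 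In particular, whenever $\bm{p} = \bm{p}^{\delta f}$, the sets $\bm{C}_j$ and $\bm{C}_j^{\delta f}$ coincide for every $j$, so we can force $\bm{q} = \bm{q}^{\delta f}$, which means this construction is compatible with the definition of $\mathcal{W}$ as a coupling that minimizes $\E[\|\bm{q} - \bm{q}^{\delta f}\|_0]$ conditioned on $\bm{B} = \bm{B}^{\delta f}$.

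The next step is a pointwise TV bound. For a fixed $i$, the distribution of $\bm{p}(i)$ on $\{1, \ldots, |V|\} \cup \{\bot\}$ assigns mass $x_{i,j}$ to $j$ and $1 - \sum_j x_{i,j}$ to $\bot$, and analogously for $\bm{p}^{\delta f}(i)$. Computing directly,
\[
  \TV(\bm{p}(i), \bm{p}^{\delta f}(i)) = \tfrac{1}{2}\Bigl(\sum_{j=1}^{|V|} |x_{i,j} - x^{\delta f}_{i,j}| + \bigl|\textstyle\sum_{j=1}^{|V|} (x_{i,j} - x^{\delta f}_{i,j})\bigr|\Bigr) \leq \sum_{j=1}^{|V|} |x_{i,j} - x^{\delta f}_{i,j}|,
\]
by the triangle inequality. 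Under the optimal coupling of $(\bm{p}(i), \bm{p}^{\delta f}(i))$, the disagreement probability equals this TV distance.

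Summing over $i$ and using independence of rows,
\[
  \E\bigl[\|\bm{p} - \bm{p}^{\delta f}\|_0 \mid \bm{B} = \bm{B}^{\delta f} = B\bigr]
  = \sum_{i=1}^{|U|} \Pr[\bm{p}(i) \neq \bm{p}^{\delta f}(i)]
  \leq \sum_{i=1}^{|U|} \sum_{j=1}^{|V|} |x_{i,j} - x^{\delta f}_{i,j}| = \|x - x^{\delta f}\|_1,
\]
which yields the lemma.

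The main subtle point is justifying that the coupling $\mathcal{W}$, which is defined as a minimizer of $\E[\|\bm{q} - \bm{q}^{\delta f}\|_0]$, can be taken of the product form described above. This is precisely the content of the observation that $\bm{q}$ is a function of $\bm{p}$ and independent auxiliary randomness used to pick $\bm{q}(j)$ from $\bm{C}_j$: agreement of $\bm{p}$ at row $i$ propagates to agreement of $\bm{C}_j$ for every $j$, which in turn allows agreement of $\bm{q}$. Hence any row-wise optimal coupling of the $\bm{p}$'s, extended naturally to $\bm{q}$, lies in the set of minimizers defining $\mathcal{W}$, and the bound derived above transfers.
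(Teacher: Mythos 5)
Your proof is correct and follows essentially the same route as the paper's: a row-wise optimal total-variation coupling of $\bm{p}(i)$ and $\bm{p}^{\delta f}(i)$, taken as a product over $i$ using conditional independence, with the per-row disagreement probability bounded by $\sum_j |x_{i,j}-x^{\delta f}_{i,j}|$. Your extra care about whether this product coupling is compatible with the definition of $\mathcal{W}$ (which is stated as minimizing the $\bm{q}$-distance, not the $\bm{p}$-distance) addresses a point the paper silently glosses over, and is a welcome addition rather than a deviation.
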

\begin{proof}
First, we have
\begin{align*}
    \left|\Pr\left[\bm{p}(i)=j\right]-\Pr[\bm{p}^{\delta f}(i)=j]\right| = \left|x_{i,j}-x^{\delta f}_{i,j}\right|
\end{align*}
for all $i\in \{1,\dots, |U|\}$ and $j\in \{1,\dots, |V|\}$. 
Because the $\bm{p}(i)$ values are independent of each other when $\bm{B}=\bm{B}^{\delta f}=B$, taking the product of the transportation for all $i\in \{1,\dots, |U|\}$ yields the lemma.
\end{proof}

Now we evaluate the distance between $\bm{q}$ and $\bm{q}^{\delta f}$.
\begin{lemma}\label{lem:distq}
Let $B\in \mathbb{R}_{\geq 0}$ and let $x$ and $x^{\delta f}$ be the optimal solutions to $\LPmod(G,w)$ and $\LPmod(G, w+\mathbf{1}_{\delta})$, respectively. Then, we have
\begin{align*}
    \E_{((\bm{B},\bm{p},\bm{q}),(\bm{B}^{\delta f},\bm{p}^{\delta f},\bm{q}^{\delta f}))\sim \mathcal{W}}\left[\|\bm{q}-\bm{q}^{\delta f}\|_0\mid \bm{B}=\bm{B}^{\delta f}=B\right]\leq 2\cdot \|x-x^{\delta f}\|_1.
\end{align*}
\end{lemma}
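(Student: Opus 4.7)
The plan is to exhibit a specific coupling of $(\bm{B},\bm{p},\bm{q})$ and $(\bm{B}^{\delta f},\bm{p}^{\delta f},\bm{q}^{\delta f})$ whose expected $\|\bm{q}-\bm{q}^{\delta f}\|_0$ is at most $2\|x-x^{\delta f}\|_1$; since $\mathcal{W}$ is defined as a minimizer over such couplings (conditional on $\bm{B}=\bm{B}^{\delta f}=B$), this will imply the claimed bound.

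First, I would inherit the coupling of $\bm{p}$ and $\bm{p}^{\delta f}$ from Lemma~\ref{lem:distp}, which guarantees $\E[\|\bm{p}-\bm{p}^{\delta f}\|_0 \mid \bm{B}=\bm{B}^{\delta f}=B]\leq \|x-x^{\delta f}\|_1$. Given the realizations of $\bm{p}$ and $\bm{p}^{\delta f}$, the sets $\bm{C}_j$ and $\bm{C}_j^{\delta f}$ are determined for every $j\in V$. I would then couple $\bm{q}$ and $\bm{q}^{\delta f}$ as follows: for each $j$ with $\bm{C}_j=\bm{C}_j^{\delta f}\neq \emptyset$ sample a single uniform element from $\bm{C}_j$ and use it for both $\bm{q}(j)$ and $\bm{q}^{\delta f}(j)$, and for each other $j$ sample $\bm{q}(j)$ and $\bm{q}^{\delta f}(j)$ independently from their respective conditional distributions. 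Under this coupling, $\bm{q}(j)\neq \bm{q}^{\delta f}(j)$ can occur only at indices $j$ where $\bm{C}_j\neq \bm{C}_j^{\delta f}$.

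The key combinatorial step is to bound $\#\{j\in V : \bm{C}_j\neq \bm{C}_j^{\delta f}\}$ in terms of $\|\bm{p}-\bm{p}^{\delta f}\|_0$. Every index $i\in U$ with $\bm{p}(i)=\bm{p}^{\delta f}(i)$ contributes identically to $\bm{C}_j$ and $\bm{C}_j^{\delta f}$ for all $j$. Every index $i$ with $\bm{p}(i)\neq \bm{p}^{\delta f}(i)$ can affect $\bm{C}_j$ versus $\bm{C}_j^{\delta f}$ only at the (at most) two values $j=\bm{p}(i)$ and $j=\bm{p}^{\delta f}(i)$ (ignoring the token $\bot$, which does not index any set). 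Hence
\[
  \#\{j : \bm{C}_j\neq \bm{C}_j^{\delta f}\}\leq 2\,\|\bm{p}-\bm{p}^{\delta f}\|_0.
\]

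Combining the two displayed inequalities, for the exhibited coupling we have
\[
  \E\bigl[\|\bm{q}-\bm{q}^{\delta f}\|_0 \bigm| \bm{B}=\bm{B}^{\delta f}=B\bigr] \leq 2\,\E\bigl[\|\bm{p}-\bm{p}^{\delta f}\|_0 \bigm| \bm{B}=\bm{B}^{\delta f}=B\bigr]\leq 2\,\|x-x^{\delta f}\|_1,
\]
where the last inequality is Lemma~\ref{lem:distp}. Since $\mathcal{W}$ minimizes this conditional expectation, the same bound holds for $\mathcal{W}$, completing the proof. The main (though mild) obstacle is being precise about what happens when $\bm{p}(i)=\bot$ or when one of $\bm{C}_j,\bm{C}_j^{\delta f}$ is empty so that no $\bm{q}(j)$ is defined; in both cases the index $j$ either coincides for the two processes or contributes at most one to the count, so the factor $2$ is still the right bound.
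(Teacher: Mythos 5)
Your proof is correct, and the key step is carried out differently from the paper's. Both arguments share the same outer structure: condition on a realization $(p,p^{\delta f})$ from the coupling of Lemma~\ref{lem:distp}, bound the conditional expected number of mismatches between $\bm{q}$ and $\bm{q}^{\delta f}$ by $2\|p-p^{\delta f}\|_0$, integrate, and invoke Lemma~\ref{lem:distp}. The paper obtains the conditional bound without building an explicit coupling of the $\bm{q}$'s: it upper-bounds the optimal conditional transport by $\sum_{i,j}\bigl|\frac{1}{|C_j|}\bm{1}_{p(i)=j}-\frac{1}{|C_j^{\delta f}|}\bm{1}_{p^{\delta f}(i)=j}\bigr|$ (using independence of the $\bm{q}(j)$'s) and splits this algebraically into a term counting indices where the indicator changes plus a term $\sum_j\bigl||C_j^{\delta f}|-|C_j|\bigr|$ from the change of the normalizing factors, each at most $\|p-p^{\delta f}\|_0$. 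You instead exhibit a concrete coupling (shared uniform samples wherever $C_j=C_j^{\delta f}$, independent otherwise) and charge each $j$ with $C_j\neq C_j^{\delta f}$ to one of the at most two sets touched by an index $i$ with $p(i)\neq p^{\delta f}(i)$. Your charging argument is shorter and avoids the paper's case analysis on the indicators; the paper's per-$j$ total-variation bound is pointwise a bit tighter (it does not give up a full unit at every $j$ with $C_j\neq C_j^{\delta f}$), but both routes produce the same factor of $2$ and the same final inequality. Your treatment of $\bot$ and of empty $C_j$, and the final appeal to the minimality defining $\mathcal{W}$, are all sound.
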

\begin{proof}
Fix $B,p,p^{\delta f} \in \mathbb{R}_{\geq 0}$, and let $(\bm{q},\bm{q}^{\delta f})$ be sampled from the marginal distribution of $\mathcal{W}$ conditioned on $\bm{B} = \bm{B}^{\delta f} = B$, $\bm{p}=p$, and $\bm{p}^{\delta f}=p^{\delta f}$. Note that under this condition, $C$ and $C^{\delta f}$ are no longer random variables because they are  deterministically determined by $p$ and $p^{\delta f}$, respectively.
Then, we have 
\begin{align}
    &\E_{((\bm{B},\bm{p},\bm{q}),(\bm{B}^{\delta f},\bm{p}^{\delta f},\bm{q}^{\delta f}))\sim \mathcal{W}}\left[\|\bm{q}-\bm{q}^{\delta f}\|_0\mid \bm{B}=\bm{B}^{\delta f}=B, \bm{p}=p, \bm{p}=p^{\delta f}\right] \nonumber \\
    &\leq \sum_{i=1}^{|U|}\sum_{j=1}^{|V|}\left|\Pr\left[\bm{q}(j)=i\right]-\Pr[\bm{q}^{\delta f}(j)=i]\right| \nonumber \\ 
    &= \sum_{i=1}^{|U|}\sum_{j=1}^{|V|}\left|\frac{1}{|C_j|}\bm{1}_{p(i)=j}-\frac{1}{|C^{\delta f}_j|}\bm{1}_{p^{\delta f}(i)=j}\right| \nonumber \\
    &\leq \sum_{i=1}^{|U|}\sum_{j=1}^{|V|}\left|\bm{1}_{p(i)=j}-\bm{1}_{p^{\delta f}(i)=j}\right|+\sum_{j=1}^{|V|}\left(|C_j|\cdot \left|\frac{1}{|C_j|}-\frac{1}{|C^{\delta f}_j|}\right|\right) \nonumber \\
    &= \|p-p^{\delta f}\|_0+\sum_{j=1}^{|V|}\left(|C_j|\cdot \left|\frac{1}{|C_j|}-\frac{1}{|C^{\delta f}_j|}\right|\right) \nonumber \\
    &\leq \|p-p^{\delta f}\|_0+\sum_{j=1}^{|V|}\left|\frac{|C^{\delta f}_j|-|C_j|}{|C^{\delta f}_j|}\right| \nonumber \\
    &\leq \|p-p^{\delta f}\|_0+\sum_{j=1}^{|V|}\left||C^{\delta f}_j|-|C_j|\right| \nonumber \\
    &= \|p-p^{\delta f}\|_0+\sum_{j=1}^{|V|}\sum_{i=1}^{|U|}\left|\bm{1}_{p(i)=j}-\bm{1}_{p^{\delta f}(i)=j}\right| \nonumber \\
    &= 2\cdot \|p-p^{\delta f}\|_0, \label{eq:distq}
\end{align}
where the first inequality is from the fact that $\bm{q}(j)$ (and $\bm{q}^{\delta f}(j)$) are pairwise independent for $j\in \{1,\dots, |V|\}$ and the second inequality is from
\begin{align*}
    &\left|\frac{1}{|C_j|}\bm{1}_{p(i)=j}-\frac{1}{|C^{\delta f}_j|}\bm{1}_{p^{\delta f}(i)=j}\right|\leq \begin{cases}
        1 & (\bm{1}_{p(i)=j}\neq \bm{1}_{p^{\delta f}(i)=j}), \\
        \left|\frac{1}{|C_j|}-\frac{1}{|C^{\delta f}(j)|}\right| & (\bm{1}_{p(i)=j}=\bm{1}_{p^{\delta f}(i)=j}=1),\\
        0 & (\bm{1}_{p(i)=j}=\bm{1}_{p^{\delta f}(i)=j}=0).
    \end{cases}
\end{align*}
and the fact that there are at most $|C_j|$ indices $i$ with $\bm{1}_{p(i)=j}=\bm{1}_{p^{\delta f}(i)=j}=1$.
By integrating~\eqref{eq:distq} over $p$ and $p^{\delta f}$, we obtain
\begin{align*}
    &\E_{((\bm{B},\bm{p},\bm{q}),(\bm{B}^{\delta f},\bm{p}^{\delta f},\bm{q}^{\delta f}))\sim \mathcal{W}}\left[\|\bm{q}-\bm{q}^{\delta f}\|_0\mid \bm{B}=\bm{B}^{\delta f}=B\right]\\
    &\leq \E_{((\bm{B},\bm{p},\bm{q}),(\bm{B}^{\delta f},\bm{p}^{\delta f},\bm{q}^{\delta f}))\sim \mathcal{W}}\left[2\cdot \|\bm{p}-\bm{p}^{\delta f}\|_0\mid \bm{B}=\bm{B}^{\delta f}=B\right] 
    \leq 2\cdot \|x-x^{\delta f}\|_1,
\end{align*}
where the second inequality is from Lemma~\ref{lem:distp}. 
\end{proof}

Now we can bound the pointwise Lipschitz constant as follows.
\begin{proof}[Proof of Lemma~\ref{lem:mwbm_main}]
We have
\begin{align*}
    &\EMU(\Call{PLipMWBM}{G,w,\epsilon}, \Call{PLipMWBM}{G, w+\delta \mathbf{1}_f, \epsilon})\\
    &\leq \TV\left(\bm{B},\bm{B}^{\delta f}\right)\cdot |U| + \E_{((\bm{B},\bm{p},\bm{q}),(\bm{B}^{\delta f},\bm{p}^{\delta f},\bm{q}^{\delta f}))\sim \mathcal{W}}\left[\|\bm{q}-\bm{q}^{\delta f}\|_0\mid \bm{B}=\bm{B}^{\delta f}\right] \\
    &\leq \TV\left(\bm{B},\bm{B}^{\delta f}\right)\cdot |U| + \E_{((\bm{B},\bm{p},\bm{q}),(\bm{B}^{\delta f},\bm{p}^{\delta f},\bm{q}^{\delta f}))\sim \mathcal{W}}\left[2\cdot \|\bm{x}-\bm{x}^{\delta f}\|_1\mid \bm{B}=\bm{B}^{\delta f}\right]\\
    &\leq \frac{2\delta}{\OPT(G,w)}\cdot |U|+O\left(\delta \sqrt{|U|}\cdot \frac{|U|\log |V|}{\epsilon\cdot \OPT(G,w)}\right)= O\left(\frac{\delta\epsilon^{-1} |U|^{\frac{3}{2}}\log |V|}{\OPT(G,w)}\right),
\end{align*}
where the second inequality is from Lemma~\ref{lem:distq}, and the third inequality is from Lemma~\ref{lem:stability-of-entropic-regularization} and~\ref{lem:match_u_tvdb}.
\end{proof}

\begin{proof}[Proof of Theorem~\ref{thm:mm-unweighted}]
The approximability result follows by Lemma~\ref{lem:mwbm_approx}.
Let $w, w^*\in \mathbb{R}_{\geq 0}^E$ be two weight vectors with $\OPT(G,w)>0$ and $\|w-w^*\|\leq \frac{\OPT(G,w)}{2}$. Then by Lemma~\ref{lem:mwbm_main}, we have
\begin{align*}
    \EMU(\Call{PLipMWBM}{G,w^*,\epsilon}, \Call{PLipMWBM}{G, w^*+\delta \mathbf{1}_f, \epsilon})
    &\leq O\left(\frac{\delta\epsilon^{-1} |U|^{\frac{3}{2}}\log |V|}{\OPT(G,w^*)}\right)\\
    &\leq O\left(\frac{\delta\epsilon^{-1} |U|^{\frac{3}{2}}\log |V|}{\OPT(G,w)}\right).
\end{align*}
Thus, the claim follows by Lemma~\ref{lem:seeoneelement-unweighted}. 
\end{proof}

\section{Lipschitz Continuity under Shared Randomness}\label{sec:shared-randomness}

In this section, we show that the Lipschitz continuous algorithms we provided in the previous sections can be made Lipschitz continuous under shared randomness. 
In Sections~\ref{subsec:continuous} and~\ref{subsec:discrete}, we provide Lipschitz continuous algorithms for sampling continuous and discrete, respectively, distributions under shared randomness.
In Section~\ref{subsec:rewrite}, we explain how we modify the algorithms given in the previous sections to make them Lipschitz continuous under shared randomness.

\subsection{Sampling from Continuous Uniform Distributions}\label{subsec:continuous}

In our algorithms, we often sample a value from a uniform distribution over some range that slightly perturbs if we perturb the input weight vector. 
In this section, we consider sampling procedures such that, even when the range is slightly perturbed, we obtain the same value with a high probability over internal randomness.

Let $\mathcal{U}\left([0,1]\right)$ and $\mathcal{U}\left([0,1]^{\mathbb{Z}_{\geq 0}}\right)$ be the uniform distributions over $[0,1]$ and $[0,1]^{\mathbb{Z}_{\geq 0}}$, respectively.
Let $l$ and $r$ be functions that take weight vectors as arguments and return real values.
For $c \geq 1$, we say that a sampling process \Call{Sample}{$\cdot$} is $c$-\emph{stable} for a pair of functions $l,r$ if it samples a value uniformly from $[l(w),r(w)]$ when $p$ follows the uniform distribution over $[0,1]^{\mathbb{Z}_{\geq 0}}$ and
\begin{align*}
    &\E_{\bm{p}\sim \mathcal{U}\left([0,1]^{\mathbb{Z}_{\geq 0}}\right)}\left[\TV\left(\Call{Sample}{[l(w),r(w)],\bm{p}},\Call{Sample}{[l(w'),r(w')],\bm{p}}\right)\right]\\
    &\leq c\cdot\TV\left(\mathcal{U}([l(w),r(w)]),\mathcal{U}([l(w'),r(w')])\right)
\end{align*}
holds for all $w,w'\in \mathbb{R}_{\geq 0}^{E}$. 

Let us explain how $c$-stability can be used to bound the Lipschitz constant in the shared randomness setting using the example of the minimum spanning tree problem. In Algorithm~\ref{alg:mst_main_w}, for each edge $e$, a modified weight $\widehat{\bm{w}}(e)$ is sampled from a uniform distribution over $[w(e),(1+\epsilon)w(e)]$. The analysis of the Lipschitz constant proceeds as follows: For weights $w$ and $w+\delta \bm{1}_f$, a coupling $\mathcal{W}$ between $\widehat{\bm{w}}$ and $\widehat{\bm{w}}^{\delta f}$ is defined, and then the Lipschitz constant of the algorithm is proportional to $\Pr_{(\widehat{\bm{w}},\widehat{\bm{w}}^{\delta f}) \sim \mathcal{W}}\left[\widehat{\bm{w}}(f) \neq \widehat{\bm{w}}^{\delta f}(f)\right]$.

The call to \Call{Sample}{$[w(e),(1+\epsilon)w(e)],p$} naturally induces a coupling $\mathcal{W}'$ between $\widehat{\bm{w}}$ and $\widehat{\bm{w}}^{\delta f}$, that is, we use the same $p$. 
The same analysis goes through by considering $\mathcal{W}'$ instead of $\mathcal{W}$. Since \Call{Sample}{$\cdot$} is $c$-stable for the pair of functions $(x\mapsto x, x\mapsto (1+\epsilon)x)$, we have
\begin{align*}
    \Pr_{(\widehat{\bm{w}},\widehat{\bm{w}}^{\delta f}) \sim \mathcal{W}'}\left[\widehat{\bm{w}}(f) \neq \widehat{\bm{w}}^{\delta f}(f)\right]
    &\leq c\cdot \TV\left(\mathcal{U}([l(w),r(w)]),\mathcal{U}([l(w'),r(w')])\right)\\
    &\leq c\cdot \Pr_{(\widehat{\bm{w}},\widehat{\bm{w}}^{\delta f}) \sim \mathcal{W}}\left[\widehat{\bm{w}}(f) \neq \widehat{\bm{w}}^{\delta f}(f)\right].
\end{align*}
Therefore, in the shared randomness setting, we have obtained an algorithm with a Lipschitz constant that is at most $c$ times that of the original algorithm.

\begin{algorithm}[t!]
\caption{Algorithm for sampling $x$ uniformly from fixed $[l,r]$}\label{alg:sample_fixed}
\Procedure{\emph{\Call{Uniform}{$[l,r],p\in [0,1]$}}}{
    \Return $l+p\cdot (r-l)$\;
}
\end{algorithm}

\begin{algorithm}[t!]
\caption{Algorithm for sampling $x$ uniformly from $[l,cl]$}\label{alg:sample_ratio}
\Procedure{\emph{\Call{UniformFixedRatio}{$l,c,p\in [0,1]^{\mathbb{Z}_{\geq 0}}$}}}{
    $b \leftarrow \Call{Uniform}{[1,c],p(0)}$\;
    Let $k$ be the unique integer that satisfies $b\cdot c^{k}\leq l < b\cdot c^{k+1}$\;
    \For{$i=1,2,\ldots$}{
        $x \leftarrow \Call{Uniform}{[1,c^2],p_i}$\;\label{line:ratio_samplex}
        \If{$b\cdot c^{k}\cdot x\in [l,cl]$}{
            \Return $b\cdot c^{k}\cdot x$\;
        }
    }
}
\end{algorithm}

\begin{algorithm}[t!]
\caption{Algorithm for sampling $x$ uniformly from $[l,l+c]$}\label{alg:sample_width}
\Procedure{\emph{\Call{UniformFixedWidth}{$l,c,p\in [0,1]$}}}{
    $x \leftarrow \Call{Uniform}{[0,c],p}$\;
    \If{$x > (l \bmod c)$}{
        \Return $c\cdot \floor{\frac{l}{c}}+x$\;
    }
    \Else{
        \Return $c\cdot \left(\floor{\frac{l}{c}}+1\right)+x$\;
    }
}
\end{algorithm}

We consider three algorithms for \Call{Sample}{$[l(w),r(w)],p$}. Algorithm~\ref{alg:sample_fixed} will be used when $l$ and $r$ are constant functions and do not depend on the weight $w$. Algorithm~\ref{alg:sample_ratio} will be used when there exists a constant $c$ such that $l$ and $r$ satisfy $r(w)=cl(w)$. Algorithm~\ref{alg:sample_width} will be used when there exists a constant $c$ such that $l$ and $r$ satisfy $r(w)=l(w)+c$.

It is clear that Algorithms~\ref{alg:sample_fixed} and~\ref{alg:sample_width} run in constant time. 
The running time of Algorithm~\ref{alg:sample_ratio} can be analyzed as follows. In Line~\ref{line:ratio_samplex}, if $p_i$ is uniformly sampled from $[0,1]$, the $i$-th loop ends with probability at least $1/c$ since $[l,cl]\subseteq [b\cdot c^k, b\cdot c^2\cdot c^k]$ according to the definition of $k$. Therefore, if we choose $p$ randomly, the expected number of loops is at most $1/c$.

Let us analyze the stability of these algorithms.
The following is clear from the definition.
\begin{lemma}
If $l$ and $r$ are constant functions, \emph{\Call{Uniform}{$[l,r],p$}} is $1$-stable for $l$ and $r$.
\end{lemma}

The following lemma will be used to make Algorithms~\ref{alg:mst_main_w},~\ref{alg:main}, and~\ref{alg:splip} Lipschitz continuous and Algorithms~\ref{alg:mst_main_u} and~\ref{alg:match_main} pointwise Lipschitz continuous even in the shared randomness setting.

\begin{lemma}
Let $c > 1$ and suppose $r(w)=cl(w)$ holds for all $w$. 
Then, \emph{\Call{UniformFixedRatio}{$[l(w),r(w)=cl(w)],\bm{p}$}} is $\left(1+c\right)$-stable for $l$ and $r$.
\end{lemma}
\begin{proof}
It is clear that the distribution of \Call{UniformFixedRatio}{$[l(w),r(w)=cl(w)],\bm{p}$} is uniform over $[l(w),cl(w)]$ when $\bm{p}\sim \mathcal{U}\left([0,1]^{\mathbb{Z}_{\geq 0}}\right)$. For weight vectors $w$ and $w'$, let us prove
\begin{align*}
    &\E_{\bm{p}\sim \mathcal{U}\left([0,1]^{\mathbb{Z}_{\geq 0}}\right)}\left[\TV\left(\Call{UniformFixedRatio}{[l(w),cl(w)],\bm{p}},\Call{UniformFixedRatio}{[l(w'),cl(w')],\bm{p}}\right)\right]\\
    &\leq \left(1+c\right)\cdot\TV\left(\mathcal{U}([l(w),cl(w)]),\mathcal{U}([l(w'),cl(w')])\right).
\end{align*}
Without loss of generality, we can assume $l(w)\leq l(w')$.
We can also assume $l(w')< cl(w)$ because otherwise the right hand side is equal to $1+c$ and the lemma trivially holds.

Let $k$ and $k'$ be the values of $k$ chosen in \Call{UniformFixedRatio}{$[l(w),cl(w)],p$} and \Call{UniformFixedRatio}{$[l(w'),cl(w')],p$}, respectively. 

Assume $k=k'$. Then, it clearly holds
\begin{align*}
    &\TV\left(\Call{UniformFixedRatio}{[l(w),cl(w)],p},\Call{UniformFixedRatio}{[l(w'),cl(w')],p}\right)\\
    &= \TV\left(\mathcal{U}([l(w),cl(w)]),\mathcal{U}([l(w'),cl(w')])\right).
\end{align*}
Let us evaluate the probability $\bm{k}\neq \bm{k}'$ happens if $\bm{p}$ is sampled from $\mathcal{U}\left([0,1]^{\mathbb{Z}_{\geq 0}}\right)$. From $l(w)\leq l(w')$, we have $c^{\bm{k}}\leq \frac{l(w)}{\bm{b}}< c^{\bm{k}+1}\leq \frac{l(w')}{\bm{b}} < \frac{cl(w)}{\bm{b}}<c^{\bm{k}+2}$. This can happen only when $\bm{k}=\floor{\log_{c}{l(w)}}=:\widehat{k}$ or $\bm{k}=\widehat{k}-1$ because of $\bm{b}\in [1,c]$. 
Furthermore, $\frac{l(w)}{\bm{b}}< c^{\widehat{k}}\leq \frac{l(w')}{\bm{b}}$ and $\frac{l(w)}{\bm{b}}< c^{\widehat{k}+1}\leq \frac{l(w')}{\bm{b}}$ cannot happen at the same time because $l(w')<cl(w)$.
If $l(w') < c^{\widehat{k}+1}$, we have
\begin{align*}
    \Pr_{\bm{p}\sim \mathcal{U}\left([0,1]^{\mathbb{Z}_{\geq 0}}\right)}[\bm{k}\neq \bm{k}'] 
    &= \Pr\left[\frac{l(w)}{\bm{b}}< c^{\widehat{k}}\leq \frac{l(w')}{\bm{b}}\lor \frac{l(w)}{\bm{b}}< c^{\widehat{k}+1}\leq \frac{l(w')}{\bm{b}}\right]\\
    &= \Pr\left[\bm{b}\in \left[\frac{l(w)}{c^{\widehat{k}}},\frac{l(w')}{c^{\widehat{k}}}\right]\lor \bm{b}\in \left[\frac{l(w)}{c^{\widehat{k}+1}},\frac{l(w')}{c^{\widehat{k}+1}}\right]\right]\\
    &= \Pr\left[\bm{b}\in \left[\frac{l(w)}{c^{\widehat{k}}},\frac{l(w')}{c^{\widehat{k}}}\right]\right]\\
    &\leq \frac{l(w')-l(w)}{c^{\widehat{k}}}\cdot \frac{1}{c-1}\leq \frac{l(w')-l(w)}{l(w)}\cdot \frac{c}{c-1},
\end{align*}
where the third equality is from $\left[\frac{l(w)}{c^{\widehat{k}+1}},\frac{l(w')}{c^{\widehat{k}+1}}\right]\cap [1,c]=\emptyset$.
Otherwise, we have
\begin{align*}
    \Pr_{\bm{p}\sim \mathcal{U}\left([0,1]^{\mathbb{Z}_{\geq 0}}\right)}[\bm{k}\neq \bm{k}'] 
    &= \Pr\left[\frac{l(w)}{\bm{b}}< c^{\widehat{k}}\leq \frac{l(w')}{\bm{b}}\lor \frac{l(w)}{\bm{b}}< c^{\widehat{k}+1}\leq \frac{l(w')}{\bm{b}}\right]\\
    &= \Pr\left[\bm{b}\in \left[\frac{l(w)}{c^{\widehat{k}}},\frac{l(w')}{c^{\widehat{k}}}\right]\lor \bm{b}\in \left[\frac{l(w)}{c^{\widehat{k}+1}},\frac{l(w')}{c^{\widehat{k}+1}}\right]\right]\\
    &= \Pr\left[\bm{b}\in \left[\frac{l(w)}{c^{\widehat{k}}},c\right]\lor \bm{b}\in \left[1,\frac{l(w')}{c^{\widehat{k}+1}}\right]\right]\\
    &\leq \left(\frac{c^{\widehat{k}+1}-l(w)}{c^{\widehat{k}}}+\frac{l(w')-c^{\widehat{k}+1}}{c^{\widehat{k}+1}}\right)\cdot \frac{1}{c-1}\\
    &\leq \left(\frac{c^{\widehat{k}+1}-l(w)}{c^{\widehat{k}}}+\frac{l(w')-c^{\widehat{k}+1}}{c^{\widehat{k}}}\right)\cdot \frac{1}{c-1}
    \leq \frac{l(w')-l(w)}{c^{\widehat{k}}}\cdot \frac{1}{c-1}\leq \frac{l(w')-l(w)}{l(w)}\cdot \frac{c}{c-1},
\end{align*}
where the third equality is from $\frac{l(w')}{c^{\widehat{k}}}\geq \frac{c^{\widehat{k}+1}}{c^{\widehat{k}}}=c$ and $\frac{l(w)}{c^{\widehat{k}+1}} < \frac{l(w)}{c^{\log_c l(w)}}=1$.
Furthermore, we have
\begin{align*}
    &\TV\left(\mathcal{U}([l(w),cl(w)]),\mathcal{U}([l(w'),cl(w')])\right)\\
    &=\frac{cl(w')-cl(w)}{cl(w')-l(w')}
    = \frac{c}{c-1}\cdot \frac{l(w')-l(w)}{l(w')}
    \geq \frac{c}{c-1}\cdot \frac{l(w')-l(w)}{cl(w)}
    \geq \frac{1}{c}\cdot \Pr_{p\sim \mathcal{U}\left([0,1]^{\mathbb{Z}_{\geq 0}}\right)}[\bm{k}\neq \bm{k}'].
\end{align*}
Thus, we have
\begin{align*}
    &\E_{p\sim \mathcal{U}\left([0,1]^{\mathbb{Z}_{\geq 0}}\right)}\left[\TV\left(\Call{UniformFixedRatio}{[l(w),cl(w)],\bm{p}},\Call{UniformFixedRatio}{[l(w'),cl(w')],\bm{p}}\right)\right]\\
    &\leq \TV\left(\mathcal{U}([l(w),cl(w)]),\mathcal{U}([l(w'),cl(w')])\right) + 1\cdot \Pr_{\bm{p}\sim \mathcal{U}\left([0,1]^{\mathbb{Z}_{\geq 0}}\right)}[\bm{k}\neq \bm{k}']\\
    &\leq \left(1+c\right)\cdot\TV\left(\mathcal{U}([l(w),cl(w)]),\mathcal{U}([l(w'),cl(w')])\right).\qedhere
\end{align*}
\end{proof}

The following will be used to make Algorithm~\ref{alg:match_main} pointwise Lipschitz continuous even in the shard randomness setting.
\begin{lemma}
Let $c>0$ and suppose $r(w)=l(w)+c$ holds for all $w$.
Then, \emph{\Call{UniformFixedWidth}{$[l(w),r(w)=l(w)+c],p$}} is $1$-uniform for $l$ and $r$.
\end{lemma}
\begin{proof}
It is clear that the distribution of \Call{UniformFixedWidth}{$[l(w),r(w)=l(w)+c],\bm{p}$} is uniform distribution over $[l(w),l(w)+c]$ when $\bm{p}\sim \mathcal{U}\left([0,1]^{\mathbb{Z}_{\geq 0}}\right)$. 
Let $w$ and $w'$ be weight vectors. Without loss of generality, we can assume $l(w)\leq l(w')$. We can also assume $l(w')\leq l(w)+c$ because otherwise $\TV\left(\mathcal{U}([l(w),l(w)+c]),\mathcal{U}([l(w'),l(w')+c])\right)=1$ and the lemma trivially holds. Thus, we can assume $\floor{\frac{l(w')}{c}}$ is equal to either $\floor{\frac{l(w)}{c}}$ or $\floor{\frac{l(w)}{c}}+1$.

Assume $\floor{\frac{l(w')}{c}}=\floor{\frac{l(w)}{c}}$. Then, \Call{UniformFixedWidth}{$[l(w),l(w)+c],p$} and \Call{UniformFixedWidth}{$[l(w'),l(w')+c],p$} will be distinct only when $(l(w)\bmod c)<x\leq (l(w')\bmod c)$. Thus we have
\begin{align*}
    &\E_{\bm{p}\sim \mathcal{U}\left([0,1]^{\mathbb{Z}_{\geq 0}}\right)}\left[\TV\left(\Call{UniformFixedWidth}{[l(w),l(w)+c],\bm{p}},\Call{UniformFixedWidth}{[l(w'),l(w')+c],\bm{p}}\right)\right]\\
    &=\frac{l(w')-l(w)}{c}=\TV\left(\mathcal{U}([l(w),l(w)+c]),\mathcal{U}([l(w'),l(w')+c])\right).
\end{align*}

Assume $\floor{\frac{l(w')}{c}}=\floor{\frac{l(w)}{c}}+1$. Then, \Call{UniformFixedWidth}{$[l(w),l(w)+c],p$} and \Call{UniformFixedWidth}{$[l(w'),l(w')+c],p$} will be distinct only when $x > (l(w)\bmod c)$ or $x\leq (l(w')\bmod c)$. Thus we have
\begin{align*}
    &\E_{\bm{p}\sim \mathcal{U}\left([0,1]^{\mathbb{Z}_{\geq 0}}\right)}\left[\TV\left(\Call{UniformFixedWidth}{[l(w),l(w)+c],\bm{p}},\Call{UniformFixedWidth}{[l(w'),l(w')+c],\bm{p}}\right)\right]\\
    &=\frac{(l(w')\bmod c)+c-(l(w)\bmod c)}{c}=\frac{l(w')-l(w)}{c}=\TV\left(\mathcal{U}([l(w),l(w)+c]),\mathcal{U}([l(w'),l(w')+c])\right),
\end{align*}
where the second equality is from $\floor{\frac{l(w')}{c}}=\floor{\frac{l(w)}{c}}+1$.
Therefore the lemma is proved.
\end{proof}



\subsection{Sampling from Discrete Distributions}\label{subsec:discrete}

In our algorithms, we often sample a value from a discrete universe. 
Let $\mathcal{X}$ be a universe and $x\in [0,1]^{\mathcal{X}}$ be a probability vector, i.e., $\sum_{v\in \mathcal{X}}x_v=1$.
Then, we say that a sampling process \Call{Sample}{$\mathcal{X},x,p$} is \emph{stable} if it samples a value from $v\in \mathcal{X}$ with probability $x_v$ when $\bm{p}\sim \mathcal{U}\left([0,1]^{\mathbb{Z}}\right)$ and for two probability vectors $x$ and $x'$, it holds that
\begin{align*}
    \Pr_{\bm{p}\sim \mathcal{U}\left(\mathcal{X}^{\mathbb{Z}_{\geq 0}}\right)}\left[\Call{Sample}{\mathcal{X},x,\bm{p}}\neq \Call{Sample}{\mathcal{X},x',\bm{p}}\right]
    =\TV\left(x,x'\right).
\end{align*}
Here, we abuse notation $\TV\left(x,x'\right)$ to represent the total variation distance between the probability distribution for sampling an element $v$ from $\mathcal{X}$ with probability $x_v$ and the probability distribution for sampling an element $v$ from $\mathcal{X}$ with probability $x'_v$.

\begin{algorithm}[t!]
\caption{Algorithm to sample $v$ from a universe $\mathcal{X}$ with probability $x_v$}\label{alg:sample_discrete}
\Procedure{\emph{\Call{UniformDiscrete}{$\mathcal{X}, x\in [0,1]^{\mathcal{X}}, p\in [0,1]^{\mathbb{Z}_{\geq 0}}$}}}{
    \For{$i=0,1,\dots$}{
        $t\leftarrow \floor{p_{2i}\cdot |\mathcal{X}|}+1$\;
        Let $v$ be the $t$-th element of $\mathcal{X}$\;
        \If{$p(2i+1)\leq x_v$}{
            \Return $v$\;
        }
    }
}
\end{algorithm}
We denote the uniform distribution over $\mathcal{X}$ by $\mathcal{U}[\mathcal{X}]$.
Algorithm~\ref{alg:sample_discrete} is the algorithm we consider, and it is stable as shown in the following.
\begin{lemma}
\Call{UniformDiscrete}{$U, \mathcal{X}, p\in \mathcal{X}^{\mathbb{Z}_{\geq 0}}$} is stable.
\end{lemma}
\begin{proof}
It is clear that \Call{UniformDiscrete}{$\mathcal{X}, x\in [0,1]^{\mathcal{X}}, \bm{p}\in [0,1]^{\mathbb{Z}_{\geq 0}}$} outputs an element $v\in \mathcal{X}$ with probability $x_v$ when $\bm{p}  \sim \mathcal{U}\left([0,1]^{\mathbb{Z}}\right)$.
Let $x,x'\in [0,1]^{\mathcal{X}}$ be probability vectors such that $\sum_{v\in \mathcal{X}}x_v=\sum_{v\in \mathcal{X}}x'_v=1$. Let $i^*$ denote the value of $i$ when \Call{UniformDiscrete}{$\mathcal{X}, x\in [0,1]^{\mathcal{X}}, p\in [0,1]^{\mathbb{Z}_{\geq 0}}$} terminates. Then,
\begin{align*}
    &\Pr_{\bm{p}\sim \mathcal{U}\left(\mathcal{X}^{\mathbb{Z}_{\geq 0}}\right)}\left[\Call{UniformDiscrete}{\mathcal{X},x,\bm{p}}\neq \Call{UniformDiscrete}{\mathcal{X},x',\bm{p}}\mid i^*=k\right]\\
    &=\sum_{v\in \mathcal{X}}\left|x_v-x'_v\right|=\TV\left(x,x'\right).
\end{align*}
Thus, we have
\begin{align*}
    \Pr_{\bm{p}\sim \mathcal{U}\left(\mathcal{X}^{\mathbb{Z}_{\geq 0}}\right)}\left[\Call{UniformDiscrete}{\mathcal{X},x,\bm{p}}\neq \Call{UniformDiscrete}{\mathcal{X},x',\bm{p}}\right]
    =\TV\left(x,x'\right).
    \qedhere
\end{align*}
\end{proof}

When $\bm{p}\sim [0,1]^{\mathbb{Z}{\geq 0}}$, it can be verified that Algorithm~\ref{alg:sample_discrete} runs in expected time $O(|\mathcal{X}|)$, as each step of the loop terminates with probability $\sum_{v\in \mathcal{X}}\left(\frac{1}{|\mathcal{X}|}x_v\right)=\frac{1}{|\mathcal{X}|}$.

\subsection{Modifications to Ensure Lipschitz Continuity under Shared Randomness}\label{subsec:rewrite}

\subsubsection{Algorithm~\ref{alg:mst_main_w}}

By modifying Algorithm~\ref{alg:mst_main_w} to use \Call{SampleFixedRatio}{$[w(e),(1+\epsilon)w(e)],p$} to sample $\widehat{\bm{w}}(e)$ at Line~\ref{line:mst_main_w_sample}, it can be converted into a Lipschitz continuous algorithm in the shared randomness setting. Although the right hand side of the statement in Lemma~\ref{lem:mst-remaining} becomes $2+\epsilon$ times as large, the proof of Lemma~\ref{lem:mst-weighted-Lipschitz} still works in the same way. Therefore, an algorithm achieving an approximation ratio of $1+\epsilon$ and a Lipschitz constant of $O(\epsilon^{-1})$ is obtained in the shared randomness setting.

\subsubsection{Algorithm~\ref{alg:splip}}

To make Algorithm~\ref{alg:splip} Lipschitz continuous in the shared randomness setting, first, we use \Call{UniformFixedRatio}{} to sample $\bm{b}$ at Line~\ref{line:lipsp_sampleb}. Although the right hand side of the statement in Lemma~\ref{lem:splip_sampleb} becomes three times as large, the proof of Lemma~\ref{lem:splipsens} still works in the same way. 
In Line~\ref{line:lipsp_samplex} of Algorithm~\ref{alg:splip}, $\bm{x}$ is sampled, but it is only used for sampling $\widehat{w}$. Therefore, it is permissible to replace the sampling of $\bm{x}(e)$ with the operation of sampling an integer value $\widehat{\bm{w}}(e)$. This can be rewritten using \Call{UniformDiscrete}{} as follows. 
If $\widehat{\bm{w}}(e)$ is greater than $12\epsilon^{-1}|V|+3$, then no path corresponding to $e$ is created in $\widehat{\bm{G}}$, so if $\bm{l}(e) > 12\epsilon^{-1}|V|+1$, do nothing. Otherwise, let $\mathcal{X}=\left\{2,\dots, \lfloor{12\epsilon^{-1}|V|+4}\rfloor\right\}$. Let a probability vector $\bm{y}$ be
\begin{align*}
    \bm{y}(i)=
    \begin{cases}
        \frac{(\bm{l}(e)+1)\bm{b}-\bm{w}(e)}{\bm{b}} & (i=\bm{l}(e)+2)\\
        1-\frac{(\bm{l}(e)+1)\bm{b}-\bm{w}(e)}{\bm{b}} & (i=\bm{l}(e)+3)\\
        0 & (\text{otherwise})        
    \end{cases}
\end{align*}
Then, $\widehat{\bm{w}}(e)$ is sampled by \Call{UniformDiscrete}{$\mathcal{X},\bm{y},p$}.

We also need to make \Call{DiSP}{$\widehat{\bm{G}}$} compatible with the shared randomness setting. In Algorithm~\ref{alg:main}, we use \Call{UniformFixedRatio}{} to sample $\bm{\gamma}$ at Line~\ref{line:main_samplegamma} and to sample $\bm{d}$ at Line~\ref{line:choosed}, and use \Call{Uniform}{} to sample $\bm{l}$ at Line~\ref{line:choosel}. This makes Lemma~\ref{lem:contsp_samplegamma} and Lemma~\ref{lem:TVD} have their right-hand sides tripled, but it does not affect the analysis.

To sample $\bm{v}$ in Line~\ref{line:main-sample}, we use \Call{UniformDiscrete}{}. We would like to take the universe $\mathcal{X}$ as the set of vertices $\widehat{V}$ of the graph $\widehat{G}$. However, because \Call{LipSP}{$G,w,s,t,\epsilon$} in Algorithm~\ref{alg:splip} may call \Call{DiSP}{$\widehat{G},\cdot$} for different graphs $\widehat{G}$ depending on the choice of weights $w$ and internal randomness, we cannot set $\mathcal{X}=\widehat{V}$.
For this reason, we cannot obtain a low-sensitivity algorithm for the shortest path problem in the shared randomness setting just by modifying Algorithm~\ref{alg:main} alone. Instead, let us recall that each edge of $G$ corresponds to two directed paths of length at most $12\epsilon^{-1}|V|+3$. 
Thus, any vertex $v$ in $\widehat{G}$ that does not correspond to a vertex in $G$ can be represented as a triple $(e, t, i)$, where $e$ is an edge in $G$ that corresponds to $v$, $t\in{0,1}$ indicates whether $v$ belongs to the first or second path in $\widehat{G}$ corresponding to $e$, and $i\in \{1,\dots, \floor{12\epsilon^{-1}|V|+3}\}$ is the distance from the start of the path. Therefore, we can take $\mathcal{X}=V\cup \left(E\times \{0,1\}\times \{1,\dots, \floor{12\epsilon^{-1}|V|+3\}}\right)$, and use \Call{UniformDiscrete}{} to sample $\bm{v}$.

\subsubsection{Algorithm~\ref{alg:match_scale}}

To sample $\bm{b}$ in Line~\ref{line:MWM_sampleb} of Algorithm~\ref{alg:match_scale}, we use \Call{Uniform}{$[1,\alpha],p$}. In Line~\ref{line:MWM_samplepi}, we need to uniformly sample a permutation $\pi$ of $\{1,\dots, |V|\}$. It can be achieved by repeatedly applying \Call{SampleDiscrete}{$V, \mathcal{U}\left[V\setminus \{\pi(1),\dots, \pi(k)\}\right], p$}, where we abuse the notation $\mathcal{U}\left[X\right]$ to represent the distribution that samples elements in $X$ uniformly at random, and elements in $V\setminus X$ with probability $0$. The analysis is not affected by these modifications, so the maximum weight matching problem can be approximated with a $(1/8-\epsilon)$ approximation and a Lipschitz constant of $O(\epsilon^{-1})$ in the shared randomness setting.

\subsubsection{Algorithm~\ref{alg:mst_main_u}}

To sample $\bm{b}$ in Line~\ref{line:mstu_sampleb} of Algorithm~\ref{alg:mst_main_u}, we use \Call{UniformFixedRatio}{$\cdot$}. This triples the right-hand side of Lemma~\ref{lem:unweighted-mst-b}, but it does not affect the analysis. Moreover, Line~\ref{line:mstu_samplew} samples $\widehat{\bm{w}}(e)$ using \Call{UniformFixedWidth}{$[w(e),w(e)+\bm{b}],p$}. This also does not affect the analysis, so in the shared randomness setting, a $(1-\epsilon)$ approximation algorithm with pointwise Lipschitzness $O(\epsilon^{-1}n/\OPT)$ can be obtained for the minimum weight spanning tree problem.

\subsubsection{Algorithm~\ref{alg:match_main}}

To sample $\bm{B}$ in Line~\ref{line:MWBM_sampleb} of Algorithm~\ref{alg:match_main}, \Call{UniformFixedRatio}{$\cdot$}. This triples the right-hand side of Lemma~\ref{lem:match_u_tvdb} but does not affect the analysis. We use \Call{SampleDiscrete}{$V\cup \{\bot \}, (\bm{x}_{i,1},\cdots, \bm{x}_{i,|V|}, 1-\sum_{j=1}^{|V|}\bm{x}_{i,j}),p$} to sample $\bm{p}(i)$ in Line~\ref{line:MWBM_samplep}, and \Call{SampleDiscrete}{$U, \mathcal{U}[\bm{C}_j],p$} to sample $\bm{q}(j)$ in Line~\ref{line:MWBM_sampleq}. These do not affect the analysis of Lemma~\ref{lem:mwbm_main}, and thus we obtain a $(1/2-\epsilon)$ approximation algorithm with pointwise Lipschitzness $O\left(\frac{\epsilon^{-1} |U|^{\frac{3}{2}}\log |V|}{\OPT(G,w)}\right)$ for the maximum weight bipartite matching problem even in the shared randomness setting.

\bibliography{main}

\appendix


\section{Missing Proofs}\label{sec:seeoneelement}

\begin{proof}[Proof of Lemma~\ref{lem:seeoneelement}]
First, note that the inequality in the statement holds even for $\delta > c$ because then
\begin{align*}
    &\EMW\left((\mathcal{A}(G,w),w), (\mathcal{A}(G,w+\delta \mathbf{1}_f),w+\delta \mathbf{1}_f)\right)\\
    &\leq \EMW\left((\mathcal{A}(G,w),w), (\mathcal{A}(G,w+(\delta-c) \mathbf{1}_f),w+(\delta-c) \mathbf{1}_f)\right)\\
    &\quad +\EMW\left((\mathcal{A}(G,w+(\delta-c)\mathbf{1}_f),w+(\delta-c)\mathbf{1}_f), (\mathcal{A}(G,w+\delta \mathbf{1}_f),w+\delta \mathbf{1}_f)\right)\\
    &\leq (\delta - c)L + cL = \delta L,
\end{align*}
where the last inequality can be proven by induction.
Also because $\EMW$ is commutative, the inequality 
\[
    \EMW\left((\mathcal{A}(G,w),w), (\mathcal{A}(G,w+\delta \mathbf{1}_f),w+\delta \mathbf{1}_f)\right)\leq |\delta| L
\]
holds even for $\delta < 0$.

Let $e_1,\ldots,e_m$ be an arbitrary ordering of elements in $E$, where $m = |E|$.
For two weight vectors $w,w'\in \mathbb{R}_{\geq 0}^E$, by the triangle inequality, we have
\begin{align*}
    \EMW\left((\mathcal{A}(G,w),w),(\mathcal{A}(G,w'),w')\right)
    \leq\sum_{i=1}^{m}\EMW\left((\mathcal{A}(G,w_{i-1}),w_{i-1}),(\mathcal{A}(G,w_i),w_i)\right),
\end{align*}
where $w_i \in \mathbb{R}_{\geq 0}^E$ is the weight vector given by
\begin{align*}
    w_i(e_j)=\begin{cases}
        w'(e_j) & (j\leq i) \\
        w(e_j) & (j > i).
        \end{cases}
\end{align*}
Since $w_i=w_{i-1}+(w'(e_i)-w(e_i))\mathbf{1}_{e_i}$, we have
\begin{align*}
    &\sup_{\substack{w,w' \in \mathbb{R}_{\geq 0}^E,\\w\neq w'}}\frac{\EMW\left((\mathcal{A}(G,w),w),(\mathcal{A}(G,w'),w')\right)}{\|w-w'\|_1}\\
    & \leq \sup_{\substack{w,w' \in \mathbb{R}_{\geq 0}^E,\\w\neq w'}}\frac{\sum_{i=1}^{m}\EMW\left((\mathcal{A}(G,w_{i-1}),w_{i-1}),(\mathcal{A}(G,w_{i}),w_{i})\right)}{\sum_{i=1}^{m}|w'(e_i)-w(e_i)|}\\
    &\leq \sup_{\substack{w,w' \in \mathbb{R}_{\geq 0}^E,\\w\neq w'}}\max_{i=1,\ldots,m}\frac{\EMW\left((\mathcal{A}(G,w_{i-1}),w_{i-1}),(\mathcal{A}(G,w_{i}),w_{i})\right)}{|w'(e_i)-w(e_i)|}\\
    &\leq L.
    \qedhere
\end{align*}
\end{proof}

\begin{proof}[Proof of Lemma~\ref{lem:seeoneelement-unweighted}]
Let $e_1,\ldots,e_m$ be an arbitrary ordering of elements in $E$, where $m = |E|$.
For two weight vectors $w,w'\in \mathbb{R}_{\geq 0}^E$ with $\|w-w'\|_1\leq c$, by the triangle inequality, we have
\begin{align*}
    \EMU\left(\mathcal{A}(G,w),\mathcal{A}(G,w')\right)
    \leq\sum_{i=1}^{m}\EMU\left(\mathcal{A}(G,w_{i-1}),\mathcal{A}(G,w_i)\right),
\end{align*}
where $w_i \in \mathbb{R}_{\geq 0}^E$ is defined as in the proof of Lemma~\ref{lem:seeoneelement}.
Since $w_i=w_{i-1}+(w'(e_i)-w(e_i))\mathbf{1}_{e_i}$, we have
\begin{align*}
    \sup_{\substack{w,w' \in \mathbb{R}_{\geq 0}^E,\\w\neq w'}}\frac{\EMU\left(\mathcal{A}(G,w),\mathcal{A}(G,w')\right)}{\|w-w'\|_1}
    & \leq \sup_{\substack{w,w' \in \mathbb{R}_{\geq 0}^E,\\w\neq w'}}\frac{\sum_{i=1}^{m}\EMU\left(\mathcal{A}(G,w_{i-1}),\mathcal{A}(G,w_{i})\right)}{\sum_{i=1}^{m}|w'(e_i)-w(e_i)|}\\
    &\leq \sup_{\substack{w,w' \in \mathbb{R}_{\geq 0}^E,\\w\neq w'}}\max_{i=1,\ldots,m}\frac{\EMU\left(\mathcal{A}(G,w_{i-1}),\mathcal{A}(G,w_{i})\right)}{|w'(e_i)-w(e_i)|}\\
    &\leq L,
\end{align*}
where the last inequality is from the fact that $\|w-w_{i-1}\|_1\leq \|w-w'\|_1\leq c$ for $w(e_i)\leq w'(e_i)$ and $\|w-w_{i}\|_1\leq \|w-w'\|_1\leq c$ for $w(e_i) > w'(e_i)$. 
Therefore, $\mathcal{A}$ has pointwise Lipschitz constant $L$ on $G$ at $w$,  considering the neighborhood $\{w'\colon \|w-w'\|\leq c\}$.
\end{proof}

\end{document}